\definecolor{ao}{rgb}{0.0, 0.5, 0.0}
\newcommand{\phsp}{\mathcal{P}}
\newcommand{\ham}{H}
\newcommand{\opalg}{\mathcal{L}(\mathcal{H})}
\newcommand{\be}{\begin{equation}}
\newcommand{\ee}{\end{equation}}
\newcommand{\bea}{\begin{eqnarray}}
\newcommand{\eea}{\end{eqnarray}}
\newcommand{\R}{\mathbb{R}}
\newcommand{\C}{\mathbb{C}}
\newcommand{\rd}{\mathsf{d}}
\newcommand{\Lie}{\mathcal L}
\newcommand{\lie}{\Lie}
\newcommand{\Sl}{\mathfrak{sl}}
\newcommand{\su}{\mathfrak{su}}
\newcommand{\slr}{\Sl(2,\R)}
\newcommand{\SL}{\mathrm{SL}}
\newcommand{\SO}{\mathrm{SO}}
\newcommand{\ap}{B}     %
\newcommand{\ac}{E}     %
\newcommand{\algX}{X}  %
\newcommand{\algZ}{Z}
\newcommand{\ii}{\mathtt{i} }
\newcommand{\id}{ {\bullet} }
\newcommand{\jfxn}{\mathbbm{j}}  %
\newcommand{\ja}{\jfxn}
\newcommand{\defj}{\wh{\jfxn}}
\DeclareMathOperator{\Mat}{Mat}
\newcommand{\cstd}{\wt{c}}
\newcommand{\nfxn}{\mathbbm{n}}
\newcommand{\defn}{\wh{\nfxn}}
\newcommand{\defe}{\wh{\mathbbm{e}}}
\newcommand{\phfxn}{\mathbbm{h}}
\newcommand{\defph}{\wh{\phfxn}}
\newcommand{\opH}{\wh{H}}  
\newcommand{\upb}{\epsilon}
\newcommand{\la}[1]{#1^{(1)} }  %
\newcommand{\si}{\sigma}
\newcommand{\pa}{\partial}
\DeclareMathOperator{\SU}{SU}
\DeclareMathOperator{\tr}{Tr}
\DeclareMathOperator{\ad}{ad}
\DeclareMathOperator{\Id}{Id}
\newcommand{\mfk}[1]{\mathfrak{#1}}
\newcommand{\mbb}[1]{\mathbb{#1}}
\newcommand{\mbf}[1]{\mathbf{#1}}
\newcommand{\msf}[1]{\mathsf{#1}}
\newcommand{\mcal}[1]{\mathcal{#1}}
\newcommand{\tenofo}[1]{\text{\normalfont #1}}
\renewcommand\div{\mathrm{div}}
\newcommand{\op}{\mathcal{O}}
\newcommand{\wt}[1]{\widetilde{#1}}
\newcommand{\wh}[1]{\widehat{#1}}
\newcommand{\mrln}{\quad \mathrel{\substack{\vspace{-.15cm}\textstyle\xrightarrow{\tenofo{Matrix Regularization}}
                      \\
                      \vspace{-.04cm}\textstyle\xleftarrow[\tenofo{Large-$N$ Limit}]{\hphantom{\tenofo{Matrix Regularization}}}}}\quad}
\newcommand{\gslr}{\mathfrak{g}_{\mfk{sl}(2,\mbb{R})}(S)}
\newcommand{\cslr}{\mathfrak{c}_{\mfk{sl}(2,\mbb{R})}(S)}
\newcommand{\crr}{\mathfrak{c}_{\mbb{R}}(S)}
\newcommand{\crn}{{\mathfrak{c}}_{\mbb{R}}(N)}
\newcommand{\cslrn}{\mathfrak{c}_{\mfk{sl}(2,\mbb{R})}(N)}
\newcommand{\Gslr}{G_{\text{SL}(2,\mbb{R})}(S)}
\newcommand\Diff{\text{Diff}(S)}
\newcommand{\diff}{\mathfrak{diff}(S)}
\newcommand{\sdiff}{\mathfrak{sdiff}(S)}
\newcommand{\ep}{\epsilon}
\newcommand{\ind}[1]{\indices{#1}}
\newcommand{\tj}[1]{\begin{pmatrix} #1 \end{pmatrix}} 
\newcommand{\sj}[1]{\begin{Bmatrix} #1 \end{Bmatrix}}
\newcommand{\tjn}[1]{\begin{bmatrix} #1 \end{bmatrix}}
\newcommand{\beq}{\begin{equation}}
\newcommand{\eeq}{\end{equation}}
\newcommand{\del}{\nabla}
\newcommand{\lmconj}[1]{\bar{#1}}
\newcommand{\sys}[2]{Y^{#1}_{#2}}
\newcommand{\mm}{\text{-}} 
\newcommand{\bth}{\bar\eth}
\newtheorem{lem}{Lemma}[section]
\newenvironment{eqaligned}
{%
\begin{equation}
    \begin{aligned}
    } 
{%
\end{aligned}
\end{equation}
\vspace{-\lineskip}
\\
\noindent
}
\title{\centering Matrix Quantization of Gravitational Edge Modes}
\author[a]{William Donnelly,} 
\author[a]{Laurent Freidel,}
\author[b]{Seyed Faroogh Moosavian,}
\author[a,c]{and Antony J. Speranza}
\affiliation[a]{Perimeter Institute for Theoretical Physics, 31 Caroline St. N.,  Waterloo ON, N2L 2Y5, Canada}
\affiliation[b]{Department of Physics, McGill University, Ernest Rutherford Physics Building, 3600 Rue University, Montr\'eal, QC H3A 2T8}
\affiliation[c]{Department of Physics, University of Illinois, Urbana-Champaign, Urbana IL 61801, USA}
\emailAdd{williamdonnelly@gmail.com\,\!}
\emailAdd{\quad lfreidel@pitp.ca\,\!}
\emailAdd{sfmoosavian@gmail.com\,\!}
\emailAdd{\quad asperanz@gmail.com}
\abstract{
Gravitational subsystems with boundaries carry the action of an infinite-dimensional symmetry algebra, with potentially profound implications for the quantum theory of gravity.
We initiate an investigation into the quantization of this corner symmetry algebra for the phase space of gravity localized to a region bounded by a 2-dimensional sphere.  
Starting with the observation that the algebra $\mathfrak{sdiff}(S^2)$ of area-preserving diffeomorphisms
of the 2-sphere admits a deformation to the finite-dimensional algebra $\mathfrak{su}(N)$, we derive novel finite-$N$ deformations for two important subalgebras of the gravitational corner symmetry algebra.  
Specifically, we find that the area-preserving hydrodynamical algebra $\mathfrak{sdiff}(S^2)\oplus_{\mathcal{L}}
\mathbb{R}^{S^2}$ arises as the large-$N$ limit of 
$\mathfrak{sl}(N,\mathbb C)\oplus\mathbb{R}$ and  that 
the full area-preserving corner symmetry algebra 
$\mathfrak{sdiff}(S^2)\oplus_{\mathcal{L}}\mathfrak{sl}(2,\mathbb{R})^{S^2}$ is 
the large-$N$ limit of the pseudo-unitary group $\mathfrak{su}(N,N)$.  
We find matching conditions for the Casimir elements of the deformed and continuum algebras and show how these determine the value of the deformation parameter $N$ as well as the representation of the deformed algebra associated with a quantization of the local gravitational phase space.  
Additionally, we present a number of novel results related to the various algebras appearing, including a detailed analysis 
of the asymptotic expansion of the $\mathfrak{su}(N)$ structure constants,
as well as an explicit computation of the full $\mathfrak{diff}(S^2)$ 
structure constants in the spherical harmonic basis.
A consequence of our work is the definition of an area operator which is compatible with the deformation of the area-preserving corner symmetry at finite $N$.}
\begin{document}

\maketitle

\section{Introduction}\label{section:introduction}

Symmetry has long been a guiding principle in developing and understanding 
physical theories. This is especially true in quantum gravity, where an absence of experimental constraints forces us to rely on general physical principles such as symmetry to elucidate the conceptual and technical maze.

The defining symmetry of general relativity is the group of diffeomorphisms of spacetime.
Being gauge symmetries, these have long been considered to be devoid of physical content and a mere redundancy of description.
However, when considering a spacetime with a boundary, the situation changes drastically.
This boundary could be a boundary at infinity, with suitable falloff conditions on the fields, or, motivated by considerations of entangling surfaces, could be located at a finite distance.
Boundaries force us to consider degrees of freedom---\emph{edge modes}---localized at the boundary which otherwise would be pure gauge.
These dynamical variables transform under a symmetry group
which, for general relativity in the metric formulation, is given by \cite{Donnelly:2016auv}
\begin{equation}\label{eq:the corner symmetry group}
    G_{\tenofo{SL}(2,\mbb{R})}(S)=\tenofo{Diff}(S)\ltimes\tenofo{SL}(2,\mbb{R})^S,
\end{equation}
where the \emph{corner} $S$ is the boundary of 
a spatial or null Cauchy surface for
the region under consideration, hence codimension-2 in spacetime.\footnote{It has also been shown that this symmetry group of corner-preserving transformations, which we are interested in here, is universal for all diffeomorphism-invariant theories \cite{Speranza:2017gxd}.
This symmetry group can be extended to include
surface deformations, arising from diffeomorphisms that move the 
corner $S$ itself.  We will not consider this extended group 
here, but it has been examined in several recent works \cite{Speranza:2017gxd, CiambelliLeigh202104,  Ciambelli:2021nmv, Freidel:2021dxw, Speranza:2022lxr}. Recent studies of the corner symmetry group, its extension and its link with the asymptotic symmetry group also include
\cite{Freidel:2020xyx, Freidel:2020svx,
Chandrasekaran2020,  Freidel:2021cbc, Chandrasekaran:2021vyu,Ciambelli:2022cfr}.}
$\tenofo{SL}(2,\mbb{R})^S$ denotes the group of $\SL(2,\mbb{R})$-valued functions on $S$, $\tenofo{Diff}(S)$ the group of diffeomorphisms of $S$, and $\ltimes$ indicates a semidirect product structure in which 
the diffeomorphisms act on functions in the usual way via pullbacks. In what follows we will primarily be interested in the Lie algebra of $\Gslr$:
\begin{equation} \label{gslr}
\mfk{g}_{\mfk{sl}(2,\mbb{R})}(S)=\mfk{diff}(S)\oplus_{\mcal L}\mfk{sl}(2,\mbb{R})^S,
\end{equation}
where the subscript $\mcal{L}$ means the action by Lie derivative. 
In this work, we make two important restrictions: we consider $3+1$-dimensional spacetimes, so that $S$ is two-dimensional, and restrict $S$ to have the topology of a sphere.

Equipped with a physical system (a region of space) and a physical symmetry group \eqref{eq:the corner symmetry group}, we can then follow the spirit of Wigner's approach to quantum mechanics \cite{Wigner1931}: studying unitary representations of this symmetry group and identifying the carrying space of relevant unitary representations as the Hilbert space of our theory. 
There are broadly two ways to proceed, which we dub \emph{representation} and \emph{quantization}.\footnote{There is a third possibility, which we might call \emph{polymerization} where the measure on the sphere is taken to be a discrete measure rather than continuous. This possibility is explored in \cite{Freidel:2020ayo} and is close in spirit to the loop gravity approach of quantum gravity \cite{thiemann2008modern}. This results in a discrete representation of the Lie group, which is discontinuous i.e.\ one in which the Lie algebra generators are not differentiable along the sphere. Since we are primarily interested in continuous representations of the Lie algebra, we will not follow this option here.}
\begin{enumerate}
    \item \textbf{Representation.} 
    In this approach, we construct unitary representations of $\Gslr$ using existing methods, such as the method of induced representations \cite{Mackey195201,Mackey195309,Mackey:1978za} or the method of coadjoint orbits \cite{Kirillov196202,Kirillov1976,Kirillov199908,Rawnsley197501,Baguis199705}. 
    For example, in the case of three-dimensional BMS group, this program of quantization has been carried out by Barnich and Oblak using both induced representation \cite{BarnichOblak201403} and coadjoint representation \cite{BarnichOblak201502}, and their equivalence has been argued \cite[Section 4.1]{BarnichOblak201502}.  The first step in studying the representation theory of $\Gslr$ using coadjoint orbits is obtaining a classification of the orbits, a task that has been completed in our previous work \cite{DonnellyFreidelMoosavianSperanza202012}. One then needs to construct the Hilbert space by methods such as geometric quantization \cite{Kirillov196202,Kostant1965,Kostant1970,Souriau1970,Souriau1997} or  brane quantization
\cite{GukovWitten200809,GaiottoWitten202107}. It is thus in principle possible to follow this path. However, complications related to the infinite-dimensionality and topological subtleties of the symmetry group \eqref{eq:the corner symmetry group}, which are already present for its $\Diff$ subgroup, make the implementation of this approach difficult.
    
    \item \textbf{Quantization.} 
    Quantum mechanics allows for a more general class of possibilities than the preceding.
    Namely, it is sufficient to find a group that approaches $\Gslr$ in a suitable classical limit.
    A canonical example of this procedure is provided by the quantum mechanics of a single particle in one dimension.
    The classical phase space of this system, with coordinates $x,p$ (with $\{x,p\} = 1$) carries a representation of the algebra $\mfk{sdiff}(\R^2)$ of area-preserving diffeomorphisms of the phase space plane.
    Quantum-mechanically this symmetry algebra is deformed to an algebra of infinite-dimensional matrices, with the classical symmetry recovered only in the limit $\hbar \to 0$.
    In this case, quantization preserves the Heisenberg subalgebra generated by $x$, $p$, and the constant function $1$, but commutators of more general functions of $x$ and $p$ acquire $\mcal{O}(\hbar)$ corrections due to operator ordering ambiguities.
\end{enumerate} 

Here we will follow the second approach, a choice that requires some physical justification.
The method based on finding an exact representation of the symmetry algebra carries several drawbacks.
The first is that our algebra 
is infinite-dimensional, and although the representation theory
of certain infinite-dimensional algebras such as Virasoro or Kac-Moody
is well-developed, very few tools exist for characterizing the unitary 
representations of the symmetry groups encountered in the present work.
Another drawback of such representations is that the operator product of generators at equal points is ill-defined. 
While this is common in continuum field theory, it is not welcomed for a theory of quantum gravity, as it assumes the continuum structure of spacetime persists even to distances shorter than the Planck scale.
Instead one would expect a fundamental theory to  help resolve such divergences.
Composite operators play a central role in the classification of coadjoint orbits of our symmetry group \cite{DonnellyFreidelMoosavianSperanza202012} and are therefore essential to its quantization.

The introduction of a deformation of the algebra at the Planck scale is strongly suggested by the finite entropy of black holes and other causal horizons.
To see how the symmetry group \eqref{eq:the corner symmetry group} relates to entropy, we first consider the simpler example of Yang-Mills theory.
In Yang-Mills theory with gauge group $g$, the symmetry group analogous to \eqref{eq:the corner symmetry group} is a direct product of the gauge group over points of the surface $S$ \cite{Donnelly:2016auv}:
\begin{equation}
G_{\text{YM}} = g^S:= \prod_{x \in S} g(x),
\end{equation}
where each $g(x)$ is an independent copy of the Lie group $g$ at each point $x$ of the surface $S$.
We will use the shorthand notation $g^S$ for this product here and throughout the paper.
This infinite-dimensional group can be made precise by including a lattice regulator \cite{Buividovich:2008gq,Donnelly:2011hn,Casini:2013rba,Donnelly:2014gva,Soni:2015yga,Lin:2018bud}.
The states of a region bounded by the surface $S$ transform nontrivially under $G$, and when decomposed into irreducible representations we obtain a ``$\log \dim (R)$'' term.
\begin{equation} \label{SYM}
S_{\text{YM}} = - \sum_R p_R \log p_R + \sum_R p_R \log(\rho_R) + \sum_R p_R \log \dim(R),
\end{equation}
where $p_R$ is a probability distribution over all irreducible representations of $G_{\text{YM}}$, $\rho_R$ a set of density matrices and $\dim(R)$ the dimension of each irreducible representation.
All three terms in \eqref{SYM} diverge when the number of points in $S$ is taken to infinity; this is the familiar ultraviolet divergence of the entanglement entropy and comes from the infinite density of ultraviolet degrees of freedom.
These ultraviolet divergences are expected in a continuum field theory, and the contribution of the edge modes plays an important role in the relation between the entanglement entropy and the conformal anomaly \cite{Donnelly:2014gva}.

In quantum gravity, we don't expect the behavior of continuum quantum field theory to persist into the infinite ultraviolet.
At the perturbative level, quantum gravity is much like a gauge theory and there has been much progress in calculating one-loop perturbative quantum-gravitational corrections to the entanglement entropy of gravitons \cite{Benedetti:2019uej,Anninos:2020hfj,David:2022jfd}.
It is however expected that the non-perturbative result for the entropy should be finite and universal \cite{Jacobson:2003wv,Bianchi:2012ev}.
This is in some tension with an infinite-dimensional symmetry group of the form \eqref{eq:the corner symmetry group}, which has a continuum of generators, leading to an ultraviolet divergence similar to that of quantum field theory.
The continuum limit is already present kinematically in the structure of the commutators of the algebra \eqref{gslr} which contain delta functions localized at coincident points of $S$.
A further complication comes from the noncompact $\Sl(2,\R)$ factor in \eqref{gslr} --- unitary representations of noncompact groups are infinite-dimensional and this would lead to infinities in the na\"ive application of \eqref{SYM}.

The quantization method we pursue in this work is motivated by an analogy with single-particle quantum mechanics.
It was shown in \cite{DonnellyFreidelMoosavianSperanza202012} that the coadjoint orbits of $\mfk{g}_{\mfk{sl}(2,\mbb{R})}(S)$ can be reduced to those of the Wigner little group $\sdiff$; the latter is the algebra of Hamiltonian transformations of a two-dimensional phase space,
which can be quantized in much the same way as the familiar case of the standard $(x,p)$ phase space.
This quantization leads to a deformation of the symmetry algebra in
which functions on  $S^2$ are  replaced with 
their fuzzy sphere analogs, which are noncommuting hermitian matrices.  
The resulting deformed algebra is finite-dimensional and isomorphic to 
$\mfk{su}(N)$, where $1/N$ plays the role of a deformation parameter,
which has a physical interpretation as a fundamental unit of area.  
This parameter is analogous to the introduction of Planck's constant $h$, which effectively discretizes the phase space into a fuzzy space with cells of area $h$.
We note that Planck was motivated by understanding Boltzmann's entropy formula,\footnote{For a historical account, see \cite{kuhn}.} and it would be remarkable if the same mechanism responsible for the finite Boltzmann entropy could be responsible for a finite Bekenstein-Hawking entropy.

The deformation of $\sdiff$ we consider is well known: it arises, 
for example, in string theory in the context of matrix models \cite{Hoppe198201,Hoppe198901,PopeStelle198905}. 
There is also some similarity to the holographic spacetime model of Ref.~\cite{Banks:2018ypk}, in which cosmological horizons are replaced with fuzzy spaces.
The key distinction here is that rather than taking a ``bottom-up'' approach and introducing the area-preserving diffeomorphism symmetry by hand, we have derived it ``top-down'' from the symmetries of general relativity. 
This allows us to relate the quantum-mechanical generators to geometric quantities in general relativity.
Moreover, we will show how to incorporate the boost symmetry of the normal plane, which has not appeared previously in the 
aforementioned models.
This boost symmetry plays an especially important role in the context of horizon thermodynamics, where the global boost generator
for Killing horizons is the modular Hamiltonian. 

The quantization procedure we consider is accompanied by a deformation of the classical symmetry algebra.
These deformations are ubiquitous in physics: two classical examples are the deformation of the  Galilean algebra into the semi-simple Lorentz algebra and the deformation of the Poincar\'e algebra to the semi-simple de Sitter algebra.
In both cases, a deformation parameter is needed and the deformation is more stable than the original algebra. In the examples just mentioned these deformation parameters are constants of nature such as the cosmological constant or the inverse speed of light.
In the case of deforming $\sdiff$ to $\su(N)$, the small parameter is $1/N$.
In light of the analogy with quantum mechanics, it is natural to guess that $N$ 
is related to the area of the surface $S$
in Planck units, which is supported by 
the matching conditions for Casimir operators 
described in section
\ref{sec:casmatch}.

The results of section
\ref{sec:casmatch} indicate that once the deformation
has been identified, the quantization procedure
is largely constrained by the matching conditions 
on the Casimirs.  However, these conditions fail 
to fully fix the quantization for two reasons.  First,
although the matching determines the representation
for a folium of the gravitational phase space defined
by fixing the values of all Casimirs,  the full phase space is 
in general a sum of several such folia.  The quantization
is then expected to be a sum of different representations, and
it is a nontrivial problem to determine the multiplicity
of the representations appearing in the quantization.  
The second reason that the resulting quantization 
is not fully determined is that it may not be the 
case that the deformed algebras explored in this work are unique.
In particular, there may be other deformations, or the 
quantization may proceed by representing the original,
undeformed algebras.  The perspective taken in this paper
is that the deformations we identify provide
nontrivial, finite-dimensional deformed algebras that 
can be viewed as regulated versions of the continuum
algebras.  The question of uniqueness of these deformations
remains open.

\subsection{Main results of the paper}
We present here a technical summary of our main results.
To make our way toward the quantization of the full group $\Gslr$, we start 
by analyzing some of its important subgroups. 
In the problem of classification of coadjoint orbits of $\Gslr$, one important subgroup is the group of area-preserving diffeomorphisms of $S$ \cite{DonnellyFreidelMoosavianSperanza202012}. 
Choosing an area form $\nu$ on $S$, this is the subgroup of $\tenofo{Diff}(S)$ that preserves $\nu$
\begin{equation}
    \tenofo{SDiff}(S)_{\nu}\equiv\{f\in\tenofo{Diff}(S) \,|\, f^*\nu=\nu\},
\end{equation}
whose Lie algebra is denoted as $\sdiff_{\nu}$. 
All area forms on a sphere are diffeomorphic up to an overall scaling, hence we can always restrict attention
to the natural volume form on the unit round sphere, and denote the corresponding area-preserving subalgebra simply as $\sdiff$.
It is a celebrated result that $\sdiff$ can be viewed as a large-$N$ limit of $\su(N)$\footnote{
Note that the precise meaning of the large $N$ limit of $\su(N)$
is ambiguous, and different limiting procedures can result in 
non-isomorphic infinite-dimensional Lie algebras
(see e.g.\ \cite{PopeStelle198905, deWit:1989yb, Bordemann:1990pa}). 
The way the limit to $\sdiff$ should be understood is in 
terms of quasi-limits, as defined in \cite{Bordemann:1990pa}.}
\begin{equation} \label{sdiffsun}
    \sdiff \mrln \mfk{su}(N).
\end{equation}
The large-$N$ limit of $\mfk{su}(N)$ has been part of a vast investigation in the past starting with the pioneering work of 't Hooft on QCD \cite{tHooft197404}.

\begin{figure}\hspace*{-1.2cm}\centering
\begin{tikzcd}[row sep=1.5cm,line width=2pt]
\diff \oplus_{\mcal{L}}\, \slr^S \arrow[d,shift left] \arrow[r] & \sdiff \oplus_{\mcal{L}} \slr^S \arrow[r] \arrow[d,shift left] & \sdiff \oplus_{\mcal{L}} \R^S \arrow[r] \arrow[d,shift left] &  \sdiff \arrow[d,shift left, "\substack{\text{Matrix} \\ \text{Regularization}}"] \\
??? \arrow[u,shift left, "\substack{\text{Large-$N$} \\ \text{Limit}}"] & \arrow[l] \su(N,N) \arrow[u,shift left]  & \arrow[l] \Sl(N,\C)\oplus \mathbb{R} \arrow[u,shift left] & \arrow[l] \su(N) \arrow[u,shift left]
\end{tikzcd}
\caption{The corner symmetry algebra $\gslr$, its subalgebras and their regularizations. The regularized algebra whose large-$N$ limit is $\gslr$ is missing in our analysis.}\label{fig:subalgs}
\end{figure}

Our goal is to generalize the 
procedure \eqref{sdiffsun} by which the 
large-$N$ limit of $\su(N)$ 
approaches $\sdiff$
to obtain a new sequence of deformed algebras that limit
to the full  corner symmetry algebra.
This generalization proceeds via the sequence of subalgebras depicted in figure \ref{fig:subalgs}.
From the full algebra $\diff \oplus_{\mcal{L}}\, \slr^S$ in the upper-left corner, we first fix an area form and consider the subalgebra $\sdiff\oplus_{\mcal{L}}\slr^S$ 
which preserves this area form.
Further fixing a hyperbolic generator of $\slr$ at each point on $S$ reduces the algebra to $\sdiff \oplus_{\mcal{L}} \R^S$ consisting of area-preserving diffeomorphisms and pointwise boosts.
Finally, fixing the boost generator to zero we are left with the little group $\sdiff$.
Then starting from the known regularization of $\sdiff$ by $\su(N)$ we proceed leftward along the bottom row of the diagram, finding an increasing sequence of finite-dimensional Lie algebras compatible with the $\su(N)$ regularization of $\sdiff$.

The regularization procedure is straightforward: we write the mode expansion of the generators on the sphere and look for matrices whose commutators agree with the Poisson brackets up to small corrections.
The first subalgebra we consider is:
\begin{equation}\label{eqn:cRalg}
    \mfk{c}_{\mbb{R}}(S)=\mfk{sdiff}(S)\oplus_{\mcal{L}}\mbb{R}^S. 
\end{equation}
Following \cite{DonnellyFreidelMoosavianSperanza202012}, we denote the smeared phase space generators of $\sdiff$ and $\mbb{R}^S$ as $J[\phi]$ and $N[\lambda]$, respectively. 
The smearing parameters $(\phi,\lambda)$ are both real-valued functions on the sphere.  These functions can be expanded in the basis of spherical harmonics
$\{Y_\alpha\}$
where $\alpha$ stands for a pair of indices $(A,a)$ with $A\in\{0,1,\ldots\}$
denoting the total angular momentum quantum number, 
and $a\in\{-A,\ldots,A\}$ is the magnetic quantum number.
The generators in this basis are denoted $J_\alpha \equiv J[Y_\alpha]$,
$N_\alpha = N[Y_\alpha]$, and the Poisson brackets of the generators 
$(J_\alpha, N_\alpha)$ implementing the Hamiltonian action of (\ref{eqn:cRalg})
on the gravitational phase space are given by
\begin{equation} \label{sdiffRpoisson}
   \qquad 
   \begin{aligned}
    \{J_\alpha,J_\beta\}&= C\indices{_{\alpha\beta}^\gamma}J_\gamma,
    \\
    \{J_\alpha,N_{\beta}\}&= C\indices{_{\alpha\beta}^\gamma}N_{\gamma},
    \\
    \{N_{\alpha},N_{\beta}\}&=0.
    \end{aligned}
\end{equation}
The $C\indices{_{\alpha\beta}^\gamma}$ are structure constants of the Poisson bracket on $S^2$,
\begin{equation}
\upb^{AB} (\nabla_A Y_\alpha \nabla_B Y_\beta) = C\indices{_{\alpha\beta}^\gamma} Y_\gamma,
\end{equation}
where $\upb^{AB}$ is the inverse of the standard volume form 
on the unit radius sphere.  
 An explicit expression for $C\indices{_\alpha_\beta^\gamma}$
can be given in terms of Wigner 3j-symbols  (see Appendix \ref{sec:sh} for  details)  \cite{Dowker:1990iy,Dowker:1990ss}.

As discussed in section \ref{sec:sdiffslnc}, the matrix regularization
of \eqref{sdiffRpoisson} is achieved by replacing the generators 
with matrices of dimension $2N \times 2N$.  The generators of the deformed
$\sdiff$ subalgebra correspond to matrices of the form $\wh{Y}_{\id\alpha} = 
\mathbbm{1}_{2}\otimes \wh{Y}_\alpha$, obtained by simply tensoring 
the fuzzy spherical harmonics with the $2\times 2$ identity matrix.  
The remaining generators are of the form $\wh{Y}_{1\alpha} = \rho_1\otimes 
\wh{Y}_\alpha$, where $\rho_1 = \frac12\begin{pmatrix}0&1\\-1&0\end{pmatrix}$.
Together, the commutators of $(\wh{Y}_{\id\alpha}, \wh{Y}_{1\alpha})$
are taken to define the deformed algebra in a $2N$-dimensional representation.
Denoting the corresponding generators of the deformed Lie 
algebra $(\algX_\alpha, \algZ_\alpha)$, appropriately rescaled, the Lie
brackets take the form
\begin{eqaligned}
    [\algX_\alpha, \algX_\beta] 
&= \wh{C}\indices{_\alpha_\beta^\gamma}\algX_\gamma, \\
[\algX_\alpha, \algZ_\beta] 
&= \wh{C}\indices{_\alpha_\beta^\gamma}\algZ_\gamma, \\
[\algZ_\alpha,\algZ_\beta] 
&= -\frac{1}{N^2} \wh{C}\indices{_\alpha_\beta^\gamma}\algX_\gamma,
\end{eqaligned}
where $\widehat C\indices{_{\alpha \beta}^\gamma}$ denote the structure
constants for the fuzzy spherical harmonic commutator, 
$[\widehat Y_\alpha, \widehat Y_\beta] = \frac{2i}{N} \widehat C_{\alpha \beta}{}^\gamma \widehat Y_\gamma$.  Since $\wh{C}\indices{_\alpha_\beta^\gamma}
\rightarrow C\indices{_\alpha_\beta^\gamma}$ as $N\rightarrow\infty$, we 
find that the algebra defined by $(\algX_\alpha,\algZ_\alpha)$
approaches the Poisson bracket algebra 
(\ref{sdiffRpoisson}) in the large $N$ limit.  For finite
$N$, one can show that the algebra is isomorphic to $\mfk{sl}(N,\mathbb{C})
\oplus \mathbb{R}$ (after removing the central generator $\algX_{00}$ which 
does not generate a diffeomorphism in the continuum algebra).  This establishes the following
novel large $N$ limit:
\begin{equation}
    \crr=\mfk{sdiff}(S)\oplus_{\mcal{L}}\mbb{R}^S\mrln\crn\simeq\mfk{sl}(N,\mbb{C})\oplus \mathbb{R}. 
\end{equation}

Going a step further, we then consider an enlargement of the algebra 
by including the full set of pointwise $\slr$ transformations:
\begin{equation}
    \cslr=\mfk{sdiff}(S)\oplus_{\mcal L}\slr^S. 
\end{equation}
The Hamiltonian generators of the Poisson bracket algebra in the spherical harmonic 
basis are now denoted $(J_\alpha, N_{a\alpha})$, with $a = 0,1,2$ an
$\mfk{sl}(2,\mathbb{R})$ index.  
The Poisson brackets are given by
\begin{equation} \label{sdiffslrpoisson}
    \begin{aligned}
    \{J_\alpha,J_\beta\}&= C\indices{_{\alpha\beta}^\gamma}J_\gamma,
    \\
    \{J_\alpha,N_{a \beta}\}&= C\indices{_{\alpha\beta}^\gamma}N_{a \gamma},
    \\
    \{N_{a\alpha},N_{b \beta}\}&=E\indices{_{\alpha\beta}^\gamma}\varepsilon\indices{_{ab}^c}N_{c\gamma}.
    \end{aligned}
\end{equation}
where $\varepsilon_{abc}$ denotes the Levi-Civita symbol, whose index
is raised with the metric $\eta^{cd}  = \text{diag}(-1,+1,+1)$ 
and we have introduced a new set of structure constants $E\indices{_{\alpha\beta}^\gamma}$ associated with the commutative product 
of functions on the sphere, $Y_\alpha Y_\beta = E\indices{_{\alpha\beta}^\gamma} Y_\gamma$.
Like the $C\indices{_{\alpha\beta}^\gamma}$, the $E\indices{_{\alpha\beta}^\gamma}$ can be written explicitly in terms of Wigner 3j symbols (see Appendix \ref{sec:sh} for the details) \cite{Dowker:1990iy,Dowker:1990ss}.

The regularization of $\cslr$ is obtained
in section \ref{sec:sdiffsunn} by a similar procedure 
as the case of $\crr$.  We construct a $2N$-dimensional representation
of the deformed algebra with the matrices $\wh{Y}_{\id\alpha} = \mathbbm{1}_{2}\otimes \wh{Y}_\alpha$, and $\wh{Y}_{a\alpha} = \rho_a\otimes
\wh{Y}_\alpha$, where $\rho_a$ are a basis for $\mfk{sl}(2,\mathbb{R})$,
defined in equation (\ref{matr}).  The corresponding 
Lie algebra generators are denoted $(\algX_\alpha, \algZ_{a\alpha})$,
rescaled appropriately, and their algebra derived from the $2N$-dimensional
representation is given by 
\begin{eqaligned}
    [\algX_\alpha,\algX_\beta] 
&= \wh{C}\indices{_\alpha_\beta^\gamma} \algX_\gamma, \\
[\algX_\alpha, \algZ_{a \beta}]
&=
\wh{C}\indices{_\alpha_\beta^\gamma}\algZ_{a \gamma},  \\
[\algZ_{a\alpha},\algZ_{b \beta}]
&=
\wh{E}\indices{_\alpha_\beta^\gamma}\varepsilon\indices{_a_b^c}\algZ_{c\gamma}
-\frac{1}{N^2} \wh{C}\indices{_\alpha_\beta^\gamma}\eta_{ab}\algX_\gamma,
\end{eqaligned}
where $\wh{E}\indices{_\alpha_\beta^\gamma}$ are the structure 
constants for the Jordan product of the fuzzy spherical 
harmonics, $\wh{Y}_\alpha \circ\wh{Y}_\beta = \frac12(\wh{Y}_\alpha\wh{Y}_\beta
+\wh{Y}_\beta\wh{Y}_\alpha) = \wh{E}\indices{_\alpha_\beta^\gamma}\wh{Y}_\gamma$,
which approach $E\indices{_\alpha_\beta^\gamma}$ in the large $N$ limit.  
It is then readily apparent that the deformed algebra generated by 
$(\algX_\alpha, \algZ_{a\alpha})$ approaches the classical 
algebra \eqref{sdiffslrpoisson} as $N\rightarrow\infty$.
Since the deformed algebra can be shown to be isomorphic
to $\mfk{su}(N,N)$, this establishes the second novel 
large $N$ limit in this work,
\begin{equation}
    \cslr=\mfk{sdiff}(S)\oplus_{\mcal{L}}\slr^S\mrln\cslrn\simeq\mfk{su}(N,N).
\end{equation}

Having established the existence of regularized algebras that approach
the three continuum algebras $\sdiff$, $\crr$, and $\cslr$ at large $N$, we turn in section \ref{sec:casimirs} to the analysis 
of Casimir operators for the deformed and continuum algebras.  
For each large $N$ limit, we demonstrate that the Casimir elements 
for the deformed algebras approach 
corresponding Casimir elements of the continuum algebras.  
We further argue that the matching conditions for the Casimir elements
can be used to determine the representation of the deformed algebra
that appears in the quantization of the gravitational phase space,
and further argue that the matching conditions can also be used to 
determine the value of the deformation parameter $N$.  We outline 
how this procedure can be carried out in detail in the case 
of $\mfk{su}(N)$ in section \ref{sec:casmatch}.  
We leave a detailed calculation
of the matching for the other deformed algebras $\mfk{sl}(N,\mathbb{C})\oplus 
\mathbb{R}$ and $\mfk{su}(N,N)$ for future work.  
Additionally, for the case of $\mfk{su}(N,N)$ we identify an operator
that can be associated with the dynamical area of the surface, and 
argue that while it is a Casimir in the continuum algebra, it becomes 
noncommutative at finite $N$.  

Since the various large $N$ limits considered in this paper
rely on properties of the fuzzy spherical harmonics $\wh{Y}_\alpha$,
we collect a number of formulas and conventions related to them
in appendix \ref{appendix:spherical_harmonics}.  In particular,
the conventions used for the continuum spherical harmonics are 
presented in section \ref{sec:sh}, and conventions for spin-weighted
harmonics, which are used in calculations of structure constants
for various differential operators, are given in section
\ref{sec:swh}.  Following that, we review the presentation
of the fuzzy spherical harmonics developed in \cite{Freidel:2001kb}
in section \ref{sec:fuzH}.  Additionally, we present a novel formula,
derived from an identity due to Nomura \cite{nomura1989description},
for the asymptotic limit of the Wigner 6j-symbol appearing in the 
structure constants for the fuzzy harmonics product, and demonstrate 
that it immediately provides an expansion of this matrix product order
by order in powers of $\frac{1}{N}$.  This asymptotic expansion allows us to  evaluate subleading corrections 
to the matrix product beyond the Poisson bracket term. We develop this expansion in  section \ref{sec:mpexpansion} 
by determining the $\mathcal{O}(\frac{1}{N^2})$ 
contribution to the matrix product, showing that it takes the 
form expected from a valid Moyal product of functions on the sphere. 
In section \ref{app:star} we show that to all orders in $\frac{1}{N}$, the 
Nomura identity yields the expansion of a specific choice of Moyal
product on the sphere.  

The majority of this work has focused on the three subalgebras 
of the full corner symmetry group
appearing in the top line of figure \ref{fig:subalgs}.  Ultimately,
however, we are 
interested in determining the deformation
and quantization of the full symmetry  $\diff \ltimes\Sl(2,\mbb{R})^S$.  While we do not obtain a deformation of this symmetry
algebra due to several conceptual issues related to the form such a 
deformation should take, we initiate the investigation
into such deformations by determining the structure constants
of the  $\diff$ algebra, including diffeomorphisms that 
do not preserve a chosen area form.  These structure constants are 
derived in the spherical harmonic basis in appendix
\ref{app:diffsc}, and they do not appear to have been presented 
previously in the literature.  These expressions will inform 
future work into possible deformations of the full symmetry algebra,
and also will likely be useful in other contexts in which 
$\diff$ algebra appears, such as extended symmetries of asymptotically
flat space and celestial holography \cite{Campiglia:2014yka,
Campiglia:2015yka, Compere:2018ylh, Barnich:2021dta, Freidel:2021fxf, Freidel:2021qpz, Prema:2021sjp,
Raclariu:2021zjz, Pasterski:2021rjz}.
The remaining appendices include calculational details and proofs of formulas appearing in the main text.

\section{Corner symmetries and their Poisson brackets}\label{sec:cornsyms}

In this section, we review some aspects of the corner symmetry algebra $\gslr$ and some of its important subalgebras. To prepare the ground for the matrix regularization of these subalgebras, we introduce an explicit basis of generators and 
give the structure constants of the algebras in this basis.

\subsection{Corner symmetry algebra and its subalgebras}\label{sec:subalgs}

As established in \cite{Donnelly:2016auv}, in the presence of a finite-distance corner $S$, general relativity in the metric formulation enjoys a symmetry group, called the \emph{corner symmetry group}, which acts on the dynamical variables, 
which in the classical analysis correspond to functions on the theory's phase space.
The Lie algebra of the corner symmetry group is
denoted $\gslr$, and consists of two types of transformations: 1)  diffeomorphisms that are tangent to the corner $S$, and 2) generalized boosts that fix $S$ but act on its normal plane. 
Vector fields on $S$ generate the first type of transformations, forming 
a $\mfk{diff}(S)$ Lie algebra under the vector field Lie bracket, while
the second type of transformations are generated by $\mfk{sl}(2,\mbb{R})$-valued functions on $S$, with the Lie bracket computed pointwise. Denoting the coordinates on $S$ by $\si\equiv(\si^1,\si^2)$,  the generators can be packaged together into a pair $(\xi,\lambda)$, where $\xi(\si)=\xi^A(\si)\pa_A$ is a vector field on $S$ and $\lambda(\si)=\lambda^a(\si)\tau_a$ for $a=0,1,2$ belongs to $\mfk{sl}(2,\mbb{R})^S$.
Here $\tau_a$ are $\slr$ generators, whose Lie brackets are given by $[\tau_a, \tau_b] = \varepsilon_{ab}{}^c \tau_c$ where $\varepsilon_{abc}$ is the three-dimensional Levi-Civita symbol $\varepsilon_{012} = 1$ and $\slr$ indices are raised and lowered with the metric $\eta_{ab}=\tenofo{diag}(-1,+1,+1)$. An explicit matrix representation of such generators is
\begin{equation}\label{eqn:taua}
    \tau_0=\frac{1}{2}\begin{pmatrix}
    0 & -1
    \\
    +1 & 0
    \end{pmatrix},\qquad 
    \tau_1=\frac{1}{2}\begin{pmatrix}
    +1 & 0
    \\
    0 & -1
    \end{pmatrix},
    \qquad 
    \tau_2=\frac{1}{2}\begin{pmatrix}
    0 & +1
    \\
    +1 & 0
    \end{pmatrix}.
\end{equation}
The Lie algebra of  $\gslr$ is then given  by
\begin{equation}\label{eq:the Lie bracket of gS}
    \Big[(\xi_1,\lambda_1),(\xi_2,\lambda_2)\Big]=\Big([\xi_1,\xi_2]_{\tenofo{Lie}},\;\mcal{L}_{\xi_1}\lambda_2-\mcal{L}_{\xi_2}\lambda_1+[\lambda_1,\lambda_2]_{\mfk{sl}(2,\mbb{R})} \Big),
\end{equation}
where $[\cdot,\cdot]_{\tenofo{Lie}}$ denotes the Lie bracket of vector fields on $S$, and $[\cdot,\cdot]_{\mfk{sl}(2,\mbb{R})}$ is the $\mfk{sl}(2,\mbb{R})$ Lie bracket. Explicitly, we have
\begin{equation}
    \begin{gathered}
    [\xi_1,\xi_2]_{\tenofo{Lie}}:=(\xi_1^A\pa_A\xi_2^B-\xi_2^A\pa_A\xi_1^B)\partial_B,
    \\
    \mcal{L}_{\xi_1}\lambda_2:=\xi_1^A\pa_A\lambda_2,\qquad \mcal{L}_{\xi_2}\lambda_1:=\xi_2^A\pa_A\lambda_1,
    \\
    [\lambda_1,\lambda_2]_{\mfk{sl}(2,\mbb{R})}:=2\lambda^a_1\lambda^b_2\,\varepsilon\indices{_{ab}^c}\tau_c,
    \end{gathered}
\end{equation}
As is clear from \eqref{eq:the Lie bracket of gS}, $\mfk{diff}(S)$ acts on $\mfk{sl}(2,\mbb{R})^S$ by the Lie derivative and hence the symmetry algebra is
\begin{equation}\label{eq:the algebra gS}
    \mfk{g}_{\mfk{sl}(2,\mbb{R})}(S)=\mfk{diff}(S)\oplus_{\mcal{L}}\mfk{sl}(2,\mbb{R})^S,
\end{equation}
where $\oplus_{\mcal{L}}$ denotes a semidirect sum with an action of the 
first algebra on the second realized by the Lie derivative.

The subalgebras relevant in this work all involve a restriction of the 
$\mfk{diff}(S)$ algebra to an area-preserving subalgebra, which can be explicitly 
constructed as follows.  
Let $\wt{n}$ be a positive density on $S$. Area-preserving diffeomorphisms are generated by divergenceless vector fields $\xi^A$ with respect to 
$\wt{n}$,  which satisfy  $ \pa_A (\wt{n} \xi^A)=0$. Since the Lie bracket of two such vector fields also satisfies this condition, the set of area-preserving diffeomorphisms forms a subalgebra of $\mfk{diff}(S)$, which we denote as $\mfk{sdiff}_\nu(S)$, where $\nu$ is the volume form on the sphere defined as 
\begin{equation}
    \nu:=\frac{1}{2}\nu_{AB}\rd\si^A\wedge\rd \si^B, 
\end{equation}
where $\nu_{AB}:=\wt{n}\,\varepsilon_{AB}$, and $\varepsilon_{AB}$ is the Levi-Civita symbol with $\varepsilon_{12}=1$. 

An alternative presentation of the area-preserving diffeomorphisms can 
be given in terms of functions on the sphere, and will serve more convenient
when comparing to the regularized algebras in later sections.   
Since $\nu^{AB}=\varepsilon^{AB}/\wt{n}$
(again with $\varepsilon^{12} = 1$) defines a Poisson tensor on the sphere, the corresponding
Poisson bracket
\begin{equation}\label{eq:the Poisson bracket of function on sphere with density nu}
    \{\phi_1,\phi_2\}_{\nu}:=\nu^{AB}\pa_A\phi_1\pa_B\phi_2, \qquad \phi_1,\phi_2\in C(S),
\end{equation}
where $C(S)$ denotes the space of functions on sphere, acts as a derivation of the function $\phi_1$ on $\phi_2$. 
Note that this relation implies that 
\beq
d\phi_1\wedge d\phi_2 =\nu\, \{\phi_1,\phi_2\}_\nu.  
\eeq
The vector field $\xi^A$ associated with this derivation is divergenceless, and can
be identified with a function $\phi$,
called the \emph{stream function}, through the 
relation\footnote{This follows from the fact that a divergenceless
vector field satisfies $d(\xi\cdot \nu) = 0$, which 
on the sphere implies that $\xi\cdot \nu = -d\phi$ for some function
$\phi$. 
}
\begin{equation}\label{eqn:stream}
    \xi^B=\nu^{AB}\pa_A\phi. 
\end{equation}
For a given vector field $\xi^A$, this equation determines $\phi$ up to a constant
shift, which can be fixed by requiring that the function $\phi$ integrate to 
zero over the sphere.
The action of the vector field on functions is then reproduced by taking Poisson
brackets with the associated stream function.  
To make this correspondence clear, we denote a vector field  
preserving the area form $\nu$ corresponding to the stream function $\phi$ by $\xi_{\phi}^\nu:= \nu^{AB}\pa_A\phi \pa_B$.
This vector field is such that 
\be \label{eqn:xiphibrack}
[\xi_\phi^\nu,\xi_\psi^\nu]= \xi^\nu_{\{\phi,\psi\}_\nu}, \qquad 
\xi_\phi^\nu[\psi]=\{\phi,\psi\}_\nu,
\ee
demonstrating that the map from an area-preserving vector field to its 
stream function is a Lie algebra homomorphism into the Poisson bracket algebra of 
functions on the sphere.  Note that the relation (\ref{eqn:stream}) implies that 
the constant function on the sphere 
is not the stream function of any nonzero vector field, and this 
function generates the center of the full Poisson algebra.  Hence, the Poisson algebra
can be viewed as a trivial central extension of the algebra $\sdiff$ by this constant 
function.

The area-preserving subalgebra comprises an important component of the main algebra 
studied in this work, which
 is the subalgebra of $\gslr$ that preserves a given volume form
$\nu =\wt{n} d^2\sigma$.  In \cite{DonnellyFreidelMoosavianSperanza202012}, it was 
called the \emph{centralizer subalgebra} $\mfk{c}_{(\mfk{sl}(2,\mbb{R}),\nu)}(S)$, 
since it centralizes the $\Sl(2,\mathbb{R})$
quadratic Casimir operator in the universal enveloping algebra, which defines an area
form on the gravitational phase space.  
All $\Sl(2,\mbb{R})$ transformations preserve the volume form $\nu$, therefore the centralizer subalgebra  is 
simply obtained by restricting the diffeomorphisms appearing in the full algebra 
$\gslr$ to area-preserving ones. 
This fixes  the centralizer subalgebra to be 
\begin{equation}\label{eq:the algebra cS}
    \mfk{c}_{(\mfk{sl}(2,\mbb{R}),\nu)}(S)=\mfk{sdiff}_\nu(S)\oplus_{\mcal{L}}\mfk{sl}(2,\mbb{R})^S.
\end{equation}
Going forward, we will work exclusively with the normalized\footnote{In standard 
spherical coordinates, $\nu_0=\frac1{4\pi} \sin\theta \rd \theta\wedge \rd \phi$ 
and satisfies $\int_S \nu_0=1$.} round sphere volume form, denoted 
$\nu_0$.\footnote{This differs from the ``dynamical'' measure  $\nu=\nu_N$ given by $\nu_N = \sqrt{N_aN^a} \nu_0$
considered in \cite{DonnellyFreidelMoosavianSperanza202012},
with $N_a$ associated with the $\mfk{sl}(2,\mathbb{R})^S$
gravitational Hamiltonian (see section \ref{sec:PJ}).} For simplicity,
we will drop the $\nu_0$ label when working with these algebras, hence we 
denote
\be \label{eqn:sdifffixed}
\mfk{sdiff}(S):= \mfk{sdiff}_{\nu_0}(S),
\qquad 
\mfk{c}_{\mfk{sl}(2,\mbb{R})}(S):=\mfk{c}_{(\mfk{sl}(2,\mbb{R}),\nu_0)}(S).
\ee
Moser's theorem \cite{Moser1965} implies that any volume
form $\nu$ of area $A=\int_S \nu$ is, up to a constant multiple, isomorphic to $\nu_0$,
meaning there exists a diffeomorphism $\Phi$ such that $\Phi^*(\nu) =A\nu_0$.
This implies that the different area-preserving diffeomorphism groups are isomorphic
to each other 
\be 
\mfk{sdiff}_\nu(S)=\Phi( \sdiff).
\ee
A useful analogy is to compare the Diff$(S)$ group to the Lorentz group $\SO(3,1)$ and the area preserving SDiff$_\nu(S)$ group to the rotation subgroup $\SO(3)_p$ preserving the timelike 4-momentum ${p}$
\cite{DonnellyFreidelMoosavianSperanza202012}. The subgroups $\SO(3)_p$ are all isomorphic to the canonical subgroup $\SO(3)$ associated with a reference timelike direction ${p}_0$. The isomorphism is such that $\SO(3)_p= g_p \SO(3)g^{-1}_p $ for a boost $g_pp_0=p$.
 Within this analogy, the full corner symmetry group $\Gslr$ is the analog of the Poincar\'e group, while the centralizer subgroup $C_{\text{SL}(2,\mathbb{R})}$ is analogous to the subgroup $\SO(3)\ltimes \mbb{R}$ preserving the given direction. Finally, the diffeomorphisms that change the area form are analogous to the boost transformations of the Lorentz group which do not preserve $p_0$.

An important subalgebra of  $\gslr$  
is obtained by considering the one-dimensional subalgebra of $\mfk{sl}(2,\mbb{R})$ generated by a single generator. Taking this generator to be a hyperbolic generator --- for example, $\tau_1$ in \eqref{eqn:taua} --- this subalgebra is isomorphic to $\mbb{R}$. Considering functions on $S$ valued in this subalgebra rather than the full $\mfk{sl}(2,\mbb{R})$) yields the subalgebra $\mbb{R}^S\subset\mfk{sl}(2,\mbb{R})^S$, which is just the abelian Lie algebra $C(S)$ of real-valued functions on $S$. Imposing this restriction on the full algebra \eqref{eq:the algebra gS}, we end up with the following subalgebra of $\mfk{g}_{\mfk{sl}(2,\mbb{R})}$
\begin{equation}
    \mfk{g}_{\mbb{R}}(S)=\mfk{diff}(S)\oplus_{\mcal{L}}\mbb{R}^S. 
\end{equation}
This is the hydrodynamical algebra, which is the symmetry algebra of an ideal barotropic fluid \cite{MarsdenRatiuWeinstein1984a}. 
In the present context, this algebra plays an important 
role in the classification of coadjoint orbits of $\mfk{g}_{\mfk{sl}(2,\mbb{R})}$ \cite{DonnellyFreidelMoosavianSperanza202012}. 

 Imposing the same restriction on the centralizer algebra \eqref{eq:the algebra cS}, we arrive at another important subalgebra of $\mfk{g}_{\mfk{sl}(2,\mbb{R})}$
\begin{equation}
    \mfk{c}_{\mbb{R}}(S)=\mfk{sdiff}(S)\oplus_{\mcal{L}}\mbb{R}^S. 
\end{equation}
This algebra appears as the symmetry algebra of a charged particle on a sphere surrounding a magnetic monopole, as recently explored in \cite{AndradeeSilva:2020ofl}.
Turning off the boost generators in \eqref{eq:the algebra gS} and \eqref{eq:the algebra cS}, gives the subalgebras $\mfk{diff}(S)$ and $\mfk{sdiff}(S)$ of $\mfk{g}_{\mfk{sl}(2,\mbb{R})}$, respectively. 
Conversely, turning off the diffeomorphism generators in \eqref{eq:the algebra gS} and \eqref{eq:the algebra cS}, we get the subalgebras $\mfk{sl}(2,\mbb{R})^S$ and $\mbb{R}^S$ of $\mfk{g}_{\mfk{sl}(2,\mbb{R})}$, respectively. 
The Lie bracket of each of these subalgebras is obtained by restriction of \eqref{eq:the Lie bracket of gS} to the corresponding subalgebra. 
Each of these algebras is the Lie algebra of a subgroup of the full corner symmetry group, which we denote
\begin{equation}
    \begin{aligned}
    G_{\tenofo{SL}(2,\mbb{R})}(S)&=\tenofo{Diff}(S)\ltimes\tenofo{SL}(2,\mbb{R})^S, &\qquad C_{\tenofo{SL}(2,\mbb{R})}(S)&=\tenofo{SDiff}(S)\ltimes\tenofo{SL}(2,\mbb{R})^S,
    \\
    G_{\mbb{R}}(S)&=\tenofo{Diff}(S)\ltimes\mbb{R}^S, &\qquad C_{\mbb{R}}(S)&=\tenofo{SDiff}(S)\ltimes\mbb{R}^S.
    \end{aligned}
\end{equation}
The algebra inclusions obtained in this section can be summarized in the following diagram:
\begin{center}
\begin{tikzpicture}[row sep=1.5cm,line width=2pt]
\node at (0,0) {$\sdiff$};
\node at (1.5,0) {$\subset$};
\node at (3.5,0) {$\sdiff \oplus_{\mcal{L}} \R^S$};
\node at (5.5,0) {$\subset$};
\node at (8,0) {$\sdiff \oplus_{\mcal{L}} \slr^S$};
\node at (0,-0.75) {$\cap$};
\node at (3.5,-0.75) {$\cap$};
\node at (8,-0.75) {$\cap$};
\node at (0,-1.5) {$\diff$};
\node at (1.5,-1.5) {$\subset$};
\node at (3.5,-1.5) {$\diff \oplus_{\mcal{L}} \R^S$};
\node at (5.5,-1.5) {$\subset$};
\node at (8,-1.5) {$\diff \oplus_{\mcal{L}} \slr^S$};
\end{tikzpicture}
\end{center}

This concludes our brief synopsis of the relevant subalgebras of $\mfk{g}_{\mfk{sl}(2,\mbb{R})}$. 
In 
later sections, we will focus our attention on the algebras on the top row: $\sdiff$,
$\mfk{c}_{\mbb{R}}(S)$, and $\mfk{c}_{\slr}(S)$ and prove that they can be viewed as large-$N$ limits of the finite-dimensional Lie algebras  $\mfk{su}(N)$, $\mfk{sl}(N,\C)$, and $\mfk{su}(N,N)$, respectively.

\subsection{Poisson bracket representations}\label{sec:PJ}

An important property of the above algebras that will be essential in 
determining their regularizations is that they arise as symmetry algebras of 
classical phase spaces.  Because of this, each algebra deformation considered in section
\ref{sec:matreg} has a natural interpretation in terms of a quantization procedure for the 
associated phase space.  
This section describes how the algebras $\cslr$ and $\crr$
are represented via Poisson brackets on
phase spaces and introduces several quantities related to these representations that 
have direct analogs in the constructions of the deformed algebras.  

The algebra $\gslr$ was identified in \cite{Donnelly:2016auv} as the symmetry algebra
of general relativity restricted to a local subregion bounded by a 2-dimensional surface $S$.
These symmetries arise from diffeomorphisms acting in the vicinity of $S$,
and fail to be pure gauge since the presence of the boundary breaks some 
of the gauge symmetry of the theory.  Instead, these transformations are associated
with nonzero Hamiltonians which generate the action of the transformation on 
the gravitational phase space through Poisson brackets.  Hence, given a
generator $\xi = \xi^A\partial_A$ or $\lambda = \lambda^a\tau_a$ of $\gslr$, the 
corresponding Hamiltonians are given by 
\begin{equation}\label{eqn:Pxi}
    P[\xi]:=\bigintsss_S \xi^A\wt{P}_A, \qquad 
    N[\lambda]:=\bigintsss_S \lambda^a\wt{N}_a,
\end{equation}
where $\wt{P}_A(\sigma)$ and $\wt{N}_a(\sigma)$ are quantities 
related to the geometry of the embedded surface $S$ in spacetime, 
described in detail in \cite{Donnelly:2016auv, DonnellyFreidelMoosavianSperanza202012}. 
Since $\wt{P}_A$ and $\wt{N}_a$ are functions of the dynamical fields in the theory
(namely, the metric), the smeared generators $P[\xi]$, $N[\lambda]$ are 
functions on the phase space.  As such, they obey an algebra defined by the Poisson
bracket on phase space,
which can be shown to satisfy  
\begin{equation}\label{eq:PBgSaction}
    \begin{aligned}
    \{P[\xi_1],P[\xi_2]\}&= P[[\xi_1,\xi_2]_{\tenofo{Lie}}],
    \\
    \{N[\lambda_1],N[\lambda_2]\}&= N[[\lambda_1,\lambda_2]_{\mfk{sl}(2,\mbb{R})}],
    \\
    \{P[\xi],N[\lambda]\}&= N[\mcal{L}_\xi[\lambda]],
    \end{aligned}
\end{equation}
These 
brackets verify that the Hamiltonians $P[\xi]$, $N[\lambda]$ yield a Poisson bracket
representation of the algebra $\gslr$.  

When restricting to the area-preserving diffeomorphisms that appear in 
$\cslr$ and $\crr$, it is more convenient to parameterize the generators in 
terms of their stream functions.  
Since it will be convenient for the vectors to 
reproduce the Poisson brackets $\{,\}_{\upb}$ associated
with the unit radius volume form $\upb = 4\pi \nu_0$, 
we will define the stream function so that $\xi_\phi^B = \upb^{AB}
\partial_A\phi = \frac{1}{4\pi} \nu_0^{AB}\partial_A\phi$.
We can then write the Hamiltonian for an area-preserving diffeomorphism corresponding 
to the vector field $\xi_\phi$ as
\begin{equation} \label{eqn:Jphi}
    \begin{aligned}
    P[\xi_{\phi}]&=\bigintsss_S \xi_{\phi}^{B}\widetilde{P}_B=\bigintsss_S \upb^{AB}\pa_A\phi\wt{P}_B 
    = \bigintsss_S \phi \left(-\frac{1}{4\pi}\varepsilon^{AB}\partial_A P_{B}\right) \\
    &= \bigintsss_S \phi\,\wt{J}:=J[\phi],
    \end{aligned}
\end{equation}
where\footnote{The generator $J$ defined here is $-J$ in \cite{DonnellyFreidelMoosavianSperanza202012}.} 
\begin{equation} \label{eqn:Jdef}
    P_{A}:=\frac{\wt{P}_A}{\wt{n}_0}, \qquad J:=-\upb^{AB}\partial_AP_{ B}=\frac{\wt{J}}{\wt{n}_0}.
\end{equation}
Using \eqref{eq:PBgSaction} and employing the relation (\ref{eqn:xiphibrack}), we have
\begin{equation}\label{eqn:JNPB}
    \begin{aligned}
    \{J[\phi_1],J[\phi_2]\}&=J[\{\phi_1,\phi_2\}_{\upb}],
    \\
    \{J[\phi],N[\lambda]\}&=N[\{\phi,\lambda\}_{\upb}],
    \\
    \{N[\lambda_1],N[\lambda_2]\}&= N[[\lambda_1,\lambda_2]_{\mfk{sl}(2,\mbb{R})}].
    \end{aligned}
\end{equation}
The Poisson bracket of $\crr$ are obtained by simply restricting $\lambda$ to be 
proportional to a single $\Sl(2,\mathbb{R})$ generator, in which case 
the last Poisson bracket in \eqref{eqn:JNPB} vanishes. 

While the above discussion focused on the specific example of the algebras acting 
on the gravitational phase space of \cite{Donnelly:2016auv}, the various objects that 
appear in the description have interpretations in term of natural quantities 
arising for a generic phase space admitting an action of $\cslr$.
Given any such phase space $\phsp$, there exists a unique $\emph{moment map}$ $\mu:\phsp
\rightarrow \cslr^*$ that sends the phase space to the dual of the Lie algebra, which 
is a Poisson manifold 
foliated by the coadjoint orbits
\cite{kirillov2004lectures}.\footnote{Generically, 
the image of this map in $\cslr^*$ will include many
different coadjoint orbits; this is implied by the existence
of nontrivial Casimir functions on the phase space $\phsp$.}  
One can therefore construct the pullback map $\mu^*$ which sends a function on 
$\cslr^*$ to a function on $\phsp$.  
Since any element of $\cslr$ is naturally associated with a linear 
function on $\cslr^*$, the pullback map restricts to an action 
on $\cslr$, and defines a linear
map $\mu^*:\cslr\rightarrow C^\infty(\phsp)$; this is just the map that 
sends a Lie algebra element to its corresponding Hamiltonian on phase space.
This map is explicitly described by a quantity 
$\ham \in  \cslr^*\otimes C^\infty(\phsp)$,
i.e.\ a linear form on the Lie algebra valued in functions on the phase space.   The split
in $\cslr$ and its dual into $\sdiff$ and $\mfk{sl}(2,\mathbb{R})^S$ generators 
leads to a decomposition of $\ham$ into two components, $\ham = (J, N_a)$, with each component
coinciding with the functions $J(\sigma)$ and $N_a(\sigma)$ appearing in (\ref{eqn:Pxi})
and (\ref{eqn:Jdef}).\footnote{The fact that elements of $\cslr^*$ can be identified 
with functions on the sphere comes from the existence of a trace provided by 
the integral over the sphere with respect to the fixed volume form $\nu_0$.  This 
trace gives a canonical identification of $\cslr^*$ with $\cslr$, the 
latter of which is parameterized by functions $\phi$ and $\mfk{sl}(2,\mathbb{R})$-valued 
functions $\lambda_a$.}  Hence, $J$ and $N_a$ should be viewed as $C^\infty(\phsp)$-valued
functions on $S$.

The above discussion of the moment map can be clarified with a simple
example.  
Let $P=\mathbb{R}^3 \times \mathbb{R}^3$ be the phase space of a non-relativistic particle,
with coordinates $(q^i, p_j)$. It is acted upon by the rotation group with generators $X= X_i \sigma^i$ where $\sigma^i$ is the three dimensional matrix
\be 
(\sigma_i)_j{}^k=\varepsilon\indices{_i^k_j},
\ee 
with indices raised by the standard Euclidean metric.  
They satisfy the algebra
\be 
[\sigma_i,\sigma_j]= \varepsilon_{ij}{}^k\sigma_k.
\ee 
The matrices $\sigma_i$ can be taken as a basis for the 
$\mfk{so}(3)$ Lie algebra, and the moment map pullback $\mu^*$
sends each matrix to a function on phase space.  Defining these functions
as $L_i$, we have that
\beq
L_i = \mu^*\sigma_i = \varepsilon_{ijk}q^j p^k,
\eeq
and they satisfy
\beq
\{L_i, L_j\} = \varepsilon\indices{_i_j^k}L_k.
\eeq
If we parameterize the dual of the Lie algebra $\mfk{so}(3)^*$ with the same 
matrices $\sigma^i$, with a pairing defined by $\langle \sigma^i, \sigma_j\rangle
=\delta^i_j$, the moment map $\mu$ sending a point $(q^i, p_j)$
in $P$ to a point in $\mfk{so}(3)^*$ is therefore given by
\beq
\mu(q^i, p_j) = \varepsilon_{ijk}\sigma^i q^j p^k.
\eeq
The object $\ham$ in this case is an element of $\mfk{so}(3)^*\otimes C^\infty(P)$
given by
\beq
\ham = \sigma^i\otimes \varepsilon_{ijk}q^j p^k, 
\eeq
and we easily verify that it satisfies the defining property
\beq
\langle H,\sigma_i\rangle = \mu^*\sigma_i = L_i.
\eeq

\subsection{Mode expansion of Hamiltonian generators}\label{sec:modes}

The determination of the regularized algebras is most easily achieved in an explicit
basis for the generators, so in this section we construct such a basis for the Lie algebra
$\cslr$ and the corresponding phase space generators $J[\phi]$ and $N[\lambda]$ defined 
in (\ref{eqn:Jphi}) and (\ref{eqn:Pxi}). The Lie algebra is parametrized by a 
pair of functions $(\phi, \lambda^a)$ on $S$, where $\phi$ is real-valued and 
$\lambda^a$ is $\mfk{sl}(2,\mathbb{R})$-valued; therefore, we need a basis for these
spaces of functions.  A good choice is the spherical harmonic functions,
which we denote as $Y_\alpha$, where $\alpha = (A,a)$ denotes a pair of integers with 
$A\in\mathbb{N}$ the total angular momentum and $a\in\{-A,\ldots, +A\}$ is the magnetic 
spherical harmonic number.  The conventions employed in this work for the spherical
harmonics are detailed in appendix \ref{sec:sh}.

The pointwise product and Poisson bracket of spherical harmonic functions yield two
types of structure constants, 
\begin{equation}\label{eqn:ECclassical}
   Y_\alpha Y_\beta=E\indices{_{\alpha\beta}^\gamma}Y_\gamma, \qquad\qquad \{Y_\alpha,Y_\beta\}_{\upb}=C\indices{_{\alpha\beta}^\gamma}Y_\gamma.  
\end{equation}
The explicit form of $E\indices{_{\alpha\beta}^\gamma}$ and $C\indices{_{\alpha\beta}^\gamma}$ in terms of Wigner 
$3j$ symbols are given in equations (\ref{eqn:Eabccontra}) and (\ref{eqn:Cabc}).
These structure constants are directly used to construct the structure constants
of $\cslr$.  A basis for this algebra is provided by the $Y_\alpha$ and the quantities 
$Y_{a\alpha} = \tau_a\otimes Y_\alpha $ with 
$\tau_a$ given by \eqref{eqn:taua}.
Applying the relationship between a function and its associated vector field (\ref{eqn:stream}) as well as the identities (\ref{eqn:xiphibrack}) and (\ref{eqn:ECclassical}), the Lie brackets are given by
\beq \label{eqn:cslrsc}
    \begin{aligned}
     [Y_\alpha, Y_\beta] &= C\indices{_\alpha_\beta^\gamma}Y_\gamma, \\   
     [Y_\alpha, Y_{a \beta}] &= C\indices{_\alpha_\beta^\gamma}Y_{a \gamma}, \\
     [Y_{a\alpha}, Y_{b \beta}] &= E\indices{_\alpha_\beta^\gamma} 
     \varepsilon\indices{_a_b^c} Y_{c\gamma}.
    \end{aligned}
\eeq
This basis for $\cslr$ immediately leads to a basis for the Hamiltonian generators, 
given by
\begin{equation} \label{eqn:Jalpha}
    \begin{aligned}
    J_\alpha:= J[Y_\alpha]=\bigintsss_S \nu_0 Y_\alpha J, \qquad\qquad N_{a\alpha}:= N[Y_\alpha \tau_a]=\bigintsss_S \nu_0 Y_\alpha N_{a}.
    \end{aligned}
\end{equation}
The Poisson bracket relations (\ref{eqn:JNPB}) then imply that these basis 
generators satisfy
\begin{equation}\label{NNcommutatorslr}
    \begin{aligned}
    \{J_\alpha,J_\beta\}&= C\indices{_{\alpha\beta}^\gamma}J_\gamma,
    \\
    \{J_\alpha,N_{a\beta}\}&= C\indices{_{\alpha\beta}^\gamma}N_{a\gamma},
    \\
    \{N_{a\alpha},N_{b \beta}\}&=E\indices{_{\alpha\beta}^\gamma}\varepsilon\indices{_{ab}^c}N_{c\gamma},
    \end{aligned}
\end{equation}
which reproduce the structure constants (\ref{eqn:cslrsc}) of the $\cslr$ Lie algebra,
as expected.  Finally, note that in this basis, the $C^\infty(\phsp)$-valued functions
$J(\sigma)$ and $N_a(\sigma)$ discussed in section \ref{sec:PJ} can be written
\begin{equation}\label{eqn:Jsigma}
    \begin{aligned}
    J(\sigma)&=\sum_{\alpha} J_\alpha Y^\alpha(\sigma), 
    \\
    N_a(\sigma)&= \sum_{\alpha} N_{a\alpha} Y^\alpha(\sigma),
    \end{aligned}
\end{equation}
where $Y^\alpha = \delta^{\alpha\beta}Y_\beta$, with $\delta^{\alpha\beta}$ defined 
as the inverse of the spherical-harmonic metric defined in equation (\ref{eqn:shmetric}).

The generators associated with the subalgebra $\crr$ is obtained by restricting the normal generator to be $N_\alpha\equiv N_{1\alpha}$.
The expansion in modes simplifies to:
\begin{equation}\label{NNcommutatorR}
    \begin{aligned}
    \{J_\alpha,J_\beta\}&= C\indices{_{\alpha\beta}^\gamma}J_\gamma,
    \\
    \{J_\alpha,N_{\beta }\}&= C\indices{_{\alpha\beta}^\gamma}N_{\gamma},
    \\
    \{N_{\alpha },N_{\beta }\}&=0.
    \end{aligned}
\end{equation}

Finally, we mention that although it is not the focus of the present
work, one would also like to understand how to lift the algebra deformations
identified in section \ref{sec:matreg} for area-preserving algebras to the full
corner symmetry algebra $\gslr$.  The main obstacle in doing so 
lies in the identification of a suitable deformation of the full 
$\diff$ algebra compatible with the  deformation of its $\sdiff$
subalgebra.  As 
a first step toward investigating this question, we derive in 
appendix \ref{app:diffsc} the structure constants of $\diff$, and discuss
some ideas and challenges in using these to obtain a suitable deformation
of $\diff$ in section \ref{sec:fulldiff}.

\section{Matrix regularizations of classical symmetry algebras}\label{sec:matreg}

Having reviewed the classical symmetry algebras, we are now 
interested in exploring their quantization, in the sense 
described in section \ref{section:introduction}.  
We restrict attention to the centralizer algebra $\cslr$ and its 
subalgebras, since these all possess natural candidates for their
deformation in terms of finite-dimensional matrix algebras.  These 
matrix algebras arise from promoting the sphere on which the 
diffeomorphism groups act to a fuzzy sphere and appealing the 
well-known correspondence between the large $N$ limit of the 
$\mfk{su}(N)$ Lie algebra and $\sdiff$ \cite{Hoppe198201,Hoppe198901,PopeStelle198905}. 
Using the mode expansions of the classical algebras obtained in section \ref{sec:modes}, we obtain quantization maps between the classical generators and corresponding sets of matrices and show that the structure
constants for the matrix product approach the classical structure 
constants in the large $N$ limit.  The final result is that the 
respective matrix regularizations of $\sdiff$, $\sdiff\oplus_{\mcal L}\mbb{R}^S$, and  $\sdiff\oplus_{\mcal L}\slr^S$ are found to be 
$\mfk{su}(N)$, $\mfk{sl}(N,\mathbb{C})\oplus\mathbb{R}$, and $\mfk{su}(N,N)$.

\subsection{From functions on phase space to linear operators on Hilbert space}
\label{sec:fxns2ops}

Quantization is a procedure that seeks to replace the algebra of functions 
on a phase space with the algebra of linear operators on a Hilbert space.  
The quantization map sends each function on phase space to a Hilbert space 
operator, and the commutators of the quantized observables are required 
to reproduce the classical Poisson bracket algebra only up to order $\hbar^2$ 
corrections. 
These higher-order corrections indicate that the Poisson bracket algebra has been deformed.  While it is possible that certain subalgebras remain undeformed
by the quantization procedure, generically one expects a deformation to occur
whenever one is available.  In this case, the classical symmetry algebras 
discussed in section \ref{sec:cornsyms} 
are modified in the quantum theory.

It is possible to identify at the semiclassical level whether an algebra deformation exists or not. 
The presence of a quantum deformation implies the existence of a one-parameter family of deformations of the classical Poisson algebra.
The deformation of the algebra is encoded in the existence of Poisson 2-cocycles \cite{Flato:1995vm}, a notion which is intimately related to Lie algebra 2-cocycles \cite{Fuks} and Hochschild 2-cocycles \cite{Gerstenhaber} which 
parameterize deformations of Lie algebras and algebras respectively. 
As explained in appendix  \ref{appsec:algebra deformation}, a  Poisson 2-cocycle 
for a Poisson manifold $M$ 
is a map   $D :C(M)\times C(M)\to C(M)$, where $C(M)$ denotes the space of
smooth functions
on $M$, which is skew-symmetric, is a bi-derivation, and satisfies the Poisson 2-cocycle  identity.
 Explicitly, this means that
 \bea
D(f,g)&=&-D(g,f),\cr
D(f,gh)&=& D(f,g)h + g D(f,h),\label{cocycle}\\
\{f,D(g,h)\}+\{g,D(h,f)\}+\{h,D(f,g)\} &=&-[D(f,\{g,h\})+D(g,\{h,f\})+D(h,\{f,g\})]\nonumber .
\eea
These identities simply imply that the bracket $\{f,g\}_\lambda:= \{f,g\}+ \lambda D(f,g)$ satisfies the Jacobi identity to first order in $\lambda$.
As described in appendix \ref{appsec:algebra deformation}, in each algebra that we study, there exists a Poisson 2-cocycle that controls the quantum deformation. Ultimately we will find that our Lie algebra deformations have a non-perturbative completion at finite $\lambda \sim 1/N^2$ which satisfies the Jacobi identity exactly.
After having found such a non-perturbative quantization, the identities \eqref{cocycle} can be derived by expanding the Jacobi identity in $\lambda$.

The quantization procedure also requires that the object $\ham$ defined 
in section \ref{sec:PJ} be replaced by its quantum analog, $\wh{\ham}$.  
The classical object $H$ is valued in functions on the phase space, $C^\infty(\phsp)$, 
which is the space of classical observables.  The quantized object
$\wh\ham$ should therefore be valued in the space $\opalg$ of linear operators 
on a Hilbert space $\mathcal{H}$, which serves as the space of observables in the 
quantum theory.  Furthermore, since the quantum theory deals with a deformed 
algebra, $\wh{\ham}$ should be a linear map from this deformed algebra into
$\opalg$, as opposed to a map from the classical algebra.  Therefore, 
we see that $\wh{\ham}\in \wh{\mfk{g}}^*\otimes \opalg$, where $\wh{\mfk{g}}$ 
is the specific deformed algebra under consideration.  
Generically, the deformed algebra $\wh{\mfk{g}}$ depends 
on a deformation parameter
$N$, which will be taken to be large in the semiclassical limit.  
Similar to the classical
object, the map defined by 
$\wh{\ham}$ is required to be a homomorphism, up to a constant rescaling,
from the deformed Lie algebra $\wh{\mfk{g}}$
into $\opalg$, which is simply the statement that the image of this map in $\opalg$ 
furnishes a linear representation of the deformed algebra.  
Note that the generators of this representation are taken to be
$i\hbar\,\pi(X)$, with $X\in\wh{\mfk{g}}$ and $\pi$ denoting a representation.
The factor of $i$ ensures that the generators are Hermitian
in a unitary representation of the algebra, and the factor of $\hbar$ 
is included to give the correct proportionality constant 
between the commutator and the classical Poisson bracket.
The value of $N$ and the specific representation $\pi$ 
of the deformed algebra
that occurs 
depends on the phase space 
being quantized: different phase spaces correspond to different 
deformations and representations.  Both $N$ and $\pi$
can be determined by 
requiring that the generators reproduce the symmetric product
of the classical phase space to leading order in $\hbar$.  
This is most straightforwardly done by matching 
the Casimir functions on the classical phase space to
the values of corresponding Casimir operators in the representation.
This matching procedure is discussed in section \ref{sec:casimirs}.

\subsection{Matrix regularization of $\sdiff$}\label{sec:sdiffsun}
We begin by providing some details on the matrix regularization of 
$\sdiff$, the algebra of vector fields preserving a fixed volume form $\nu_0$ 
 discussed in equation (\ref{eqn:sdifffixed}).  It is well-known that 
the regularized algebra is $\mfk{su}(N)$ \cite{Hoppe198201,Hoppe198901,PopeStelle198905}, and we use this section to 
illustrate the method for obtaining the large $N$ limit of a matrix algebra that will
be subsequently applied to the algebras $\crr$ and $\cslr$.  The results on the limits of the $\mfk{su}(N)$ structure
constants obtained in this section will also play a key role in obtaining
the matrix regularizations of the other algebras of interest in
sections \ref{sec:sdiffslnc} and \ref{sec:sdiffsunn}.

As discussed in section \ref{sec:subalgs}, 
a standard presentation of the Lie algebra $\sdiff$ is in terms of the Poisson
brackets of functions on the sphere.  
In section \ref{sec:modes}, we found that the spherical harmonics $Y_\alpha$
provide a convenient basis for this space of functions.  In terms 
of this basis, 
a generic function $\phi$ can  be expanded as
\begin{equation}\label{eq:the exansion of a generic function on sphere in terms of spherical harmonics}
    \phi=\sum_\alpha \phi^\alpha Y_\alpha, 
\end{equation}
where $\phi^\alpha$ are complex constants. Since $\phi$ is real-valued,
the coefficients must satisfy the reality condition 
\beq
(\phi^\alpha)^* = (-1)^a \phi^{\lmconj{\alpha}},
\eeq
in direct correspondence to the condition (\ref{eqn:YAareality}) satisfied by 
the $Y_\alpha$, recalling that $\lmconj{\alpha} = (A,-a)$ for $\alpha =(A,a)$.  

In the matrix regularization, the functions $Y_\alpha$ are replaced with  fuzzy spherical harmonics $\wh Y_\alpha$, which are  $N \times N$ matrices \cite{Madore199109}. 
Our conventions for fuzzy spherical harmonics are spelled out in Appendix \ref{sec:fuzH}. These matrices obey a multiplication law with structure 
constants $\wh M\ind{_\alpha_\beta^\gamma}$, and satisfy additional normalization
and reality conditions:
\begin{equation}\label{TrM}
\wh Y_\alpha \wh Y_\beta = \wh M_{\alpha \beta}{}^\gamma \wh Y_\gamma, \qquad \frac{1}{N}\tenofo{Tr}_{\bf N}(\wh{Y}_\alpha\wh{Y}_\beta)=\delta_{\alpha\beta}, \qquad \wh Y_\alpha^\dagger=(-1)^a \wh Y_{\lmconj{\alpha}}. 
\end{equation}
with the metric $\delta_{\alpha\beta}$ defined in (\ref{eqn:shmetric}).
As discussed in appendix \ref{sec:fuzH}, the multiplication 
structure constants $\wh{M}_{\alpha\beta\gamma} \equiv \wh{M}\indices{_\alpha
_\beta^\mu}\delta_{\gamma\mu}$ can be expressed explicitly in terms 
of Wigner 3j and 6j symbols according to \cite{Freidel:2001kb}
\beq
\wh{M}_{\alpha\beta\gamma} = \frac{\sqrt{N}}{(-1)^{2J}}\tj{A&B&C\\a&b&c}
\sj{A&B&C\\J&J&J},
\eeq
with $N = 2J+1$. 
The \emph{quantization map} from a function $\phi$ on the sphere to an
$N\times N$ matrix $\wh\phi$ is achieved by expressing $\wh\phi$ in terms of 
fuzzy spherical harmonics with the same coefficients $\phi^\alpha$, 
\begin{equation}\label{eq:the expansion of a Hermitian matrix in the basis of fuzzy spherical harmonics}
    \widehat \phi = \sum_{\alpha\in I_N} \phi^\alpha \widehat Y_\alpha,
\end{equation}
where the sum runs over the index set $I_N$ consisting of all 
$\alpha=(A,a)$ with
$A\leq 2J$. 
Note that this is a finite sum in which all spherical harmonics with $A > 2J$  are truncated. The reality condition in (\ref{TrM}) for the fuzzy
harmonics implies that the quantization map preserves the star structure,
$\wh{\phi^*} = \big(\wh\phi\,\big)^\dagger$, and since  real functions 
satisfy $\phi^* = \phi$, we see that the quantization map sends them to Hermitian
matrices $\wh\phi^\dagger = \wh \phi$.

It is straightforward to see that all possible Hermitian $N\times N$ matrices 
are obtained as the quantization of some function on the sphere, and hence the 
full quantized algebra coincides with the algebra of all $N\times N$ Hermitian 
matrices.  The associated Lie algebra obtained by taking commutators is just the 
standard presentation of the algebra $\mfk{u}(N)$.  This Lie algebra has a trivial
center generated by the   matrix $\wh{Y}_{(0,0)}=\mathbb{I}_N$, which is the quantization of the constant
function $Y_{(0,0)}$ on the sphere.  Since, as discussed in section \ref{sec:subalgs},
the constant function does not generate an area-preserving diffeomorphism, we see that 
the algebra $\sdiff$ quantizes to the space of matrices with vanishing $\wh{Y}_{(0,0)}$ 
component.
These are precisely the traceless Hermitian matrices, and hence 
the quantized Lie algebra
is  $\mfk{su}(N)$.

The classical structure constant relations \eqref{eqn:ECclassical}
possess corresponding relations for the quantized algebra,
coinciding with the symmetric and antisymmetric parts of $\wh{M}\indices{_\alpha_\beta^\gamma}$, 
\begin{align}
    \wh{E}\indices{_\alpha_\beta^\gamma}&=\wh{M}\indices{_{(\alpha}_{\beta)}^\gamma}, \label{MN} 
    \\ \wh{C}\indices{_\alpha_\beta^\gamma}&=\frac{N}{i}\wh M\indices{_{[\alpha}_{\beta]}^\gamma}. \label{MC}
\end{align}
It is shown in Appendix \ref{sec:fuzH} that in the large-$N$ limit, they approach
the classical structure constants
\begin{align}
\wh E_{\alpha\beta}{}^\gamma &= E_{\alpha\beta}{}^\gamma + \mcal{O}\left(N^{-2}\right), \label{ElargeN}\\
\wh C_{\alpha\beta}{}^\gamma &=  C_{\alpha\beta}{}^\gamma + \mcal{O}\left(N^{-2}\right).
\label{ClargeN}
\end{align}
This implies that the quantization map preserves the 
symmetric product and bracket to order $N^{-2}$ and $N^{-3}$, respectively
\begin{equation} \label{eqn:hphihpsi}
    \begin{aligned}
    \wh \phi \circ \wh \psi &=\widehat{ \phi \psi } + \mcal{O}(N^{-2}), \\
[\wh\phi, \wh\psi] &= \frac{2i}{N} \widehat{\{ \phi, \psi \}}_{\upb} + \mcal{O}(N^{-3}).
    \end{aligned}
\end{equation}
where $\wh\phi \circ \wh \psi =\frac12 (\wh\phi \wh \psi+\wh\psi \wh\phi)$ is the symmetrized product.

In this relation, we see that the quantity $2/N$ is playing the 
role of $\hbar$ in the relation between the commutator and Poisson
bracket in (\ref{eqn:hphihpsi}).  However, it is not quite correct to 
equate $\hbar$ with $2/N$, since such a relationship only holds in the special
case of the fuzzy sphere, and will not hold for the quantizations of
the gravitational phase spaces considered in this work.  Instead,
recalling that the Poisson bracket $\{\cdot,\cdot\}_{\upb}$ is defined 
for a spherical phase space with area $A=4\pi$, the correct relation
is  $\hbar_\text{fs} = \frac{A}{2\pi N_\text{fs}}$, or equivalently 
\beq \label{eqn:Nfs}
N_\text{fs} = \frac{A}{2\pi\hbar_\text{fs}}, \qquad (\text{fuzzy sphere}).
\eeq
Here, we have added subscripts ``$\text{fs}$'' to $N$ and $\hbar$ to 
emphasize that this relation only holds for the fuzzy sphere, and 
for other phase spaces (such as the gravitational phase space
that is the primary focus of this work), 
the relation between the two will be different.
This relation should be viewed as determining the deformation parameter
$N_\text{fs}$ in terms of the phase 
space area $A$ and Planck's constant
$\hbar_\text{fs}$.\footnote{Dimensionally, this requires that 
the phase space area $A$ has the same units as $\hbar_\text{fs}$. 
This can be made explicit by defining the symplectic form
for the phase space to be $\Omega_{\text{fs}} = \gamma\frac{A}{4\pi}
\epsilon$, with $\epsilon$ the unit-radius spherical volume form,
$A$ the area of the sphere in standard units, and $\gamma$ a parameter
with dimensions $[\gamma] = \hbar/(\text{length})^2$.
In this case, the relation between $\hbar_\text{fs}$ and $N_\text{fs}$
is
\begin{equation*}
    N_\text{fs} = \frac{\gamma A}{2\pi \hbar_\text{fs}}.
\end{equation*}
The relation (\ref{eqn:Nfs}) holds in units where $\gamma = 1$.
}  
One might worry that this relation is ambiguous since by rescaling the generators $\wh\phi$, $\wh\psi$,
one would obtain a similar relation between the commutator and Poisson
bracket, but with a rescaled value of $\hbar_\text{fs}$.  
Note however that such
a rescaling is not possible, as it spoils the first relation 
in (\ref{eqn:hphihpsi}) for the symmetric product.\footnote{In more detail,
if we work instead with $\widetilde{\phi}\equiv \lambda\wh{\phi}$, $\widetilde{\psi} \equiv
\lambda\wh{\psi}$, we would instead find $\wt{\phi} \circ\wt{\psi}
= \lambda\widetilde{\phi\psi}+\mathcal{O}(N^{-2}) \neq
\widetilde{\phi\psi}+\mathcal{O}(N^{-2})$. }
Hence, the relationship between
$N_\text{fs}$ and $\hbar_\text{fs}$ 
is fully determined for a given phase space by 
requiring that the quantized generators reproduce the symmetric
product at leading order in $N_\text{fs}$, and that the commutator 
equal the Poisson bracket rescaled by $i\hbar_\text{fs}$ to leading order 
in $N_\text{fs}$.  

Finally, we note that the relation $\hbar_\text{fs} = \frac{2}{N_\text{fs}}$
for the unit-radius fuzzy sphere allows us to identify the standard normalization
for the $\mfk{u}(N)$ Lie algebra generators $\algX_\alpha$.  The fuzzy
spherical harmonics $\wh{Y}_\alpha$ occur in the defining representation
$\pi_{\bf N}$ 
of $\mfk{u}(N)$, i.e.\ the representation in terms of $N\times N$ Hermitian
matrices.  Using the relation 
\beq \label{eqn:piNX}
\pi_{\bf N}(\algX_\alpha) = \frac{1}{i\hbar_\text{fs}} \wh{Y}_\alpha = \frac{N_\text{fs}}{2i} \wh{Y}_\alpha,
\eeq
we see that the structure constants for the $\mfk{u}(N)$ Lie algebra
in the basis $\algX_\alpha$ are simply $\wh{C}\indices{_\alpha_\beta^\gamma}$:
\beq \label{eqn:XaXb}
[\algX_\alpha, \algX_\beta] = \wh{C}\indices{_\alpha_\beta^\gamma} \algX_\gamma.
\eeq
The relation (\ref{ClargeN}) then confirms that the large $N$ limit
of the $\mfk{su}(N)$ Lie algebra in this basis coincides with 
$\sdiff$ (recalling that the central generator $\algX_{(0,0)}$ does not
generate a diffeomorphism).  

Note that because $\algX_\alpha$ 
defines a complex basis for $\mfk{u}(N)$, we need to specify
a reality condition to identify the real form of the Lie algebra
under consideration.  This reality condition
 is an antilinear involution $*$ on the Lie algebra, with the real
form determined by the set of generators fixed under the involution.  
This involution acts on the $\algX_\alpha$ basis according to
\beq \label{eqn:Xareality}
\algX_\alpha^* = (-1)^a \algX_{\bar\alpha},
\eeq
and ensures that in a unitary representation $\pi$, the operators 
$\pi(\algX_\alpha)$ satisfy 
$i\pi(\algX_\alpha^*) = (i\pi(\algX_\alpha))^\dagger$.  One easily verifies
that the relation (\ref{TrM}) for $\wh{Y}_\alpha^\dagger$ shows that 
the fuzzy spherical harmonics define a unitary representation
of the algebra.  

It is also useful to relate the $\algX_\alpha$ basis to the standard 
basis of $\mfk{u}(N)$ in terms of elementary matrices 
$E\indices{^i_j}$.  The relation is given by 
\beq\label{eqn:XaEij}
\algX_\alpha  = \frac{N}{2i} \big(\wh{Y}_\alpha\big)
\indices{_i^j}E\indices{^i_j},
\eeq
where $(\wh{Y}_{\alpha})_{i}{}^{j}$ denotes the $ij$\textsuperscript{th} component of the matrix $\wh{Y}_\alpha$,
and one can show that the commutation relations 
(\ref{eqn:XaXb}) and reality condition
(\ref{eqn:Xareality}) imply the standard $\mfk{u}(N)$ brackets 
and involution
in the $E\indices{^i_j}$ basis
\beq \label{eqn:EijEkl}
[E\indices{^i_j}, E\indices{^k_l}] = \delta^k_j E\indices{^i_l}-
\delta^i_l E\indices{^k_j}, \qquad (E\indices{^i_j})^*
 = -\delta_{jk} \delta^{il} E\indices{^k_l}.
\eeq
The equivalence of \eqref{eqn:XaXb} and \eqref{eqn:EijEkl} follows from the identity (see Appendix \ref{app:SUNrelns})
\begin{equation}\label{eq:the identity for product of Yalpha and Ybeta}
    (\wh Y_\gamma)_i{}^j \delta_k^l - \delta_i^j (\wh Y_\gamma)_k{}^l 
= \frac{2i}{N^2} \sum_{\alpha, \beta } \wh{C}_{\gamma }{}^{\alpha \beta } (\wh Y_\alpha)_i{}^l(\wh Y_\beta)_k{}^j.
\end{equation}
Following the discussion of section \ref{sec:fxns2ops}, the deformed algebra appears when quantizing a classical phase space, and the quantum theory yields a 
linear representation of the deformed algebra.  This representation is characterized
by the quantity $ \wh{\ham}\in \mfk{su}(N)^*\otimes\opalg$, 
which we instead call $\wh{J}$
in this section since we are dealing only with the deformation of the $\sdiff$ algebra, 
as opposed to the extended algebras $\crr$ and $\cslr$, which have additional generators.  
Up to rescaling by $i\hbar$, 
the fuzzy spherical harmonics $\wh{Y}_\alpha$ furnish a 
representation for the (complexification of) the $\mfk{su}(N)$ Lie
algebra, and hence can be used as an explicit realization of the 
abstract Lie algebra.  These same matrices can be used to parameterize
the dual $\mfk{su}(N)^*$ by utilizing the trace relation appearing in
(\ref{TrM}).  This allows $\wh{J}$ to instead be viewed as an element of $
\text{Mat}(N)\otimes \opalg$, where $\tenofo{Mat}(N)$ is the space of $N \times N$ matrices,
and the generators of the $\mfk{su}(N)$ algebra on the quantum
Hilbert space are given by
\beq
\wh{J}_\alpha = \frac1{N}\tr_{\mbf{N}}\left(\wh{J} \wh{Y}_\alpha\right),
\eeq
where the product and trace refer to the $\text{Mat}(N)$ factor of $\wh{J}$.  This 
relation is the precise analog of the equation (\ref{eqn:Jalpha}) for the classical 
generators, yielding the correspondence
\begin{equation}\label{Matreg}
    J_\alpha=\bigintsss_S \nu_0 {J}Y_\alpha, \qquad \mrln\qquad  \wh{J}_\alpha=\frac{1}{N}\tenofo{Tr}_{\mbf{N}}\left( \wh{{J}} \wh{Y}_\alpha\right).
\end{equation}
The map $\wh{J}$ is required to be normalized such that the 
generators $\wh{J}_\alpha$ satisfy 
\beq\label{AJa}
[\wh{J}_\alpha, \wh{J}_\beta] = i\hbar\, \wh{C}\indices{_\alpha_\beta^\gamma}
\wh{J}_\gamma,
\eeq
since we recall that $\wh{J}_\alpha = i\hbar \,\pi_{\cal{P}}(\algX_\alpha)$,
where $\pi_{\cal{P}}$ is the representation of the Lie algebra 
corresponding to the phase space $\cal{P}$. Since $\wh{J}_\alpha$
are a complex basis for the $\mfk{u}(N)$ generators, there must be a 
reality condition imposed to ensure a unitary representation of the 
Lie algebra.  This condition is 
\beq\label{eqn:Jdagger}
\wh{J}_\alpha^\dagger = (-1)^a\wh{J}_{\bar\alpha},
\eeq
in direct analogy with the condition
(\ref{TrM}) satisfied by the fuzzy spherical harmonics.  
Conversely, the $\opalg$-valued matrix elements of $\wh{J}$ can be recovered by
summing over the $\wh{Y}_\alpha$ basis according to
\beq \label{eqn:hJij}
\wh{J}\indices{_i^j} = \sum_{\alpha\in I_N}\wh J_\alpha  (\wh{Y}^\alpha)\indices{_i^j}.
\eeq
This is the analog of the classical relation (\ref{eqn:Jsigma}), with the correspondence 
being given by
\be 
J(\sigma)=\sum_{\alpha} J_\alpha Y^\alpha(\sigma) 
\qquad \mrln\qquad \wh{J}_{i}{}^j=\sum_{\alpha\in I_N}\wh{J}_\alpha (\wh{Y}^{\alpha})_{i}{}^{j}.
\ee 
With this choice of normalization and the Hermiticity condition
(\ref{eqn:Jdagger}), the operators comprising the 
matrix elements $\wh{J}\indices{_i^j}$ satisfy
\beq\label{eq:Jij and Jki commutaror} 
[\wh{J}\indices{_i^j},\wh{J}\indices{_k^l}]
=
\frac{\hbar N^2}{2}\left(\delta_i^l \wh{J}\indices{_k^j} - \delta_k^j
\wh{J}\indices{_i^l}\right), 
\qquad (\wh{J}\indices{_i^j})^\dagger = -\delta_{ik}\delta^{jl}\wh{J}\indices{_l^k}.
\eeq

\subsection{Matrix regularization of $\crr$}\label{sec:sdiffslnc}
Having reviewed the matrix regularization of $\sdiff$, we turn now
to a related deformation of the extended algebra 
$\crr = \sdiff \oplus_{\mathcal{L}} \mathbb{R}^S$.  To motivate this 
deformation, we  recall the explicit parameterization of 
this algebra given in section \ref{sec:modes}.  There, the 
$\sdiff$ generators were given in terms of spherical 
harmonics $Y_\alpha$ as before, while the generators 
of the $\mathbb{R}^S$ algebra were written in terms of $Y_{1\alpha}
 = \tau_1\otimes Y_\alpha $, i.e.\ a tensor product between a function
 on the sphere and a $2\times 2$ matrix.  
The idea behind the deformation of the $Y_{1\alpha }$ generators is that 
we should seek to replace the $Y_{\alpha}$ appearing in it with 
its fuzzy version $\wh{Y}_\alpha$, and simply compute the commutators 
of the resulting $2N\times 2N$ matrices.\footnote{A possibly
related discussion of the quantization of matrix-valued functions on 
a fuzzy space is given in \cite{Nair:2020xzn}.}  

There are two subtleties to implementing this idea in practice.  
First, the usual quantization of the $\sdiff$ generators to $N\times N$
matrices clearly will not define a consistent algebra with a set 
of $2N\times 2N$ matrices.  This is easily remedied by simply tensoring
with the $2\times 2$ identity matrix $\mathbbm{1}_{2}$, 
so that the generators 
of the deformed $\sdiff$ subalgebra are now 
$2N\times 2N$ matrices of the form 
$\wh{Y}_{\id\alpha}=\mathbbm{1}_{2}\otimes\wh{Y}_\alpha$.

The second subtlety relates to the relative factor of $i$ in 
the structure constants for the commutator of the fuzzy harmonics 
$\wh{Y}_\alpha$ relative to the Poisson brackets of the classical 
functions $Y_\alpha$ (see, e.g., (\ref{eqn:hphihpsi})). 
This factor of $i$ is simply the ``physicist's''
convention for parameterizing the $\mfk{u}(N)$ Lie algebra in 
terms of Hermitian matrices, as opposed to the ``mathematician's'' convention
which uses anti-Hermitian matrices, and is necessary because 
the commutator of two Hermitian matrices is anti-Hermitian. 
On the other hand, the $\tau_a$ basis (\ref{eqn:taua})
for $\mfk{sl}(2,\mathbb{R})$ 
uses the mathematician's convention in which the structure constants
are real.  Taking tensor products of a set of matrices in the physicist's
convention with a set in the mathematician's convention
yields an algebra in the mathematician's convention.  To obtain 
a tensor product algebra in the physicist's convention, both
algebras in the tensor product should use this convention.  For that
reason, we should instead consider a basis $\rho_a$
of $\mfk{sl}(2,\mathbb{R})$ in which the structure constants are
imaginary.  This basis is described in detail in equation (\ref{matr}) in the 
following section, but for the present construction we simply need 
the form of one of the hyperbolic generators,
\beq
\rho_1 = \frac12\begin{pmatrix} 0&1\\-1&0 \end{pmatrix}.
\eeq
The proposal for the  deformation of the $Y_{1\alpha}$ generators 
is then simply 
$\wh{Y}_{1\alpha} = \rho_1\otimes \wh{Y}_\alpha$.

The Lie algebra obtained from the commutators of these matrices
can be computed directly by applying an identity for the 
commutator of a tensor product of matrices,
\beq \label{eqn:otimesbrack}
[A\otimes C, B\otimes D] = (A\circ B)\otimes[C,D] + [A,B]\otimes
(C\circ D),
\eeq
recalling that $A\circ B = \frac12(AB + BA)$.  Along with the 
expression (\ref{eqn:hphihpsi}) for the structure 
constants of the fuzzy spherical harmonics, this immediately
yields the algebra
\begin{align}
[\wh{Y}_{\id\alpha}, \wh{Y}_{\id\beta}] &= \frac{2i}{N} \wh{C}\indices{_\alpha_\beta^\gamma}\wh{Y}_{\id     \gamma }, \\
[\wh{Y}_{\id\alpha},\wh{Y}_{1\beta}] &=
\frac{2i}{N} \wh{C}\indices{_\alpha_\beta^\gamma}\wh{Y}_{1\gamma},\\
[\wh{Y}_{1\alpha}, \wh{Y}_{1\beta}]&=
-\frac14 \cdot \frac{2i}{N} \wh{C}\indices{_\alpha_\beta^\gamma}
\wh{Y}_{\id \gamma}.
\end{align}

Matching this algebra to the classical algebra $\crr$ is slightly
more subtle than in the case of $\sdiff$.  A reason for the subtlety
is the fact that the representation provided by the matrices
$(\wh{Y}_{\id\alpha}, \wh{Y}_{1\alpha})$ is not unitary, since
matrices of the form $\phi^\alpha \wh{Y}_{1\alpha}$ with
$(\phi^\alpha)^* = (-1)^a \phi^{\bar\alpha}$ are not 
Hermitian.  Hence, this representation does not show up as a
quantization of a classical phase space, unlike the example
provided by the ordinary fuzzy sphere.  Because of this,
the matching to the classical phase space generators need not involve
a universal rescaling by $i\hbar$; instead, different generators
may be scaled by prefactors with different parametric dependence 
on the deformation parameter $N$.  

We denote the generators of the deformed Lie algebra $(\algX_\alpha,
\algZ_\alpha)$, and the $2N$-dimensional representation in which the 
matrices $(\wh{Y}_{\id\alpha},\wh{Y}_{1\alpha})$ live as $\pi_{\bf{2N}}$.
To obtain the correct large-$N$ limit the matrices $\wh{Y}_{\id\alpha}$ generating the $\mfk{su}(N)$ 
subalgebra should be rescaled as in (\ref{eqn:piNX}),
\beq
\pi_{\bf{2N}}(\algX_\alpha) = \frac{N}{2i} \wh{Y}_{\id\alpha},
\eeq
to obtain the bracket
\beq \label{eqn:XX}
[\algX_\alpha, \algX_\beta] = \wh{C}\indices{_\alpha_\beta^\gamma}\algX_\gamma,
\eeq
as before, which matches the first bracket in the classical algebra
(\ref{NNcommutatorR}) as $N\rightarrow\infty$.  
The second bracket in (\ref{NNcommutatorR}) can be matched 
for any choice of scaling for the $Z_\alpha$ generators.  This 
freedom can be parameterized by a quantity $\lambda$ defined so that
\beq
\pi_{\bf{2N}}(\algZ_\alpha) = \frac{N\lambda}{i} \wh{Y}_{1\alpha}.
\eeq
The remaining brackets for the Lie algebra are then fully 
determined to be
\begin{align}
[\algX_\alpha, \algZ_\beta] &= \wh{C}\indices{_\alpha_\beta^\gamma}\algZ_\gamma, \label{eqn:XZ}\\
[\algZ_\alpha,\algZ_\beta]&= -\lambda^2 \wh{C}\indices{_\alpha_\beta^\gamma} X_\gamma. \label{eqn:ZZ}
\end{align}

Here we see that in order to reproduce the final bracket in 
(\ref{NNcommutatorR}), $\lambda$ must go to zero as $N\rightarrow\infty$.
While this still leaves some choice in the precise value of $\lambda$,
the choice $\lambda = \frac{1}{N}$ is most convenient, as it is the value
required when determining the deformation of the larger algebra 
$\cslr$ and determining how the deformation of $\crr$ embeds into
the larger deformed algebra. 

As in the case of $\mfk{u}(N)$, the generators 
$(\algX_\alpha, \algZ_\alpha)$ 
yield a complex basis of the deformed Lie algebra, and hence 
are naturally associated with the complexification of the Lie algebra.
The real Lie algebra is obtained by specifying an antilinear involution
$*$ and restricting to elements that are fixed under the involution.  
The resulting reality condition on $(\algX_\alpha, \algZ_\alpha)$ is
given by
\beq \label{eqn:glncreality}
\algX_\alpha^* = (-1)^a \algX_{\bar\alpha}, 
\qquad \algZ_\alpha^* = (-1)^a\algZ_{\bar\alpha}.
\eeq
The reality condition leads to a criterion for specifying whether
a given representation of the Lie algebra is unitary, namely, that
$i\pi(\algX_\alpha^*) = (i\pi(\algX_\alpha))^\dagger$, and similarly
for $\algZ_\alpha$.  Note that because the generators $\wh{Y}_{1\alpha}$
in the representation $\pi_{\bf{2N}}$  do not satisfy this condition,
we see once again that this representation is not unitary. 

The Lie algebra defined by the brackets
(\ref{eqn:XX}), (\ref{eqn:XZ}), and (\ref{eqn:ZZ}) along 
with the reality condition (\ref{eqn:glncreality}) in fact 
coincides with $\mfk{gl}(N,\mathbb{C})$,  viewed 
as a real Lie algebra, which is the complexification
of $\mfk{u}(N)$.  This can be seen by noting that $\algX_\alpha$
generate a $\mfk{u}(N)$ algebra, and its complexification
is obtained by adding generators $\ii \otimes\algX_\alpha$, where 
$\ii$ is an imaginary unit satisfying $\ii^2 = -1$.\footnote{The
symbol $\ii$ is used to distinguish this imaginary 
unit from the factors of $i$ appearing when using the complex
basis $\algX_\alpha$ for the $\mfk{u}(N)$ Lie algebra.  The  distinction is important, since, for example, the reality
condition $(\ii\otimes\algX_\alpha)^* = \ii\otimes\algX_\alpha^*
 = (-1)^a\ii\otimes\algX_{\bar\alpha}$ is essentially
 equivalent to assuming $\ii^* = +\ii$.}  The brackets of the new
generators are fixed by assuming $\ii$ commutes with the original
generators, so 
$[\algX_\alpha, \ii\otimes \algX_\beta] = 
\wh{C}\indices{_\alpha_\beta^\gamma}\ii\otimes \algX_\gamma$ and 
$[\ii\otimes\algX_\alpha,\ii\otimes\algX_\beta] = 
\wh{C}\indices{_\alpha_\beta^\gamma}\ii^2\otimes\algX_\gamma
=-\wh{C}\indices{_\alpha_\beta^\gamma}\algX_\gamma$, 
which precisely match the brackets (\ref{eqn:XZ}) and (\ref{eqn:ZZ})
upon identifying $\algZ_\alpha = \lambda \ii\otimes \algX_\alpha$.  
This verifies that $(\algX_\alpha, \algZ_\alpha)$ generate 
the Lie algebra $\mfk{gl}(N,\mathbb{C})$.\footnote{This can further be verified by constructing generators 
$E^\pm_{\alpha} = \frac12(\algX_\alpha \pm \frac{i}{\lambda} \algZ_\alpha)$,
which can be shown to satisfy $[E_\alpha^\pm, E_\beta^\pm] 
= \wh{C}\indices{_\alpha_\beta^\gamma}E_\gamma^\pm$, $[E_\alpha^\pm,
E_\beta^\mp] = 0$.}   It is worth
pointing out that this procedure involving tensoring 
with the imaginary unit $\ii$ is more or less equivalent to the 
construction of the $\wh{Y}_{1\alpha}$ generators in the representation $\pi_{\bf{2N}}$, with $2\rho_1$ 
serving as the new imaginary unit $\ii$.\footnote{Note that for this algebra,
since $\lambda$ and $N$ may in principle be chosen independently,
we could instead take the limit $\lambda\rightarrow 0$ before taking 
$N\rightarrow\infty$.  This implements an In\"on\"u-Wigner
contraction \cite{InonuWigner195306} of the algebra $\mfk{gl}(N,\mathbb{C})$ to 
$\mfk{u}(N)\ltimes \mathbb{R}^{N^2}$.  This contraction is effectively
still happening in the large $N$ limit when $\lambda$ is identified
with $\frac{1}{N}$, and explains why
the large $N$ limit of the semisimple Lie algebra results in
an algebra with instead a semidirect product structure.}

The final step in making contact with the continuum algebra $\crr$
is to determine which central generators in $\mfk{gl}(N,\mathbb{C})$
have classical counterparts on the gravitational phase space.  
As noted before, the generators 
$\algX_{00}$
(i.e.\ the generator with $\alpha = (A,a)  = (0,0)$)
do not generate a diffeomorphism
of the sphere, and hence should be discarded when matching the 
continuum algebra.  On the other hand, the generator $\algZ_{00}$
coincides with the global boost in the normal plane to the 
codimension-$2$ surface in spacetime, and remains an important part of the 
continuum algebra.  The remaining generators $(\algX_\alpha,\algZ_\alpha)$
with $A\geq1$ produce the simple subalgebra $\mfk{sl}(N,\mathbb{C})$.  
Hence, we can conclude that the deformation of the continuum
algebra $\crr$ is $\mfk{sl}(N,\mathbb{C})\oplus \mathbb{R}$,
with the generator of the central factor 
$\mathbb{R}$ coinciding with $\algZ_{00}$.

While the representation $\pi_{\bf{2N}}$ is useful in obtaining the 
deformed algebra, it has the property that is it not an irreducible
representation of $\mfk{sl}(N,\mathbb{C})\oplus\mathbb{R}$,
as can be seen from the fact that the central generator $\wh{Y}_{1,00}$
is not proportional to the identity.  
In section \ref{sec:cRlim} when evaluating the Casimir operators for this
algebra, it will be useful to instead have an irreducible faithful
representation of this algebra.  This is given by the standard 
$N$-dimensional
vector representation $\pi_{\bf N}$ of $\mfk{gl}(N,\mathbb{C})$, in 
which
\beq
i\pi_{\bf N}(\algX_\alpha) = \frac{N}{2} \wh{Y}_\alpha,\qquad
i\pi_{\bf N}(\algZ_\alpha) = iN\lambda\wh{Y}_\alpha.
\eeq
Here it is clear that this representation is just the complexification
of the representation of $\mfk{u}(N)$ provided by the matrices
$\wh{Y}_\alpha$.

Finally, to relate this algebra to the classical phase space, 
we should exhibit the map $\wh{H}\in \big(\mfk{sl}(N,\mathbb{C})\oplus
\mathbb{R}\big)^*\otimes \opalg$, with $\opalg$ chosen to be the space
of operators in which the representation of the algebra is valued.
Similar to the case of $\mfk{u}(N)$, 
this is equivalent to defining a pair $\wh{J}, \wh{N}\in \text{Mat}(N)\otimes
\opalg$, which produce a set of  generators 
$(\wh{J}_\alpha, \wh{N}_\alpha)$ in $\opalg$ labeled 
by the fuzzy spherical harmonics according to 
\begin{align}
\wh{J}_\alpha &= \frac1N \tr_{\mbf{N}}\left(\wh{J} \wh{Y}_\alpha\right),
\\
\wh{N}_\alpha &= \frac1N \tr_{\mbf{N}} \left(\wh{N}\wh{Y}_\alpha\right).
\end{align}
The maps $(\wh{J}, \wh{N})$ must be normalized so that 
$(\wh{J}_\alpha, \wh{N}_\beta)
 = \Big(i\hbar\, \pi_{\cal{P}} (\algX_\alpha), i\hbar\,\pi_{\cal{P}}(\algZ_\beta) \Big)$,
yielding the following algebra satisfied by the generators:
\begin{align}
[\wh{J}_\alpha, \wh{J}_\beta] &
= i\hbar \,\wh{C}\indices{_\alpha_\beta^\gamma}\wh{J}_\gamma, \\
[\wh{J}_\alpha, \wh{N}_\beta] &
= i\hbar\, \wh{C}\indices{_\alpha_\beta^\gamma}\wh{N}_\gamma, \\
[\wh{N}_\alpha, \wh{N}_\beta] &
= -i\hbar\, \lambda^2 \wh{C}\indices{_\alpha_\beta^\gamma}\wh{J}_\gamma.
\end{align}
The correspondence between the classical and quantum generators of the 
algebra is therefore given by
\begin{equation}
    \begin{aligned}
    J_\alpha&=\bigintsss_S \nu_0 {J}Y_\alpha,
    \\
    N_\alpha&=\bigintsss_S \nu_0 N Y_\alpha,
    \end{aligned}
    \qquad\mrln\qquad
    \begin{aligned}
    \wh{J}_\alpha&=\frac{1}{N}\tenofo{Tr}_{\mbf{N}}\left(\wh{J}\wh{Y}_\alpha\right),
    \\
    \wh{N}_\alpha&=\frac{1}{N}\tenofo{Tr}_{\mbf{N}}\left(\wh{N}  \wh Y_\alpha\right).
    \end{aligned}
\end{equation}
The inverse of this relation expresses the $\opalg$-valued matrix elements 
$(\wh{J}\indices{_i^j}, \wh{N}\indices{_i^j})$ as a sum over the generators 
$(\wh{J}_\alpha,\wh{N}_\beta)$, 
\beq
\wh{J}\indices{_i^j} = \sum_{\alpha\in I_N} \wh{J}_\alpha
\big(\wh{Y}^\alpha\big)\indices{_i^j},\qquad
\wh{N}\indices{_i^j} = \sum_{\alpha\in I_N} \wh{N}_\alpha 
\big(\wh{Y}^\alpha \big)\indices{_i^j}.
\eeq
As before, the matrix elements $(\wh{J}\indices{_i^j}, \wh{N}\indices{_i^j})$
are the regularized version of the phase space functions $(J(\sigma), N(\sigma))$. 
Finally, the fact that the representation $\pi_{\cal{P}}$ corresponding to
the quantization of the phase space $\cal{P}$ should be unitary implies 
that the generators satisfy
\beq
\wh{J}_\alpha^\dagger = \wh{J_\alpha^*} = (-1)^a \wh{J}_\alpha, \qquad
\wh{N}_\alpha^\dagger = \wh{N_\alpha^*} = (-1)^a \wh{N}_\alpha.
\eeq
A final comment is in order on the uniqueness of the deformation  of $\crr$
obtained in this section.  
Instead of using the generators $\wh{Y}_{1\alpha}$
to arrive at the deformation, one could instead 
work with $\wh{Y}_{0\alpha} = \rho_0\otimes \wh{Y}_\alpha$, where $\rho_0 =
\displaystyle\frac12\begin{pmatrix}  1&0\\0&-1\end{pmatrix}$
is an elliptic generator of $\mfk{sl}(2,\mathbb{R})$ 
satisfying $\rho_0^2 = +\frac14$.
The entire discussion goes through as before, with the only change 
being that the bracket (\ref{eqn:ZZ}) now comes with a coefficient
$+\lambda^2$.  Since $\lambda\rightarrow0$ in the large $N$ limit, this 
gives the same classical algebra in the limit.  The deformed algebra
in this case is the compact algebra $\mfk{su}(N)\oplus\mfk{su}(N)\oplus
\mfk{u}(1)$, as opposed to $\mfk{sl}(N,\mathbb{C}) \oplus \mathbb{R}$.
The fact that one can obtain the same algebra as a contraction in different ways is not surprising. A familiar similar example is that of the Euclidean group $\mfk{so}(3) \ltimes \R^3$, which can be obtained as an In\"on\"u-Wigner contraction of either the noncompact Lie algebra $\mathfrak{so}(1,3) = \Sl(2, \C) $ or  the compact Lie algebra $\mathfrak{so}(4) = \su(2) \oplus \su(2)$.
While the limiting algebra is the same, the groups are quite different: in particular, the choice of contraction can determine whether the deformed 
group is compact or noncompact. 
In the present context, however, the noncompact deformation 
better matches the nature of the classical algebra in which the additional
generators correspond to boosts, as opposed to rotations, of the normal plane 
of the codimension-2 sphere. 
Additionally, we will see that the noncompact
deformation is the correct choice
when embedding into the deformation of the larger algebra $\cslr$ considered in 
the next section.

\subsection{Matrix regularization of $\cslr$}\label{sec:sdiffsunn}

The determination of 
the matrix regularization of $\cslr$ follows a similar procedure to the 
case of $\crr$ considered in section \ref{sec:sdiffslnc}.  Using the 
explicit parameterization of the algebra given in section \ref{sec:modes},
the $\sdiff$ generators are again labeled by the spherical harmonics 
$Y_\alpha$, and the generators of the $\mfk{sl}(2,\mathbb{R})^S$ algebra
take the form $Y_{a\alpha} = \tau_a\otimes Y_\alpha$, where $\tau_a$ 
are the basis of $\mfk{sl}(2,\mathbb{R})$ given in 
(\ref{eqn:taua}).  Once again, we determine the deformed algebra 
by promoting the spherical harmonics appearing in these generators
to fuzzy spherical harmonics and computing the matrix commutators.
As before, the deformed $\sdiff$ generators are  obtained by 
tensoring with the $2\times 2$ identity $\wh{Y}_{\id\alpha} = \mathbbm{1}_{2}
\otimes \wh{Y}_\alpha$.  To arrive at the deformed $\mfk{sl}(2,\mathbb{R})^S$
generators, we recall the discussion in section \ref{sec:sdiffslnc} regarding
properties of the tensor product of matrices using physicist's versus 
mathematician's conventions for the algebra.  The conclusion is that 
in order to obtain a consistent algebra after taking the tensor product, 
we must use a basis for $\mfk{sl}(2,\mathbb{R})$ in which 
the structure constants are purely imaginary.  This basis is given by
\begin{equation} \label{matr}
    \rho_0 =  \frac12 \begin{pmatrix} 1 & 0 \\ 0 & -1 \end{pmatrix}, \qquad
    \rho_1 = \frac12 \begin{pmatrix} 0 & 1 \\ -1 & 0 \end{pmatrix}, \qquad
    \rho_2 = \frac12 \begin{pmatrix} 0 & -i \\ -i & 0 \end{pmatrix},
\end{equation}
whose product satisfies
\beq \label{eqn:rhoarhob}
\rho_a \rho_b = -\frac14\eta_{ab}\mathbbm{1}_{2} +\frac{i}{2}\varepsilon\indices{_a_b^c}\rho_c.
\eeq

It is useful to recall that this basis arises naturally in the 
presentation of $\mfk{sl}(2,\mathbb{R})$ in terms of the isomorphic
algebra $\mfk{su}(1,1)$.  The latter is the Lie algebra
of $\SU(1,1)$, consisting of $2\times 2$ complex 
matrices $g$ of unit determinant preserving an indefinite Hermitian
form $h$, 
\beq\label{eqn:h}
g^\dagger h g = h, \qquad h = \begin{pmatrix}1&0\\0&-1\end{pmatrix}.
\eeq
Expressing $g$ as the exponential of a Lie algebra generator $g= \exp(iT)$, 
preservation of $h$ translates to the condition 
\beq \label{eqn:Th}
T^\dagger h = h T,
\eeq
which indeed is satisfied by the matrices $\rho_a$.

We now take the deformed $\mfk{sl}(2,\mathbb{R})^S$ generators to be of the form
$\wh{Y}_{a\alpha} = \rho_a\otimes \wh{Y}_\alpha$.  Again employing the 
identity (\ref{eqn:otimesbrack}) for the bracket of the tensor
product of matrices, we find that the commutators of the 
matrices $(\wh{Y}_{\id\alpha}, \wh{Y}_{a\alpha})$ satisfy
\begin{align}
[\wh{Y}_{\id\alpha}, \wh{Y}_{\id\beta}] 
&= \frac{2i}{N}\wh{C}\indices{_\alpha_\beta^\gamma}\wh{Y}_{\id\gamma}, \\
[\wh{Y}_{\id\alpha},\wh{Y}_{a\beta}] 
&= \frac{2i}{N} \wh{C}\indices{_\alpha_\beta^\gamma}\wh{Y}_{a\gamma}, \\
[\wh{Y}_{a\alpha}, \wh{Y}_{b \beta}] 
&=  i\varepsilon\indices{_a_b^c}\wh{E}\indices{_\alpha_\beta^\gamma} \wh{Y}_{c\gamma} 
-\frac{i}{2N}\eta_{ab}\wh{C}\indices{_\alpha_\beta^\gamma}\wh{Y}_{\id\gamma},
\end{align}
where the last bracket applies equations (\ref{MN}), (\ref{MC}), and 
(\ref{eqn:rhoarhob}) for the structure constants of the symmetric
and antisymmetric products of $\wh{Y}_\alpha$ and of $\rho_a$.

As in the case of the deformation of $\crr$, these matrices do not 
provide a unitary representation of a Lie algebra, and hence should not 
be viewed as a quantization of a phase space.  Because of this, when matching
to the classical algebra, we are again free to rescale the generators
by prefactors with different parametric dependence on $N$.  Unlike the case
of $\crr$, in the present context matching to the classical 
algebra (\ref{eqn:cslrsc})
fully determines the choice of prefactor.  Denoting the basis for the 
deformed Lie algebra as $(\algX_\alpha, \algZ_{a\alpha})$ and $\pi_{\bf{2N}}$
the representation in which the matrices $(\wh{Y}_{\id\alpha},\wh{Y}_{a\alpha})$
live, the required scaling between the algebra generators and the 
matrices is given by
\begin{align}
\pi_{\bf{2N}}(\algX_\alpha) &= \frac{N}{2i}\wh{Y}_{\id\alpha},
\label{eqn:pi2NX}\\
\pi_{\bf{2N}}(\algZ_{a\alpha}) &=  \frac{1}{i}\wh{Y}_{a\alpha}. \label{eqn:pi2NZ}
\end{align}
This implies the following brackets for the deformed Lie algebra generators
\begin{align}
[\algX_\alpha, \algX_\beta] 
&= \wh{C}\indices{_\alpha_\beta^\gamma}\algX_\gamma, \label{eqn:XXsunn}\\
[\algX_\alpha, \algZ_{a\beta}]
&= \wh{C}\indices{_\alpha_\beta^\gamma}\algZ_{a\gamma}, \label{eqn:XZsunn} \\
[\algZ_{a\alpha},\algZ_{b \beta}] 
&=\varepsilon\indices{_a_b^c}\wh{E}\indices{_\alpha_\beta^\gamma}\algZ_{c\gamma}
-\frac{1}{N^2}\eta_{ab}\wh{C}\indices{_\alpha_\beta^\gamma}\algX_\gamma.
\label{eqn:ZZsunn}
\end{align}
Comparing to (\ref{eqn:cslrsc}), we see that these brackets match
the $\cslr$ algebra in the limit $N\rightarrow\infty$.  Note that the 
scaling of the generators $\algZ_{a\alpha}$ in
(\ref{eqn:pi2NZ}) corresponds to the preferred 
choice $\lambda = \frac{1}{N}$ discussed below (\ref{eqn:ZZ}) for the 
similar case of the $\crr$ deformation.

As before, $(\algX_\alpha, \algZ_{a\alpha})$ define a complex basis 
for the deformed Lie algebra.  The reality condition to specify the real
Lie algebra again descends from the reality condition for the spherical
harmonics, and is given by
\beq \label{eqn:unninv}
\algX^*_\alpha = (-1)^a \algX_{\bar\alpha},\qquad \algZ_{a\alpha}^* = (-1)^a
\algZ_{a\bar\alpha}.
\eeq
This reality condition determines whether a given representation is unitary
by the requirement that $i\pi(\algX_\alpha^*) = (i\pi(\algX_\alpha) )^\dagger$ and 
similarly for $\algZ_{a\alpha}$.  The fact that the matrices 
$\wh{Y}_{1\alpha}$ and 
$\wh{Y}_{2\alpha}$ do not satisfy this condition verifies that the 
representation $\pi_{\bf{2N}}$ is not unitary.

In the $\pi_{\bf{2N}}$ representation, the combination 
of generators $(\wh{Y}_{\id\alpha}, \wh{Y}_{a\alpha})$ that are fixed under
the involution $*$ are all of the form $(\mathbbm{1}_{2}\otimes \wh{A}, 
\rho_a\otimes \wh{B})$, with $\wh{A}$, $\wh{B}$ 
 Hermitian $N\times N$ matrices.  This characterization of the 
generators allows us to identify the Lie algebra defined 
by the brackets (\ref{eqn:XXsunn}), (\ref{eqn:XZsunn}), and (\ref{eqn:ZZsunn}).
Defining a Hermitian form $\wh{h}$ of signature $(N,N)$ given by
\beq \label{eqn:hath}
\wh{h} = h\otimes \mathbbm{1}_N =
\begin{pmatrix}\mathbbm{1}_N&0\\0&-\mathbbm{1}_N \end{pmatrix},
\eeq
with $h$ the $2\times 2$ mixed signature Hermitian form from (\ref{eqn:h}),
we find that the generators preserve $\wh{h}$ in the sense of 
satisfying the analogous condition to (\ref{eqn:Th}):
\begin{align}
(\mathbbm{1}_{2}\otimes\wh{A})^\dagger \wh h &=
h\otimes \wh{A}^\dagger = h\otimes \wh{A} = \wh{h}(\mathbbm{1}_{2}\otimes\wh{A}), \\
(\rho_a\otimes\wh{B})^\dagger \wh h &=
\rho_a^\dagger h \otimes \wh{B}^\dagger
= h\rho_a \otimes\wh{B}
= \wh{h}(\rho_a\otimes \wh{B}).
\end{align}
As preservation of $\wh{h}$ is the defining property of the Lie algebra 
$\mfk{u}(N,N)$, we immediately conclude that the algebra 
defined by the generators $(\algX_\alpha, \algZ_{a\alpha})$ is $\mfk{u}(N,N)$.

Just as in the case of $\mfk{u}(N)$, the Lie algebra for $\mfk{u}(N,N)$
can be parameterized in a basis of elementary matrices 
$E\indices{^{\msf m}_{\msf n}}$, where $\msf m, \msf n = 1,\ldots, 2N$.  
In terms of these, the $(\algX_\alpha, \algZ_{a\alpha})$
generators are given by
\begin{align}
\algX_\alpha &= \frac{N}{2i}\big(\wh{Y}_{\id\alpha}\big)\indices{_{\msf m}
^{\msf n}} E\indices{^{\msf m}_{\msf n}},
\\
\algZ_{a\alpha}&= \frac{1}{i} \big(\wh{Y}_{a\alpha}\big)\indices{_{\msf m}
^{\msf n}} E\indices{^{\msf m}_{\msf n}}.
\end{align}
As usual, the Lie brackets in the $E\indices{^{\msf m}_{\msf n}}$ basis are 
given by (see Appendix \ref{sec:su(N,N) idesntities} for the proof)
\beq \label{eqn:Emnbrack}
[E\indices{^{\msf m}_{\msf n}}, E\indices{^{\msf p}_{\msf q}}] =
\delta^{\msf p}_{\msf n}E\indices{^{\msf m}_{\msf q}}
-\delta^{\msf m}_{\msf q}E\indices{^{\msf p}_{\msf n}},
\eeq
and the involution (\ref{eqn:unninv}) becomes
\beq
(E\indices{^{\msf m}_{\msf n}})^* = - \wh{h}_{\msf{np}} \wh{h}^{\msf{mq}}
E\indices{^{\msf p}_{\msf q}},
\eeq
where $\wh{h}_{\msf{np}}$ is the Hermitian form defined by (\ref{eqn:hath}),
and $\wh{h}^{\msf{mq}}$ is its inverse.  The inverse relation between
the two bases is 
\beq
E\indices{^{\msf m}_{\msf n}} = 
\frac{i}{N}\left(\frac{1}{N} 
\algX^\alpha\big(\wh{Y}_{\id\alpha}\big)\indices{_{\msf{n}}^{\msf m}}
-2 \algZ^{a\alpha}\big(\wh{Y}_{a\alpha}\big)\indices{_{\msf n}^{\msf m}}
\right).
\eeq

Finally, we recall that the central generator $\algX_{00}$ 
is not included when matching to 
the continuum algebra since the constant function
on the sphere does not generate a diffeomorphism.  The algebra obtained
by excluding this generator from the deformed algebra is then the 
simple Lie algebra $\mfk{su}(N,N)$.  We, therefore, arrive at one of
our main results, that $\mfk{su}(N,N)$ defines a finite-dimensional
deformation of the continuum algebra $\cslr$, or, equivalently, that
the large $N$ limit of $\mfk{su}(N,N)$ can be identified
with $\cslr$.

Just as $\crr$ is the subalgebra of $\cslr$ obtained by 
restricting to $\tau_1$ generators of $\slr^S$, the 
deformation $\mfk{sl}(N,\mathbb{C}) \oplus \mathbb{R}$ occurs as a 
subalgebra of $\su(N,N)$ by including only the $\algZ_{1\alpha}$ generators.  
In the $\pi_{\bf{2N}}$ representation, this subalgebra can equivalently
be characterized as the collection of generators that commute with 
$2\wh{Y}_{1,00} = \begin{pmatrix}0&\mathbbm{1}_N\\-\mathbbm{1}_N&0\end{pmatrix}$.
This is interesting because $(2\wh{Y}_{1,00})^2 = -\mathbbm{1}_{\bf{2N}}$, and 
hence defines a complex structure in this representation.  Hence we see
that $\mfk{sl}(N,\mathbb{C})\oplus\mathbb{R}$ can be viewed as the subalgebra
of $\mfk{su}(N,N)$ preserving a complex structure in the defining representation.

Having identified the deformed algebra, we can now relate this 
algebra to the quantization of the classical phase space.  This 
requires specifying the map $\wh{H}\in \mfk{su}(N,N)^*\otimes
\opalg$, with $\opalg$ the space of operators in which the representation
$\pi_{\cal{P}}$ defining the quantization is valued.
Again, this map can be specified by the quantities $\wh{J}, \wh{N}_a \in
\text{Mat}(N)\otimes \opalg$ which yield generators in the 
representation $\pi_{\cal{P}}$ by the relation 
\begin{align}
\wh{J}_\alpha &= \frac{1}{N}
\tr_{\mathbf{N}}\left(\wh{J}\wh{Y}_{\alpha}\right), \\
\wh{N}_{a\alpha}&= \frac{1}{N}
\tr_{\mathbf{N}}\left(\wh{N}_a\wh{Y}_{\alpha}\right).
\end{align}
The normalization condition for the maps $(\wh{J}, \wh{N}_a)$ is again 
chosen so that $(\wh{J}_\alpha, \wh{N}_{a\alpha}) = 
\Big(i\hbar \pi_{\cal{P}}(\algX_\alpha), i\hbar \pi_{\cal{P}}(\algZ_{a\alpha})
\Big)$, so that the generators satisfy the algebra
\begin{align}
[\wh{J}_\alpha, \wh{J}_\beta] &= i\hbar \wh{C}\indices{_\alpha_\beta^\gamma}
\wh{J}_\gamma, \\
[\wh{J}_\alpha, \wh{N}_{a\beta}]&=
i\hbar \wh{C}\indices{_\alpha_\beta^\gamma} \wh{N}_{a\gamma}, \\
[\wh{N}_{a\alpha}, \wh{N}_{b \beta}] &=
i\hbar\left(\varepsilon\indices{_a_b^c}\wh{E}\indices{_\alpha_\beta^\gamma}
\wh{N}_{c\gamma}
-\frac{1}{N^2} \eta_{ab}\wh{C}\indices{_\alpha_\beta^\gamma}\wh{J}_\gamma\right).
\end{align}
This again produces the correspondence between classical and quantum
generators of the algebra,
\begin{equation}\label{eq:the matrix regularization procedure of cslr}
    \begin{aligned}
    J_\alpha&=\bigintsss_S \nu_0 {J}Y_\alpha,
    \\
    N_{a\alpha}&=\bigintsss_S \nu_0 N_a Y_\alpha,
    \end{aligned}
    \qquad\mrln\qquad
    \begin{aligned}
    \wh{J}_\alpha&=\frac{1}{N}
    \tenofo{Tr}_{N}\left(\wh{J}\wh{Y}_\alpha\right),
    \\
    \wh{N}_{a\alpha}&=\frac{1}{N}\tenofo{Tr}_{N}\left(\wh{N}_a 
    \wh Y_\alpha\right).
    \end{aligned}
\end{equation}
The requirement that the representation be unitary 
follows from the involution (\ref{eqn:unninv}), and implies that 
the generators satisfy
\beq
\wh{J}_\alpha^\dagger = \wh{J_\alpha^*} = (-1)^a\wh{J}_{\bar\alpha},\qquad
\wh{N}_{a\alpha}^\dagger = \wh{N_{a\alpha}^*}
 = (-1)^a\wh{N}_{a\bar{\alpha}}.
\eeq

It is also convenient to introduce a set of generators tied 
to the elementary matrix basis.  Using $\wh{J}$ and $\wh{N}_a$,
we can construct a quantity $\opH \in \text{Mat}_{2N}\otimes \opalg$ 
whose matrix elements are
\beq \label{eqn:hHmn}
\opH\indices{_{\msf m}^{\msf n}} 
= \frac{1}{N}\wh{J}^\alpha
\big(\wh{Y}_{\id\alpha}\big)\indices{_{\msf{m}}^{\msf{n}}}
-2\wh{N}^{a\alpha}
\big(\wh{Y}_{a\alpha}\big)\indices{_{\msf{m}}^{\msf n}}.
\eeq
By making the split $\msf m = (M,i)$, $\msf n = (N,j)$ where $M,N = 1,2$
are indices in the 2D representation of $\mfk{sl}(2,\mathbb{R})$ and 
$i,j = 1,\ldots, N$ are $\mfk{su}(N)$ indices, this relation can equivalently 
be expressed in block diagonal form,
\beq
\wh{H} = \begin{pmatrix}\frac1N\wh J +\wh{N}_0 & -\wh{N}_1+i\wh{N}_2 \\
\wh{N}_1+i\wh{N}_2 & \frac1N\wh{J}-\wh{N}_0\end{pmatrix}.
\eeq
The commutators of the operators $\opH\indices{_{\msf m} ^{\msf n}}$ 
are rescaled relative to the bracket (\ref{eqn:Emnbrack}) according to
\beq
[\opH\indices{_{\msf{m}}^{\msf{n}}}, \opH\indices{_{\msf p}^{\msf q}}]
=\hbar N\left(\delta_{\msf m}^{\msf q} \opH\indices{_{\msf p}^{\msf{n}}} 
-\delta_{\msf p}^{\msf n}\opH\indices{_{\msf m}^{\msf q}}\right),
\eeq
and the Hermiticity condition they satisfy is 
\beq
\big(\opH\indices{_{\msf m}^{\msf n}}\big)^\dagger
= - \wh{h}_{\msf{mp}}\wh{h}^{\msf{nq}} \opH\indices{_{\msf{q}}^{\msf{p}}}.
\eeq

\section{The large-$N$ correspondence of Casimirs}
\label{sec:casimirs}

The previous section established the existence of deformations 
of three infinite-dimensional symmetry algebras appearing 
in gravity into finite-dimensional, semisimple Lie algebras.  
The quantum theory, however, contains information
beyond that  in the deformed Lie algebra.  In particular,
the generators of the deformed symmetry are operators on a Hilbert
space, and while the Lie algebra determines the commutators of these
operators, the quantum theory depends on the
full associative product of the operators, i.e., on anticommutators
as well as commutators.\footnote{Consider, for example,
the spin-$\frac12$ and spin-$1$ representations of $SO(3)$.  
In the former, the anticommutator of two different 
Pauli matrices is zero, while in the latter the anticommutator of 
two orthogonal 
$\mfk{so}(3)$ generators is a nonzero
symmetric matrix with zeros on the 
diagonal.}  The structure of the full operator product depends on the representation of the deformed algebra 
in which the quantum theory is defined.  Hence, in order to understand 
the quantization of the gravitational phase spaces admitting 
actions of these algebras, we need a means for determining 
the appropriate representation of the deformed algebra.

As mentioned in section
\ref{sec:fxns2ops}, 
the representation is constrained by matching to the classical
algebra of functions on the gravitational phase space.  
In the limit $\hbar\rightarrow 0$, the symmetric product of 
anticommutators of generators of the algebra is required to 
reproduce  the abelian, associative product of the corresponding
functions on the phase space.  This matching was already discussed
in the simplest example of the fuzzy sphere at the beginning 
of section  \ref{sec:sdiffsun}.  In that case, the symmetric
product of the fuzzy spherical harmonics $\wh{Y}_\alpha$ was 
given by $\wh{Y}_\alpha\circ \wh{Y}_\beta =
\wh{E}\indices{_\alpha_\beta^\gamma}\wh{Y}_\gamma$, with the normalization
of $\wh{Y}_\alpha$ chosen so that $\wh{E}\indices{_\alpha_\beta^\gamma}$
approaches the expression for the classical structure constants
$E\indices{_\alpha_\beta^\gamma}$ in the limit
$N\rightarrow\infty$, as indicated in equation
(\ref{ElargeN}).  This equation in fact determines the representation
of $\mfk{u}(N)$ associated with the quantization of the sphere due 
to the observation that the classical spherical harmonics $Y_\alpha$
form a complete basis for the algebra of functions on the sphere, and 
hence their quantization should also share this property, namely, that
the operator product of two generators closes on the space of generators.
The only representation of $\mfk{u}(N)$ possessing this property
is the defining representation in terms of $N\times N$ matrices, 
leading to the conclusion that this is the appropriate
representation appearing in the quantization of the sphere.  Furthermore,
as discussed around equation (\ref{eqn:Nfs}), consistently
matching the commutators of the generators to the Poisson  
bracket fixes the deformation parameter 
$N_\text{fs}$ to be $\frac{A}{2\pi\hbar_\text{fs}}$,
where $A$ is the area of the phase space computed from the symplectic
form.  Hence, in this case, we see that the quantized algebra is 
fully determined by matching the classical limits of the 
symmetric and antisymmetric products of operators.  

A subtlety arises when applying this reasoning to the gravitational
phase space, since the classical generators of the algebra 
are far from forming a complete basis for functions on the phase space.
Generic products of generators become complicated multilocal integrals 
over the 2-sphere in spacetime, all of which represent
independent functions on the phase space.  This suggests that the 
representation yielding the quantization of the phase space will be 
large, in the sense of containing many operators beyond those 
corresponding to the Lie algebra generators.  These additional
operators would then be assigned to the multilocal 
observables of the classical theory.  Determining a representation
from properties of these multilocal observables  appears daunting;
however, the task is drastically simplified by focusing on invariant
functionals of the classical symmetry algebra, which are associated 
with Casimir operators in the quantum theory.  These invariant
functionals arise from the pullback via the moment 
map of Casimir functions on the classical coadjoint orbits, and we will
find that they reduce, nontrivially, to expressions involving single
integrals over the $2$-sphere in spacetime.  Each such function is 
shown to coincide uniquely with a Casimir element of the deformed algebra,
which are represented as matrices proportional to the identity in 
an irreducible representation.  The c-number proportionality constants
largely determine the representation, and hence by matching these 
c-numbers to the values of the corresponding classical phase space
functions, we arrive at a procedure for determining the representation
associated with the quantization of the phase space.\footnote{This 
argument assumes that the representation is irreducible, and 
requires one to consider a subspace of the classical phase space
defined by fixing the value of the Casimir functions.  More generally,
we expect the full phase space to be foliated by several such
subspaces, which suggests the full quantum theory will occur
in a reducible representation, with each irreducible component
coinciding, roughly, with a single leaf of the foliation in 
the classical phase space.  Determining the multiplicity of the 
representations occurring in this quantization appears to be more 
challenging.  One needs either a natural measure on the space of 
Casimir functions, possibly arising from the phase space symplectic
form itself, or otherwise to find a larger symmetry group that 
acts transitively on the phase space, whose irreducible
representations will occur as reducible representations 
of the smaller algebras considered here.}  
This matching procedure should also determine the value of the 
deformation parameter $N$.

Given their importance for determining the representation
of the deformed algebra, in this section, we characterize the Casimir
elements of each of the deformed algebras, as well as the Casimir 
functions on the coadjoint orbits of the classical algebras.  Furthermore,
we derive the appropriate correspondence between classical 
and deformed Casimir elements, which then facilitates the matching 
procedure needed to determine the representation for the quantization
of the gravitational phase space.  In the case of $\sdiff$, we 
carry out the matching in somewhat more detail to argue that 
the value of $N$ and the associated representation of $\mfk{su}(N)$
are both determined by this procedure.

\subsection{$\sdiff$ and $\su(N)$}\label{sec:sdiffcas}

We begin with the application of the above procedure to 
the algebra $\sdiff$ and its deformation $\mfk{su}(N)$. 
The key step is to classify the invariants of the two 
algebras, and to determine the correspondence between
the invariants in the large-$N$ limit.  

Phase space functions that are invariant under the action of 
$\sdiff$ generically arise as pullbacks of Casimir elements of the 
$\sdiff$ Lie algebra via the moment map.  
We recall that the moment map $\mu$ for a given phase space 
admitting an action of $\sdiff$ sends the phase space $\phsp$ to the
dual of the Lie algebra $\sdiff^*$, which is itself a phase space 
admitting a Hamiltonian action of $\sdiff$ via the coadjoint action
\cite{kirillov2004lectures}.
Hence, a classification of the invariant functions for this 
action on $\sdiff^*$ leads to a corresponding classification of 
invariants on the phase space $\phsp$.  Casimir elements of the 
universal enveloping algebra of $\sdiff$ define functions 
on $\sdiff^*$ via the natural pairing between the Lie algebra and 
its dual, and the fact that the Casimir elements commute with the Lie
algebra translates to the statement that the corresponding 
functions on $\sdiff^*$ are invariant under the coadjoint action of 
$\sdiff$.  This, therefore, gives the link between Casimir elements 
of the Lie algebra and $\sdiff$ invariants in the gravitational phase 
space, thus reducing the problem to determining the Casimir elements 
of $\sdiff$.  

Before doing so, we first describe the space $\sdiff^*$ and the 
coadjoint action in more detail.  
As discussed in section \ref{sec:subalgs}, the Lie algebra $\sdiff$
can be parameterized in terms of stream functions, with each
function $\phi$ on the sphere coinciding with an infinitesimal 
diffeomorphism.  The Lie bracket between two functions
$\phi$ and $\psi$ is defined via the 
Poisson bracket $\{\phi,\psi\}_{\upb}$ associated with the 
unit-radius volume form $\upb$.  
It will be convenient in this section to 
consider the space 
$C^\infty(S)$ of 
all smooth functions on the sphere, including the constant 
function which generates the trivial center of the Poisson algebra
of functions, and hence is equivalent to working with 
the trivially extended algebra $\mfk{g} = \sdiff\oplus \mathbb{R}$.  
The dual Lie algebra $\mfk{g}^*$ can also 
be parameterized by functions on the sphere due to the natural 
pairing provided by integration over the sphere.  Specifically, for 
$\phi \in \mfk{g}$, 
$f\in\mfk{g}^*$, the pairing is given by
\beq\label{eqn:pairing}
\langle f, \phi\rangle = \int_S\nu_0 f\,\phi.
\eeq
Note that because $\sdiff$ is associated with the 
quotient space of all functions modulo constant shifts, the 
natural dual $\sdiff^*$ is given by all functions which integrate
to zero, in order to have a consistent pairing by integrating 
over the sphere.  

Since the Poisson bracket of functions defines the Lie algebra 
on $\mfk{g}$, the adjoint action is given in terms of 
this bracket: $\ad_\phi \psi = \{\phi,\psi\}_{\upb}$. 
Throughout this section, we will always employ the Poisson bracket
$\{\cdot,\cdot\}_{\upb}$, and hence will drop the $\upb$ subscript.
The coadjoint 
action $\ad_\phi^*$ on $\mfk{g}^*$ is defined 
by 
\beq
\langle \ad^*_\phi f, \psi\rangle = -\langle f,\ad_\phi \psi\rangle.
\eeq
Applying the definition (\ref{eqn:pairing}) of the pairing, this 
implies that 
\beq
\langle \ad^*_\phi f,\psi\rangle = -\int_S \nu_0 f\{\phi,\psi\}
 = \int_S \nu_0 \{\phi,f\}\psi = \langle \{\phi,f\},\psi\rangle.
\eeq
We, therefore, conclude that coadjoint action is given by 
\beq
\ad_\phi^* f = \{\phi,f\},
\eeq
and hence agrees with the adjoint action when both spaces 
$\mfk{g}$ and $\mfk{g}^*$
are realized as $C^\infty(S)$.

The coadjoint-invariant functions on
$\mfk{g}^*$ are expressible in terms of the 
Casimir elements of the $\mfk{g}$ universal enveloping
algebra.  The latter can be constructed as follows.  We begin by 
noting that the identity map $\Id$ on $\mfk{g}$ is an element of 
$\mfk{g}^*\otimes\mfk{g}$, and using the isomorphism between $\mfk{g}^*$
and $C^\infty(S)$, we see that it is naturally associated with a
Lie-algebra valued function 
$\jfxn\in C^\infty(S)\otimes \mfk{g}$.  
The fact that $\jfxn$ arises from the identity map implies the relation
\beq \label{eqn:fj}
\langle f,\jfxn\rangle = f,
\eeq
where the pairing is taken between the $\mfk{g}$ tensor factor
of $\jfxn$ and $f\in \mfk{g}^*$, and the output is the function on 
$S$ corresponding to $f$.  Similarly, we have that 
\beq\label{eqn:intj}
\int_S \nu_0\, \jfxn\,\phi = \phi,
\eeq
where again the output on the right-hand side is the element $\phi$
of $\mfk{g}$ associated with the function $\phi$ on the sphere.  This 
latter relation implies that, if we instead use the isomorphism
between $\mfk{g}$ and $C^\infty(S)$, we can  view $\jfxn$ as a bilocal
function, $\jfxn\in C^\infty(S)\otimes C^\infty(S)$, the relation
(\ref{eqn:intj}) implies that $\jfxn(\sigma, \sigma') 
= \delta(\sigma-\sigma')$.\footnote{Since this is a distribution
rather than a function, $\jfxn$ actually lies in a larger space than
$C^\infty(S)\otimes C^\infty(S)$ that includes distributions, but 
this technicality does not affect the arguments of this section.  }
Additionally, it allows us to obtain a basis $\ja_\alpha$
of $\mfk{g}$ from the spherical harmonics $Y_\alpha$,
\beq \label{eqn:ja}
\ja_\alpha = \int_S \nu_0 \, \jfxn Y_\alpha,
\eeq
which, conversely, leads to a mode decomposition of the 
function $\jfxn(\sigma)$,
\beq \label{eqn:jmodes}
\jfxn(\sigma) = \sum_{\alpha} \ja_\alpha Y^\alpha(\sigma).
\eeq

Equation (\ref{eqn:ja}) states that the Lie algebra element associated
with the spherical harmonic $Y_\alpha$ is given by $\jfxn_\alpha$.  It
further implies that the Lie bracket with $\phi\in\mfk{g}$ should 
be determined by the Poisson bracket between $\phi$ and the spherical
harmonic $Y_\alpha$; specifically,
\beq
[\phi, \ja_\alpha] = \int_S \nu_0\,\jfxn\{\phi,Y_\alpha\}.
\eeq
This relation can then be used to determine the Lie
bracket of $\phi$ with the $\mfk{g}$ factor of $\jfxn$,
(see appendix \ref{app:casimirs})
\beq \label{eqn:phij}
[\phi, \jfxn] = -\{\phi,\jfxn\},
\eeq
where the Poisson bracket is evaluated on the $C^\infty(S)$ factor 
of $\jfxn$. 

The Casimir elements of $\mfk{g}$ can then be obtained
straightforwardly by taking products of $\jfxn$ 
with itself.  The expression
$\jfxn^n$ is interpreted as a $\mfk{g}^{\otimes n}$-valued function
on the sphere, with the product taken within the $C^\infty(S)$ factor
of each $\jfxn$.\footnote{If we instead identify
each factor of $\mfk{g}$ with a function on the 
sphere, the expression $\jfxn^n$ would be interpreted 
as an $(n+1)$-local function on the sphere consisting 
of products of delta functions, i.e.\ 
$\jfxn^n(\sigma_1,\ldots\sigma_n;\sigma) = 
\displaystyle\prod_{i=1}^n \delta(\sigma_i-\sigma)$.  However, 
in matching to the Casimirs of the deformed algebra, 
it is more convenient to use the 
abstract Lie algebra as opposed to the representation 
in terms of functions on the sphere.}    
To arrive at the Casimir elements,
we integrate this object over the sphere,
\beq
c_n = \int_S \nu_0\, \jfxn^n.
\eeq
Verifying that $c_n$ commutes with every element of $\mfk{g}$ comes 
from a straightforward application of (\ref{eqn:phij}):
\begin{equation}
    \begin{aligned}
[\phi, c_n]  
&= \int_S \nu_0 \sum_{k=1}^n \jfxn^{k-1}[\phi,\jfxn] \jfxn^{n-k}
\\
&= -\int_S\nu_0\sum_{k=1}^n \jfxn^{k-1}\{\phi,\jfxn\} \jfxn^{n-k}
\\
&= - \int_S\nu_0 \{\phi, \jfxn^n\}
=0,
    \end{aligned}
\end{equation}
since any Poisson bracket integrates to zero over the sphere.  

When matching to the Casimirs of the deformed algebra $\mfk{u}(N)$,
it is useful to have an expression of $c_n$ in a specific 
basis.  This can be obtained immediately from the mode
decomposition (\ref{eqn:jmodes}) of $\jfxn$,
\beq \label{eqn:cn}
c_n = \ja_{\alpha_1} \ja_{\alpha_2}\ldots \ja_{\alpha_n}
\int_S \nu_0 Y^{\alpha_1}\ldots Y^{\alpha_n}
\equiv \ja_{\alpha_1}\ldots\ja_{\alpha_n} d^{\alpha_1\ldots\alpha_n},
\eeq
where the second equation defines the totally symmetric
tensor $d^{\alpha_1\ldots \alpha_n}$.
We will later see that this tensor matches a corresponding tensor
$\wh{d}^{\alpha_1\ldots\alpha_n}$ defined for $\mfk{u}(N)$ as 
$N\rightarrow\infty$.

The  functions on $\mfk{g}^*$ associated with the Casimir
elements $c_n$ are obtained by the natural pairing between $\mfk{g}$ and
$\mfk{g}^*$.  For $f\in\mfk{g}^*$, we have that 
\beq
c_n[f] = \int_S \nu_0 \langle\jfxn,f\rangle^n = \int_S \nu_0 \, f^n,
\eeq
where we have applied the relation (\ref{eqn:fj}).  
In the current context in which our Lie algebra involves $\sdiff$,
these Casimir functions are called enstrophies, due to a close 
analogy between the $\sdiff$ coadjoint orbits and 2D fluid dynamics
on the sphere 
\cite{DonnellyFreidelMoosavianSperanza202012, arnold1999topological, izosimov2016coadjoint}.
These Casimir functions can be pulled back to the gravitational 
phase space via the moment map $\mu$.  This pullback is readily
obtained from the relation
$\mu^*\jfxn = J$, where $J$ is the function on $S$ defined in section 
\ref{sec:PJ}.  The result of the pullback of the Casimir functions is 
a set of $\sdiff$ invariants on the gravitational phase space, coinciding
with the gravitational enstrophies discussed in
\cite{DonnellyFreidelMoosavianSperanza202012}.  These invariants 
are given explicitly by
\beq\label{eqn:Cn}
C_n = \int_S \nu_0 J^n.  
\eeq

Note that the invariants can equivalently be expressed 
by pulling back the mode decomposition (\ref{eqn:cn}) to the 
gravitational phase space.  Since $\ja_\alpha$ pulls back to
the generator $J_\alpha$ on the phase space, we see that  
$C_n$ is equivalently expressed as 
\beq\label{eqn:Cnmodes}
C_n = J_{\alpha_1}\ldots J_{\alpha_n} d^{\alpha_1\ldots\alpha_n}.
\eeq
Since each $J_\alpha$ is given by an integral over $S$,
the expression (\ref{eqn:Cnmodes}) naively appears to be a 
complicated object involving multiple integrals over the 
sphere.  The fact that it localizes to a single integral
as in (\ref{eqn:Cn}) comes from special properties of the 
tensor $d^{\alpha_1\ldots\alpha_n}$, which produces delta functions
when contracted into the $J_\alpha$
in the expression for $C_n$, 
resulting in a single integral expression.\footnote{As an example,
since $J_\alpha = \int_S d\sigma Y_\alpha(\sigma) J(\sigma)$
and $d^{\alpha\beta} = \int_S d\sigma Y^\alpha(\sigma)Y^\beta(\sigma)$,
evaluating (\ref{eqn:Cnmodes}) for $C_2$, we get
\begin{equation*}
    \begin{aligned}
        C_2 
&= 
\int d\sigma_1 \int d\sigma_2\int d\sigma_3
J(\sigma_1)J(\sigma_2)
Y_\alpha(\sigma_1)Y_\beta(\sigma_2) Y^\alpha(\sigma_3)Y^\beta(\sigma_3)
\nonumber\\
&=
\int d\sigma_1 \int d\sigma_2\int d\sigma_3
J(\sigma_1)J(\sigma_2)
\delta(\sigma_1-\sigma_3)\delta(\sigma_2-\sigma_3)
\nonumber \\
&=
\int d\sigma_1 J^2(\sigma_1).
    \end{aligned}
\end{equation*}
Computations for the higher Casimirs $C_n$ 
show that delta functions appear in a similar manner,
always leading to a single integral expression.
}  

The Casimir elements for the deformed algebra 
$\wh{\mfk{g}} = \mfk{u}(N)$
can be obtained in an analogous manner.  We 
now define an object 
$\defj\in \Mat_{N\times N}\otimes \wh{\mfk{g}}$ normalized
so that
\beq
\frac{1}{N}\tr_{\bf{N}}\left(\defj \cdot \wh{Y}_\alpha\right) 
= \algX_\alpha,
\eeq
where $\algX_\alpha$ are the basis elements 
for the $\mfk{u}(N)$
Lie algebra introduced in section \ref{sec:sdiffsun}.
The mode decompositions of the $\defj$ matrix elements 
are therefore given by
\beq\label{eqn:defjmodes}
\defj\indices{_i^j} = \sum_{\alpha} X_\alpha (\wh{Y}^\alpha)\indices{_i^j}.
\eeq
Similar to the classical relation
(\ref{eqn:phij}), the Lie bracket between
$\defj$ and a Lie algebra element $\algX_\alpha$ can be
expressed as (see appendix \ref{app:casimirs})
\beq \label{eqn:Xaj}
[\algX_\alpha, \defj]_{\wh{\mfk{g}}} =
-\frac{N}{2i}[\wh{Y}_\alpha, \defj],
\eeq
where the bracket on the right hand side is the matrix 
commutator evaluated on the $\Mat_{N\times N}$ factor
of $\defj$.

Invariant elements of the $\wh{\mfk{g}}$ tensor algebra
arise from products of $\defj$ with itself, $\defj^n$, 
where the product is taken within the $\Mat_{N\times N}$ 
factor of $\defj$ and the resulting matrix is valued 
in  $\wh{\mfk{g}}^{\otimes n}$, which
therefore defines an element of the universal enveloping
algebra.  Taking a trace over the 
matrix factor yields the Casimir element,
\beq
\wh{c}_n = \frac1N\tr_{\bf{N}} \defj^n,
\eeq
which can be shown to commute with $\wh{\mfk{g}}$ using 
(\ref{eqn:Xaj}):
\begin{equation}
    \begin{aligned}
[\algX_\alpha, \wh{c}_n]_{\wh{\mfk{g}}}
&=\frac1N\sum_{k=1}^n \tr\left(\defj^{k-1} [\algX_\alpha, \defj\,]_{\wh{\mfk{g}}}\,
\defj^{n-k} \right) \\
&=-\frac{N}{2i}\frac{1}{N}\sum_{k=1}^n\tr\left(
\defj^{k-1}[\wh{Y}_\alpha \defj\,]\,\defj^{n-k}\right)\\
&= -\frac{1}{2i} \tr\left([\wh{Y}_\alpha, \defj^n]\right) = 0.
\label{eqn:Xhatc}
\end{aligned}
\end{equation}
This can also be expressed in the $\algX_\alpha$ basis 
for $\wh{\mfk{g}}$ by applying the mode decomposition
(\ref{eqn:defjmodes})
\beq \label{eqn:hatcn}
\wh{c}_n = \algX_{\alpha_1}\ldots\algX_{\alpha_n}
\frac{1}{N}\tr\left(\wh{Y}^{\alpha_1}\ldots\wh{Y}^{\alpha_n}
\right)
\equiv \algX_{\alpha_1}\ldots\algX_{\alpha_n}
\wh{d}^{\alpha_1\ldots \alpha_n},
\eeq
where the second equality defines the coefficients
$\wh{d}^{\alpha_1\ldots \alpha_n}$.  

As demonstrated in
appendix \ref{app:casimirs}, in the large-$N$ limit the deformed 
Casimir coefficients $\wh{d}^{\alpha_1\ldots \alpha_n}$
approach the classical Casimir coefficients $d^{\alpha_1\ldots
\alpha_n}$ for the Lie algebra $\sdiff\oplus\mathbb{R}$,
\beq \label{eqn:dhatd}
\wh{d}^{\alpha_1\ldots \alpha_n} =  d^{\alpha_1\ldots\alpha_n}
+\mathcal{O}(N^{-1}).
\eeq
In this sense, the Casimir elements of the deformed 
algebra approach those of the classical algebra in the 
large-$N$ limit.\footnote{This agreement between the Casimir 
elements $c_n$ and $\wh{c}_n$ requires that $n$ is held fixed
as $N\rightarrow \infty$.}  In particular, it implies that in the 
representation $\pi_{\phsp}$ associated with the gravitational
phase space, the quantization of the deformed Casimir 
elements
\beq \label{eqn:hatCn}
\wh{C}_n = \wh{J}_{\alpha_1} \ldots \wh{J}_{\alpha_n}\wh{d}^{\alpha_1\ldots\alpha_n}
 = (i\hbar)^n \pi_{\phsp}(\wh{c}_n),
\eeq
must match the value of the classical invariants $C_n$,
given by (\ref{eqn:Cn}), up to $\mathcal{O}(\hbar^2)$ and 
$\mathcal{O}(N^{-2})$ corrections.  The prefactor of 
$(i\hbar)^n$ appears due to the normalization
condition $\wh{J}_\alpha = i\hbar\pi_{\phsp}(\algX_\alpha)$.

\subsection{Matching Casimirs}\label{sec:casmatch}

Having determined the correspondence between the Casimir operators 
$\wh{C}_n$ and the classical gravitational invariants $C_n$, we 
next show that this correspondence can be used to determine the 
deformation parameter $N$ and the appropriate 
representation of $\mfk{u}(N)$ associated with 
the quantization of the gravitational phase space.  This matching
makes use of the explicit characterization of large-$N$ representations
of $\mfk{u}(N)$ and the associated Casimir operators that 
has been developed in previous investigations on 
matrix models (see, e.g.\, \cite{Cordes:1994fc}).

In order to take advantage of these results,
we first need to express the Casimir elements $\wh{c}_n$
given in (\ref{eqn:hatcn}) in terms of the standard expressions
for the Casimirs in the elementary matrix basis $E\indices{^i_j}$for $\mfk{u}(N)$, described in (\ref{eqn:XaEij}) and (\ref{eqn:EijEkl}).  
Using the identity satisfied by the fuzzy spherical harmonics
(derived in appendix \ref{app:SUNrelns})
\beq
\delta^{\alpha\beta}\big(\wh{Y}_\alpha\big)\indices{_i^j}
\big(\wh{Y}_\beta\big)\indices{_k^l} = N \delta^l_i\delta^j_k,
\eeq
we find the expression for $\wh{c}_k$ in the $E\indices{^i_j}$ basis,
\begin{equation}
    \begin{aligned}
\wh{c}_k &=
\left(\frac{N}{2i}\right)^k\frac{1}{N}
\big(\wh{Y}^{\alpha_1}\big)\indices{_{i_1}^{i_2}}
\big(\wh{Y}_{\alpha_1}\big)\indices{_{m_1}^{n_1}}
E\indices{^{m_1}_{n_1}} \ldots
\big(\wh{Y}^{\alpha_k}\big)\indices{_{i_k}^{i_1}}
\big(\wh{Y}_{\alpha_k}\big)\indices{_{m_k}^{n_k}}
E\indices{^{m_k}_{n_k}} 
\\
&=
\frac{N^{2k-1}}{(2i)^k}E\indices{^{i_2}_{i_1}}E\indices{^{i_3}_{i_2}}
\ldots E\indices{^{i_1}_{i_k}}. \label{eqn:hatctildec}
\end{aligned}
\end{equation}
Up to a permutation of the order of the $E\indices{^i_j}$ 
generators,\footnote{This reordering will affect a detailed matching 
for the Casimirs including subleading corrections in $\frac{1}{N}$, but
should not affect the large-$N$ scaling derived in this section.}
this shows that the Casimir elements $\wh{c}_k$ are rescaled by a 
factor of $\frac{N^{2k-1} }{(2i)^k}$ relative to the standard 
$\mfk{u}(N)$ Casimirs
\beq
\cstd_k = E\indices{^{i_1}_{i_2}}E\indices{^{i_2}_{i_3}} \ldots
E\indices{^{i_k}_{i_1}}.
\eeq

In a given irreducible representation $R$ of $\mfk{u}(N)$, the Casimir
element $\cstd_k$ is given by a number $\cstd(R)$ times the 
identity.  This number is matched to the corresponding invariant 
functional on the gravitational phase space in order to determine 
the representation $R$ and deformation parameter $N$.  Noting
the rescaling by $(i\hbar)^k$ implied by equation (\ref{eqn:hatCn})
and the additional prefactor in (\ref{eqn:hatctildec})
relating $\wh{c}_k$ and $\cstd_k$, the matching between 
$\wh{C}_k$ and $C_k$ implies that 
\beq\label{eqn:Ckmatch}
C_k = \left(\frac{\hbar N}{2}\right)^k N^{k-1} \cstd_k(R),
\eeq
showing that the gravitational enstrophies $C_k$ directly determine
the values of the Casimirs $\cstd(R)$ in the representation
associated with the quantization of the phase space.  

While a detailed determination of the representation from these
matching relations depends on the precise values of the enstrophies
$C_k$, we can use generic properties of the Casimirs at large $N$ 
to determine a scaling relation for the deformation parameter
$N$.  The relation is that \cite{Cordes:1994fc}
\beq \label{eqn:ckRscaling}
\cstd_k(R) \sim N^{k-1}n,
\eeq
where $n$ denotes the number of boxes in the Young diagram for the 
representation $R$, with the precise
coefficient and subleading corrections depending on the 
shape of the Young diagram. 

To arrive at the desired relation for $N$, we would like to 
determine how the gravitational enstrophies $C_k$ scale with the 
area of the surface $S$.  This requires relating the 
normalization conventions for the $J_\alpha$ generators 
given in section \ref{sec:modes} to the convention employed in
reference \cite{DonnellyFreidelMoosavianSperanza202012}.  
Consistently relating the normalization conventions
(see appendix \ref{app:casimirs}) leads to the 
relation 
\beq \label{eqn:CkA}
C_k = \left(\frac{A}{16\pi G}\right)^k
\int_S \nu_0\left(\frac{-A W}{4\pi}\right)^k,
\eeq
where $A$ is the area of the surface, and $W$ is the outer curvature
scalar associated with curvature of the normal bundle of $S$
\cite{DonnellyFreidelMoosavianSperanza202012, Carter1992}. 
Since the quantity $A W$ is a dimensionless, order $1$ function on the 
sphere, we see that the integral in (\ref{eqn:CkA}) only contributes
an order $1$ coefficient to each $C_k$. 
Hence,
the scaling relation for $C_k$ with area is
\beq
C_k\sim\left(\frac{A}{16\pi G}\right)^k.
\eeq
Together with the matching equation (\ref{eqn:Ckmatch}) and the large $N$
scaling of the Casimirs (\ref{eqn:ckRscaling}), this implies that 
\beq
\left(\frac{A}{16\pi G}\right)^k\sim\left(\frac{\hbar N^3}{2}\right)^k
\frac{n}{N^2}.
\eeq
In order to satisfy this scaling for all values of $k$, it must be
that the number of boxes $n$ in the Young diagram of the representation
scales like $N^2$, and further that $N$ scales as
\beq \label{eqn:NsimA}
N\sim \left(\frac{A}{8\pi G\hbar}\right)^{\frac13}.
\eeq
Since $\frac{A}{8\pi G\hbar}$ is associated with the entropy of the 
codimension-2 surface $S$, we find that this relation says that 
$S \sim N^3$.  This stands in contrast with standard holographic
examples, where typically the entropy of a black hole scales with 
$N^2$.\footnote{However, this scaling is far from universal. A counterexample is provided by ABJM theory, which is dual to quantum gravity in AdS${}_4$ \cite{Aharony:2008ug} and for which the entropy
scales as $S= N^{\frac32}$.  There are also examples 
of brane configurations in string theory with triple intersections
in which the number of states can scale as $N^3$
\cite{Berenstein:1998rr}, 
reminiscent of the scaling found here.  
\label{ftn:Nscale}}
However, we note that this computation should be taken with a 
grain of salt, since we are only analyzing the $\sdiff$ symmetry algebra,
which is a subalgebra of the full gravitational symmetry algebra.
In particular, the $\sdiff$ subalgebra does not include boosts,
whose Noether charge in gravity is typically associated with 
the entropy of black holes.  Hence, although the 
calculations of this section give a proof of principle for how
the Casimir matching should work, we should not immediately
draw any conclusions from these computations in relation to entropy
in gravitational applications.  Instead, we should look to complete 
the matching conditions in the extended algebras $\crr$ or $\cslr$, 
or even the full gravitational algebra $\gslr$, which may yield 
a more sensible relation between the entropy and deformation parameter
$N$.  

The results of this section has demonstrated how the Casimir matching 
can be done in principle to determine the representation; however,
it would be interesting to carry out this matching in more detail.
Doing so would yield a precise relation between the 
entropy and deformation parameter $N$.  Furthermore, we should 
expect to be able to relate the function $W$ on the sphere to the 
shape of the Young diagram of the representation in the 
large $N$ limit.  We leave this more detailed matching as an
interesting direction for future work.

\subsection{$\crr$ and $\mfk{sl}(N,\mathbb{C})\oplus \mathbb{R}$}\label{sec:cRlim}

We now turn to the first extended algebra appearing in the gravitational
phase space, $\crr$, which was shown in section \ref{sec:sdiffslnc}
to arise as a large $N$ limit of the finite-dimensional
algebra $\mfk{sl}(N,\mathbb{C})\oplus\mathbb{R}$.
Following the same procedure as in section \ref{sec:sdiffcas}, we 
begin by describing the coadjoint orbits of the classical
algebra $\crr$ and use them to determine the Casimir elements of the 
algebra.  We then identify the Casimir elements of the deformed algebra,
and determine the appropriate matching condition between these 
Casimirs and their classical analogs.  

As in section \ref{sec:sdiffcas}, it is convenient to work with
the algebra $\mfk{g} = \crr\oplus \mathbb{R}$, where the additional 
central generator corresponds to a constant function on the sphere.
The Lie algebra $\mfk{g}$ is then parameterized by a pair of 
functions $(\phi,\alpha)$ on the sphere, and  the 
dual of the Lie algebra is similarly parameterized by a pair
of functions $(f,a)$.  The pairing is given by the integral
over the sphere,
\beq \label{eqn:crrpairing}
\langle(f,a),(\phi,\alpha)\rangle = 
\int_S \nu_0 \left(f\phi + a \alpha\right).
\eeq
The adjoint action of the Lie algebra on itself is 
given by 
$\ad_{(\phi,\alpha)} (\psi,\beta) =
(\{\phi,\psi\},\{\phi,\beta\}-\{\psi,\alpha\})$, which,
along with the pairing (\ref{eqn:crrpairing}) determines the coadjoint
action to be (see appendix \ref{app:casimirs})
\beq \label{eqn:ad*pha}
\ad^*_{(\phi,\alpha)}(f,a) = (\{\phi,f\} + \{\alpha,a\},\{\phi,a\}).
\eeq
This equation indicates that both $f$ and $a$ transform as scalars 
under $\sdiff$ transformations, but $f$ has a nontrivial transformation
law under the $\mathbb{R}^S$ subalgebra, which leads to some subtleties
in obtaining a full set of Casimir invariants.  

To construct the Casimir elements of $\mfk{g}$, it is convenient
to introduce a pair of $\mfk{g}$-valued functions on the sphere
$(\jfxn, \nfxn)$ in analogy with the construction of section
\ref{sec:sdiffcas}, satisfying
\begin{align}
\langle (f,a), \jfxn\rangle &= f, \\
\langle (f,a), \nfxn\rangle &= a,
\end{align}
where the left-hand side evaluates the pairing between $(f,a)$ and 
the $\mfk{g}$ factors of $\jfxn$ and $\nfxn$, and the right hand 
side returns the functions on the sphere associated with $f$ and $a$.
These pairing relations can then be used to determine the Lie bracket
between an element $(\phi,\alpha)$ of $\mfk{g}$ and the $\mfk{g}$
factors of $\jfxn$ and $\nfxn$ (see appendix \ref{app:casimirs}).  
The result is 
\begin{align}
[(\phi,\alpha),\jfxn] &= -\{\phi,\jfxn\} - \{\alpha,\nfxn\} 
\label{eqn:phialphajfxn},\\
[(\phi,\alpha),\nfxn] &= -\{\phi,\nfxn\}.
\label{eqn:phialphanfxn}
\end{align}

The Casimir elements are now obtained by examining products of the form
$\jfxn^m \nfxn^n$, interpreted as a $\mfk{g}^{\otimes(m+n)}$-valued 
function on the sphere, again with the product taken within the 
$C^\infty(S)$ factor of each $\jfxn, \nfxn$.  Taking integrals
of these over the sphere gives a set of candidate Casimir elements,
\beq
c_{mn} = \int_S \nu_0 \jfxn^m \nfxn^n.
\eeq
From the relations (\ref{eqn:phialphajfxn}) and (\ref{eqn:phialphanfxn}),
one can verify that $c_{mn}$ commute with all $\sdiff$ generators 
in $\mfk{g}$:
\beq
[(\phi,0),c_{mn}] = -\int_S \nu_0 \{\phi, \jfxn^m \nfxn^n\} = 0.  
\eeq
However, there is an additional constraint coming from
demanding invariance with respect to the $\mathbb{R}^S$ 
generators $(0,\alpha)$:\footnote{These steps require that we 
employ the identities $\jfxn \nfxn = \nfxn \jfxn$, $\{\nfxn,\nfxn\} = 0$,
and $\{\nfxn,\jfxn\}\propto\nfxn$, none of which are 
immediately obvious due to $\jfxn$ and $\nfxn$ being Lie algebra valued.
These identities are derived in appendix \ref{app:casimirs}.}
\begin{equation}
    \begin{aligned}
[(0,\alpha),c_{mn}] &= -m \int_S \nu_0\,\jfxn^{m-1}\{\alpha,\nfxn\}
\nfxn^n
\\
&=m(m-1)\int_S \nu_0\,\jfxn^{m-2}\nfxn^n \{\jfxn,\nfxn\} \alpha.
\label{eqn:alphacmn}
\end{aligned}
\end{equation}
Since $\{\jfxn,\nfxn\}\neq 0$, this quantity will vanish for all 
choices of the function $\alpha$ only if $m=0$ or $m=1$.  Hence,
these define two sets of Casimir elements for $\mfk{g}$, 
\beq \label{eqn:c0nc1n}
c_{0n} = \int_S \nu_0\, \nfxn^n,\qquad c_{1n} = \int_S\nu_0\,\jfxn
\nfxn^n, \qquad n = 1,2,\ldots.
\eeq
The associated functions on $\mfk{g}^*$ that are invariant
under the coadjoint action are given by
\beq
c_{0n}[(f,a)] = \int_S\nu_0\, a^n,
\qquad c_{1n}[(f,a)] = \int_S\nu_0\, f a^n.
\eeq
These pull back to invariant functions on the gravitational phase space
is given by
\beq
C_{0n} = \int_S \nu_0\, N^n ,\qquad C_{1n}=\int_{S}\nu_0\,J N^{n}.
\eeq

We now would like to relate the classical Casimirs to the 
Casimir elements of the deformed algebra, which we take to
be $\wh{\mfk{g}} = \mfk{gl}(N,\mathbb{C})$, which is the 
appropriate algebra to limit to the classical algebra
$\crr\oplus\mathbb{R}$. To identify the Casimirs of $\wh{\mfk{g}}$,
we can proceed analogously to section \ref{sec:sdiffcas}
and define $\defj,\defn\in\Mat_{N\times N}\otimes \wh{\mfk{g}}$,
normalized such that
\begin{align}
\frac{1}{N}\tr_{\bf{N}}\big(\,\defj\cdot \wh{Y}_\alpha\big) &= \algX_\alpha,
\\
\frac{1}{N}\tr_{\bf{N}}\big(\,\defn\cdot\wh{Y}_\alpha \big) &= \algZ_\alpha,
\end{align}
where $\algX_\alpha, \algZ_\alpha$ are the generators of the 
$\mfk{gl}(N,\mathbb{C})$ algebra defined in section
\ref{sec:sdiffslnc}.  This implies the following mode
decomposition of the matrix elements of $\defj, \defn$,
\begin{align}
\defj\indices{_i^j}
&= \sum_\alpha \algX_\alpha (\wh{Y}^\alpha)\indices{_i^j}, 
\\
\defn\indices{_i^j}
&= \sum_\alpha \algZ_\alpha (\wh{Y}^\alpha)\indices{_i^j}.
\end{align}

The Casimir elements are most easily identified by forming complex
combinations of $\defj$ and $\defn$.  These arise naturally by
noting that the complexified generators 
$E_\alpha^\pm = \frac12\left(\algX_\alpha\pm i N\algZ_\alpha\right)$
satisfy
\begin{align}
[E_\alpha^\pm, E_\beta^\pm] 
&= \wh{C}\indices{_\alpha_\beta^\gamma}E_\gamma^\pm, 
\\
[E_\alpha^+, E_\beta^-]&=0.
\end{align}
The associated Lie-algebra-valued matrices $\defe^\pm$ defined by
the condition
\beq
\frac{1}{N}\tr_{\bf{N}}\big(\,\defe^\pm\cdot\wh{Y}_\alpha\big) = E_\alpha^\pm. 
\eeq
are then related to $\defj, \defn$ by
\beq
\defe^\pm = \frac12\left(\defj \pm iN\defn\right).
\eeq
The Lie brackets between $\defe^\pm$ and the Lie algebra elements 
$E_{\alpha}^\pm$ can be shown to satisfy 
\begin{align}
[E^\pm_\alpha, \defe^\pm]_{\wh{\mfk{g}}} &=
-\frac{N}{2i}[\wh{Y}_\alpha,\defe^\pm], \\
[E^\mp_\alpha, \defe^\pm]_{\wh{\mfk{g}}} &= 0.
\end{align}
From this, it follows that two sets of Casimir elements can be 
formed according to 
\begin{align}
\wh{c}_n^\pm = \frac{1}{N}\tr_{\bf{N}}\left[\left(\frac{2\defe^\pm}{N}\right)^n\right].
\end{align}
Demonstrating that $\wh{c}_n^\pm$ commute with $\wh{\mfk{g}}$
proceeds analogously to the computation leading to (\ref{eqn:Xhatc}).  

The scaling with $N$ chosen for the normalization of $\wh{c}_n^\pm$
is needed in order to obtain a good large $N$ limit.  
Expanding out the expression for $\wh{c}_n^\pm$ in terms of $\defj, \defn$,
we find that 
\beq
\wh{c}_n^\pm = \frac{(\pm i)^n}{N}\tr_{\bf{N}}(\defn^n) + 
\frac{(\pm i)^{n-1} n}{N}\tr_{\bf{N}}(\defj\defn^{n-1}) 
+\mathcal{O}(N^{-2}).
\eeq
The appropriate objects to match to the classical Casimir elements 
(\ref{eqn:c0nc1n}) are the linear combinations
\begin{align}
\wh{c}_{0n} &= \frac{1}{2i^n}\left(\wh{c}_n^+ + (-1)^n\wh{c}_n^-\right)
=\frac{1}{N}\tr_{\bf{N}}(\defn^n) + \mathcal{O}(N^{-2}),\\
\wh{c}_{1n} &= \frac{N}{2ni^{n-1}}\left(\wh{c}_n^+-(-1)^n\wh{c}_n^-\right)
= \frac{1}{N}\tr_{\bf{N}}(\defj\defn^{n-1}) + \mathcal{O}(N^{-2}).
\end{align}
Just as in section \ref{sec:sdiffcas}, one can show that the 
deformed Casimirs $\wh{c}_{0n}$, $\wh{c}_{1n}$ approach their classical
counterparts $c_{0n}, c_{1n}$, in the sense that their coefficients 
when expressed in the $(\algX_\alpha, \algZ_\alpha)$ basis approach
the classical coefficients.  Once again, this is a consequence
of the relation (\ref{eqn:dhatd}).  
Furthermore, the corresponding quantum operators 
$\wh{C}_{0n},\wh{C}_{1n}$ obtained in the representation
$\pi_\phsp$ corresponding to the quantization of the phase space
are given by
\beq
\wh{C}_{0n} = (i\hbar)^n\pi_\phsp(\wh{c}_{0n}),\qquad
\wh{C}_{1n} = (i\hbar)^n\pi_{\phsp}(\wh{c}_{1n}),\qquad
\eeq
and these should be matching to the classical invariants $C_{0n}$,
$C_{1n}$ defined on the gravitational phase space.  Since 
$\wh{C}_{0n, 1n}$ are proportional to the identity in an irreducible
representation of $\wh{\mfk{g}}$, they can be matched as c-numbers 
according to 
\beq
C_{0n, 1n} = \wh{C}_{0n,1n} +\mathcal{O}(\hbar^2) + \mathcal{O}(N^{-2}).
\eeq
Just as in section \ref{sec:casmatch}, this matching relation
should determine the representation of $\wh{\mfk{g}}$ as well as 
the value of the deformation parameter $N$.  Carrying out the
matching in detail would require an in-depth enumeration of 
the unitary irreducible representations of $\mfk{gl}(N,\mathbb{C})$,
which is beyond the scope of the present work, but would nevertheless
be a fruitful direction for future investigations.
In carrying out this matching, the results of 
\cite{Vogan1986} are likely relevant.

\subsection{$\cslr$ and $\su(N,N)$}\label{SuNN}

Finally, we consider the largest extended algebra $\cslr$,
which was shown in section \ref{sec:sdiffsunn} 
to appear in the large
$N$ limit of the semisimple, finite-dimensional algebra 
$\mfk{su}(N,N)$.  As in previous sections, we begin the 
analysis by describing the coadjoint orbits of the classical 
algebra $\cslr$, and use these to identify the Casimir elements.
We then show that these Casimirs naturally match onto 
corresponding Casimirs of the deformed algebra, and this
matching condition can once again be used to determine the 
representation appearing in the quantization of the 
classical phase space.  

As before, we work with the trivially extended algebra
$\mfk{g}= \cslr\oplus \mathbb{R}$ for convenience,
which is naturally associated with the large $N$ 
limit of $\mfk{u}(N,N) = \mfk{su}(N,N)\oplus\mathbb{R}$.
The Lie algebra is parameterized by a pair of 
functions on the sphere $(\phi, \alpha^a)$,
with $\phi$ scalar valued and $\alpha^a$ valued in 
$\mfk{sl}(2,\mathbb{R})$, with the index $a = 0,1,2$ 
denoting the components of the function in a basis.  We will
utilize the $\tau_a$ basis for 
$\mfk{sl}(2,\mathbb{R})$ given in equation (\ref{eqn:taua})
in which the structure constants are real.  
The dual lie algebra $\mfk{g}^*$ is similarly
parameterized by a pair of functions $(f,a_a)$, again with $f$
scalar-valued and $a_a$ $\mfk{sl}(2,\mathbb{R})$-valued, 
and the pairing between $\mfk{g}$ and $\mfk{g}^*$ is given
by 
\beq
\langle (f,a_b), (\phi,\alpha^a)\rangle = \int_S \nu_0\big(
f\phi + a_a \alpha^a\big).
\eeq
Given the expression for the adjoint action
of the Lie algebra on itself, $\ad_{(\phi,\alpha^a)}
(\psi,\beta^b) = \big(\{\phi,\psi\},
\{\phi,\beta^b\}-\{\psi,\alpha^b\} +[\alpha,\beta]^b_{\mfk{sl(2,\mathbb{R})}} \big)$,
where $[\alpha,\beta]^c_{\mfk{sl}(2,\mathbb{R})}
=\alpha^a\beta^b\varepsilon\indices{_a_b^c}$,
the coadjoint action is given by
\beq\label{eqn:cslrcoad}
\ad^*_{(\phi,\alpha^a)} (f,a_b)
= \left(\{\phi,f\} + \{\alpha^b, a_b\}, \{\phi, a_b\} 
+ [\alpha,a]^{\mfk{sl}(2,\mathbb{R})}_b \right).
\eeq
Note that this coadjoint action for $\mfk{g}$ 
is closely related to the action for the larger symmetry
group 
$\mfk{diff}(S)\oplus_{\mathcal{L}}\mfk{sl}(2,\mathbb{R})^S$
examined in \cite{DonnellyFreidelMoosavianSperanza202012},
upon replacing Lie derivatives with Poisson brackets.  
The action (\ref{eqn:cslrcoad})
indicates that $f$ and $a_b$ transform as scalars under 
diffeomorphisms of the sphere, and $a_b$ transforms
in the adjoint representation under $\mfk{sl}(2,\mathbb{R})$
transformations.  However, $f$ transforms inhomogeneously
under $\mfk{sl}(2,\mathbb{R})$ transformations, and this 
is the main challenge to deal with when looking for 
invariant functions under the coadjoint action.  

Rather than working with $\mfk{g}$-valued functions 
$(\jfxn, \nfxn_a)$ to construct Casimir elements as 
in previous sections, here it will be more convenient to 
look directly for invariant functions on the orbits,
after which expressions for the Casimir elements can 
be determined.  A first set of invariants is readily obtained
by noting that the $\mfk{sl}(2,\mathbb{R})$ quadratic 
Casimir $a^2 = a_b a^b$ transforms as a scalar function on
the sphere, and hence the moments of this function will be
fully invariant under $\sdiff$ and $\mfk{sl}(2,\mathbb{R})$ 
transformations.  This leads to the first set of 
Casimir functions
\beq \label{eqn:c2n}
c_{2n}[(f,a_b)] = \int_S \nu_0 \left(a^2\right)^n.
\eeq
These invariants are  the analogs of the $c_{0n}$
Casimirs of the algebra $\crr$ defined in (\ref{eqn:c0nc1n}),
since both are independent of $f$.  Note that these 
Casimirs  have no analog in the larger algebra 
$\diff\oplus_{\mathcal{L}}\mfk{sl}(2,\mathbb{R})^S$
examined in \cite{DonnellyFreidelMoosavianSperanza202012},
since in that case, there is no fixed volume form
$\nu_0$, and hence the only natural volume form on the 
sphere comes from the $\mfk{sl}(2,\mathbb{R})$ quadratic
Casimir itself.  Because of this, there is no meaningful
way to construct moments of the quadratic Casimir 
when working with the larger algebra, since in that 
case it transforms as a density as opposed to a scalar.  

On the other hand, we should expect a 
second set of Casimirs that are the analogs 
of the $c_{1n}$ Casimirs of $\crr$ in equation
(\ref{eqn:c0nc1n}).  Additionally, the 
construction of Casimir functions for the larger
symmetry algebra $\diff\oplus_{\mathcal{L}}\mfk{sl}(2,\mathbb{R})^S$ 
in \cite{DonnellyFreidelMoosavianSperanza202012} 
lead to 
a set of generalized enstrophies constructed from moments 
of a scalar vorticity $w$ which contains a cubic term in the 
$\mfk{sl}(2,\mathbb{R})$ generators.  Since the vorticity
arose naturally as an object constructed from $f$ 
that is invariant under $\mfk{sl}(2,\mathbb{R})$
transformations, the expectation is that 
a similar object should arise in the classification
of invariants of $\cslr$.  By examining to what
extent such a vorticity can be defined from the 
$\cslr$ orbit data, we will obtain a prescription for 
constructing the second set of Casimirs for this algebra.

In the larger algebra, the vorticity is constructed from
the orbit data, which consists of a densitized 
1-form $\wt p_A$ and an $\mfk{sl}(2,\mathbb{R})$-valued
density $\wt a_b$.  The $\mfk{sl}(2,\mathbb{R})$ quadratic 
Casimir constructed from $\wt{a}_a$ determines a 
dynamical volume form $\nu$, which is related to the fixed 
volume form $\nu_0$ by the relation
\beq \label{eqn:nudyn}
\nu = \nu_0 \sqrt{a^2},
\eeq
where $a_b$ is the $\mfk{sl}(2,\mathbb{R})$-valued scalar
appearing in the $\cslr$ orbit data.  We assume
throughout this section that $a^2 >0$, which defines 
the positive area orbits, as are relevant for 
gravitational applications.  This volume form then 
allows us to construct a de-densitized one-form $p_A$
satisfying $|\nu| p_A = \wt p_A $.  Note that 
when specializing to $\cslr$ orbits, $\wt p_A$ is related 
to the associated scalar stream function $f$ by a similar
relation as in equation (\ref{eqn:Jdef}), which,
taking into account the relation (\ref{eqn:nudyn})
between the fixed and dynamical volume forms, is given by
\beq
f = -\upb^{BA}\partial_B(|a|p_A),
\eeq
where $|a|\equiv \sqrt{a^2}$.  Defining $p_A^0 = |a|p_A$,
this relation equivalently can be expressed as 
$*_{\upb} f = -dp^0$.  Similarly, we can define 
a de-densitized $\mfk{sl}(2,\mathbb{R})$ function
by the equation $\wt a_b = \wh a_b |\nu|$, which 
is related to the $\cslr$ 
orbit data by $a_b = |a|\wh a_b$.

With all this in hand, we can examine the expression
for the vorticity in terms of $\cslr$ orbit data.  
Using the results of section 4.2 of \cite{DonnellyFreidelMoosavianSperanza202012},
the vorticity 2-form $\bar{w}$ is given by
\begin{equation}
    \begin{aligned}
        \bar w &= dp -\frac12 \varepsilon_{abc}\wh a^a d\wh a^b\wedge d\wh a^c
        \\
        &=
    -\frac{*_{\upb}f}{|a|} -\frac12\varepsilon_{abc}\frac{a^a}{|a|}
    d\left(\frac{a^b}{|a|}\right)\wedge d\left(\frac{a^c}{|a|}\right)
    -\frac{d|a|}{|a|^2}\wedge p^0.\label{eqn:barworbit}
    \end{aligned}
\end{equation}

While the first two terms in this expression are well-defined 
functions of the $\cslr$ orbit data $(f,a_b)$, the third term is 
not, since it depends explicitly on the one-form $p^0$.  Although
$p^0$ is related to $f$ by the equation $*_{\upb}f = -dp^0$, this 
expression only determines $p^0$ up to shifts by exact forms,
$p^0\rightarrow p^0 + dA$.  This means that 
under an $\sdiff$ transformation
generated by $\xi^A = \upb^{BA} \nabla_B \phi$, 
$p^0$ will transform anomalously 
as $\delta_\xi p^0 = \lie_\xi p^0 +d A_\xi,$
where $A_\xi$ is a scalar function depending on the precise procedure
employed to construct a unique $p^0$ from a given $f$.\footnote{An
example of such a procedure is to select a fixed metric 
on the sphere, and impose that $d\star p^0=0$, where $\star$ is the natural
dualization associated with this metric.  Such a condition fixes
the shift ambiguity in $p^0$, but introduces dependence on the 
fixed background metric.}  This similarly implies an anomalous 
transformation of $\bar w$ under $\sdiff$ transformations:
\beq\label{eqn:barwanom}
\delta_\xi \bar w = \lie_\xi \bar w -\frac{d|a|}{|a|^2}
\wedge d A_\xi.
\eeq
Nevertheless, $\bar{w}$ retains the important property of being 
invariant under $\mfk{sl}(2,\mathbb{R})^S$ transformations

Because of the anomalous transformation property (\ref{eqn:barwanom}),
arbitrary moments of the vorticity scalar $*_{\upb} \bar{w}$ will
not yield invariant functions on the $\cslr$ coadjoint orbits.  However,
a set of invariants analogous to the $c_{1n}$ Casimirs for 
$\crr$ described in (\ref{eqn:c0nc1n}) 
can be obtained when integrating a single factor 
of $\bar{w}$ against a function of $|a|$.  Under $\sdiff$ transformations,
we have that 
\begin{equation}
    \begin{aligned}
\delta_\xi\Big(\bar w |a|^2 F'(|a|)\Big) &= \lie_\xi\Big(\bar w |a|^2 F'(|a|)\Big)
+ F'(|a|) d|a| \wedge dA_\xi \\
&= 
\lie_\xi\Big(\bar w |a|^2 F'(|a|)\Big) + d\Big(F(|a|) dA_\xi\Big).
\end{aligned}
\end{equation}
Since this is an exact form, integrating $\bar{w} |a|^2 F'(|a|)$
over the sphere will yield an invariant for the orbit:
\beq
c_F[(f,a)] = \int_S\bar w |a|^2 F'(|a|).
\eeq
Applying the definition (\ref{eqn:barworbit}) and using that 
$*_{\upb} f = 4\pi\nu_0 f$, $da^b\wedge da^c 
= 4\pi\nu_0\{a^b, a^c\}_{\upb}$, and $dp^0 = -*_{\upb} f$, 
this can be reexpressed as 
\beq
c_F[(f,a_b)] = -4\pi \int_S \nu_0\Big(
f\big[|a|F'(|a|)+F(|a|)\big] 
+\frac{F'(|a|)}{|a|} \varepsilon_{abc}a^a \{a^b,a^c\}\Big),
\eeq
which is now manifestly a function of the orbit data $(f,a_b)$. 
These can be expressed as a set of polynomial invariants by 
choosing $F(|a|) =\frac{-1}{4\pi} |a|^{2n}$, producing
\beq\label{eqn:c2n+1}
c_{2n+1}[(f,a_b)] = \int_S\nu_0\Big((2n+1)f a^2
+n\varepsilon_{abc}a^a\{a^b,a^c\}\Big)(a^2)^{n-1},
\eeq
which are the desired analogs of the $c_{1n}$ invariants 
from (\ref{eqn:c0nc1n}) for the $\crr$ algebra. 
It is possible to check directly that this expression is invariant under the coadjoint action. In the appendix \ref{app:casimirs} it is shown that, if we call the integrand
of (\ref{eqn:c2n+1}) $w_{2n+1}$, we obtained that the coadjoint action gives 
\be
ad^*_{(\phi,\alpha^a)} w_{2n+1} = \{\phi, x_{2n+1}\} + \{\alpha^a,
a_a a^{2n}\}.  \label{casint}
\ee 
which implies the invariance of its sphere integral \eqref{eqn:c2n+1}.

A somewhat strange feature 
is that only the even $n$ values of $c_{0,n}$ from the $\crr$
algebra match onto the Casimirs $c_{2n}$ for $\cslr$, and similarly only the 
odd $n$ values of the $c_{1,n}$ Casimirs match onto the $c_{2n+1}$ Casimirs
of $\cslr$.  
This discrepancy occurs due to the requirement that only integer powers of $a^2$
appear in (\ref{eqn:c2n}) and (\ref{eqn:c2n+1}), as is necessary to obtain
Casimirs that are polynomial in the generators.  An additional set of 
non-polynomial Casimirs for $\cslr$ can be obtained by allowing odd powers of
$|a| = \sqrt{a_a a^a}$ to appear in these expressions, and they 
would give analogs of the remaining $\crr$ Casimirs $c_{0,n}$ with $n$ odd
and $c_{1,n}$ with $n$ even.
Such square roots are relevant in the discussion of the area operator for 
the deformed algebra at the end of this section.

The Casimir functions $c_{2n}[(f,a_b)], c_{2n+1}[(f,a_b)]$ on the 
coadjoint orbits arise from Casimir elements of the algebra $\mfk{g}$.
These elements can be obtained from the functional expressions
by constructing the $\mfk{g}$-valued functions on $S$ 
$(\jfxn, \nfxn_b)$, normalized such that 
\begin{align}
\langle (f,a_b), \jfxn\rangle &= f, \\
\langle (f,a_b), \nfxn_a\rangle &= a_a,
\end{align}
where, as before, the left-hand side evaluates the pairing
between $(f,a_b)$ and the $\mfk{g}$ factors of $\jfxn$ and $\nfxn_a$,
and the right-hand side returns the functions on the sphere associated
with $f$ and $a_a$.  The Casimir elements of $\mfk{g}$ are then
obtained by replacing $f$ with $\jfxn$ and $a_b$ with $\nfxn_b$
in the expressions (\ref{eqn:c2n}) and (\ref{eqn:c2n+1}), giving\footnote{There are subtleties 
related to the ordering of the Lie algebra elements in these 
expressions, due to the fact, derived in appendix \ref{app:casimirs},
that $\nfxn_a(\sigma) 
\nfxn_b(\sigma') = \nfxn_b(\sigma')\nfxn_a(\sigma)
+\delta(\sigma-\sigma')\varepsilon\indices{_a_b^c}
\nfxn_c(\sigma')$, and hence, for example, the quantities
$\nfxn_a \nfxn^a \nfxn_b\nfxn^b$ and $\nfxn_a \nfxn_b
\nfxn^a \nfxn^b$ differ by divergent coefficients.
However, any choice of ordering for the Lie algebra 
elements define the same Casimir function on the coadjoint
orbits, and furthermore any choice of ordering for $c_{2n}$ and
$c_{2n+1}$ yields objects in the center of the universal
enveloping algebra.  We will not worry too much about this 
ordering for the remainder of this section since
the Casimir functions on the orbits are the important
quantities to work with to determine the representation
for the quantization of the phase space.  However, 
as we will see, it is interesting that the large $N$ 
limit of the $\mfk{u}(N,N)$ Casimirs picks out a preferred
ordering, and it would be interesting to understand how this 
preferred ordering could be obtained directly from the 
classical algebra.}
\begin{align}
c_{2n} &= \int_S \nu_0\, (\nfxn_a \nfxn^a)^n,  \label{eqn:c2nint}
\\
c_{2n+1} &= \int_S \nu_0\left((2n+1)\jfxn \nfxn_a \nfxn^a
+ n \varepsilon_{abc}\nfxn^a\{\nfxn^b,\nfxn^c\}\right)\left(\nfxn_d
\nfxn^d\right)^{n-1}.
\end{align}

With the expressions for the classical Casimirs in hand, we can 
now turn to matching these to the large $N$ limit of the Casimirs 
of the deformed algebra $\wh{\mfk{g}} = \mfk{u}(N,N)$.
To obtain convenient expressions for the deformed Casimirs, 
we begin by constructing the objects $(\defj, \defn_a)$ which are 
elements of $\Mat_{N\times N}\otimes\, \wh{\mfk{g}}$, normalized
according to 
\begin{align}
\frac{1}{N}\tr_{\bf{N}}\left(\,\defj\cdot\wh{Y}_\alpha\right) &= \algX_\alpha, 
\\
\frac{1}{N}\tr_{\bf{N}}\left(\,\defn_a\cdot\wh{Y}_\alpha\right) &=
\algZ_{a\alpha}.
\end{align}
where $(\algX_\alpha, \algZ_{a\alpha})$ are the basis of 
$\wh{\mfk{g}}$ introduced in section \ref{sec:sdiffsunn}.
These objects can be assembled into a single $2N\times2N$ 
matrix valued in $\wh{\mfk{g}}$ by tensoring with 
the $2\times 2$ matrices $(\mathbbm{1}_{2}, \rho_a)$.
The resulting object $\defph$ given by
\beq \label{eqn:hhat}
\defph = \left(\frac{1}{N} \mathbbm{1}_{2}\otimes \defj
- 2\rho^a\otimes\defn_a\right)
=\begin{pmatrix} \frac1N \defj +\defn_0 &-\defn_1+i\defn_2\\
\defn_1+ i\defn_2 & \frac1N\defj-\defn_0
\end{pmatrix},
\eeq
can then be shown to satisfy the key relations
\begin{align}
\left[\algX_\alpha, \defph \right]_{\wh{\mfk{g}}} 
&= -\frac{N}{2i}\left[\mathbbm{1}_{2}\otimes
\wh{Y}_\alpha, \defph \right], \\
\left[\algZ_{a\alpha},\defph\right]_{\wh{\mfk{g}}}
&= i\left[\rho_a\otimes \wh{Y}_\alpha, \defph\right],
\label{eqn:Zaahhat}
\end{align}
where the brackets on the right-hand side denote a matrix commutator
with the $\Mat_{2N\times 2N}$ factor of $\defph$.
Note that $\defph$ is closely related to the quantity 
$\opH$ defined in equation (\ref{eqn:hHmn}), 
since the latter is an operator-valued matrix
acting in a representation $\pi_\phsp$ of 
$\mfk{u}(N,N)$ coinciding with a quantization
of the phase space.  The exact relation between the two 
quantities is $\opH = i\hbar \pi_\phsp(\defph)$.
The relative 
coefficients of the $\defj$ and $\defn$ terms in (\ref{eqn:hhat})
are chosen to ensure the relation (\ref{eqn:Zaahhat}) holds.

Since the action of any Lie algebra element on $\defph$
can be expressed as a matrix commutator acting on $\defph$, 
the Casimir elements can be formed as in previous sections by 
taking powers $\defph^n$, where the product refers to the matrix
product on the $\Mat_{2N\times 2N}$ factor, thus producing 
an element of $\Mat_{2N\times 2N}\otimes \,\wh{\mfk{g}}^{\,\otimes n}$.
It is then straightforward to verify that 
\beq \label{eqn:hatcnunn}
\wh{c}_n = \frac{1}{2N}\tr_{\bf{2N}}\defph^n.
\eeq
commute with the action of $\wh{\mfk{g}}$, and hence define the Casimir
elements of the deformed algebra.

In order to match these Casimirs to those of the classical algebra,
we make use of the following relations at large $N$
(see appendix \ref{app:casimirs}):
\begin{align}
[\defj, \defn_a] &\equiv \defj\defn_a - \defn_a \defj = \op(N^{-1}),
\label{eqn:jna}\\
[\defn_a, \defn_b]&\equiv \defn_a \defn_b - \defn_b \defn_a
=\frac{2i}{N}\wh{\{\nfxn_a, \nfxn_b\} }+ [\defn_a,\defn_b]_{\wh{\mfk{g}}}
+ \op(N^{-3}). \label{eqn:nanb}
\end{align}
We can then expand (\ref{eqn:hatcnunn}) in terms of $\defj$ and $\defn_a$
using (\ref{eqn:hhat}).  Beginning with the even case, all terms
involving $\defj$ are suppressed by $\frac{1}{N}$, hence to leading order we have that
\begin{equation}\label{eqn:c2ntrace}
\begin{aligned}
\wh{c}_{2n} &=
\frac1{2N}\tr_{\bf{2N}}\left[(-2\rho^{a_1}\otimes \defn_{a_1})
(-2\rho^{b_1}\otimes \defn_{b_1})
(-2\rho^{a_2}\otimes \defn_{a_2}) \ldots
(-2\rho^{b_n}\otimes \defn_{b_n})\right]
\\
&=
\frac{1}{2N}(-2)^{2n}
\tr_{\bf{2}}\left(\rho^{a_1}\rho^{b_1}\ldots\rho^{a_n}\rho^{b_n} \right) 
\tr_{\bf{N}}\Big(\defn_{a_1}\defn_{b_1}\ldots \defn_{a_n}\defn_{b_n}\Big).
\end{aligned}
\end{equation}
In the trace over the $\rho^a$ matrices, we can expand
the products pairwise
using (\ref{eqn:rhoarhob}) to get $\rho^{a_i}\rho^{b_i}
 = -\frac14 \eta^{a_i b_i}\mathbbm{1}_{2} +\frac{i}{2}
\varepsilon\indices{^{a_i}^{b_i}^{c_i}}\rho_{c_i}$.  
Each term involving $\varepsilon\indices{^{a_i}^{b_i}^{c_i}}$
produces a commutator in $\defn_{a_i} \defn_{b_i}$ appearing in the 
second trace in (\ref{eqn:c2ntrace}).  We can then apply 
(\ref{eqn:nanb}) to find that this commutator can be replaced 
with the $\wh{\mfk{g}}$ Lie bracket, up to subleading terms in
$\frac{1}{N}$.  While these Lie brackets survive in the 
large $N$ limit, they simply produce elements of the universal 
enveloping algebra with different orderings of
the Lie algebra elements, all of which map to the same classical 
function on the coadjoint orbits.  Hence, we can drop these 
terms when matching to the classical Casimir (\ref{eqn:c2nint}).
Thus, keeping only the term proportional to the identity in each 
pairwise product of the $\rho^a$ matrices, (\ref{eqn:c2ntrace})
evaluates to
\beq
\wh{c}_{2n} =
(-1)^n \eta^{a_1b_1}\ldots\eta^{a_n b_n}\frac1N\tr_{\bf{N}}\Big(\defn_{a_1}
\ldots \defn_{a_n}\Big) +\mathcal{O}(N^{-1}).
\eeq
Demonstrating that this Casimir matches the classical expression
(\ref{eqn:c2nint}) follows immediately from the same argument
as in section \ref{sec:sdiffcas}, by expressing each Casimir in the 
Lie algebra basis, and applying the large $N$ relation (\ref{eqn:dhatd}).
This results in 
\beq
\wh{c}_{2n}\rightarrow (-1)^n c_{2n}+\mathcal{O}(N^{-1}).
\eeq

For the odd Casimirs, the leading piece in the contribution
involving only $\defn_a$ terms will be suppressed in the large $N$
limit, and hence we need to keep the first-order terms in the 
$\frac{1}{N}$ expansion.  The leading behavior at large $N$
is then given by
\begin{equation}
    \begin{aligned}
\wh{c}_{2n+1} &=
\frac{1}{2N}\left[ \vphantom{\frac{2n+1}{N}}
(-2)^{2n+1}\tr_{\bf{2}}\left(\rho^c\rho^{a_1}\rho^{b_1}
\ldots \rho^{a_n}\rho^{b_n}\right) \tr_{\bf{N}}\Big(\defn_c\defn_{a_1}
\defn_{b_1} \ldots \defn_{a_n}\defn_{b_n}\Big) \right. 
\\
& \hphantom{=\frac{1}{2N}} 
\left. (-2)^{2n}\tr_{\bf{2}}\left(\rho^{a_1}\rho^{b_1}
\ldots \rho^{a_n}\rho^{b_n}\right)
\tr_{\bf{N}}\left(\frac{(2n+1)}{N} \,\defj\,\defn_{a_1}\defn_{b_1}
\ldots \defn_{a_n}\defn_{b_n}\right)\right], 
\label{eqn:c2n+1trace}
\end{aligned}
\end{equation}
where we have applied (\ref{eqn:jna}) in the second term to move
$\defj$ to the left in each term in which it appears.
In the first line of (\ref{eqn:c2n+1trace}), we again expand out 
the products of $\rho^a$ matrices in pairs.  The terms involving only 
identity matrices in this product multiply with a term proportional
to $\tr_{\bf{2}} \rho^c$, which vanishes.  Hence we need to keep all terms with
one factor of $\varepsilon\indices{^{a_i}^{b_i}^{c_i}}$, since, as before,
each such term will produce a factor of $\frac{1}{N}$ due to commutations
of the $\defn_{a}$ in the second trace.  Applying 
equation (\ref{eqn:nanb}) and dropping terms involving 
$\wh{\mfk{g}}$ Lie brackets, we find for the first line
\beq
(-1)^n
\frac{n}{N} \frac1N\tr_{\bf{N}}\Big(\varepsilon^{cab} \,
\defn_c \wh{\{\nfxn_a,\nfxn_b\}} \defn_{a_2}\defn^{a_2}\ldots
\defn_{a_n}\defn^{a_n}\Big).
\eeq
For the second line of (\ref{eqn:c2n+1trace}), we can simply
keep all terms proportional to the identity in each pairwise 
$\rho^a$ product.  This term then evaluates to 
\beq
(-1)^n
\frac{(2n+1)}{N} \frac{1}{N}\tr_{\bf{N}}\Big(\defj\,\defn_{a_1}\defn^{a_1}
\ldots \defn_{a_n}\defn^{a_n}\Big).
\eeq
Combining these terms and again applying the large $N$
relationship (\ref{eqn:dhatd}), 
we find that $\wh{c}_{2n+1}$ approaches the classical Casimir
after rescaling by $N$, 
\beq
\wh{c}_{2n+1}\rightarrow \frac{(-1)^n}{N} c_{2n+1} + \mathcal{O}(N^{-2}).
\eeq

We can translate these correspondences to matching conditions for
the Casimir operators in the representation $\pi_\phsp$ 
by recalling that each Lie algebra element is rescaled 
by $i\hbar$ in the representation.  Hence we can define the operators
\begin{equation}
    \begin{aligned}
        \wh{C}_{2n}&=(-1)^n (i\hbar)^{2n} \pi_\phsp(\wh{c}_{2n}),
        \\
\wh{C}_{2n+1}&= N(-1)^n(i\hbar)^{2n+1}\pi_{\phsp}(\wh{c}_{2n+1}),
    \end{aligned}
\end{equation}
which can be matched to the classical invariant functions on the 
gravitational phase space, obtained by pulling back
the orbit invariants (\ref{eqn:c2n}) and (\ref{eqn:c2n+1}) via the moment
map.  This matching should again determine the deformation
parameter $N$ as well as the representation of $\mfk{su}(N,N)$
corresponding to a quantization of the phase space.  Carrying out
this matching in detail requires a thorough investigation
into the unitary representations of $\mfk{su}(N,N)$,
which we leave for future work.

It is interesting to examine in more detail the quadratic 
Casimir, whose full expression is 
\beq
\wh{c}_2= -\frac{1}{N}\tr_{\bf{N}}(\defn_a\defn^a) +\frac{1}{N^2}\frac{1}{N}
\tr_{\bf{N}}(\,\defj^2).
\eeq
The first term is the piece that survives in the large $N$ limit,
and is related to the area of the surface embedded in
spacetime.  The area operator can be defined by 
\beq \label{eqn:Ahat}
\wh{A} = \frac{1}{N}\tr_{\bf{N}}\sqrt{\defn_a\defn^a},
\eeq
where the square root should be interpreted an object that 
yields the square root of the  operator $\pi_{\phsp}(\defn_a\defn^a)$ 
in a representation of the algebra.   
Here we find that at infinite $N$,
the area operator is a Casimir of the continuum group,
as first demonstrated in \cite{Donnelly:2016auv}.  However, at finite
$N$, this operator is not a Casimir, and instead is a hyperbolic 
element of $\mfk{su}(N,N)$, up to higher order corrections in the 
universal enveloping algebra.  This suggests that the area operator
becomes noncommutative after including finite $N$ corrections to the 
algebra.  This result is reminiscent of the recent investigations
into large $N$ algebras in holography
\cite{Leutheusser:2021qhd, Leutheusser:2021frk,
Witten:2021unn, Chandrasekaran:2022eqq}, where, in particular,
the failure of the area operator to be central upon including 
$\frac{1}{N}$ corrections leads to a deformation of the 
associated von Neumann algebras from type III to type II.  
It would be interesting to further explore the connection between the
noncommutativity of the area operator in the present context and 
the appearance of deformed von Neumann algebras in holography.

\section{Conclusion and future work}

In this work, we have undertaken the first steps of studying, at a quantum level, the symmetries of a finite region of space identified in Ref.~\cite{Donnelly:2016auv}.
Inspired by the fact, shown in Ref.~\cite{DonnellyFreidelMoosavianSperanza202012} that the Lie algebra of the Wigner little group is
\begin{equation} \label{sdiffsun correspondence}
\sdiff = \lim_{N \to \infty} \su(N),
\end{equation}
we have looked for a deformation of the corner symmetry algebra which would generalize \eqref{sdiffsun correspondence}.
In extending the symmetry to include boost transformations of the normal plane, we arrived at two generalizations of the matrix regularization \eqref{sdiffsun correspondence} to noncompact groups:
\begin{align} \label{sdiffr}
\mfk{sdiff}(S)\oplus_{\mcal{L}}\mbb{R}^S &= \lim_{N \to \infty} \Sl(N,\C) \oplus \R, \\
\mfk{sdiff}(S)\oplus_{\mcal L}\mfk{sl}(2,\mbb{R})^S &= \lim_{N \to \infty} \su(N,N). \label{sdiffslr}
\end{align}
These deformations nontrivially combine the diffeomorphisms of the sphere with normal boosts such that in the large-$N$ limit the semidirect sum structure is recovered.
While we have established identities \eqref{sdiffr}, \eqref{sdiffslr} at the level of the structure constants, in section \ref{sec:casimirs} we extended this analysis to Casimir invariants of the groups, showing that the large-$N$ limits of the well-known $\Sl(N,\C)$ and $\su(N,N)$ Casimirs yield the complete set of invariants of identified for the corresponding infinite-dimensional Lie algebras.
The Casimirs allow us to determine the representation of the symmetry group in terms of physical properties of the surface $S$, and in particular, allow us to argue for a particular scaling of the deformation parameter $N$.
The Casimirs also give a set of commuting operators at the quantum level.
Interestingly, the area operator, which was shown in Ref.~\cite{DonnellyFreidelMoosavianSperanza202012} to play a special role in the classification of orbits, is not among the Casimirs but becomes noncentral at finite $N$.
This fact remains puzzling but may have implications for black hole entropy for which the area plays a crucial role.

Our work opens up many potential avenues for future works, so we spend the majority of this section identifying the most interesting future directions.

\subsection{Detailed Casimir matching}
\label{sec:detcasmatch}

In section \ref{sec:casmatch}, we outlined the Casimir matching procedure
for the case of $\mfk{su}(N)$.  There we found that the matching conditions in an irreducible representation
determine how the deformation parameter $N$ scales with $\frac{A}{4G\hbar}$, 
as displayed in equation (\ref{eqn:NsimA}).  It would be quite interesting
to carry out this matching in more detail to not only determine the value 
of $N$ but to also identify the representation that should be 
employed in the quantization of the phase space.  In the large $N$ limit, we should
expect to find a relation between the shape of the Young diagram for the 
representation and the function $W$ on the sphere, or the associated 
measured Reeb graph derived from $W$, which, as explained in 
\cite{DonnellyFreidelMoosavianSperanza202012,izosimov2016coadjoint}, is an additional invariant of
the continuum algebra $\sdiff$.  Given the large amount
of literature related to the large $N$ limits of representations 
of $\mfk{su}(N)$ (see, e.g.\ \cite{Cordes:1994fc}), 
it seems likely that this more detailed matching would be 
achievable.  

The scaling $\frac{A}{4G\hbar}\sim N^3$ identified in equation (\ref{eqn:NsimA})
is somewhat odd from the perspective of AdS holography or matrix models, in which
it is more common for the entropy to scale like $N_{AdS}^2$. 
This suggests 
that there might be an issue with trying to only quantize the $\sdiff$ subalgebra
in the process of attempting to obtain an understanding of the entropy
of the surface $S$.  Instead, it seems likely that one would need to 
work with one of the enlarged algebras  $\crr$ or $\cslr$ to obtain a sensible 
relation for the entropy from the Casimir matching procedure.  

This motivates further investigating the large $N$ representation
theory of the deformed algebras $\mfk{sl}(N,\mathbb{C})\oplus \mathbb{R}$
and $\mfk{su}(N,N)$.  Unfortunately, the literature on the unitary representations
of these groups is somewhat sparse.  The unitary representations of 
$\mfk{gl}(N,\mathbb{C})$ were classified in \cite{Vogan1986}, and some 
results on $\mfk{su}(N,N)$ are given in \cite{Molchanov1998}.  A standard reference
on the general theory of representations of semisimple groups 
is \cite{Knapp2001}.  It would be interesting to investigate this 
representation theory in more detail, and to identify which 
representations occur in the large $N$ limit when matching to 
the continuum algebras $\crr$ and $\cslr$.  Ultimately, one would hope to 
be able to identify the analog of equation (\ref{eqn:NsimA}) for these 
groups, which may yield the expected relation between $N^2$ and the entropy
$\frac{A}{4G \hbar}$.

Alternatively, it may be that the  $N^2$ scaling is not
appropriate for the localized gravitational subregions
considered here, and the $N$ appearing in our algebra deformation is a priori a different entity that the $N_{\text{AdS}}$ appearing in holography.
In our context,  $N$ appears as a deformation parameter for the corner symmetry algebra and corresponds to a measure of the 
corner surface area in Planck units.
On the other hand $N_{\text{AdS}}$ appearing in holography as a label for the boundary gauge group is related to the ratio of the cosmological scale and the Planck scale through the relation $N_\text{AdS}^2\sim (\hbar G\Lambda)^{-1}$
\cite{Aharony:1999ti, Maldacena:1997re} in four spacetime dimensions (although, see footnote \ref{ftn:Nscale} for 
situations with different parametric dependence of the 
entropy on $N$). It is natural to expect some functional relation between the corner $N$ and the holographic  $N_{\text{AdS}}$. The exact nature of this relationship is not established at this stage.

\subsection{Computation of characters}

We have introduced large-$N$ limits of the groups $\mfk{sl}(N,\C) \oplus \R$ 
and $\su(N,N)$ and shown the continuum limit of the structure constants as well as the Casimir invariants.
These invariants allow us to establish a correspondence not only between the finite and infinite-dimensional Lie groups, but also their representations.
It would be interesting to see how much of the finite-dimensional representation theory can be carried over to the large-$N$ limit.
In particular, it would be interesting to compare characters of the finite-dimensional Lie group representations to those of their continuum counterparts.

Group characters are especially important in physics because they are essentially quantum-mechanical partition functions, encoding the number of states in each irreducible representation as a function of the physical values of the generators.
In the gravitational context, the most important character is that of the global boost, which, in the deformed algebras,
coincides with the $\mathbb{R}$ factor of
$\mfk{sl}(N,\mathbb{C})\oplus\mathbb{R}$, or a generic hyperbolic generator
in $\su(N,N)$.
This operator plays an important role in both black hole thermodynamics, where it defines the time-translation symmetry associated with Killing horizons, and in quantum field theory where it defines the modular Hamiltonian of a quantum field theory restricted to a half-space or conformal field theory restricted to a sphere.
The boost character is therefore essential in relating the entropy of horizons -- which is controlled by the density of states -- to the value of the charges, which are determined by the horizon geometry.

An important first step would be to understand the relation between character formulas for $\su(N)$ and $\sdiff$ in the limit of large $N$.
On the $\su(N)$ side, the large-$N$ limit of the characters can be obtained from the Itzykson-Zuber integral formula \cite{ItzyksonZuber1980}.
This large-$N$ limit was studied in \cite{Matytsin199306} which expressed the leading asymptotics of the character in terms of the complex inviscid Burgers equation (or the Hopf equation) \cite{Bateman191501,Burgers1948} whose solutions have been studied in Ref.~\cite{KenyonOkounkov200507}.
The characters have the leading-order behavior $\exp(N^2 F_0 + F_1 + \ldots)$ where $F_0$ is an on-shell action and $F_1$ the first subleading correction in the $1/N^2$ expansion.
Independently, certain characters of $\sdiff$ have been calculated using the Atiyah-Bott localization formula, and take the form of divergent sums \cite[Equation (3.19)]{Penna201806}. It would therefore be interesting to understand whether these $\sdiff$ characters can be obtained as appropriate limits of the $\su(N)$ characters.
Since the leading term of the $\su(N)$ character diverges at large $N$, it cannot be calculated within $\sdiff$ --- rather, we expect it to appear as a divergence that must be renormalized away.
Having subtracted this leading divergence one expects to find agreement between the renormalized $\sdiff$ characters and $1/N$ corrections to the $\su(N)$ characters: the latter would appear as corrections to the leading-order result of Ref.~\cite{Matytsin199306}.

An important next question is whether the large-$N$ calculation of characters can be extended to large-$N$ limits of $\Sl(N,\C)\oplus \mathbb{R}$ and $\su(N,N)$ and related to character formulas for $\crr$ and $\cslr$ respectively.
Such characters can in principle be computed from the analog of Kirillov's character formula for reductive groups \cite{Rossman1978}, and we expect similar divergent behavior of characters seen for $\su(N)$ to hold for $\Sl(N,\C)\oplus \mathbb{R}$ and $\su(N,N)$. 

In the case of the noncompact groups $\Sl(N,\C)\oplus \mathbb{R}$ and $\su(N,N)$ the calculation of characters plays a further important role.
Since unitary representations of noncompact groups are infinite-dimensional, the direct analog of the formula \eqref{SYM} for Yang-Mills theory cannot apply.
Instead, we expect the global boost $K$ to have a nonzero expectation value which leads to an insertion of $\exp(-2 \pi K)$ in the partition function.
This suggests it is characters of the global boost $K$ (or suitable analytic continuations thereof), and not dimensions, which are the relevant quantities for counting states in representations of $\Sl(N,\C)\oplus\mathbb{R}$ and $\su(N,N)$.

The chief physical application of such characters is in understanding the entropy of a region of space bounded by the corner $S$.
The characters give a way to organize the computation of entropy, see \cite{Anninos:2020hfj} for a concrete example.
They would in principle give a way of calculating the entanglement spectrum in terms of geometric properties of the surface $S$, which would be an intriguing application of the formalism developed in Ref.~\cite{Donnelly:2016auv} and further explored in Ref.~\cite{DonnellyFreidelMoosavianSperanza202012} and this work. 

\subsection{Topological aspects of large-$N$ limit}
\label{subsec:topological aspects of large-N limit}

In section \ref{sec:sdiffcas}, we obtained a correspondence
between the Casimirs of $\mfk{su}(N)$ and an associated set of 
Casimirs for the continuum algebra $\sdiff$, which coincide with 
generalized enstrophies of incompressible hydrodynamics.
However, a complete classification of the invariants of $\sdiff$ involves additional topological information contained in the measured Reeb graph of the function $W$ on the sphere --- see Ref.~\cite{izosimov2016coadjoint} for a proof and Ref.~\cite{DonnellyFreidelMoosavianSperanza202012} for discussion in the context of the corner symmetry algebra. 
Each coadjoint orbit of $\sdiff$ is labeled by a function $W$ on $S^2$, and the Reeb graph encodes the topology of the level sets of $W$. 
This raises the question of how this topological data arises from the large $N$ limit of $\mfk{su}(N)$.  
Since the invariants $\wh{c}_k$, $k = 2,\ldots, N$ comprise a complete set of Casimirs for $\mfk{su}(N)$, there appears to be no topological data present at finite $N$. 
Instead, the topology is contained in the way the limit $N\rightarrow\infty$ is taken.
In this limit, the topology of the surface $S$ restricts the allowed representations of $\mfk{su}(N)$ that have good infinite $N$ limits, and different topologies should single out different representations.  
In order to make this connection more precise, one would like to obtain the Reeb graph from some property of the large $N$ limit, such as the shape of the Young diagrams for the allowed representations.  
At finite $N$, the object corresponding to the function $W$ is a hermitian matrix $\wh{W}$, and a natural way to approach the large-$N$ limit is to study the trace of the resolvent, $\tr[(\lambda I - \wh{W})^{-1}]$ as $N \to \infty$.
In this limit the trace of the resolvent develops a branch cut, and the discontinuity across the cut encodes the spectral density of $\wh{W}$ and hence all of the Casimirs.
A natural conjecture is that the topology of this branch cut is related to the topology of the Reeb graph.
An intriguing possibility arises from the observation that trivalent vertices in the Reeb graph are associated with logarithmic singularities in the eigenvalue density of $\wh{W}$.
It is then tempting to conjecture that the Reeb graph data is encoded in
the branching structure of the resolvent as $N\rightarrow\infty$.

A related topological consideration comes from the interpretation
of the finite $N$ algebra as a sum over all possible topologies of the 
surface $S$ \cite{Bars199706,deWit:1989yb}.  This is related to the fact that 
$\mfk{su}(N)$ can reproduce the group of area-preserving diffeomorphisms
of any Riemann surface as $N\rightarrow\infty$, depending on how
the limit is taken.  For example, we could instead work with torus harmonics
$Y_{mn}$ as opposed to spherical harmonics, and these admit 
a finite $N$ deformation to fuzzy torus harmonics $\wh{Y}_{mn}$ 
which satisfy an $\mfk{su}(N)$ algebra \cite{Floratos1988, Fairlie1989, Barrett2019}.
Therefore, at finite $N$ the fuzzy torus harmonics must be expressible in
terms of fuzzy spherical harmonics by a change of basis,
\beq
\wh{Y}_{mn} = \sum_\alpha B^\alpha_{mn} \wh{Y}_\alpha.
\eeq
This change of basis becomes singular in the large $N$ limit, reflecting 
the fact that this limit requires one to choose a basis appropriate
to the set of smooth functions in the limiting topology.  It would 
be quite interesting to explore ideas related to the finite $N$ algebra
and sums over the topologies of the surface in more detail.  

A different topological aspect arising from the  larger groups 
$\mfk{su}(N,N)$ and $\cslr$ is related to nontrivial $\slr$-bundles over $S$. These were argued to be closely associated with nonzero NUT charges for the surface \cite{DonnellyFreidelMoosavianSperanza202012}. The natural question is whether the information of these nontrivial bundles could somehow be encoded in the $\su(N,N)$ regularization of Section \ref{sec:sdiffsunn}. One possibility is that the information of these nontrivial bundles could be only emergent as we take $N \to \infty$,
similar to the emergence of the topology of $S$ discussed above. Note that the 
continuum algebra is different from $\cslr$ when working 
with nontrivial bundles: rather than taking the form of a semidirect
product, the symmetry algebra is instead a nontrivial
extension of $\sdiff$ by $\mfk{sl}(2,\mathbb{R})^S$.  Presumably
these algebras could be obtained 
by considering a different large $N$ limit involving twisted generators 
$\wt \algX_\alpha = \algX_\alpha + \frac{N}{2}A\indices{_\alpha^\mu^a}
\algZ_{a\mu}$, with the tensor $A\indices{_\alpha^\mu^a}$ subject to 
some consistency conditions needed to ensure a good large $N$ limit.  
Note that $\wt \algX_\alpha$ are divergent in the original large $N$ limit, 
implying that these generators lead to a different continuum algebra
which conjecturally coincides
with the symmetry algebra associated 
with nontrivial $\mfk{sl}(2,\mathbb{R})$ bundles.  
The tensor $A\indices{_\alpha^\mu^a}$ would then be related to the 
curvature of a connection on the resulting $\mfk{sl}(2,\mathbb{R})$ bundle,
which characterizes the Lie algebra 2-cocycle defining the 
 extension, as discussed in \cite[Appendix A]{DonnellyFreidelMoosavianSperanza202012}.
It is thus conceivable that the data of different topologies of $S$ 
along with different $\slr$ bundles  are contained in the finite $N$
algebra $\mfk{su}(N,N)$. 

Finally, throughout this work, we have eliminated the central
generator $\algX_{00}$ since it arises from the constant
function on the sphere, which does not generate a diffeomorphism
in the continuum algebras.  However, a question remains
as to
whether the charges associated with this central generator 
should be
nonzero in the quantum theory.  It would be interesting to 
investigate this, and determine whether these central
charges bear any relation to the NUT charges discussed above.

\subsection{Deformation of the full diffeomorphism algebra}
\label{sec:fulldiff}

This work has focused on three subalgebras of the full corner symmetry
algebra, all of which involve area-preserving diffeomorphisms as opposed 
to the full diffeomorphism algebra of $S^2$. Nevertheless, this raises the 
question whether the deformations considered here could eventually be lifted
to the full corner symmetry algebra $\gslr = \diff\oplus_{\mcal{L}}\slr^S$,
as suggested by figure \ref{fig:subalgs}.
The main challenge here would be to determine the deformation for the full
$\diff$ algebra, after which one may be able to extend it to the corner
symmetry algebra following similar techniques as employed in this paper.  
Our initial investigations on this topic involve an explicit computation
of the structure constants for $\diff$, which are derived in detail in 
appendix \ref{app:diffsc}.  However, there are several indications
that any deformation of this algebra will involve a more complicated 
procedure than the analogous problem for $\sdiff$.  Recently, a no-go theorem
for the existence of such a linear deformation of $\diff$ was proven in
\cite{EnriquezRojoProchazkaSachs202105}.  This suggests that the full corner 
symmetry algebra does not admit such a deformation, although the possibility remains
that $\diff\oplus_{\mcal{L}}\slr^S$ is deformable even though $\diff$ itself is not.

A more likely possibility is that the deformation would involve a nonlinear
algebra, such as those appearing in the theory of quantum groups.  Relatedly, one
might consider looking for a deformation of a larger algebra containing $\diff$,
such as the higher spin Schouten algebra of all symmetric multivector fields on
the sphere.  
This higher spin picture is naturally associated with the higher spin-weighted spherical harmonics, whose deformation was suggested  in appendix \ref{sec:fuzH} to be a set of rectangular matrices.  These matrices are associated
with changes in the value of $N$, and hence one might conjecture that 
the natural deformation of this higher spin algebra involves a sum over all possible
values of the deformation parameter $N$.  It is possible that a deformation
of this higher spin algebra can be consistently defined, and only in the classical
limit do the $\diff$ generators close to form a subalgebra.  We leave further investigation
into these ideas to future work.  

Furthermore, we have restricted our attention to the part of the corner symmetry algebra that preserves the corner $S$ and exclude the so-called corner deformations, which move $S$ itself. By including normal translations of the corner, we would instead end up with the symmetry group 
\cite{Speranza:2017gxd, CiambelliLeigh202104, Freidel:2021cbc, Ciambelli:2021nmv, Freidel:2021dxw, Speranza:2022lxr,Ciambelli:2022cfr}
\begin{equation}
    (\tenofo{Diff}(S)\ltimes\tenofo{SL}(2,\mbb{R})^S)\ltimes(\mbb{R}^2)^S.
\end{equation}
It has been shown  in Ref.~\cite{CiambelliLeigh202104} that this is the maximal subalgebra of the diffeomorphism group of the bulk spacetime that is associated to an isolated corner $S$. Therefore, the full regularization of corner symmetry should include this generalization, and it would be interesting to explore deformations of this algebra as well.

\subsection{Other algebra deformations}
As indicated in figure \ref{fig:subalgs}, this work identified a natural 
nested sequence of deformed algebras $\su(N)\subset\mfk{sl}(N,\mathbb{C})
\oplus \mathbb{R}
\subset \mfk{su}(N,N)$ coinciding with the continuum algebra 
inclusions
$\sdiff\subset \sdiff\oplus_{\mathcal{L}}
\mathbb{R}^S\subset \sdiff\oplus_{\mathcal{L}}
\mfk{sl}(2,\mathbb{R})^S$.  In particular, the intermediate algebra 
$\mfk{sl}(N,\mathbb{C})\oplus 
\mathbb{R}$ arises as the subalgebra of $\su(N,N)$ preserving a complex 
structure, which in the $\pi_{\bf{2N}}$ representation is just the matrix 
$2\wh{Y}_{1,00}$.
It is noteworthy that a number of other interesting algebras appear
as intermediate steps between $\mfk{su}(N,N)$ and $\mfk{su}(N)$.  In particular,
if one instead looks for the algebra preserving the paracomplex structure 
$2\wh{Y}_{0,00} = \begin{pmatrix}\mathbbm{1}&0\\0&-\mathbbm{1}  \end{pmatrix}$,
the result is the maximal compact subalgebra $\su(N)\oplus\su(N)\oplus\mfk{u}(1)$.
This algebra may be relevant as a corner symmetry algebra in Euclidean signature,
where one is interested in rotations instead of boosts in the normal plane.
We can also form the algebra $\mfk{sp}(2N,\mathbb{R})\oplus \mathbb{R}$ 
as the set of generators $\wh{A}$ preserving a real structure, 
meaning that $\wh{A}^*J_r = J_r\wh{A}$, with  $J_r$ a matrix satisfying $J_r^*J_r = \mathbbm{1}$ and  $*$
denotes complex conjugation.  This matrix can be taken to be $J_r = 2\wh{Y}_{2,00}
 = \begin{pmatrix} 0&-i\mathbbm{1} \\ -i\mathbbm{1}&0\end{pmatrix}$.
Finally, one can obtain the quaternionic orthogonal algebra 
$\mfk{so}^*(2N)\oplus \mathbb{R}$ 
by restricting to generators $\wh{B}$ that preserve a pseudoreal
structure, meaning $\wh{B}^* J_p = J_p \wh{B}$, with $J_p$ satisfying 
$J_p^* J_p = -\mathbbm{1}$.  Such a pseudoreal structure is given by
$2\wh{Y}_{1,00} = \begin{pmatrix}0&\mathbbm{1}\\ \mathbbm{1}&0\end{pmatrix}$.
It is an interesting question whether these other intermediate algebras
have large $N$ limits in terms of diffeomorphism algebras of $S$.  

In a different vein, we note that the limit of $\mfk{su}(N,N)$ to the 
continuum algebra required a specific scaling of the generators 
according to (\ref{eqn:pi2NZ}).  There exists a different scaling of 
generators that also yields a finite  limit as $N\rightarrow\infty$: 
we can rescale the $\sdiff$ generators  according to
\begin{align}
\wt{\algX}_\alpha =\frac{1}{N^2}\algX_\alpha.
\end{align}
In terms of these, the algebra becomes
\begin{align}
[\wt{\algX}_\alpha, \wt{\algX}_\beta] 
&= \frac{1}{N^2}\wh{C}\indices{_\alpha_\beta^\gamma}\wt{\algX}_\gamma, \\
[\wt{\algX}_\alpha, \algZ_{a\beta}]
&= \frac{1}{N^2}\wh{C}\indices{_\alpha_\beta^\gamma}\algZ_{a\gamma},  \\
[\algZ_{a\alpha},\algZ_{b \beta}] 
&=\varepsilon\indices{_a_b^c}\wh{E}\indices{_\alpha_\beta^\gamma}\algZ_{c\gamma}
-\eta_{ab}\wh{C}\indices{_\alpha_\beta^\gamma}\wt{\algX}_\gamma.
\label{eqn:sphalg}
\end{align}
The $N\rightarrow\infty$ limit now implements a different contraction of the 
algebra in which the generators $\wt{\algX}_\alpha$ become central, and 
(\ref{eqn:sphalg}) indicates that the resulting algebra is a nontrivial central
extension of the sphere algebra $\mfk{sl}(2,\mathbb{R})^S$
\cite{loop}.  These centrally extended
sphere algebras have been explored, for example, in
\cite{Dowker:1990ss,Frappat:1989gn}, and it is interesting to see that they arise
from a nontrivial limit of the matrix algebra $\mfk{su}(N,N)$.  Whether
this limit has any bearing on the quantization of the corner symmetry algebra
remains to be seen.

\subsection{Connections to holography}

Although the algebras considered in this work arose as deformations 
of classical algebras arising from a bulk gravitational theory, there are 
several connections between these deformations and features of holographic
models of quantum gravity.  Many examples of holography arise as  matrix
models, which naturally are associated with $\mfk{su}(N)$ symmetry
\cite{Banks:1996vh,Aharony:2008ug,Kapustin:2009kz,Saad:2019lba}.  Indeed,
the supermembranes arising in string theory and M-theory were the original 
context in which the identification of $\sdiff$ as the large $N$ limit 
of $\mfk{su}(N)$ arose \cite{Hoppe198201,Hoppe198901,PopeStelle198905}.  
Related ideas appear in the holographic spacetime model of reference
\cite{Banks:2018ypk}.  
Such examples give a motivation for considering the deformed algebras 
described in this paper, and describe models where the exact
diffeomorphism symmetry is an emergent symmetry in the low-energy, classical 
theory.
While we have approached the question from the perspective of gravitational theory, it would be interesting to obtain deformations of the corner symmetry algebra from a more fundamental UV theory.
There has been some progress in understanding the closely related concept of ``entangling branes'' in string field theory \cite{Balasubramanian:2018axm} and in topological string theory \cite{Donnelly:2016jet, Donnelly:2018ppr, Hubeny:2019bje, Donnelly:2020teo, Jiang:2020cqo} but their precise relation to symmetries in the emergent gravitational theory remains unclear.

Finally, the large $N$ limits considered in the present work 
have interesting connections to recent work on von Neumann algebras
arising in the large $N$ limit of holographic conformal field theories
\cite{Leutheusser:2021qhd, Leutheusser:2021frk,
Witten:2021unn, Chandrasekaran:2022cip, Chandrasekaran:2022eqq}.  Particularly
intriguing is the fact that the area operator defined in equation (\ref{eqn:Ahat})
is central at infinite $N$, but becomes noncentral upon including perturbative
$\frac{1}{N}$ corrections.  This bears some resemblance to  
aspects of the crossed
product construction considered in \cite{Witten:2021unn}, 
where,
in particular, it was important to realize that the area operator
is singular in the quantum theory, and only becomes a well-defined 
operator after adding the bulk modular hamiltonian to it, which 
accounts for the noncommutativity at subleading order in Newton's constant.  
 A fruitful future
direction for the present work is to try to make this connection more precise,
and look to understand the corner symmetries and their deformations in
terms of von Neumann algebras.

\bigskip
\paragraph{Acknowledgement} 
We thank Rob Leigh and Lee Smolin  for helpful discussions.
We are grateful to the organizers of the conference ``Quantum Gravity Around 
the Corner'' held at Perimeter Institute.
The work of SFM is funded by the Natural Sciences and Engineering Research Council of Canada (NSERC) and also in part by the Alfred P. Sloan Foundation, grant FG-2020-13768.  AJS is supported by the Air Force Office of
Scientific Research under award number FA9550-19-1-036.
Research at Perimeter Institute is partly supported by the Government of Canada through the Department of Innovation, Science and Economic Development Canada and the Province of Ontario through the Ministry of Colleges and Universities.

\appendix 

\section{Spherical harmonics and fuzzy spherical harmonics}
\label{appendix:spherical_harmonics}
In this appendix, we define the conventions used for continuum spherical harmonics which are used as 
an explicit basis of functions on the unit sphere.  In section \ref{sec:sh}, 
we describe the structure constants for multiplication
and the Poisson bracket with respect to this basis.  The conventions 
for spin-weighted spherical harmonics, which are used 
when evaluating structure constants for differential operators on the sphere,
are subsequently presented in section \ref{sec:swh}.  
We then describe the basis of fuzzy spherical 
harmonics in section \ref{sec:fuzH}
as finite-dimensional Hermitian matrices, and review the standard result showing that 
the structure constants for the commutator of these matrices approaches the structure constants
of $\sdiff$.  The matrix product is given by a simple formula
in terms of the Wigner $6j$ symbol, and we present 
an expression for it that immediately
yields the large $N$ expansion of the product to any desired order.  We demonstrate 
the utility of this formula in section \ref{sec:mpexpansion} 
by determining the $\op(\frac{1}{N^2})$ correction
to the matrix product, and verifying that it takes the form of a Fedosov $\star$-product
for the sphere, viewed as a symplectic manifold.

\subsection{Spherical harmonics} \label{sec:sh}
We use spherical harmonics $Y_{Aa}(\theta,\varphi)$ where $A\in \mathbb{N}$ and $a\in \{-A,\ldots,+A\}$. We work with the 
Racah normalization convention and the Condon-Shortley phase, 
which imply
\begin{equation} \label{intYY}
     \int_{S} \nu_0\,
     Y_{Aa}(\theta,\varphi) Y_{Bb}(\theta,\varphi)  = (-1)^{a} \,\frac{\delta_{A,B} \delta_{a,-b}}{(2 A + 1)}.
\end{equation}
where  $\nu_0=
\frac1{4\pi}\cdot\frac12\epsilon_{AB} \rd \sigma^A \wedge \rd \sigma^B =
\frac1{4\pi}\sin \theta\, \rd \theta \wedge \rd \varphi$ is the unit-normalized volume form on the  sphere.  
It will be convenient to adopt a condensed index notation 
$\alpha = (A,a)$ in which $Y_\alpha$ is shorthand for the spherical harmonic functions $Y_{Aa}(\theta,\varphi)$.
Then \eqref{intYY} defines a real metric 
\beq\label{eqn:shmetric}
\delta_{\alpha \beta} = \frac{(-1)^{a}}{ (2A+1)} \delta_{A,B} \delta_{a, -b},
\eeq
on the vector space 
of functions on the sphere.
The $\alpha$ indices will be raised and lowered with this real metric.
Complex conjugation and orientation reversal act as 
\be \label{eqn:YAareality}
(Y_{A,a}(\theta,\varphi))^*  = (-1)^{a} Y_{A,-a}(\theta,\varphi),
\qquad
(Y_{A,a}(\pi-\theta,\varphi+\pi))^* = (-1)^{A} Y_{A,-a}(\theta,\varphi).
\ee

The multiplication structure constants $E\ind{_\alpha_\beta^\gamma}$ are defined via $Y_\alpha Y_\beta = E\ind{_\alpha_\beta^\gamma}Y_\gamma$. 
Their explicit values are given in terms of Wigner $3j$ symbols
\cite{NIST3j}  
according to 
\begin{equation} \label{eqn:Eabccontra} 
E\ind{_\alpha_\beta^\gamma} = (-1)^{c}\,(2C+1) \tj{A&B&C \\ a&b&-c} 
\tj{A&B&C\\0&0&0}. 
\end{equation}
Lowering one index with the metric (\ref{eqn:shmetric})
gives the totally symmetric tensor $E_{\alpha \beta \gamma}$:
\begin{equation} \label{eqn:Eabc}
   E_{\alpha\beta\gamma} =   
   \tj{A&B&C\\a&b&c}\tj{A&B&C\\ 0&0&0}= \int_S \nu_0\, Y_\alpha  Y_\beta  Y_\gamma  . 
\end{equation}
Note that this is nonzero only when $A+B+C$ is even.

The Poisson bracket of two functions on the sphere is defined as 
\beq \label{eqn:pb}
\{f,g\} = \ep^{AB}\nabla_A f \nabla_B g,
\eeq
where $\epsilon^{AB}$ is the negative inverse of the standard area form $\ep_{AB}$ on the 
unit sphere.\footnote{It is an antisymmetric tensor normalised by the condition  $\epsilon^{12} =1/\sqrt{q}$ where  $q$ is the metric determinant in the coordinate chosen.  Note that this 
Poisson bracket differs from the Poisson bracket $\{,\}_{\nu_0}$ 
defined relative to the unit area volume form $\nu_0$ by a 
factor of $\frac{1}{4\pi}$.}

The structure constants for the Poisson bracket $C\indices{_\alpha_\beta^\gamma}$ are 
defined by $\{Y_\alpha, Y_\beta\} = C\ind{_\alpha_\beta^\gamma} Y_\gamma$.  
The expression for these structure constants is 
\cite{Dowker:1990iy,Dowker:1990ss}
\beq \label{eqn:Cabc}
C\ind{_\alpha_\beta^\gamma} = 
-i (-1)^c (2C+1)
\delta_{[A+B+C]}^1 \;[A]_1 [B]_1 
\tj{A&B&C \\a&b&-c}\tj{A&B&C\\1&-1&0},
\eeq
where we have defined 
\beq \label{eqn:Am}
[A]_m = \sqrt{\frac{(A+m)!}{(A-m)!}},
\eeq
and 
$\delta_{[A]}^n$ is equal to $1$ if $A=n\,\, \mathrm{mod}(2)$ and equal to zero otherwise.
Note that these are nonvanishing only when $A+B+C$ is odd.  These structure constants
can be derived using identities for spin-weighted spherical harmonics, discussed 
in section \ref{sec:swh}.

It is also convenient to introduce a symmetric
bracket constructed from the round sphere metric,
\begin{equation} \label{eqn:sb}
\langle f,g\rangle  = q^{AB}\nabla_A f \nabla_B g.
\end{equation}
Its structure constants $G\ind{_\alpha_\beta^\gamma}$ defined by 
$\langle Y_\alpha, Y_\beta\rangle = G\ind{_\alpha_\beta^\gamma} Y_\gamma$
are given by a similar expression
\begin{equation} \label{eqn:Gabc}
G\ind{_\alpha_\beta^\gamma} = -
(-1)^c (2C+1) \delta_{[A+B+C]}^0 \;[A]_1 [B]_1
\tj{A&B&C\\a&b&-c}\tj{A&B&C\\1&-1&0},
\end{equation}
which are nonvanishing only when $A+B+C$ is even.  
These structure constants have a simple relation to the
product structure constants $E\ind{_\alpha_\beta^\gamma}$ 
arising from the identity, 
\beq \label{eqn:GE}
\int_{S} \nu_0 \nabla^A Y_\alpha \nabla_A Y_\beta Y_\gamma
= -\frac12\int_{S}\nu_0\left(\nabla^2 Y_\alpha Y_\beta Y_\gamma + Y_\alpha \nabla^2 Y_\beta Y_\gamma -
Y_\alpha Y_\beta \nabla^2 Y_\gamma\right),
\eeq
which then implies
\beq
G\ind{_\alpha_\beta^\gamma} = 
\frac12\Big(\la A+ \la B- \la C\Big) E\ind{_\alpha_\beta^\gamma},
\eeq
where 
\beq \label{eqn:A1}
\la A = ([A]_1)^2 = A(A+1),
\eeq
is minus Laplacian eigenvalue on the sphere, i.e.\ $\nabla^2 Y_\alpha = -\la A Y_\alpha$.

\subsection{Spin-weighted spherical harmonics}
\label{sec:swh}
Just as the ordinary spherical harmonics provide a basis with 
respect to which functions on the sphere can be decomposed, 
the spin-weighted spherical harmonics $\sys{s}{\alpha}$ \cite{Goldberg1966}
yield a convenient 
basis for decomposing tensorial objects and differential
operators on the sphere.  They are most easily described by 
introducing the holomorphic coordinate on the sphere,
\beq
z = e^{i\varphi} \cot\frac{\theta}{2},
\eeq
so that the metric is given by
\begin{align}
    ds^2 = \frac{1}{P^2} dz d\bar z,
    \\
    P = \frac12(1+z\bar z).
\end{align}
A complex null basis for the tangent space is provided by 
\begin{align}
    m^A = \sqrt{2}P \partial_z^A, \quad \bar m^A = \sqrt{2}P\partial_{\bar z}^A.
\end{align}
which satisfy
\begin{equation}
    m\cdot m = \bar m \cdot \bar m = 0,\quad m\cdot \bar m = 1.
\end{equation}
The metric and volume form on the sphere are expressed in terms of the holomorphic
basis by
\begin{align}
q_{AB} &= \bar m_A m_B +  m_A \bar m_B \label{eqn:qAB},
\\
\ep_{AB} &= i( \bar m_A m_B -m_A\bar m_B) \label{eqn:epAB}.
\end{align}

A quantity is defined to have  
spin weight $s$ if under the phase rotation $m^a\rightarrow e^{i\psi} m^a$, 
it transforms with a factor of $e^{is\psi}$.  
A general traceless symmetric tensor $T_{A_1\ldots A_n}$ on the sphere has a decomposition in
terms of objects $(T, \bar T)$ of spin weights $(n,-n)$ via
\begin{equation}
    T_{A_1\ldots A_n} = T \bar m_{A_1}\ldots \bar m_{A_n} + \bar T m_{A_1}\ldots m_{A_n}.
\end{equation}
Any function of spin weight $s$ can be decomposed in terms of 
the spin-weighted harmonics $\sys{s}{\alpha}$,\footnote{We use
the notation $\sys{s}{lm}$ instead of the more standard
$\tensor[_s]{Y}{_{lm}}$ for ease of readability.} which form a basis 
for functions of the given spin weight.  Note that the spin-$0$ harmonics 
are simply the usual spherical harmonics discussed in
section \ref{sec:sh}.  Goldberg et.\ al.\ \cite{Goldberg1966} give 
explicit expressions for $\sys{s}{\alpha}$  and a 
detailed discussion of their properties; here, we will simply quote
the relevant properties needed in this work.  Complex conjugation acts via
\beq
({\sys{s}{lm}})^* = (-1)^{l+s} \sys{\mm s}{l,\mm m},
\eeq
and continuing to use the Racah normalization, the integral over the 
sphere of a product is given by
\beq
\int_S \nu_0 \sys{s}{Aa} \sys{\mm s}{Bb} = (-1)^{s+a} \frac{\delta_{A,B}
\delta_{a,-b}}{(2A+1)}.
\eeq

The differential operators $\eth$ and $\bth$ defined in 
\cite{Goldberg1966} act as spin-weight raising and lowering 
operators on $\sys{s}{\alpha}$, whose action is given explicitly by
\begin{align}
\eth \sys{s}{\alpha} &= \frac{[A]_{s+1}}{[A]_s} \,\sys{s+1}{\alpha}, \\
\bth \sys{s}{\alpha} &= - \frac{[A]_s}{[A]_{s-1}} \,\sys{s-1}{\alpha}.
\end{align}
As a consequence, we have that 
these operators satisfy the  relations
\be \label{commr}
[\bth , \eth] \sys{s}{\alpha} = 2s \sys{s}{\alpha},\qquad  (\bth \eth + \eth\bth)\sys{s}{\alpha} =-2[A(A+1)-s^2]\sys{s}{\alpha}.
\ee 
The derivative operators $m^A\nabla_A$ and $\bar m^A \nabla_A$ are closely related
to $\eth$, $\bth$ when acting on totally symmetric traceless tensors, as is seen by 
the following relations:
\begin{align}
    m^A\nabla_A (\sys{s}{\alpha} \bar m^{B_1} \ldots \bar m^{B_s} ) 
    &= \frac{1}{\sqrt{2}}\eth \sys{s}{\alpha} \bar{m}^{B_1}\ldots \bar m^{B_s}, \label{ethdef}
\\
     m^A\nabla_A (\sys{\mm s}{\alpha}  m^{B_1} \ldots  m^{B_s} ) 
    &= \frac{1}{\sqrt{2}} \eth \sys{\mm s}{\alpha} {m}^{B_1}\ldots  m^{B_s},
\\
     \bar m^A\nabla_A (\sys{s}{\alpha} \bar m^{B_1} \ldots \bar m^{B_s} ) 
    &= \frac{1}{\sqrt{2}}\bar \eth \sys{s}{\alpha} \bar{m}^{B_1}\ldots \bar m^{B_s}, 
 \\
     \bar{m}^A\nabla_A (\sys{\mm s}{\alpha}  m^{B_1} \ldots  m^{B_s} ) 
    &= \frac{1}{\sqrt{2}}\bar \eth \sys{\mm s}{\alpha} {m}^{B_1}\ldots  m^{B_s}.
\end{align}
Using these, we can write the gradient and curl of $Y_{lm}$ in terms of spin-weighted
harmonics by
\begin{align}
    q^{CB}\nabla_C Y_{\alpha}  &= \frac{[A]_1}{\sqrt{2}}\left(\sys{1}{\alpha} \bar m^B - \sys{\mm 1}{\alpha} m^B\right),
    \label{eqn:elec} \\
    \epsilon\indices{^C^B} \nabla_C Y_{\alpha} &= 
        \frac{-i[A]_1}{\sqrt{2}} \left(\sys{1}{\alpha} \bar m^B + \sys{\mm 1}{\alpha} m^B\right).
        \label{eqn:mag}
\end{align}
The vectors (\ref{eqn:elec}) and (\ref{eqn:mag}) respectively coincide with the pure-spin electric 
and magnetic vector harmonics, defined in e.g. \cite{Thorne1980},
after normalizing by a factor of $\frac{1}{[A]_1}$. 
This terminology refers to the transformation properties of these 
vectors under parity. 
More general higher order differential operators acting on $Y_\alpha$ 
can be evaluated similarly.  We
define the following operator
\beq
\Delta_{A_1\ldots A_s} = \nabla_{(A_1} \ldots \nabla_{A_s)} - \text{traces}, 
\eeq
which is symmetric and traceless by definition.  For example,
\beq
\Delta_{AB} = \nabla_{(A}\nabla_{B)} - \frac12 q_{AB} \nabla^2.
\eeq
Then the following relation can be shown by inductively applying the above identities
\beq\label{eqn:Deltas}
\Delta_{B_1\ldots B_s} Y_\alpha = \frac{[A]_s}{2^{s/2} } \Big(\sys{s}{\alpha}
\bar m_{B_1} \ldots \bar m_{B_s} +(-1)^s \sys{\mm s}{\alpha} m_{B_1}\ldots
m_{B_s} \Big).
\eeq
The final relation that is useful in obtaining structure constants
for differential operators is the triple integral identity, 
which 
generalizes (\ref{eqn:Eabc}),
\beq\label{eqn:tripint}
    \int_{S} \nu_0 \sys{i}{\alpha} \, \sys{j}{\beta} \, \sys{k}{\gamma} = 
    \begin{pmatrix}
    A & B&C \\ a & b & c
    \end{pmatrix}
    \begin{pmatrix}
    A&B&C \\ \mm i& \mm j & \mm k
    \end{pmatrix}.
\eeq
For example, this equation, along with the gradient and curl expressions
(\ref{eqn:elec}), (\ref{eqn:mag}), provides a straightforward means of evaluating 
the integrals $\int_S \nu_0 \{Y_\alpha, Y_\beta \} Y_\gamma$ and $\int_S\nu_0 
\langle Y_\alpha, Y_\beta\rangle Y_\gamma$ involving the Poisson bracket 
(\ref{eqn:pb}) and symmetric bracket (\ref{eqn:sb}), and this leads directly
to the expressions (\ref{eqn:Cabc}) and (\ref{eqn:Gabc}) for their 
structure constants.

\subsection{Fuzzy spherical harmonics}\label{sec:fuzH}
The fuzzy sphere replaces the algebra of functions on the sphere
by a noncommutative matrix algebra, 
corresponding to the fundamental representation
of the $SU(N)$ Lie algebra.  As with the continuum algebra, these 
matrices decompose into representations of $SU(2)$, and hence 
can be labeled by fuzzy spherical harmonics $\wh{Y}_\alpha$,
with $\alpha = (A,a)$ again denoting the $SU(2)$ representation indices.  
As shown in \cite{Freidel:2001kb}, 
the matrix elements of the fuzzy harmonics can be given explicitly
in terms of a $3j$-symbol according to
\beq\label{eqn:hatYaij}
\big(\wh{Y}_\alpha \big)\ind{_i^j} = \sqrt{N}\, (-1)^{J-j}
\tj{A&J&J\\a&i& \mm j},
\eeq
where $N = 2J+1$.  
The range of the $A$ index is $0\leq A\leq 2J$, since for $A>2J$ the
expression (\ref{eqn:hatYaij}) vanishes, and $J$ can be an integer or 
half integer.  
The fuzzy haronics satisfy the reality condition
$\wh{Y}_{Aa}^\dagger = (-1)^a\, \wh{Y}_{A,-a}$ in direct analogy with the 
continuum harmonics, and are normalized to satisfy
\beq \label{eqn:trrln}
\frac{1}{N} \tr \left(\wh{Y}_\alpha \wh{Y}_\beta\right) = \frac{(-1)^a}{2A+1} \delta_{AB} \delta_{a,-b} = \delta_{\alpha\beta},
\eeq
where the real metric $\delta_{\alpha\beta}$ agrees with the expression 
for the continuum harmonics.  The matrices $\wh{Y}_\alpha$ form a basis
for traceless $N\times N$ matrices, which in general are 
not Hermitian.  However, for each value of $\alpha$, one can 
form the Hermitian combinations $\wh{Y}_\alpha + \wh{Y}^\dagger_\alpha$ and 
$i(\wh{Y}_\alpha - \wh{Y}^\dagger_\alpha)$, just as one would 
form real combinations of the complex continuum harmonics $Y_\alpha$.  
These Hermitian combinations thus provide a matrix version of 
real-valued functions, and since Hermitian matrices generate the 
Lie algebra of $SU(N)$, we see that the matrix regularization of the 
algebra of real functions on the sphere coincides with $\mathfrak{su}(N)$.

The product of two fuzzy harmonics
can be defined via structure constants, $\wh{Y}_\alpha
\wh{Y}_\beta = \wh{M}\ind{_\alpha_\beta^\gamma}\wh{Y}_\gamma$,
explicitly given in terms of the Wigner $6j$ symbol  \cite{NIST3j} 
by \cite{Freidel:2001kb, Alekseev:1999bs}
\beq
\wh{M}\ind{_\alpha_\beta^\gamma} = \sqrt{N} (2C+1) (-1)^{2J +c}\tj{A&B&C\\a&b&-c} \sj{A&B&C\\J&J&J},
\eeq
or more symmetrically with the $\gamma$ index lowered using the metric (\ref{eqn:trrln})
as 
\beq \label{eqn:Mabc}
\wh{M}_{\alpha\beta\gamma} 
=\frac{1}{N} \tr\left(\wh{Y}_\alpha \wh{Y}_\beta \wh{Y}_\gamma\right)
= \frac{\sqrt{N}}{(-1)^{2J}} \tj{A&B&C\\a&b&c} \sj{A&B&C\\J&J&J}.
\eeq
It is convenient to define a deformed $3j$ symbol 
\be \label{eqn:def3j}
\tjn{A&B&C\\0&0&0}_N:=  \frac{ \sqrt{N}}{(-1)^{2J}} \sj{A&B&C\\J&J&J},
\ee 
so that the structure constants take the form
\beq
\wh{M} _{\alpha\beta\gamma} = \tj{A&B&C\\a&b&c}\tjn{A&B&C\\0&0&0}_N,
\eeq
directly analogous to the continuum equation (\ref{eqn:Eabc}).
We can further decompose these structure constants into their 
symmetric $\wh{E}_{\alpha\beta\gamma}$ and antisymmetric
$\wh{C}_{\alpha\beta\gamma}$
pieces on $\alpha$ and $\beta$,
\beq \label{eqn:MEC}
\wh{M}_{\alpha\beta\gamma} = \wh{E}_{\alpha\beta\gamma} + \frac{i}{N}\wh{C}_{\alpha\beta\gamma},
\eeq
and we will see below that as $N\rightarrow\infty$, $\wh{E}_{\alpha\beta\gamma}$ and
$\wh{C}_{\alpha\beta\gamma}$ approach their classical counterparts, $E_{\alpha\beta\gamma}$
and $C_{\alpha\beta\gamma}$, defined in section \ref{sec:sh}.

The large-$N$ expansion of the structure constants $\wh{M}_{\alpha\beta\gamma}$
can be obtained by employing a remarkable identity by Nomura \cite[Eq.  (2.22)]{nomura1989description} 
that expresses the $6j$ symbol
as a single sum in which each term involves a single $3j$ symbol.\footnote{Note that Nomura
\cite{nomura1989description} uses a nonstandard
normalization for the $6j$ symbol, and with the standard normalization
\cite{NIST3j}, the factor of $(2e +1)^{-\frac12}$ that appears in Nomura's equation (2.22)
should be left out.}  Applied to the deformed $3j$ symbol
(\ref{eqn:def3j}), this identity yields 
\begin{align}
\tjn{A&B&C\\0&0&0}_N =&\left[\frac{\theta_N(A)\theta_N(B)}{\theta_N(C)}\right]^{\frac12}  \times \sum_{m=0}^{\min(A,B)}\frac{N!}{(N+m)! m!} 
[A]_m [B]_m
\tj{A&B&C\\m&-m&0}, \label{eqn:6jexp}
\end{align}
where we have made the definitions
\be
\theta_N(A) :=  \frac{(N+A)!(N-A-1)!}{N!(N-1)!}=\frac{(N+A)\ldots (N+1)}{ (N-A)\ldots (N-1)},
\ee
and $[A]_m$ is defined in (\ref{eqn:Am}).
The identity (\ref{eqn:6jexp}) is valid assuming $B$ is an integer, and holds
for $J$ either integer or half integer.  
Each term in the sum (\ref{eqn:6jexp}) is suppressed by an additional factor of 
$\frac{1}{N}$, and hence this sum manifestly yields the large $N$ expansion
of the matrix product of fuzzy spherical harmonics.  The prefactor to the sum has the following expansion
at large $N$,
\beq
1+\frac{1}{2N}\left(\la A+ \la B- \la C  \right)
+\frac{1}{8N^2}\left(\la A+ \la B- \la C\right)^2+\ldots ,\label{eqn:prefac}
\eeq
The leading order
term in the deformed $3j$ symbol expansion is then seen to be 
\beq
\tjn{A&B&C\\0&0&0}_N = \tj{A&B&C\\0&0&0}
+\mathcal{O}\left(1/N\right).
\eeq
Substituting this expression into the structure constants
(\ref{eqn:Mabc}), we see that the leading order piece 
$\wh{M}^{(0)}_{\alpha\beta\gamma}$ coincides exactly with 
the continuum commutative product structure constants
$E_{\alpha\beta\gamma}$ (\ref{eqn:Eabc}).  

Equation (\ref{eqn:def3j}) straightforwardly yields 
the first 
subleading correction to the deformed symbol $3j$ symbol, 
\beq\label{eqn:6j32}
\frac{1}{N} \left[\frac12\left(\la A+\la B-\la C \right)
\tj{A&B&C\\0&0&0}+ [A]_1 [B]_1
\tj{A&B&C\\1&-1&0} \right],
    \eeq
Using the relations (\ref{eqn:GE}) and (\ref{eqn:Gabc})
between the $E_{\alpha\beta\gamma}$ and $G_{\alpha\beta\gamma}$ 
structure constants, we see that the terms with $A+B+C$ 
even cancel in (\ref{eqn:6j32}), leaving 
only the odd piece,
\beq
\frac{1}{N}\delta^1_{[A+B+C]}\;
[A]_1[B]_1 \tj{A&B&C\\1&-1&0}.
\eeq
Comparing to (\ref{eqn:Cabc}), 
this  determines the first-order correction to 
the structure constants (\ref{eqn:Mabc}) in terms of the 
continuum Poisson bracket structure 
constants $C_{\alpha\beta\gamma}$
\beq\label{MCeq}
\wh{M}^{(1)}_{\alpha\beta\gamma} = 
\frac{i}{N} C_{\alpha\beta\gamma}.
\eeq
This verifies that the matrix product of the fuzzy harmonics $\wh{Y}_\alpha$ 
takes the desired form of a valid $\star$-product in the sense of deformation
quantization of the algebra of functions on the sphere
(see e.g.\ \cite{Gutt2011, Bayen1975, Sternheimer1998}); namely, it has the expansion
\beq
\wh{Y}_\alpha \cdot \wh{Y}_\beta = \widehat{Y_\alpha Y_\beta} + \frac{i\hbar}{2}
\widehat{\{Y_\alpha, Y_\beta \} } + \op(\hbar^2),
\eeq
with $\hbar = \frac{2}{N}$.  In fact, we can fix the value of $\hbar$ more 
precisely by recalling that Poisson brackets of the $l=1$ continuum harmonics generate
an $\mathfrak{su}(2)$ algebra, and by requiring that the matrix commutator
exactly reproduce this algebra in the sense 
\beq
[\wh{Y}_{1,m}, \wh{Y}_{1,m'}] = i\hbar\, \widehat{\left\{Y_{1,m}, Y_{1,m'}\right\} },
\eeq
the value of $\hbar$ is determined to be 
\beq
\hbar = \frac{1}{[J]_1} = \frac{2}{\sqrt{N^2-1}} = \frac{2}{N} + \op(N^{-3}),
\eeq
which can be derived by evaluating the exact structure constants for
the matrix product (\ref{eqn:Mabc}) in terms of the $6j$ symbol
\beq\label{eqn:111JJJ}
\sj{1&1&1\\J&J&J} = \frac{(-1)^{2J}}{\sqrt{N}} \frac{\sqrt{6}}{3\sqrt{N^2-1}}.
\eeq
Since the Poisson bracket $\{,\}$ is defined with respect to a 
unit radius sphere with area $A = 4\pi$, the result 
$\hbar = \frac{2}{N}$ is consistent with the standard relation 
$N = \frac{A}{2\pi \hbar}$
between the dimension $N$ of the quantum Hilbert space and the volume
$A$ of the classical phase space.

As an aside, we note that the definition of the deformed $3j$ symbol
(\ref{eqn:def3j}) can be extended to
nonzero magnetic quantum numbers by the equation 
\beq \label{eqn:def3jmag}
\tjn{A&B&C\\i&j&k}_N = \frac{\sqrt{N}}{(-1)^{2J}}\sj{A&B&C\\J-k&J&J+i},
\eeq
where $i+j+k=0$. The Nomura identity \cite{nomura1989description} 
in this case yields the expression
\begin{align}
\tjn{A&B&C\\i&j&k}_N =&\;
\left[\frac{N(N+i-A-1)!(N+A+i)!(N-k+i-B-1)!(N-k+i+B)!}{\big((N+2i)!\big)^2(N-k-C-1)!(N-k+C)!}  \right]^{\frac12} \nonumber \\
& \times \sum_{m=0}^{\min\left(A-i,\, B+j\right)} \frac{(N+2i)!}{(N+2i+m)! m!}\frac{[A]_{i+m}\,[B]_{j}}{[B]_{j-m}\,[A]_i } \tj{A&B&C\\i+m&j-m&k}.
\label{eqn:3jNexp}
\end{align}
The prefactor in this expression approaches $1$ as $N\rightarrow \infty$, and 
hence at leading order the deformed $3j$ symbol approaches the usual $3j$ symbol,
\beq
\tjn{A&B&C\\i&j&k}_N = \tj{A&B&C\\i&j&k} + \op(N^{-1}),
\eeq
which is equivalent to a known asymptotic formula for the $6j$ symbol in terms 
of a $3j$ symbol \cite{Ponzano1968}, although equation
(\ref{eqn:3jNexp}) additionally produces all subleading corrections 
to this asymptotic formula.  

The motivation for the definition
(\ref{eqn:def3jmag}) lies in a product relation for a fuzzy version
of spin-weighted spherical harmonics $\wh{Y}_\alpha^s$, which can
be defined as rectangular matrices whose row and column dimension
differ by the spin weight $s$,
\beq
[\wh{Y}_\alpha^s ] \ind{_i^j} = \sqrt{N} (-1)^{J-j}\tj{A&J+s& J\\a&i&\mm j}.
\eeq
These matrices can be multiplied by appropriately adjusting 
the value of $J$ to ensure that the number of columns 
of the first matrix
matches the number of rows of the second.  The deformed 
$3j$ symbol then appears in the structure constants
for this matrix multiplication, which,
similar to equation (\ref{eqn:Mabc}), can be characterized 
by a trace of a triple product, 
\beq
\wh{M}^{ijk}_{\alpha\beta\gamma} =
\frac1N \tr\left(\wh{Y}_\alpha^i \wh{Y}_\beta^j \wh{Y}_\gamma^k\right)
=\tj{A&B&C\\a&b&c}\tjn{A&B&C\\\mm i&\mm j&\mm k}_N.
\eeq
This equation is the fuzzy analog of the continuum triple integral
expression (\ref{eqn:tripint}).
The above proposal  for a fuzzy version of the spin-weighted harmonics
has not been considered previously, and may provide some hints at determining
a deformation of the full diffeomorphism algebra of the sphere $\diff$.
We leave investigation into this idea to future work.

\subsection{Expansion of the matrix product} \label{sec:mpexpansion}

As mentioned above, the identity (\ref{eqn:6jexp}) provides a means of 
expanding the matrix product of the fuzzy harmonics to higher order
in  $\frac1N$.  This can be used to show that the matrix 
product takes the form of a valid $\star$-product.  Such a product is 
a deformation of the commutative product of functions of the sphere
that admits a formal expansion in powers of $\hbar$ of the form
\cite{Gutt2011, Bayen1975, Sternheimer1998}
\beq\label{stardef}
f\star g = fg +\sum_{n=1}^\infty \left(\frac{i\hbar}{2}\right)^n
\frac{1}{n!}C^{(n)}(f,g),
\eeq
with 
\beq
C^{(1)}(f,g) = \{f, g\},
\eeq
and with each higher-order term $C^{(n)}(f,g)$ given by a bidifferential operator
of order at most $n$, whose highest order piece takes the form expected from
a Moyal product,
\beq \label{eqn:moyalform}
C^{(n)}(f,g) = \ep^{A_1 B_1} \ldots \ep^{A_n B_n} 
(\nabla_{A_1}\ldots \nabla_{A_n} f) (\nabla_{B_1}\ldots \nabla_{B_n} g)
+ B^{(n)}(f,g),
\eeq
with the differential order of $B^{(n)}(f,g)$ strictly less than $n$.  
As an application of the utility of the formula (\ref{eqn:6jexp}), we demonstrate
here that the $\op(N^{-2})$ term in the structure constants
for the matrix product (\ref{eqn:Mabc}) is precisely of this form.  From (\ref{eqn:6jexp}) and
(\ref{eqn:prefac}), the $\op(N^{-2})$ term in the  deformed $3j$ symbol 
is
\begin{equation}
    \begin{aligned}
\frac{1}{N^{2}}\left[\frac{\left([A]_1^2+[B]_1^2-[C]_1^2 \right)^2}{8}
\tj{A&B&C\\0&0&0}
+\right. &\,\left.\frac{[A]_1^2+[B]_1^2-[C]_1^2-2}{2}
[A]_1[B]_1\tj{A&B&C\\1&-1&0}\right.
\\
+&\, \left. \frac12 [A]_2[B]_2
\tj{A&B&C\\2&-2&0}\right].
\end{aligned}
\end{equation}
This expression simplifies using a recursion identity
for the $3j$ symbols \cite{Raynal1979, Raynal1993},
\begin{align}
[A]_2[B]_2
\tj{A&B&C\\2&-2&0} =&\;
-([A]_1^2+[B]_1^2-[C]_1^2-2)[A]_1[B]_1
\tj{A&B&C\\1&-1&0} %
 -[A]_1^2 [B]_1^2\tj{A&B&C\\0&0&0},
 \label{eqn:3jrecursion}
\end{align}
to give
\beq \label{eqn:3jn2nd}
\frac{1}{N^{2}} \left[\frac{[A]_1^4+[B]_1^4+[C]_1^4-2[A]_1^2[B]_1^2
-2[A]_1^2[C]_1^2-2[B]_1^2[C]_1^2}{8}\right]\tj{A&B&C\\0&0&0},
\eeq
which is notably totally symmetric in $A, B, C$, and 
 only nonzero for $A+B+C$ even.  

We now look for the second order bidifferential operator $C^{(2)}(f,g)$
that yields the expression (\ref{eqn:3jn2nd}) when acting on the continuum
harmonics.  The structure constants for $C^{(2)}(\cdot, \cdot)$
are defined by 
\beq
C^{(2)}_{\alpha\beta\gamma} = \int_S \nu_0 \, C^{(2)}(Y_\alpha, Y_\beta)
Y_\gamma.
\eeq
Since we expect the highest order term in this operator to take the 
Moyal product form as in equation (\ref{eqn:moyalform}), we begin
by evaluating the structure constants 
\beq\label{eqn:Pi}
\Pi^{(2)}_{\alpha\beta\gamma} = \int_{S}\nu_0\, \ep^{AB}\ep^{CD}
(\nabla_A\nabla_C Y_\alpha)( \nabla_B\nabla_D Y_\beta)  Y_\gamma.
\eeq
First using $\ep^{AB}\ep^{CD} = q^{AC} q^{BD}-q^{AD}q^{BC}$ and 
$\nabla_{[A}\nabla_{B]} Y_\beta = 0$, we have that 

    \begin{align}
\ep^{AB}\ep^{CD}
(\nabla_A\nabla_C Y_\alpha)( \nabla_B\nabla_D Y_\beta)
&= \nabla^2 Y_\alpha \nabla^2 Y_\beta
-(\nabla^A \nabla^B Y_\alpha)(\nabla_{(A} \nabla_{B)} Y_\beta)
\nonumber \\
&= 
\nabla^2 Y_\alpha \nabla^2 Y_\beta
-(\Delta^{AB}+\frac12q^{AB}\nabla^2) Y_\alpha (\Delta_{AB} +\frac12 q_{AB}\nabla^2)Y_\beta
\nonumber \\
&= 
\frac12\nabla^2 Y_\alpha \nabla^2 Y_\beta - (\Delta^{AB}Y_\alpha)(\Delta_{AB}Y_\beta).
\end{align}

We can expand the second term in spin-weighted harmonics using 
(\ref{eqn:Deltas}) to obtain
\begin{align}
-(\Delta^{AB} Y_\alpha) (\Delta_{AB}Y_\beta)
&=-\frac{[A]_2 [B]_2 }{4} \left(\sys{2}{\alpha}\bar m^A \bar m^B + \sys{\mm 2}{\alpha} m^A m^B\right)
\left(\sys{2}{\beta}\bar m_{A}\bar m_B + \sys{\mm 2}{\beta}m_A m_B\right) \nonumber 
\\
&= -\frac{[A]_2 [B]_2 }{4}(\sys{2}{\alpha} \sys{\mm 2}{\beta} + \sys{\mm 2}{\alpha}\sys{2}{\beta}).
\end{align}
The contribution of this term to the structure constants 
(\ref{eqn:Pi}) then follows directly from 
the triple integral identity (\ref{eqn:tripint}),
\begin{align}
-\frac{[A]_2 [B]_2}{4}  \int_S \nu_0 \left(\sys{2}{\alpha}\sys{\mm 2}{\beta} Y_\gamma
+ \sys{\mm 2}{\alpha} \sys{2}{\beta} Y_\gamma\right)
&=
-\frac{[A]_2 [B]_2}{4}\tj{A&B&C\\a&b&c}
\left[\tj{A&B&C\\ \mm 2 & 2&0} + \tj{A&B&C\\ 2 &\mm 2&0} \right]
\nonumber\\
&=
-\frac{[A]_2 [B]_2}{2} \delta_{[A+B+C]}^0\tj{A&B&C\\a&b&c}\tj{A&B&C\\ 2&\mm 2&0},
\end{align}
where we recall that $\delta^0_{[A+B+C]}$ 
is $1$ if $A+B+C$ is even, and $0$ otherwise. The other contribution
to (\ref{eqn:Pi}) is 
\beq
\frac12\int_S \nu_0\, \nabla^2 Y_\alpha \nabla^2 Y_\beta Y_\gamma 
= \frac{[A]_1^2 [B]_1^2}{2}\tj{A&B&C\\a&b&c}\tj{A&B&C\\0&0&0},
\eeq
So we find the structure constants
\beq\label{eqn:Pi2init}
\Pi^{(2)}_{\alpha\beta\gamma} = \frac{\delta^0_{[A+B+C]} }{2}\tj{A&B&C\\a&b&c} 
\left[[A]_1^2[B]_1^2\tj{A&B&C\\0&0&0} - [A]_2 [B]_2\tj{A&B&C\\2&\mm 2&0} 
\right].
\eeq
This can be simplified using the $3j$ symbol
recursion identities (\ref{eqn:3jrecursion}) and 
\cite{Raynal1979, Raynal1993}
\begin{align}
[A]_1[B]_1 &\tj{A& B&C\\1&\mm 1&0}  = -\frac12\big([A]_1^2+[B]_1^2-[C]_1^2\big)\tj{A&B&C\\0&0&0},
\end{align}
valid for even $A+B+C$,
which reduces the bracketed term in (\ref{eqn:Pi2init})  to 
\beq
-\frac{\Big([A]_1^4+[B]_1^4+[C]_1^4
-2[A]_1^2[B]_1^2-2[A]_1^2[C]_1^2-2 [B]_1^2[C]_1^2\Big)}{2}
\tj{A&B&C\\0&0&0} -2[A]_1[B]_1\tj{A&B&C\\1&\mm 1&0}.
\eeq

The first term matches the expression (\ref{eqn:3jn2nd}) appearing at second
order in the large $N$ expansion of the matrix product structure constants.
The remaining term is a correction
that appears in the structure constant $G_{\alpha\beta\gamma}$
for the symmetric bracket
(\ref{eqn:sb}).
This then shows that 
\beq
C^{(2)}_{\alpha\beta\gamma} = \Pi^{(2)}_{\alpha\beta\gamma} - G_{\alpha\beta\gamma},
\eeq 
or equivalently, that the bidifferential operator $C^{(2)}(f,g)$ is given by
\beq\label{eqn:C2diffop}
C^{(2)}(f,g) = \ep^{A_1 B_1} \ep^{A_2 B_2} (\nabla_{A_1} \nabla_{A_2} f)
(\nabla_{B_1} \nabla_{B_2} g) - q^{AB}\nabla_A f \nabla_B g,
\eeq
where the second term in this expression corresponds to 
the term $B^{(2)}(f,g)$ in the general expression (\ref{eqn:moyalform})
the expansion of the $\star$-product. One noticeable property of $C^{(2)}$ is that it is a symmetric bidifferential operator. As we show in the next section we also have that $C^{(3)}$  is a skew-symmetric bidifferential operator.   This means that the first correction to the commutator is of order $\hbar^2$: 
$[f,g] = i\hbar\left(C^{(1)}(f,g) - \frac14\hbar^2 C^{(3)}(f,g)
 + \mcal{O}(\hbar^3) \right)
$

The appearance of this correction $B^{(2)}(f,g)$
to the naive Moyal product at 
$\op(\hbar^2)$ deserves some attention.  When applying the 
procedure of Fedosov quantization to construct an
associative $\star$-product on a 
symplectic manifold, one generically finds nontrivial $B^{(n)}(f,g)$ terms 
that account for effects coming from the curvature of a chosen symplectic 
connection $\nabla_A$ \cite{Fedosov1994}.  
In the simplest application of the Fedosov
construction, however, such curvature corrections only occur at 
$\op(\hbar^3)$ or higher, whereas the fuzzy matrix product generates 
such a correction at $\op(\hbar^2)$.  Nevertheless, there is no inconsistency
in finding such terms at $\op(\hbar^2)$, since the Fedosov procedure 
contains certain gauge ambiguities that affect the precise expression
for the $\star$-product, and these ambiguities can affect the 
$\op(\hbar^2)$ terms \cite{Fedosov1996}.  
The correction appearing in (\ref{eqn:C2diffop})
can arise in two different ways.  The first is 
as an ambiguity in how one constructs a flat connection on the 
Weyl bundle of the symplectic manifold, which 
is not uniquely determined even after specifying 
a symplectic connection.  The second way it can appear simply comes from 
the standard ambiguity in the quantization map sending a classical 
function $Y_\alpha$ to its quantum operator $\wh{Y}_\alpha$.  
In general, one is free to correct this map at higher order in $\hbar$,
and a shift of the form $\wh{Y}_\alpha \rightarrow \wh{Y}_\alpha 
+ \lambda \hbar^2\left(\widehat{\nabla^2Y_\alpha}\right) + \ldots$ can generate
corrections at $\op(\hbar^2)$ in the $\star$-product as were found above.  

In the present context, the appearance of a nontrivial 
$B^{(2)}(\cdot, \cdot)$ at $\op(\hbar^2)$ in the $\star$-product ensures the 
desirable property that the second-order structure constants
$C^{(2)}_{\alpha\beta\gamma}$ are totally symmetric in the indices 
$\alpha$, $\beta$, $\gamma$. 
This symmetry follows from the permutation 
symmetry of the columns of the $6j$ symbol that appears in the 
fully non-perturbative structure constants $\wh{M}_{\alpha\beta\gamma}$ 
for the matrix product (\ref{eqn:Mabc}).  It would be interesting to 
investigate in future work 
whether there is some deeper meaning to this correction that 
appears in the $\star$-product.

\subsection{Parity of the matrix product}
\label{app:parity}
An interesting feature exhibited by the matrix product structure 
constants $\wh{M}_{\alpha\beta\gamma}$ is that the lowest order
term in the large $N$ expansion is nonzero only when $A+B+C$ is 
even, the $\mathcal{O}(N^{-1})$ term is nonvanishing 
only for $A+B+C$ odd, and, as calculated 
in section \ref{sec:mpexpansion}, the $\mathcal{O}(N^{-2})$ term is again
nonzero only for $A+B+C$ even.  Given the expression
(\ref{eqn:Mabc}) for the structure constants, this translates to the 
statement that the $\mathcal{O}(N^0)$ and $\mathcal{O}(N^{-2})$ 
terms in $\wh{M}_{\alpha\beta\gamma}$ are totally symmetric 
tensors, while the $\mathcal{O}(N^{-1})$ term is totally
antisymmetric.  This conclusion follows from the fact that
the $6j$-symbol is totally symmetric under permutations of 
its columns, while the $3j$-symbol satisfies 
$\tj{B&A&C\\b&a&c} = (-1)^{A+B+C}\tj{A&B&C\\a&b&c}$.  Here we will
show that this pattern persists to all orders in the $\frac{1}{N}$
expansion, namely, that only even powers of $N^{-1}$ appear in 
$\wh{M}_{\alpha\beta\gamma}$ when $A+B+C$ is even, and only
odd powers of $N^{-1}$ appear when $A+B+C$ is odd.  
Given the decomposition (\ref{eqn:MEC}) of $\wh{M}_{\alpha\beta\gamma}$
into its symmetric and antisymmetric parts, this statement then 
implies that $\wh{E}_{\alpha\beta\gamma}$ and 
$\wh{C}_{\alpha\beta\gamma}$ both admit large $N$ expansions involving
only even powers of $N^{-1}$.  

We will say that the rescaled $6j$ symbol 
$\tjn{A&B&C\\0&0&0}_N = \frac{\sqrt{N}}{(-1)^{2J}}\sj{A&B&C\\J&J&J}$
with $A,B,C$ integers
satisfies $N$-parity if its expansion in $N^{-1}$ involves
only powers with the same parity as $A+B+C$.  
To prove the claim that 
$\tjn{A&B&C\\0&0&0}_N$ satisfies $N$-parity, 
we begin by noting that as a base case,  $\tjn{1&1&1\\0&0&0}_N$
has an expansion involving only odd powers of $N^{-1}$, as is 
apparent from its exact expression obtained from (\ref{eqn:111JJJ}).
Similarly,
from the exact expressions
\begin{align}
\tjn{0&0&0\\0&0&0}_N = 1,  \qquad\quad
\tjn{0&1&1\\0&0&0}_N =\frac{-1}{\sqrt{3}},
\end{align}
we see that for these lowest values for which $A+B+C$ is even,
only $N^0$-terms appear, and hence $N$-parity is satisfied.  
Since $\tjn{0&0&1\\0&0&0}_N=0$,
this base case for $A+B+C$ odd also trivially satisfies
$N$-parity.  

To proceed with an inductive proof to higher values of $A,B,C$,
we apply the following recursion relation for $6j$-symbols
\cite{Schulten1975} (see \cite{Bonzom:2011hm} for a geometrical interpretation of this identity in quantum gravity)
\beq \label{eqn:6jrecurs}
A\, E_N(A+1)\sj{A+1&B&C\\J&J&J} = -F(A)\sj{A&B&C\\J&J&J}
-(A+1)E_N(A)\sj{A-1&B&C\\J&J&J}, 
\eeq
where
\beq
E_N(A) =  N\sqrt{1-\frac{A^2}{N^2}}\; A\sqrt{[A^2-(B-C)^2][(B+C+1)^2-A^2]},
\eeq
admits an expansion in odd powers of $N^{-1}$, and 
\beq
F(A) = (2A+1) \la A(\la B +\la C - \la A),
\eeq
is independent of $N$.  Since $J$ (and hence $N$) is fixed,
the recursion relation (\ref{eqn:6jrecurs}) also applies to the rescaled $6j$-symbols
$\tjn{A&B&C\\0&0&0}_N$.

Now, assuming we have shown that $\tjn{A&B&C\\0&0&0}_N$ satisfies
$N$-parity for all $A\leq K$, with $B$, $C$ fixed, the 
recursion identity (\ref{eqn:6jrecurs}) implies that
\beq
\tjn{K+1&B&C\\0&0&0}_N
=
-\frac{F(K)}{KE_N(K+1)}\tjn{K&B&C\\0&0&0}_N 
-\frac{(K+1)E_N(K)}{KE_N(K+1)}\tjn{K-1&B&C\\0&0&0}_N.
\eeq
Since $\frac{F(K)}{KE_N(K+1)}$ involves only odd powers of $N^{-1}$ in
its expansion, the first term on the right-hand side above
will have an $N^{-1}$ expansion with powers of the opposite 
parity of the expansion of $\tjn{K&B&C\\0&0&0}_N$.
Similarly, the expansion of $\frac{(K+1)E_N(K)}{KE_N(K+1)}$ 
involves only even powers of $N^{-1}$, and so the second term on the 
right-hand side will have an $N^{-1}$ expansion with powers of $N$
with the same parity as the expansion of $\tjn{K-1&B&C\\0&0&0}_N$.
Hence, the $N^{-1}$ expansion of both terms on the right-hand side
in the above relation only involves powers of $N^{-1}$ with the 
same parity as $K+B+C-1$, which is the same as $K+B+C+1$.  
We therefore see that $\tjn{K+1&B&C\\0&0&0}_N$ satisfies
$N$-parity, proving the inductive step.  Due to the fact that 
$\tjn{A&B&C\\0&0&0}_N$ is totally symmetric under permutations
of its columns, the same inductive argument applies to
$\tjn{A&K+1&C\\0&0&0}_N$ and $\tjn{A&B&K+1\\0&0&0}_N$, and 
 we can conclude that $\tjn{A&B&C\\0&0&0}_N$ satisfies
$N$-parity for all nonnegative integers $A,B,C$.

\subsection{Star product and Nomura identity}
\label{app:star}
In this section, we demonstrate that the product arising from the 
$6j$-symbol (\ref{eqn:Mabc}) can be viewed as a valid star product 
to all orders in $\frac{1}{N}$, and further show that the Nomura
identity \cite{nomura1989description} arises precisely from the 
$\hbar$ expansion of this star product.  
The star product on the sphere can be induced from a rotationally-invariant
star product on 
$\mathbb{R}^3$ via the natural embedding of the sphere in this space. 
In Cartesian coordinates $x^a$, $a = 1,2,3$, this star product is given by \cite{Presnajder199912,HayasakaNakayamaTakaya200209,AlekseevLachowska,MatsubaraStenmark200402}
\begin{equation}
    f\star g=fg+\sum_{n=1}^\infty C_n\left(\frac{\hbar}{r}\right)J^{a_1b_1}\ldots J^{a_nb_n}\,\pa_{a_1}\ldots\pa_{a_n}f\pa_{b_1}\ldots\pa_{b_n}g,
\end{equation}
where $r=\sqrt{x_1^2+x_2^2+x_3^2}$ is the sphere radius and 
\begin{equation} \label{eqn:cnhbarr}
    \begin{gathered}
    J^{ab}(x)\equiv r^2\delta^{ab}-x^ax^b+i\,r\varepsilon\indices{^{ab}_c}x^c,
    \\
    C_n\left(\frac{\hbar}{r}\right)\equiv \frac{\left(\frac{\hbar}{r}\right)^n}{n!\left(1-\frac{\hbar}{r}\right)\ldots\left(1-(n-1)\frac{\hbar}{r}\right)}.
    \end{gathered}
\end{equation}
$J^{ab}$ is covariant under rotations and therefore  the $\star$-product is rotationally-invariant. Moreover, since $J^{ab} x_b=0$ we have that 
$f\star r=r\star f=rf$, and hence it  can be restricted to the sphere. However, we would like to know the expression in terms of intrinsic coordinates on the sphere and not the above embedding Euclidean coordinates. Denoting the restriction $J|_{S}$ by $J$ obtained by the embedding $\iota:S\hookrightarrow \mbb{R}^3$, we have
\begin{equation}\label{eq:the relation between tensors J on R3 and sphere}
    J^{AB}=\frac{\partial \sigma^A}{\partial x^c}\frac{\partial \sigma^B}{\partial x^d}J^{cd}. 
\end{equation}
where $\sigma^A=\sigma^A(x^1,x^2,x^3)$ are a set of coordinates on sphere. Since we are sitting on a sphere, $r$ is a constant, which we could set to one. We however keep the radius as $r$ to make the formulas general. 

We would like to write the star product in the holomorphic polarization. Celestial coordinates provide an appropriate means for doing so. We use the relation between celestial coordinates $(z,\bar{z})$ and Euclidean coordinates $(x^1,x^2,x^3)$ in the north-pole patch $S-\{(0,0,r)\}$  given by
\begin{equation}
    \begin{gathered}
    x^1=\frac{\left(z+\bar{z}\right)}{1+\bar{z}z}r, \qquad x^2=\frac{-i(z-\bar{z})}{1+\bar{z}z}r, \qquad x^3=\frac{-1+\bar{z}z}{1+\bar{z}z}r,
    \\
    z=\frac{x^1+ix^2}{r-x^3},\qquad \bar{z}=\frac{x^1-ix^2}{r-x^3}.
    \end{gathered}
\end{equation}
Using \eqref{eq:the relation between tensors J on R3 and sphere}, we find
\begin{equation}
    {\begin{aligned}
    J^{zz}&=0, &\qquad J^{z\bar{z}}&=0,
    \\
    J^{\bar{z}z}&=(1+\bar{z}z)^2, &\qquad J^{\bar{z}\bar{z}}&=0.
    \end{aligned}}
\end{equation}
Similarly, the relations between celestial coordinates $(z,\bar{z})$ and Euclidean coordinates $(x^1,x^2,x^3)$ in the south-pole patch $S-\{(0,0,-r)\}$ are  
\begin{equation}
    \begin{gathered}
    z=\frac{x^1+ix^2}{r+x^3},\qquad \bar{z}=\frac{x^1+ix^2}{r+x^3}.
    \end{gathered}
\end{equation}
which by using \eqref{eq:the relation between tensors J on R3 and sphere} gives
\begin{equation}
    {\begin{aligned}
    J^{zz}&=0, &\qquad J^{z\bar{z}}&=(1+\bar{z}z)^2,
    \\
    J^{\bar{z}z}&=0, &\qquad J^{\bar{z}\bar{z}}&=0.
    \end{aligned}}
\end{equation}
Hence, the sphere star product in the holomorphic polarization is parameterized by a weight $\lambda=- \frac{r}{\hbar}$ where $r$ is the sphere radius and $\hbar$ is the formal deformation parameter is known to all orders in perturbation theory. It is explicitly given by 
\be
f \star_H g = \sum_{n=0}^\infty \frac{ \hbar^n}{n!}
C_H^{(n)}(f,g), \qquad 
\ee 
Here the label $H$ stands for Holomorphic.
 The  holomorphic deformation cocycles are 
\be \label{CH}
C^H_n(f,g): = \frac{1}{\prod_{p=0}^{n-1} (r- p\hbar)}  J^{A_1B_1} \ldots J^{A_nB_n} ( \nabla_{A_1}\ldots \nabla_{A_n} f) ({\nabla}_{B_1}\ldots \nabla_{B_n} g).
\ee
Here $J^{AB}= q^{AB}-i\epsilon^{AB}$ is the standard Hermitian form
on the sphere. Expanding the expression \eqref{CH} in powers of $\hbar$ gives a relation between the $\{C_H^{(n)}\}$ and the $\{C^{(m)}\}$ introduced in \eqref{stardef}.

It can be expressed in terms of the null complex frame field $m^A$ introduced earlier.
\be 
J^{AB}= 2\bar{m}^A {m}^B.
\ee 
This and the definition of the spin-raising differential  operator $\eth$
given in \eqref{ethdef} means that we can write the holomorphic star product parametrized by the weight $\lambda:=-\frac{r}{\hbar}$ more concisely as 
\be 
f\star_H g =\sum_{n=0}^\infty \frac{(-1)^n }{n!} \frac{(\bar{\eth}^n f)\, (\eth^n g)   }{\prod_{p=0}^{n-1} (\lambda+p)}.
\ee 
Using the commutation relations \eqref{commr} we see that  the operators $X=\eth$, $Y=-\bar\eth$, $H=2\wh{s}$, where $\wh{s}$ is the operator that measures the spin of the observable, form an $\mathfrak{su}(2)$ algebra
\be
[H,X]=2X, \qquad [H,Y]=-2Y,\qquad [X,Y]=H.
\ee 
It is usually convenient to formalize the construction of the star  product as resulting from the composition of the multiplication operator of functions $m: C(S)\times C(S) \to C(S)$ with the deformation operator 
${\cal F}:C(S)\times C(S)\to C(S)\times C(S)$. The deformation operator encodes the non-triviality of the star product.
The star product can therefore be written in an algebraic form  as
$F \star_H G = m [{\cal F}_H( F\otimes G)]$ where 
\begin{eqaligned}
    {\cal F}_H&= \sum_{n=0}^\infty \frac{1 }{n!} \frac{Y^n\otimes X^n   }{\prod_{p=0}^{n-1} (\lambda+p)}
    \\
    &= 1 + \lambda^{-1} Y\otimes X + \frac{\lambda^{-2}}2 Y^2\otimes X^2
    + \frac{\lambda^{-3}}6( Y^3\otimes X^3 - 3 Y^2\otimes X^2)+\mcal{O}(\lambda^{-4}).
\end{eqaligned}
In this representation, the weight $\lambda$ is the eigenvalue of $-H$ on the SL$(2)$ module representing the sphere sections \cite[Equation (6)]{AlekseevLachowska}.

Other equivalent star-product which are covariant under the rotation group can be obtained after a reparameterization $F\to \theta[F]$ where $\theta= 1 + \sum_{n=1}^\infty \frac{\hbar^n}{n!} \theta_n$ and $\theta_n$ is a differential operator of order $n$ which is invariant under rotation\footnote{This means that $\theta$ is a function of the Laplacian operator
\begin{equation*}
    \nabla^2:=q^{AB}\nabla_A\nabla_B =\frac12(\nabla\bar\nabla+\bar\nabla \nabla),
\end{equation*}
on the sphere.} and  invertible.
The new star product defined by $\theta$ from the holomorphic product is $\theta[F\star_\theta G] = \theta[F]\star \theta[G]$ which in terms of the  operation $\cal F$ means 
\be
f\star_\theta g = m[ {\cal F}_\theta ( f\otimes g)], \qquad 
{\cal F}_\theta = (\Delta \theta)^{-1} {\cal F}_H (\theta\otimes \theta),
\ee 
where $\Delta$ is the coproduct\footnote{The coproduct is a morphism of differential operators $\Delta (D_1 D_2) =\Delta(D_1)\Delta(D_2)$ such that  $D m(F\otimes G)=m(\Delta(D) F\otimes G)$ for and differential operator $D$.} of differential operators $\Delta (\nabla_A) = \nabla_A \otimes \mathbbm{1} + \mathbbm{1}\otimes \nabla_A$ and $m$ is the multiplication of functions.
An interesting subclass of $\star$-products are the parity symmetric ones which are such that $C_{n}(f,g)=(-1)^n C_n(g,f)$. The parity symmetric star product are such that the star commutator $[f,g]= f\star g -g\star f$ and the star-symmetrized product $f\circ g = \frac12 (f\star g +g\star f)$ only involves even powers of $\hbar$. We have seen in section \ref{app:parity} that the $6j$ star product arising from the fuzzy sphere
is parity symmetric.
We can now evaluate the holomorphic star product on a basis. Using that 
\be 
{\bar\eth}^n Y_{\alpha}=(-1)^n [A]_n Y^{-n}_{\beta },
\qquad 
{\eth}^n Y_{\beta}=[B]_n Y^{n}_{\beta }.
\ee 
We find that the star product
\bea \label{Nomurastar}
\int_S ( Y_{\alpha} \star_\theta Y_{\beta})Y_{\gamma} &=&\frac{\theta(A)\theta(B)}{\theta(C)}
\sum_{n=0}^\infty \frac{[A]_n [B]_n}{n! \prod_{p=0}^{n-1} (\lambda+p)} \int_S ( Y_{\alpha}^{-n}  Y_{\beta}^{n} Y_{\gamma}),\cr
&=& \begin{pmatrix}
    A & B&C \\ a & b & c
    \end{pmatrix}
    \frac{\theta(A)\theta(B)}{\theta(C)} 
    \sum_{n=0}^\infty \frac{\Gamma(\lambda) }{n! \Gamma(\lambda +n)}[A]_n [B]_n
    \begin{pmatrix}
    A&B&C \\  n & \mm n & 0
    \end{pmatrix},
\eea 
where we use equation \eqref{eqn:tripint} in the last equality. 
We see that, quite remarkably, one recovers the Nomura expression \eqref{eqn:6jexp} provided we chose 
\be \label{eqn:lambdatheta}
\lambda = N+1, \qquad 
\theta(A) = \sqrt{ \frac{\Gamma(\lambda+A)}{\Gamma(\lambda-A)}}.
\ee 
Note that  the infinite sum truncates since $[A]_n=0$ when $A>n$.

It is curious that the specific choice (\ref{eqn:lambdatheta})
for $\theta(A)$ is needed to reproduce the $\star$-product derived 
from the $6j$ symbol described in section \ref{sec:fuzH}.  We have seen in \eqref{eqn:C2diffop} that the $6j$ star product is parity symmetric to all orders in $\hbar$. It would be interesting to  have a different star product derivation of this property. More generally, it would be really interesting to have 
an independent argument for choosing the form \eqref{eqn:lambdatheta} for
 $\theta$ and relating it to the $6j$-symbol. Such an argument would give us an independent  derivation of the Nomura identity
\cite{nomura1989description}.  Note that the star product algebra described here appears in the physics literature as a higher spin symmetry algebra called $hs[\lambda]$ 
\cite{EnriquezRojoProchazkaSachs202105, Vasiliev:1989re, Bordemann:1989zi,
Bergshoeff:1989ns}.

\section{Structure constants for $\mfk{diff}(S^2)$}\label{app:diffsc}
In this appendix we show how to parameterize the $\mfk{diff}(S^2)$ Lie
algebra and structure constants 
in terms of two functions on the sphere, and also derive explicit
expressions for the structure constants in a spherical harmonic basis.  
Working in this basis ensures that all generators correspond to 
smooth vector fields on the sphere.  The results on the explicit
form of these structure constants 
given in equations (\ref{eqn:BB}), (\ref{eqn:BE}), and (\ref{eqn:EE})
are novel (although see \cite{EnriquezRojoProchazkaSachs202105} 
for some partial results).  
In particular, they differ from treatments such as 
\cite{Schwarz:2022dqf} based on commuting holomorphic and antiholomorphic
subalgebras, most of whose generators possess singularities on the sphere.

We begin by fixing a round metric $q_{AB}$ on the sphere, which also determines 
a preferred volume form $\epsilon_{AB}$, as in equations
(\ref{eqn:qAB}) and (\ref{eqn:epAB}).   We can then decompose an arbitrary vector $\xi^A$
on the sphere into a curl and a gradient according to the Hodge decomposition,
\beq
\xi^A = \ep^{BA}\nabla_B \phi_\xi + q^{BA}\nabla_B \psi_\xi.
\eeq
Given an arbitrary vector $\xi^A$, its constituent functions $(\phi_\xi, \psi_\xi)$
can be determined according to the equations
\begin{align}
\psi_\xi &= \frac{1}{\nabla^2} \nabla_A \xi^A \label{eqn:psixi},\\
\phi_\xi &= \frac{1}{\nabla^2} \nabla_A\left(\epsilon\ind{^A_B}\xi^B\right).
\label{eqn:phixi}
\end{align}
Here, $\frac{1}{\nabla^2}$ is the operator that inverts the Laplacian 
$\nabla^2 = q^{AB}\nabla_A \nabla_B$ associated
with the metric $q_{AB}$ on the sphere.  Since the constant functions
lie in the kernel of $\nabla^2$, the inverse $\frac{1}{\nabla^2}$ is defined 
to produce a function with no constant piece, which we take to mean a function
that integrates to zero with respect to the volume form $\epsilon_{AB}$.  
In the spherical harmonic basis $Y_{lm}$, this space 
of functions is spanned by all harmonics
with $l\geq 1$.  

We can therefore decompose the space of all vectors on the sphere into 
subspaces of pure curl and pure gradient vectors: 
\begin{align}
\ap_\phi^A &= \ep^{BA}\nabla_B \phi, \\
\ac_{\phi}^A &= q^{BA}\nabla_B \phi.
\end{align}
We will call the $\ap_\phi^A$ vectors ``magnetic'' and the $\ac_\phi^A$ vectors 
``electric'', in line with their properties under parity transformations
\cite{Thorne1980}.  Note that the tensor $\ep\ind{^A_B}$ defines an
integrable almost complex structure that maps magnetic and electric vectors
into each other according to 
\begin{align} \label{eqn:epJ}
\ep \cdot \ap_\phi = \ac_\phi, \qquad \ep\cdot \ac_\phi = - \ap_\phi.
\end{align}

The effect of multiplying the generators $\ap_\phi^A$ and $\ac_\phi^A$ by 
a scalar function can be expressed in terms of the antisymmetric and 
symmetric brackets defined in  (\ref{eqn:pb}), (\ref{eqn:sb}):
\begin{lem} \label{lem:scalarmult}
Multiplication by a scalar function $\lambda$
acts on the vectors $\ap_\phi$, $\ac_\phi$ 
according to
\begin{align}
\lambda \ap_\phi &= \ap_{\frac{1}{\nabla^2} \big(\langle \lambda,\phi\rangle + \lambda
\nabla^2 \phi\big)} - \ac_{\frac{1}{\nabla^2} \{ \lambda,\phi\} },
\label{eqn:lambdaJ}\\
\lambda \ac_\phi &= \ap_{\frac{1}{\nabla^2} \{\lambda,\phi\} } 
+ \ac_{\frac{1}{\nabla^2} \big(\langle \lambda,\phi\rangle + \lambda
\nabla^2 \phi\big)}.
\label{eqn:lambdaK}
\end{align}
\end{lem}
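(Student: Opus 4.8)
The plan is to apply the Hodge decomposition formulas (\ref{eqn:psixi}) and (\ref{eqn:phixi}) directly to the product vectors $\lambda\ap_\phi$ and $\lambda\ac_\phi$, reading off their electric (gradient) and magnetic (curl) potentials. Because $\lambda\ap_\phi$ and $\lambda\ac_\phi$ are smooth vector fields on the closed sphere, the divergences appearing inside $\frac{1}{\nabla^2}$ integrate to zero against $\epsilon_{AB}$, so the inverse Laplacian is well-defined on each and the decomposition is unambiguous. The whole proof is then just a Leibniz-rule computation in which the scalar $\lambda$ passes freely through $\epsilon\indices{^A_B}$ and the only surviving contributions come from the derivative acting on $\lambda$.

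First I would establish the magnetic identity. For the electric potential of $\lambda\ap_\phi$, formula (\ref{eqn:psixi}) gives $\psi=\frac{1}{\nabla^2}\nabla_A(\lambda\ep^{BA}\nabla_B\phi)$; upon expanding, the term in which $\nabla_A$ hits $\nabla_B\phi$ vanishes since $\ep^{BA}$ is antisymmetric while $\nabla_A\nabla_B\phi$ is symmetric, leaving $\ep^{BA}(\nabla_A\lambda)(\nabla_B\phi)=-\{\lambda,\phi\}$ by (\ref{eqn:pb}). This produces exactly the $-\ac_{\frac{1}{\nabla^2}\{\lambda,\phi\}}$ term. For the magnetic potential I would use (\ref{eqn:phixi}) together with the almost-complex-structure relation $\epsilon\indices{^A_B}\ap_\phi^B=\ac_\phi^A=q^{CA}\nabla_C\phi$ from (\ref{eqn:epJ}); the Leibniz rule then yields $\langle\lambda,\phi\rangle+\lambda\nabla^2\phi$ by (\ref{eqn:sb}), matching the claimed $\ap$ potential.

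The electric identity follows by the same argument with the roles of the two brackets interchanged. The electric potential of $\lambda\ac_\phi$ comes from $\nabla_A(\lambda q^{BA}\nabla_B\phi)$ and directly gives $\frac{1}{\nabla^2}(\langle\lambda,\phi\rangle+\lambda\nabla^2\phi)$, while for the magnetic potential I would invoke $\epsilon\indices{^A_B}\ac_\phi^B=-\ap_\phi^A$ from (\ref{eqn:epJ}), after which the antisymmetry of $\ep$ kills the $\lambda\nabla\nabla\phi$ piece and the surviving term is $+\frac{1}{\nabla^2}\{\lambda,\phi\}$. There is no serious obstacle here beyond bookkeeping: the two points requiring care are (i) the sign and index placement in $\ep^{BA}\nabla_A\lambda\nabla_B\phi=-\{\lambda,\phi\}$ relative to the definition (\ref{eqn:pb}), and (ii) correctly using (\ref{eqn:epJ}) to rotate a curl-type vector into a gradient-type vector (and vice versa) before taking the final divergence.
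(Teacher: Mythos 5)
Your proof is correct and takes essentially the same route as the paper's: both extract the electric and magnetic potentials of $\lambda\ap_\phi$ and $\lambda\ac_\phi$ directly via (\ref{eqn:psixi}) and (\ref{eqn:phixi}), using (\ref{eqn:epJ}) to rotate curl-type into gradient-type vectors and the Leibniz rule plus the antisymmetry of $\ep^{AB}$ to isolate the $\{\lambda,\phi\}$ and $\langle\lambda,\phi\rangle+\lambda\nabla^2\phi$ terms, with signs handled correctly. The only differences are cosmetic: the paper computes the $\lambda\ap_\phi$ case explicitly and notes the other follows analogously (or by acting with $\ep$), whereas you spell out both cases and add the valid observation that the relevant divergences integrate to zero on the closed sphere, so $\frac{1}{\nabla^2}$ is well-defined.
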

\begin{proof}
These identities come from applying (\ref{eqn:psixi}) and (\ref{eqn:phixi})
to extract the electric and magnetic potentials of the resulting vector:
\begin{align}
\nabla_A(\lambda \ep\ind{^A_B} \ap_\phi^B) &= \nabla_A(\lambda \nabla^A\phi)
= \langle\lambda,\phi\rangle + \lambda\nabla^2\phi,
\\
\nabla_A(\lambda \ap_\phi^A) &= \ep^{BA}\nabla_B\phi \nabla_A\lambda = 
-\{\lambda,\phi\},
\end{align}
which then leads to (\ref{eqn:lambdaJ}).  An analogous computation 
leads to (\ref{eqn:lambdaK}), which can also be obtained by acting 
with the complex structure $\ep$ on (\ref{eqn:lambdaJ}) and using 
(\ref{eqn:epJ}).
\end{proof}

Before computing the Lie brackets of these vector fields, we will
need some identities satisfied by the brackets $\langle \cdot , \cdot \rangle$
and $\{\cdot,\cdot\}$:
\begin{lem} \label{lem:brackets}
The brackets $\langle \cdot , \cdot \rangle$
and $\{\cdot,\cdot\}$ satisfy 
\begin{align}
\{\phi,\{\psi,\lambda\}\} - \{\psi,\{\phi,\lambda\}\} &= \{\{\phi,\psi\},\lambda\},  \label{eqn:pbpb}  
\\
\{ \phi,\langle\psi,\lambda\rangle \} - \langle\psi,\{\phi,\lambda\}\rangle
&=-\{\langle \phi,\psi\rangle, \lambda\} + \{\phi,\lambda\} \nabla^2\psi,
\label{eqn:sbpb} \\
\langle\phi,\langle \psi,\lambda\rangle\rangle 
-\langle\psi,\langle\phi,\lambda\rangle\rangle
&=\{\{\phi,\psi\},\lambda\} +\langle\phi,\lambda\rangle \nabla^2 \psi
-\langle\psi,\lambda\rangle\nabla^2\phi.
\label{eqn:sbsb}
\end{align}
\end{lem}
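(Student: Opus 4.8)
The plan is to recognize each left-hand side as the action on $\lambda$ of a vector-field commutator and then to compute those commutators explicitly on the round sphere. Writing $\ap_\phi^A = \ep^{BA}\nabla_B\phi$ for the magnetic (Hamiltonian) vector and $\ac_\phi^A = \nabla^A\phi$ for the electric (gradient) vector, these act on a function by $\ap_\phi^A\nabla_A\lambda = \{\phi,\lambda\}$ and $\ac_\phi^A\nabla_A\lambda = \langle\phi,\lambda\rangle$, using the definitions \eqref{eqn:pb} and \eqref{eqn:sb}. Hence the three left-hand sides are exactly $[\ap_\phi,\ap_\psi]\lambda$, $[\ap_\phi,\ac_\psi]\lambda$, and $[\ac_\phi,\ac_\psi]\lambda$, where the bracket is the vector-field commutator acting as a derivation. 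It therefore suffices to establish the three vector-field identities
\begin{align*}
[\ap_\phi,\ap_\psi]^A &= \ap_{\{\phi,\psi\}}^A, \\
[\ap_\phi,\ac_\psi]^A &= -\ap_{\langle\phi,\psi\rangle}^A + (\nabla^2\psi)\,\ap_\phi^A, \\
[\ac_\phi,\ac_\psi]^A &= \ap_{\{\phi,\psi\}}^A + (\nabla^2\psi)\,\ac_\phi^A - (\nabla^2\phi)\,\ac_\psi^A,
\end{align*}
after which contracting each with $\nabla_A\lambda$ reproduces \eqref{eqn:pbpb}, \eqref{eqn:sbpb}, and \eqref{eqn:sbsb} respectively. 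The first of these is just the statement that Hamiltonian vector fields close, i.e.\ the Jacobi identity for the Poisson bracket, so I would dispatch it immediately and focus on the other two.

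To compute the commutators I would expand $[\,\cdot,\cdot\,]^A = X^C\nabla_C Y^A - Y^C\nabla_C X^A$ directly in covariant derivatives. The only three ingredients needed are: (i) metric compatibility $\nabla_A\ep_{BC}=0$, so that derivatives pass freely through the $\ep$ tensor; (ii) the symmetry of the scalar Hessian $\nabla_A\nabla_B f = \nabla_B\nabla_A f$; and (iii) the two-dimensional algebraic identities for $\ep$, namely $\ep^{AB}\ep^{CD} = q^{AC}q^{BD}-q^{AD}q^{BC}$ together with the Schouten relation $\ep^{BC}\delta^A_E + \ep^{CA}\delta^B_E + \ep^{AB}\delta^C_E = 0$ (valid since one cannot antisymmetrize three indices in two dimensions). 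For the electric–electric bracket I would expand $[\ac_\phi,\ac_\psi]^A$, separately expand $\ap_{\{\phi,\psi\}}^A=\ep^{CA}\nabla_C\{\phi,\psi\}$ using the $\ep\ep$ identity, and subtract; the Hessian cross-terms cancel and the trace parts of $\ep\ep\to qq$ produce exactly the $(\nabla^2\psi)\ac_\phi^A-(\nabla^2\phi)\ac_\psi^A$ remainder. For the magnetic–electric bracket the free index initially sits on a Hessian, so I would use the Schouten identity to move it onto an $\ep$ factor, which generates the $(\nabla^2\psi)\ap_\phi^A$ term and converts the rest into $-\ap_{\langle\phi,\psi\rangle}^A$.

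A point worth emphasizing, and the thing I would check most carefully, is that no Riemann curvature term appears even though the sphere is curved: at every stage one only ever commutes covariant derivatives acting on scalars, never on a vector, so the calculation is curvature-free and the $\nabla^2$ terms arise purely from the two-dimensional $\ep$ algebra rather than from the Gaussian curvature. (A holomorphic-frame proof using $\eth,\bar\eth$ is also possible, but there the curvature resurfaces through $[\bar\eth,\eth]$ in \eqref{commr}, making the covariant route cleaner.) The main obstacle is thus not conceptual but the index bookkeeping in the second and third identities—keeping the many gradient$\times$Hessian terms organized and applying the Schouten identity so that the $(\nabla^2\phi)$ and $(\nabla^2\psi)$ pieces emerge with the correct coefficients and signs. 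I would guard against errors by verifying the final bracket identities on a flat patch against low-degree polynomials before committing to the covariant derivation.
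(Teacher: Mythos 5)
Your proposal is correct, but it inverts the logical order of the paper's argument. The paper proves the three scalar identities directly---\eqref{eqn:pbpb} by expanding $\{\{\phi,\psi\},\lambda\}$ and regrouping, \eqref{eqn:sbpb} via the slick observation that $\langle\psi,\{\phi,\lambda\}\rangle = \lie_{\ac_\psi}\big(\ep^{AB}\nabla_A\phi\,\nabla_B\lambda\big)$ together with $\lie_{\ac_\psi}\ep^{AB} = -\nabla^2\psi\,\ep^{AB}$, and \eqref{eqn:sbsb} via the Hessian symmetrization identity \eqref{eqn:4derivident} combined with $\ep^{AB}\ep^{CD}=q^{AC}q^{BD}-q^{AD}q^{BC}$---and only \emph{then} reads off the vector-field brackets \eqref{eqn:JJ}--\eqref{eqn:KK} as corollaries, since the left-hand sides of the lemma are precisely the commutators of the derivations $\ap_\phi$, $\ac_\phi$ acting on $\lambda$. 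You instead establish \eqref{eqn:JJ}--\eqref{eqn:KK} first by direct covariant index computation and then contract with $\nabla_A\lambda$; this is not circular, because you derive the brackets independently rather than importing them from the text, and your claimed bracket formulas match \eqref{eqn:JJ}--\eqref{eqn:KK} exactly. Your technical devices are equivalent to the paper's in disguise: the two-dimensional Schouten identity $\ep^{BC}\delta^A_E+\ep^{CA}\delta^B_E+\ep^{AB}\delta^C_E=0$, contracted with $\nabla_B\phi\,\nabla_C\nabla^E\psi$, reproduces precisely the $-\ap_{\langle\phi,\psi\rangle}+(\nabla^2\psi)\,\ap_\phi$ structure that the paper obtains from $\lie_{\ac_\psi}\ep^{AB}=-\nabla^2\psi\,\ep^{AB}$, and your $\ep\ep\to qq$ expansion of $\ap^A_{\{\phi,\psi\}}$ for the electric--electric case indeed leaves only the Hessian cross-terms plus the two Laplacian remainders, as one checks directly. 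Your emphasis that no Riemann tensor can appear---since covariant derivatives are only ever commuted on scalars---is also correct and is implicit in the paper's computations. What your route buys is uniformity (all three identities fall out of one mechanism, commutators of explicit vector fields) at the cost of heavier index bookkeeping; the paper's route buys shorter individual proofs, with the Lie-derivative trick packaging your Schouten manipulation into a single geometric statement.
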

\begin{proof}
We handle each case separately:

\begin{itemize} 
\item Proof of (\ref{eqn:pbpb}): This identity is simply the statement of the 
Jacobi identity for the Poisson bracket $\{\cdot, \cdot\}$. We can explicitly check it as follows: 
\begin{equation}
    \begin{aligned}
\{\{\phi,\psi\},\lambda\} 
&=
\ep^{AB}\ep^{CD}\nabla_C(\nabla_A\phi \nabla_B\psi)\nabla_D\lambda 
\\
&=
\ep^{AB}\ep^{CD}\big(\nabla_C\nabla_A \phi \nabla_B \psi \nabla_D\lambda
+ \nabla_A\phi \nabla_C\nabla_B\psi\nabla_D\lambda\big) 
\\
&=
\ep^{AB}\ep^{CD}\big(
\del_A\phi \del_B(\del_C\psi\del_D\lambda)
- \del_A\phi \del_C\psi \del_B\del_D\lambda
\\
&\hphantom{=\ep^{AB}\ep^{CD}\big( }
+\del_B\psi\del_A(\del_C\phi \del_D\lambda)
-\del_B\psi \del_C\phi \del_A\del_D\lambda
\big)
\\
&=
\{\phi,\{\psi,\lambda\}\} - \{\psi,\{\phi,\lambda\}\}.
\end{aligned}
\end{equation}
\item Proof of (\ref{eqn:sbpb}):
This can be derived straightforwardly by first evaluating the 
Lie derivative of $\ep^{AB}$ with respect to the vector $\ac_\psi$.  
Since the resulting tensor remains antisymmetric in its indices, we must 
have $\lie_{\ac_\psi}\ep^{AB} = \alpha \ep^{AB}$ with $\alpha = \frac12
\ep_{AB}\lie_{\ac_\psi}\ep^{AB} = -\frac12\ep^{AB}\lie_{\ac_\psi}\ep_{AB}$.
Then since $\lie_{\ac_\psi}\ep_{AB} = (\div K_\psi) \ep_{AB} 
= \del^2 \psi \,\ep_{AB}$, we conclude that $\lie_{\ac_\psi}\ep^{AB} = 
-\nabla^2\psi\, \ep^{AB}$.  We then evaluate the nested brackets
\begin{equation}
    \begin{aligned}
\langle \psi,\{\phi,\lambda\}\rangle
&=
\lie_{\ac_\psi} \big(\ep^{AB}\nabla_A \phi \nabla_B\lambda\big)
\\
&=
\big(\lie_{\ac_\psi} \ep^{AB} \big) \del_A \phi \del_B\lambda
+\ep^{AB}\big( \nabla_A (\lie_{\ac_\psi} \phi) \nabla_B \lambda
+ \nabla_A\phi \nabla_B (\lie_{\ac_\psi}\lambda)\big)
\\
&=
-\nabla^2\psi\,\{\phi,\lambda\} +\{\langle \psi,\phi\rangle,\lambda\}
+\{ \phi, \langle\psi,\lambda\rangle\}.
\end{aligned}
\end{equation}
\item Proof of (\ref{eqn:sbsb}): To derive this, 
we first work out an expression for $\del^A \del^B \phi \del_A\psi
\del_B\lambda$ in terms of the symmetric brackets:
\begin{equation}
    \begin{aligned}
\del^A\del^B\phi \del_A\psi \del_B\lambda
&=
\del^A(\del^B\phi\del_B\lambda)\del_A\psi - 
\del^B\phi\del_A\psi \del^A\del_B\lambda \\
&=
\langle\langle \phi,\lambda\rangle,\psi\rangle
-\langle\phi,\langle\psi,\lambda\rangle\rangle
+ \del^B\phi \del_B\del_A\psi\del^A\lambda  \\
&=
\langle\langle \phi,\lambda\rangle,\psi\rangle
-\langle\phi,\langle\psi,\lambda\rangle\rangle
+\langle\langle \phi,\psi \rangle,\lambda\rangle
-\del^A\del^B\phi\del_B\psi \del_A\lambda, 
\end{aligned}
\end{equation}
and hence 
\beq \label{eqn:4derivident}
\del^A\del^B\phi \del_A\psi \del_B\lambda
=\frac12 \Big(
\langle\langle \phi,\lambda\rangle,\psi\rangle
+\langle\langle \phi,\psi \rangle,\lambda\rangle
-\langle\phi,\langle\psi,\lambda\rangle\rangle
\Big).
\eeq
We can then evaluate the nested bracket $\{\{\phi,\psi\},\lambda\}$
by first exploiting the relation $\ep^{AB}\ep^{CD} = q^{AC}q^{BD}-q^{AD}q^{BC}$,
and then applying (\ref{eqn:4derivident}) to obtain
\begin{equation}
    \begin{aligned}
\{\{\phi,\psi\},\lambda\}
&=
\ep^{AB}\ep^{CD}\del_A(\del_C\phi \del_D\psi) \del_B\lambda 
\\
&=
(q^{AC}q^{BD}-q^{AD}q^{BC})\del_A(\del_C\phi \del_D\psi) \del_B\lambda
\\
&=
\del^2 \phi \langle\psi,\lambda\rangle
+\del^A\phi \del_A\del^B\psi \del_B\lambda
-\del_A\del^B\phi \del^A\psi\del_B\lambda 
-\langle\phi,\lambda\rangle \del^2\psi
\\
&=
\langle  \phi, \langle\psi,\lambda\rangle\rangle
-\langle \psi,\langle\phi,\lambda\rangle\rangle
+ \langle \psi,\lambda\rangle \del^2\phi - \langle \phi,\lambda\rangle\del^2\psi.
\end{aligned}
\end{equation}
\end{itemize}
\end{proof}

The  Lie brackets of $\ap_\phi$ and $\ac_\psi$
can now be computed by examining how these vectors act on scalar
functions.  Any vector field acts as a derivation on the space of functions,
and these derivations can be expressed in terms of the brackets,
\begin{align}
\ap_\phi (\lambda) &= \ap_\phi^A\nabla_A \lambda = \{\phi, \lambda\},
\\
\ac_\phi (\lambda) &= \ac_\phi^A\nabla_A\lambda = \langle \phi,\lambda\rangle.
\end{align}
The Lie bracket of two vector fields is then given by the commutator of the 
two associated derivations acting on a function.  The left hand 
side of the bracket identities in Lemma \ref{lem:brackets} expresses the 
three options for these commutators, and the expressions on the right hand
side give the equivalent derivation acting on the function $\lambda$.  
Hence, these identities immediately allow us to write down expressions for 
the Lie brackets of the vector fields:
\begin{align}
[\ap_\phi, \ap_\psi] &= \ap_{\{\phi,\psi\}}, \label{eqn:JJ}\\
[\ap_\phi, \ac_\psi] &= -\ap_{\langle \phi,\psi\rangle} + \nabla^2\psi \, \ap_\phi,
\label{eqn:KJ} \\
[\ac_\phi,  \ac_\psi] &= \ap_{\{\phi, \psi\} } +\nabla^2\psi\,\ac_\phi
- \nabla^2\phi \,\ac_\psi.
\label{eqn:KK}
\end{align}
Then using the identities in Lemma \ref{lem:scalarmult} for multiplication
of a vector by a scalar function, we can reduce equations (\ref{eqn:KJ})
and (\ref{eqn:KK}) to
\begin{align}
[\ap_\phi, \ac_\psi] &= \ac_{\frac{1}{\nabla^2} \{\phi, \nabla^2\psi\} }
- \ap_{\langle \phi,\psi\rangle -\frac{1}{\nabla^2} \big(
\langle\phi,\nabla^2\psi\rangle + \nabla^2\phi\nabla^2\psi\big)}, 
\label{eqn:JKred}\\
[\ac_\phi, \ac_\psi] &= \ap_{\{\phi, \psi\} 
-\frac1{\nabla^2}\big( \{\nabla^2\phi, \psi\} +\{\phi,\nabla^2\psi\}\big) }
+\ac_{\frac{1}{\nabla^2}\big( \langle\phi,\nabla^2\psi\rangle 
-\langle \nabla^2\phi,\psi\rangle\big)}.
\label{eqn:KKred}
\end{align}

We can now explicitly parameterize the generators and structure 
constants by decomposing the potentials $(\phi, \psi)$ in a spherical
harmonic basis.  Following the conventions and notation of 
appendix \ref{sec:sh}, we let $Y_\alpha$ denote a spherical
harmonic with $\alpha = (A,a)$ denoting its total angular momentum $A$
and magnetic quantum number $a$, with $-A\leq a\leq A$. 
We then employ the shorthand $\ap_\alpha = \ap_{Y_\alpha}$, $\ac_\alpha = \ac_{Y_\alpha}$ to denote the 
 generators in the spherical harmonic basis.  Since the Lie brackets (\ref{eqn:JJ}), (\ref{eqn:JKred}), 
 and (\ref{eqn:KKred}) are expressed in terms of the Poisson bracket, 
 symmetric bracket, and product of functions, we can express the 
 right hand sides of these equations using the structure constants
 for these operations given in equations (\ref{eqn:Cabc}), (\ref{eqn:Gabc}),
and (\ref{eqn:Eabccontra}).  Recalling also that $\nabla^2 Y_\alpha
= -\la{A} Y_\alpha$, 
$\frac{1}{\nabla^2} Y_\alpha = \frac{-1}{\la{A} } Y_\alpha$
with $\la{A} = A(A+1)$ defined in (\ref{eqn:A1}), the structure constants 
are immediately found to be
\begin{align}
[\ap_\alpha, \ap_\beta] &= C\ind{_\alpha_\beta^\gamma} \ap_\gamma,
\label{eqn:BB}
\\
[\ap_\alpha, \ac_\beta] &=
\frac{\la B}{\la C} C\ind{_\alpha_\beta^\gamma} \ac_\gamma
+\frac{1}{\la C} \left((\la B-\la C) G\ind{_\alpha_\beta^\gamma}
-\la A \la B E\ind{_\alpha_\beta^\gamma} \right) \ap_\gamma, \label{eqn:BE}
\\
[\ac_\alpha,\ac_\beta] &=
\frac{\la C- \la A - \la B}{\la C}
C\ind{_\alpha_\beta^\gamma} \ap_\gamma
+\frac{\la B- \la A}{ \la C} G\ind{_\alpha_\beta^\gamma}\ac_\gamma,
\label{eqn:EE}
\end{align}

Restricting these relations to $A = B = 1$, we find that the algebra 
of the six generators $(B_{1a}, E_{1a})$ closes, and reduces to
\begin{align}
[B_{1a}, B_{1b}] &= C\indices{_{(1a)}_{(1b)}^{(1c)}} B_{1c}, 
\\
[B_{1a}, E_{1b}] &=  C\indices{_{(1a)}_{(1b)}^{(1c)}} E_{1c},
\\
[E_{1a}, E_{1b}] &=  -C\indices{_{(1a)}_{(1b)}^{(1c)}} B_{1c}.
\end{align}
This algebra is readily recognized as the $\mfk{sl}(2,\mathbb{C})$
subalgebra of $\mfk{diff}(S^2)$ consisting of the six globally defined conformal
Killing vectors of $S^2$.  

Finally, it is  interesting to note that we can form a new 
holomorphic basis for this algebra by forming combinations
of $B_\alpha$ and $E_\alpha$ that are eigenvectors for 
the complex structure $\epsilon\indices{^A_B}$.  These are given by
$F_\alpha^+ = E_\alpha + i B_\alpha$ and $F_\alpha^-
 =  -E_\alpha + i B_\alpha$
, which satisfy
$\epsilon\cdot F_\alpha^\pm = \pm i F_\alpha^\pm$. The 
$F_\alpha^+$ form a subalgebra within (the complexification of) the
full diffeomorphism algebra,
as do $F_\alpha^-$, which follows from the fact that the 
complex structure $\epsilon\indices{^A_B}$ is integrable.\footnote{Recall
that an integrable complex structure is one in which the Nijenhuis 
tensor $N\indices{^A_B_C}$ vanishes.  This tensor is defined by the 
relation
\beq
N(X,Y) = [X,Y] + \epsilon \cdot([\epsilon \cdot X, Y] + [X, \epsilon \cdot Y])
-[\epsilon \cdot X, \epsilon \cdot Y],
\eeq
where $X, Y$ are vectors, and the brackets are vector field Lie brackets.  
The vanishing of this tensor implies that eigenvectors 
$X^\pm$ of the complex structure $\epsilon$ form a subalgebra, i.e.
\beq
\epsilon[X^\pm, Y^\pm] = \pm i [X^\pm, Y^\pm].
\eeq
}

The algebra in the $(F_\alpha^+, F_\alpha^-)$ basis can be computed following a
similar method as in the $(B_\alpha, E_\alpha)$ basis.  Given an arbitrary 
scalar function $\phi$, we can construct holomorphic
and antiholomorphic vector fields $F_\phi^\pm = \pm E_\phi +iB_\phi$.  These
vector fields act on functions as derivations, and this action can be 
equivalently expressed in terms of two new brackets for scalar functions
$(\cdot,\cdot)_\pm$, defined by 
\beq
(\phi,\psi)_\pm = \pm\langle\phi,\psi\rangle +i\{\phi,\psi\}.
\eeq
Note that these are neither symmetric nor antisymmetric, but instead
satisfy
\beq
(\phi,\psi)_+ = -(\psi,\phi)_-.
\eeq
The action of the vector fields on a function $\lambda$ can then be expressed 
as 
\beq
F_\phi^\pm(\lambda) = (\phi,\lambda)_\pm.
\eeq

Next we note that the multiplication of $F_\phi^\pm$ by a scalar function
can be derived from the relations (\ref{eqn:lambdaJ}), (\ref{eqn:lambdaK}),
and leads to 
\beq \label{eqn:Fpmscalarmult}
\lambda F_\phi^\pm = F^\pm_{\frac1{\nabla^2}\big(\mp(\lambda,\phi)_\mp + 
\lambda \nabla^2\phi\big)}.
\eeq
Interestingly, unlike the vectors $B_\phi, E_\phi$, scalar multiplication
maps the set of vectors $F_\phi^+$ into themselves, and similarly for 
$F_\phi^-$.

The Lie bracket of these vector fields is most straightforwardly obtained 
by computing the nested relations of the $(\cdot,\cdot)_\pm$ brackets.  
These follow from the relations in Lemma \ref{lem:brackets}, and lead to
\begin{align}
\Big(\phi,(\psi,\lambda)_+\Big)_+ - \Big(\psi,(\phi,\lambda)_+\Big)_+
 &= \nabla^2\psi (\phi,\lambda)_+ - \nabla^2\phi(\psi,\lambda)_+, 
 \\
\Big(\phi,(\psi,\lambda)_-\Big)_- - \Big(\psi,(\phi,\lambda)_-\Big)_-
&= -\nabla^2\psi(\phi,\lambda)_- +\nabla^2\phi(\psi,\lambda)_-,
\\
\Big(\phi,(\psi,\lambda)_-\Big)_+ - \Big(\psi,(\phi,\lambda)_+\Big)_-
&= \Big((\phi,\psi)_+,\lambda\Big)_+ - \nabla^2\psi(\phi,\lambda)_+
+\Big((\phi,\psi)_+,\lambda\Big)_- - \nabla^2\phi(\psi,\lambda)_-.
\end{align}
These then imply relations for the brackets of the $F_\phi^\pm$ vector fields,
which can now be expressed as 
\begin{align}
[F_\phi^+, F_\psi^+] &= \nabla^2\psi\, F_\phi^+ - \nabla^2\phi\, F_\psi^+,
\\
[F_\phi^-, F_\psi^-] &= -\nabla^2\psi\, F_\phi^- + \nabla^2\phi\, F_\psi^-, 
\\
[F_\phi^+, F_\psi^-] &=
F^+_{(\phi,\psi)_+} - \nabla^2\psi\, F_\phi^+ + F^-_{(\phi,\psi)_+} - 
\nabla^2\phi\, F_\psi^-.
\end{align}
We then can apply the formula (\ref{eqn:Fpmscalarmult}) for scalar multiplication
acting on the vectors to derive
\begin{align}
[F_\phi^+, F_\psi^+] &= F^+_{\frac{1}{\nabla^2}\big((\phi,\nabla^2\psi)_+
+ (\nabla^2\phi,\psi)_-\big)}, \label{eqn:F+F+fxn}\\
[F_\phi^-, F_\psi^-] &= F^-_{\frac{1}{\nabla^2}\big((\phi,\nabla^2\psi)_-
+ (\nabla^2\phi,\psi)_+\big)}, \label{eqn:F-F-fxn}\\
[F_\phi^+, F_\psi^-] &=
F^+_{(\phi,\psi)_+ 
-\frac{1}{\nabla^2}\big((\phi,\nabla^2\psi)_+ + \nabla^2\phi\nabla^2\psi\big)}
+ F^-_{(\phi,\psi)_+
-\frac{1}{\nabla^2}\big((\nabla^2\phi,\psi)_+ +\nabla^2\phi\nabla^2\psi\big)}.
\label{eqn:F+F-fxn}
\end{align}
Note that these relations explicitly verify that the $F^+_\phi$ vectors 
form a subalgebra, as do the $F^-_\phi$ vectors, but these two 
subalgebras do not commute.

From these relations, the structure constants in the spherical harmonic basis
follow straightforwardly.  Defining the structure constants for the 
$(\cdot,\cdot)_\pm$ brackets according to $(Y_\alpha, Y_\beta)_\pm 
= \tensor[^\pm]{H}{_\alpha_\beta^\gamma}Y_\gamma$, we see that they are 
related to  $C\indices{_\alpha_\beta^\gamma}$ and $G\indices{_\alpha_\beta^\gamma}$
via
\beq
\tensor[^\pm]{H}{_\alpha_\beta^\gamma}
= \pm G\indices{_\alpha_\beta^\gamma} +iC\indices{_\alpha_\beta^\gamma}.
\eeq
Using the expressions (\ref{eqn:Cabc}) and (\ref{eqn:Gabc}) for the 
$G$ and $C$ structure constants and 
lowering an index with the metric $\delta_{\alpha\beta}$ defined 
by (\ref{eqn:shmetric}), 
we can express $\tensor[^\pm]{H}{_\alpha_\beta_\gamma}$
explicitly in terms of $3j$-symbols according to
\begin{equation}
\tensor[^\pm]{H}{_\alpha_\beta_\gamma} = \mp [A]_1[B]_1
\tj{A&B&C\\a&b&c} \tj{A&B&C\\\mp1&\pm1&0}.
\end{equation}
In terms of these, we can immediately translate the expressions
(\ref{eqn:F+F+fxn}), (\ref{eqn:F-F-fxn}), and (\ref{eqn:F+F-fxn})
into formulas for the structure constants in the spherical
harmonic basis:
\begin{align}
[F^+_\alpha, F^+_\beta] &=
\frac{1}{\la C}\left(\la{B}\, \tensor[^+]{H}{_\alpha_\beta^\gamma}
+\la{A}\, \tensor[^-]{H}{_\alpha_\beta^\gamma}\right)F^+_\gamma, \\
[F^-_\alpha, F^-_\beta] &=
\frac{1}{\la C}\left(\la{B}\, \tensor[^-]{H}{_\alpha_\beta^\gamma}
+\la{A}\, \tensor[^+]{H}{_\alpha_\beta^\gamma}\right)F^-_\gamma, \\
[F^+_\alpha, F^-_\beta] &=
\frac{1}{\la C}
\left(\big(\la C - \la B\big)\,\tensor[^+]{H}{_\alpha_\beta^\gamma}
+\la A \la B E\indices{_\alpha_\beta^\gamma}\right)F^+_\gamma \nonumber \\
&\;
+\frac{1}{\la C}
\left(\big(\la C - \la A\big)\,\tensor[^+]{H}{_\alpha_\beta^\gamma}
+\la A \la B E\indices{_\alpha_\beta^\gamma}\right)F^-_\gamma.
\end{align}

\section{Algebra deformation}\label{appsec:algebra deformation}

The deformation of symmetry algebras has a long and fruitful history in physics:
deformation of the abelian phase-space algebra into a centrally extended algebra with physical parameter $\hbar$ is at the core of the discovery of quantum mechanics.
Another type of physical deformation involves deforming 
a semi-direct product algebra into a semi-simple algebra which is much more regular; 
a standard mathematical reference on the theory of deformation of Poisson algebras, Lie algebras, and algebras is \cite{Flato:1995vm}. Two key examples are 
\begin{enumerate}
    \item the deformation of the Poincar\'e group  into the de Sitter group. The former can be obtained from the latter by contraction \cite{InonuWigner195306}. The deformation parameter is the cosmological constant $\Lambda$. 
    
    \item the deformation of the Galilean group into the Poincar\'e algebra. Again, the former can be obtained from the latter by contraction \cite{InonuWigner195306}. The deformation parameter is the inverse of the speed of light $c$.
\end{enumerate}
The corner symmetry algebra $\mfk{g}_{\mfk{sl}(2,\mbb{R})}$ is a semi-direct product. It is therefore natural to look for a deformation of $\mfk{g}_{\mfk{sl}(2,\mbb{R})}$ which is semi-simple. In this paper, we  focused on the deformation of $\mfk{sdiff}(S)$ and of the centralizer algebras
$\mfk{c}_{\mathbb{R}} = \sdiff \oplus_{\mcal L} \mathbb{R}^S$ and $\mfk{c}_{\slr} = \sdiff\oplus_{\mcal L}\slr^S$.

What we propose here is a deformation of $\mfk{c}_{\mfk{sl}(2,\mbb{R})}$, denoted $\mfk{c}_{\mfk{sl}(2,\mbb{R})}[\lambda]$, which will prove to be invaluable at the quantum level. The deformation parameter is a real parameter $\lambda$. It is a new constant that still needs to be interpreted in a physical term as a constant of nature and which, we hope, could be promoted to the same status as $\hbar$, $\Lambda$ and $c$  have reached. One proposal for such a dimensionless deformation parameter is that it is given as a measure of the ratio  of the Planck scale over the cosmological scale, which is the only universal dimensionless number we naturally encounter in quantum gravity. In our analysis and for irreducible representations we have seen through the Casimir matching
procedure described in section \ref{sec:casimirs} that $\lambda\sim \frac{1}{N^2}$ 
is related to the quantum of the area associated with the corner sphere.

Deformations of Poisson algebras arise as follows.  
Let $M$ be a Poisson manifold, meaning it is equipped with a bilinear map 
on functions
$\{\cdot,\cdot\}:C(M)\times C(M)\to C(M)$\footnote{Here, $C(M)$ is the space of smooth functions on $M$.}
which satisfies antisymmetry,  Leibniz properties, and the Jacobi identity. 
The deformation of a Poisson algebra is characterized by a Poisson two-cocycle, that is, a map $D :C(M)\times C(M)\to C(M)$  which is a skew-symmetric bi-derivation and satisfies the Poisson 2-cocycle  identity. Explicitly this means that
\bea
D(f,g)&=&-D(g,f),\nonumber
\\
D(f,gh)&=& D(f,g)h + g D(f,h),\label{cocycleapp}
\\
\{f,D(g,h)\}+\{g,D(h,f)\}+\{h,D(f,g)\} &=&-[D(f,\{g,h\})+D(g,\{h,f\})+D(h,\{f,g\})]\nonumber .
\eea
These identities simply imply that the deformed bracket $\{f,g\}_\lambda:= \{f,g\}+ \lambda D(f,g)$ satisfies the Jacobi identity to first order in $\lambda$. A Poisson 2-cocycle is trivial if it can be written in terms of a 1-cocycle $D_1: C(M)\to C(M)$ as \be\label{1cocycle}
D(f,g)= D_1(\{f,g\})-\{D_1(f),g\}-\{f, D_1(g)\},
\ee 
where $D_1(f)$ is a differential operator. In such a case, the Poisson deformation is trivial and simply amounts to a redefinition of the variables $f\to f- \lambda D_1(f)$.
Note that the last identity in \eqref{cocycleapp} and the 1-cocycle deformation \eqref{1cocycle} can  be written as $\delta D (f,g,h)=0$ and $D=\delta D_1$, respectively,  where $\delta$ is the Chevalley coboundary operator \cite{Gutt}.

The existence of deformation for $\sdiff$ follows straightforwardly from the construction of the star product operation done in sections \ref{sec:mpexpansion} and \ref{app:star}: From the analysis done there, the fact that $C^{(2)}$ is symmetric  that $C^{(3)}$ is skew-symmetric and the proof of associativity of the star product, we know that  $C^{(3)}$ is a Poisson cocycle for the sphere Poisson bracket
$\{Y_\alpha,Y_\beta\}_\epsilon= C_{\alpha \beta}{}^\gamma Y_\gamma$. 
In particular if we define $C^{(3)}(Y_\alpha,Y_\beta)  := D_{\alpha \beta}{}^\gamma$ this means that 
\bea\label{cocycD}
C_{\alpha \delta}{}^\sigma D_{\beta \gamma}{}^\delta + 
D_{\alpha \delta}{}^\sigma C_{\beta \gamma}{}^\delta + \mathrm{cycl}[\alpha,\beta,\gamma] =0,
\eea
where cycl$[\alpha,\beta,\gamma]$ means that we perform a cyclic permutation of the indices. 

To obtain a nontrivial Poisson deformation of $\sdiff$, one first uses the fact that $D$ is a bi-derivation. This implies that the knowledge of $D$ on arbitrary functions  is entirely determined by the knowledge of $D(J_\alpha,J_\beta)$, where,
following the notation (\ref{eqn:Jalpha}),
$J_\alpha = J[Y_\alpha]$ is a basis for $C(M)$, since
$D(f,g)=\sum_{\alpha,\beta} \frac{\pa f}{\pa J_\alpha} \frac{\pa g}{\pa J_\beta} D(J_\alpha,J_\beta)$. 
The identity \eqref{cocycD} then implies that there exists a Poisson deformation of $\sdiff$ Poisson algebra simply given by 
\be
D(J_\alpha,J_\beta)= J[ C^{(3)}(Y_\alpha,Y_\beta)] = D_{\alpha \beta}{}^\gamma J_\gamma.
\ee 

The deformation of $\mfk{c}_{\R}(S)$ goes along the same line. From the differentiability property, one learns that it is enough to give the prescription on the generators $(J_\alpha, N_\alpha)$ with bracket given in \eqref{NNcommutatorR}. One chooses 
    \begin{equation}\label{defalgebra for cR}
    \begin{aligned}
    D(J_\alpha, J_{\beta})&=D_{\alpha \beta}{}^\gamma J_\gamma, \\
    D(J_\alpha, N_{\beta})&=D_{\alpha \beta}{}^\gamma N_\gamma,
    \\
    D(N_\alpha, N_{\beta})&= -C_{\alpha \beta}{}^\gamma J_\gamma.
    \end{aligned}
    \end{equation}
To verify the cocycle property $\delta D(P_0,P_1,P_2)=0$ with $P_i$ denoting the 
arbitrary Lie-algebra generators and the coboundary $\delta$ acting as 
\begin{eqaligned}
    \delta D(P_0,P_1,P_2)&:=\{P_0,D(P_1,P_2)\}-\{P_1,D(P_0,P_2)\}+\{P_2,D(P_0,P_1)\}
    \\
    &\hphantom{:}-D(\{P_0,P_1\},P_2)+D(\{P_0,P_2\},P_1)-D(\{P_1,P_2\},P_0),
\end{eqaligned}
we need to investigate 4 different cases depending on whether the  argument is: $(J_\alpha,J_\beta,J_\gamma)$, $(J_\alpha,J_\beta,N_\gamma)$,$(J_\alpha,N_\beta,N_\gamma)$ or $(N_\alpha,N_\beta,N_\gamma)$.
The proof goes by inspection of each case separately. The  cocycle identity \eqref{cocycD} proves the first two cases, and  the Jacobi identity of the sphere bracket proves the following two cases:
\begin{eqaligned}
    \delta D(J_\alpha,J_\beta,J_\gamma)&= 
\left(C_{\alpha \delta}{}^\sigma D_{\beta \gamma}{}^\delta + 
D_{\alpha \delta}{}^\sigma C_{\beta \gamma}{}^\delta + \mathrm{cycl}[\alpha,\beta,\gamma]\right)J_\sigma =0,
\\
    \delta D(J_\alpha,J_\beta,N_\gamma)&=
\left(C_{\alpha \delta}{}^\sigma D_{\beta \gamma}{}^\delta + 
D_{\alpha \delta}{}^\sigma C_{\beta \gamma}{}^\delta + \mathrm{cycl}[\alpha,\beta,\gamma]\right)N_\sigma=0,
\\
\delta D(J_\alpha,N_\beta,N_\gamma)&=
\left(C_{\alpha \delta}{}^\sigma C_{\beta \gamma}{}^\delta + \mathrm{cycl}[\alpha,\beta,\gamma]\right)J_\sigma=0,
\\
\delta D(N_\alpha,N_\beta,N_\gamma)&=
\left(C_{\alpha \delta}{}^\sigma C_{\beta \gamma}{}^\delta + \mathrm{cycl}[\alpha,\beta,\gamma]\right)N_\sigma=0.
\end{eqaligned}

We can finally describe the deformation for the algebra $\mfk{c}_{\slr}$ given by \eqref{NNcommutatorslr}. This deformation involves the coefficient $C^{(2)}$ and we denote $C^{(2)}(Y_\alpha,Y_\beta):= F_{\alpha \beta}{}^\gamma Y_\gamma$ which is symmetric under the exchange $(\alpha,\beta)$. $F\indices{_\alpha_\beta^\gamma}$
is essentially the 
$N^{-2}$ term in the expansion of $\wh{E}\indices{_\alpha_\beta^\gamma}$. 
From the associativity of the star product,  we obtain that 
\be  
Y_\alpha \circ (Y_\gamma\circ Y_\beta)- (Y_\alpha \circ Y_\gamma) \circ Y_\beta = [Y_\gamma,[Y_\alpha,Y_\beta]],
\ee
where $[f,g]= f\star g - g\star f$ and $f \circ g = \frac12(f\star g + g\star f)$.
Expanding this identity at second order we obtain that  
\be  \label{EFCC}
(E_{\beta \gamma}{}^\delta F_{\alpha \delta}{}^\sigma + F_{\beta \gamma}{}^\delta E_{\alpha \delta}{}^\sigma) - 
(E_{ \gamma \alpha}{}^\delta F_{\beta \delta}{}^\sigma + F_{\gamma \alpha }{}^\delta E_{\beta \delta}{}^\sigma)
= C_{\alpha \beta}{}^\delta C_{\gamma \delta}{}^\sigma.
\ee 
The deformation cocycle is now taken to be 
   \begin{equation}
   \label{defalgebra for csl(2,R)}
    \begin{aligned}
    D(J_\alpha, J_{\beta})&=D_{\alpha \beta}{}^\gamma J_\gamma, \\
    D(J_\alpha, N_{a\beta})&=D_{\alpha \beta}{}^\gamma N_{a\gamma},
    \\
    D(N_{a\alpha}, N_{b\beta})&=\varepsilon_{ab}{}^c F_{\alpha\beta}{}^\gamma N_{c \gamma} -\eta_{ab} C_{\alpha\beta}{}^\gamma J_\gamma.
    \end{aligned}
    \end{equation}
One sees that this deformation restricts to the previous one if one chooses $N_\alpha= N_{1 \alpha}$.

The proof for the cocycle identities  follows similarly. We need to look at them case by case.
The proof or the combinations $(J_\alpha,J_\beta,J_\gamma)$, $(J_\alpha,J_\beta,N_{c\gamma})$
is the same as before. For the combination $(N_{a\alpha},N_{b\beta},N_{c\gamma})$ we use that 
\begin{eqaligned}
    D(N_{a\alpha},\{N_{b\beta},N_{c\gamma}\})
&= \varepsilon_{bc}{}^d \varepsilon_{ad}{}^s  
E_{\beta \gamma}{}^\delta F_{\alpha \delta}{}^\sigma N_{s\sigma} - 
\varepsilon_{bca} E_{\beta \gamma}{}^\delta C_{\alpha \delta}{}^\sigma J_{\sigma},
\\
\{N_{a\alpha},D(N_{b\beta},N_{c\gamma})\}
&=\varepsilon_{bc}{}^d \varepsilon_{ad}{}^s F_{\beta \gamma}{}^\delta E_{\alpha \delta}{}^\sigma N_{s\sigma}
- \eta_{bc}  C_{\beta \gamma}{}^\delta C_{\alpha \delta}{}^\sigma  N_{a \sigma}.
\end{eqaligned}
To evaluate the sum over the cyclic permutation, we first use that 
$E_{\beta \gamma}{}^\delta C_{\alpha \delta}{}^\sigma + \mathrm{cycl}[\alpha,\beta,\gamma] =0$ which follows from the fact that the Poisson bracket is a bi-derivation.
Then, we collect the terms proportional to 
$\eta_{ab} N_{c\sigma}$ which are proven to be proportional\footnote{One simply needs to use that $\varepsilon_{bc}{}^d \varepsilon_{ad}{}^s = \delta_c^s \eta_{ab} - \delta_b^s \eta_{ac} $.} to the identity \eqref{EFCC}.

Finally for the combination
$(J_{\alpha},N_{b\beta},N_{c\gamma})$ we use that 
\begin{eqaligned}
    D(J_{\alpha},\{N_{b\beta},N_{c\gamma}\})
&= \varepsilon_{bc}{}^s 
E_{\beta \gamma}{}^\delta D_{\alpha \delta}{}^\sigma N_{s\sigma},
\\
\{J_{\alpha},D(N_{b\beta},N_{c\gamma})\}
&=\varepsilon_{bc}{}^s F_{\beta \gamma}{}^\delta C_{\alpha \delta}{}^\sigma N_{s\sigma} -\eta_{bc} C_{\beta \gamma}{}^\delta C_{\alpha \delta}{}^\sigma J_\sigma,
\\
D(N_{b \beta},\{N_{c\gamma}, J_\alpha\})
&=\varepsilon_{bc}{}^s   
C_{ \gamma \alpha }{}^\delta F_{\beta \delta}{}^\sigma N_{s\sigma} - 
\eta_{bc} C_{ \gamma \alpha }{}^\delta C_{\beta \delta}{}^\sigma  J_{\sigma},
\\
\{N_{b \beta },D(N_{c\gamma}, J_\alpha)\}
&=\varepsilon_{bc}{}^s  D_{ \gamma \alpha}{}^\delta E_{\beta \delta}{}^\sigma N_{s\sigma}
- \eta_{bc}  C_{ \gamma \alpha }{}^\delta C_{\beta \delta}{}^\sigma  J_{\sigma}.
\end{eqaligned}
The cocycle identity then follows from the expansion of the differential identity for the star product
\be 
[Y_\alpha, Y_\beta \circ Y_\gamma]= [Y_\alpha,Y_\beta]\circ Y_\gamma + Y_\beta\circ [Y_\alpha,Y_\gamma].
\ee 
In components this means that 
\be  
E_{\beta \gamma}{}^\delta D_{\alpha \delta}{}^\sigma + F_{\beta \gamma}{}^\delta C_{\alpha \delta}{}^\sigma 
= 
( C_{ \gamma \alpha}{}^\delta F_{\beta \delta}{}^\sigma - D_{ \gamma \alpha}{}^\delta E_{\beta \delta}{}^\sigma  )
+ 
( C_{ \beta \alpha}{}^\delta F_{\gamma \delta}{}^\sigma - D_{ \beta \alpha}{}^\delta E_{\gamma \delta}{}^\sigma  ),
\ee 
which completes the proof.

\section{$\mfk{su}(N)$ and $\mfk{su}(N,N)$ relations}\label{SUN}
In this appendix, we establish some key $\mfk{su}(N)$ and $\mfk{su}(N,N)$ identities.

\subsection{$\mfk{su}(N)$ } \label{app:SUNrelns}
We start with the facts necessary for the proof of equivalence between \eqref{eqn:XaXb} and \eqref{eqn:EijEkl}. In particular, we prove \eqref{eq:the identity for product of Yalpha and Ybeta}. We have seen that in the fundamental and the adjoint  representations of $\mfk{su}(N)$, which we denote by $\pi_{\mbf N}$ and $\pi_{\mbf{ad}}$, respectively, we have 
\begin{equation}
    [\pi_{\bf N}(\algX_\alpha)]_i{}^j=\frac{N}{2i}[\wh{Y}_\alpha]_i{}^j, \qquad 
    [\pi_{\bf ad}(\algX_\alpha)]\indices{_\beta^\gamma}=\wh{C}\indices{_{\alpha\beta}^\gamma}.\label{repX}
\end{equation}
In the following we denote $i,j \in \{1,\ldots, N\}$ the vectorial indices and 
$\alpha,\beta \in \{1,\ldots, N^2-1\} $ are the adjoint indices.
We denote by $U= \exp(u^\alpha X_\alpha) $ to be an abstract group element. It is well-known that the  adjoint action is simply given by
 \be 
 \rho_{\bf Ad}(U) \wh{Y}^\beta : = \pi_{\bf N}(U) \wh{Y}^\beta  \pi_{\bf N}(U^{-1}).
 \ee 
 This relation can be written in components in terms of the components  
 $\pi_{\bf N}U$ and  $\pi_{\bf ad}U$ as
 which in components means 
 \be \label{IdN}
\sum_{\alpha \in I_N}(\wh{Y}^\alpha)_i{}^l [\pi_{\bf Ad}(U)]_\alpha{}^\beta  = 
[\pi_{\bf N}(U)]_i{}^j  [\pi_{\bf N}(U^{-1})]_k{}^l (\wh{Y}^\beta)_j{}^k. 
 \ee 
 Next, we establish the relationship
 \be \label{eqn:fuzHcompleteness}
 \sum_{\beta } (\wh Y^\beta)_j{}^k(\wh Y_\beta)_{k'}{}^{j'}  = N \delta_j^{j'}\delta_{k'}^k.
 \ee 
 To see this just contract the LHS with $(\wh Y^\alpha)_{j'}{}^{k'}$. We get that this is equal to 
 $\sum_{\beta } (\wh Y^\beta)_j{}^k {\mathrm{Tr}}_{\mbf{N}}
 (\wh Y_\beta \wh Y^\alpha ) = N (\wh Y^\beta)_j{}^k$, where we used that 
 ${\mathrm{Tr}}_{\mbf{N}}
 (\wh Y_\beta \wh Y^\alpha ) = N\delta_\beta^\alpha$. Since $\wh Y^\alpha$ with $\alpha\in I_N$ is a complete basis of matrices we get the desired equality. Using this identity and contracting \eqref{IdN} with $(\wh{Y}_\beta)\indices{_{k'}^{j'}}$ and summing over $\beta$ gives the relation
 \be \label{FundR}
\sum_{\alpha, \beta }(\wh Y^\alpha)_i{}^l [\pi_{\bf Ad}(U)]_{\alpha}{}^\beta (\wh Y_\beta)_{k'}{}^{j'}=
N\, [\pi_{\bf N}(U)]_i{}^{j'}  [\pi_{\bf N}(U^{-1})]_{k'}{}^l.
\ee
If we expand  $U=\exp (u^{\gamma} X_\gamma )$ to first order in $u^\gamma$, we get from \eqref{repX} that 
\be \label{Fund1}
(\wh Y_\gamma)_i{}^j \delta_k^l - \delta_i^j (\wh Y_\gamma)_k{}^l 
= \frac{2i}{N^2} \sum_{\alpha, \beta } \wh{C}_{\gamma }{}^{\alpha \beta } (\wh Y_\alpha)_i{}^l(\wh Y_\beta)_k{}^j.
\ee 
Next consider the matrix elements of $\wh{{J}}$, which in two different bases are
\begin{equation}
    X_{\alpha} =\frac{N}{2i}\sum_{i,j} E\indices{^j_i}(\wh{Y}_{\alpha})_{j}{}^{i},\qquad  E\indices{^j_i}=\frac{2i}{N^2}\sum_{\alpha \in I_N} X^\alpha (\wh{Y}_{\alpha})_{i}{}^{j},
\end{equation}
where $(\wh{Y}_{\alpha})_{i}{}^{j}$ denotes the $ij$\textsuperscript{th} component of the matrix $\wh{Y}_\alpha$ (see Appendix \ref{sec:fuzH} for details). The proof of equivalence between \eqref{eqn:XaXb} and \eqref{eqn:EijEkl} and  goes as follows. We first assume \eqref{eqn:XaXb} and prove \eqref{eqn:EijEkl} as follows
\begin{equation}
    \begin{aligned}
     [E\indices{^j_i},E\indices{^l_k}]&= \left(\frac{2i}{N^2}\right)^2 \sum_{\alpha,\beta\in I_N}[X^\alpha,X^\beta](\wh{Y}_{\alpha})_i{}^j(\wh{Y}_{\beta})\indices{_k^l}
     \\
     &= \left(\frac{2i}{N^2}\right)^2 \sum_{\alpha,\beta \in I_N}\wh{C}\indices{^\alpha^\beta_\gamma} X^\gamma(\wh{Y}_{\alpha})_i{}^j(\wh{Y}_{\beta})\indices{_k^l}
     \\
     &= \frac{2i}{N^2} \sum_{\gamma \in I_N} X^\gamma (\delta_i{}^l(\wh{Y}_\gamma)_k{}^j-\delta^j{}_k(\wh{Y}_\gamma)_i{}^l)
     \\
     &=\delta_i{}^l E\indices{^j_k}-\delta_k{}^j E\indices{^l_i},
    \end{aligned}
\end{equation}
which is the $\mfk{su}(N)$ defining relation \eqref{eqn:EijEkl} and we have use the identity \eqref{Fund1}. Using this, one can show the following
\be 
[\pi_{\bf N}(E\indices{^j_i})]_a{}^b
= \delta_a^j\delta_i^b,
\ee
then the commutator of any two matrices $A:= A_j{}^i E_i{}^j$ and $B  := B_j{}^i E_i{}^j$ is given by
\begin{equation}
    \begin{aligned}
        [A,B]&= A_j{}^i B_l{}^k [E\indices{^j_i},E\indices{^l_k}]
        = A_j{}^i B_l{}^k (\delta_i{}^l E\indices{^j_k}-\delta_k{}^j E\indices{^l_i})
        \\
        &= (A B)_j{}^k E\indices{^j_k} - (BA)_l{}^i E\indices{^l_i}
        \\
        &= [A,B]_j{}^i E\indices{^j_i}.
    \end{aligned}
\end{equation}
Conversely, we can derive \eqref{eqn:XaXb} from \eqref{eqn:EijEkl} easily as follows
\begin{equation}
    \begin{aligned}
     [X_\alpha,X_\beta]&=\frac{N^2}{(2i)^2}
[E\indices{^j_i},E\indices{^l_k}] (\wh{Y}_{\alpha}) _{j}{}^{i}(\wh{Y}_{\beta})_{l}{}^{k}\cr
&= \frac{N^2}{(2i)^2}(\delta_i{}^l E\indices{^j_k}-\delta_k{}^j E\indices{^l_i}) (\wh{Y}_{\alpha})_{j}{}^{i}(\wh{Y}_{\beta})_{l}{}^{k} \cr
&= \frac{N^2}{(2i)^2}
 E\indices{^j_i}  [\wh{Y}_{\alpha}, \wh{Y}_{\beta}]_{j}{}^{i} \cr
&=\frac{N}{2i}\wh{C}_{\alpha\beta}{}^\gamma E\indices{^j_i} (\wh{Y}_\gamma)_j{}^i 
\\
&= \wh{C}_{\alpha\beta}{}^\gamma X_\gamma.
    \end{aligned}
\end{equation}
This completes the proof of equivalence of \eqref{eqn:XaXb} and \eqref{eqn:EijEkl}. The equivalence of \eqref{AJa} and \eqref{eq:Jij and Jki commutaror} then follows.

\subsection{$\mfk{su}(N,N)$}\label{sec:su(N,N) idesntities}
We can provide a similar identity for $\mfk{su}(N,N)$. The Lie algebra generators in the vector representations are 
\begin{equation}
    \wh{Y}_{\bullet\alpha}=\mathbbm{1}_{2}\otimes\wh{Y}_\alpha, \qquad \wh{Y}_{a\alpha}=\rho_a\otimes\wh{Y}_\alpha.
\end{equation}
We see that 
\be 
 \frac12 (\mathbbm{1}_{2}\otimes \mathbbm{1}_{2}) - 2 (\rho_a \otimes \rho^a)= \mathbbm{1}_{\mathrm{Mat(2)}}.
\ee
Therefore, using \eqref{eqn:fuzHcompleteness} which states that $\sum_{\alpha\in I_N} 
    (\wh{Y}_\alpha \otimes \wh Y^{\alpha})
    = N \mathbbm{1}_{\mathrm{Mat}(\mbf{N})}
    $ we have the identity decomposition
\begin{equation}\label{eq:Id2}
    \sum_{\alpha \in I_N}\left(\frac{1}{2}\wh{Y}_{\bullet\alpha}\otimes\wh{Y}^{\bullet\alpha}-2\sum_{a=0,1,2}\wh{Y}_{a\alpha}\otimes\wh{Y}^{a\alpha}\right)=N \mathbbm{1}_{\mathrm{Mat}(2N)}.
\end{equation}
From this, we can show that
\begin{equation*}
    \begin{aligned}
       (\wh Y_{\id \gamma})_{\msf n}{}^{\msf q} \delta_{\msf p}{}^{\msf m} - \delta_{\msf n}{}^{\msf q} (\wh Y_{\id\gamma})_{\msf p}{}^{\msf n} 
        &= \frac{2i}{N^2} \sum_{\alpha,\beta \in I_N} \wh{C}_{ \gamma }{}^{\alpha\beta } \Big(\tfrac12 (\wh Y_{\id \alpha})_{\msf n}{}^{\msf m}(\wh (Y_{\id \beta})_{\msf n}{}^{\msf m}- 2\sum_{a,b}\eta^{ab} (\wh Y_{a \alpha})_{\msf n}{}^{\msf m}(\wh Y_{ b \beta})_{\msf p}{}^{\msf q} \Big),
        \\
        (\wh Y_{c \gamma})_{\msf n}{}^{\msf q} \delta_{\msf p}{}^{\msf m} - \delta_{\msf n}{}^{\msf q} (\wh Y_{c \gamma})_{\msf p}{}^{\msf n} 
        &=\frac{2i}{N^2} \sum_{\alpha,\beta \in I_N} \wh{C}_{ \gamma }{}^{\alpha\beta } \left((\wh Y_{c \alpha})_{\msf n}{}^{\msf m}(\wh (Y_{\id \beta})_{\msf n}{}^{\msf m}
        - (\wh Y_{\id \alpha})_{\msf n}{}^{\msf m}((\wh{Y}_{c \beta})_{\msf n}{}^{\msf m} \right)
        \\
        &- \frac{2i}{N^4} \sum_{\alpha,\beta \in I_N} \wh{C}_{ \gamma }{}^{\alpha\beta } \sum_{a,b} \varepsilon_c{}^{ab} (\wh Y_{a \alpha})_{\msf n}{}^{\msf m}(\wh Y_{ b \beta})_{\msf p}{}^{\msf q},
    \end{aligned}
\end{equation*}
where $\msf{m}=(A,i)$ with $A=1,2$ and $i=1,\ldots,N$. 
These identities are exactly what is needed to establish that 
\begin{equation}
    [E\indices{^{\msf m}_{\msf n}}, E\indices{^{\msf p}_{\msf q}}] =
\delta^{\msf p}_{\msf n}E\indices{^{\msf m}_{\msf q}}
-\delta^{\msf m}_{\msf q}E\indices{^{\msf p}_{\msf n}},
\end{equation}
where we have defined  the $\su(N,N)$ generator
\begin{equation}
    E\indices{^{\msf m}_{\msf n}} := 
\frac{i}{N}\left(\frac{1}{N} 
\algX^\alpha\big(\wh{Y}_{\id\alpha}\big)\indices{_{\msf{n}}^{\msf m}}
-2 \algZ^{a\alpha}\big(\wh{Y}_{a\alpha}\big)\indices{_{\msf n}^{\msf m}}
\right).
\end{equation}
The equality \eqref{eq:Id2} can also be written in terms of the $\mfk{su}(N,N)$ element $E$ as 
\be (\pi_{\bf 2N}\otimes \mathbbm{1}_{\mbf{2N}}) E=  \mathbbm{1}_{\mathrm{Mat}(\mbf{2N})}.
\ee
where $\pi_{\bf 2N}$ is the representation \eqref{eqn:pi2NZ}.

\section{Identities for Casimir computations}
\label{app:casimirs}

In this appendix, we collect a number of computations relevant for the 
discussion of Casimirs for the continuum and deformed algebras 
described in section \ref{sec:casimirs}.  

\paragraph{Proof of (\ref{eqn:phij}).}  This equation can be derived 
using the mode decomposition (\ref{eqn:jmodes}) of $\defj(\sigma)$.  
This gives
\begin{eqaligned}
[\phi, \jfxn(\sigma)]_{{\mfk{g}}} 
&=
[\phi, \jfxn_\alpha]_{{\mfk{g}}} \, {Y}^\alpha(\sigma) 
\\
&=
\int_S d\sigma' \{\phi, Y_\alpha\}(\sigma')\; \jfxn(\sigma') Y^\alpha(\sigma) 
\\
&=
-\int_S d\sigma' \{\phi,\jfxn\}(\sigma')\; Y_\alpha(\sigma') Y^\alpha(\sigma)
\\
&= 
-\int_S d\sigma' \{\phi,\jfxn\}(\sigma')\; \delta(\sigma-\sigma') 
\\
&=
-\{\phi,\jfxn\}(\sigma).
\end{eqaligned}

\paragraph{Proof of (\ref{eqn:Xaj}). }  Here we use the mode decomposition
(\ref{eqn:defjmodes}) of $\defj$:
\begin{eqaligned}
[\algX_\alpha, \defj]_{\wh{\mfk{g}}}
&=
[\algX_\alpha, \algX_\beta]_{\wh{\mfk{g}}} \wh{Y}^\beta
\\
&=
\wh{C}\indices{_\alpha_\beta^\gamma}\algX_\gamma \wh{Y}^\beta
\\
&=
-\frac{N}{2i}[\wh{Y}_\alpha,\wh{Y}^\gamma]\algX_\gamma
\\
&=
-\frac{N}{2i}[\wh{Y}_\alpha, \defj].
\end{eqaligned}

\paragraph{Proof of (\ref{eqn:dhatd}).}
We begin with the expression for $d_{\alpha_1\ldots \alpha_n}$, 
which, employing (\ref{eqn:ECclassical}) and (\ref{eqn:shmetric}), 
is given by
\begin{eqaligned}
d_{\alpha_1\ldots\alpha_n} &= \int_S\nu_0 \,
Y_{\alpha_1}\ldots Y_{\alpha_n}
\\
&= E\indices{_{\alpha_1}_{\alpha_2}^{\beta_1}}
E\indices{_{\beta_1}_{\alpha_3}^{\beta_2}} \ldots
E\indices{_{\beta_{n-3}}_{\alpha_{n-1}}^{\beta_{n-2}}}
\int_S\nu_0Y_{\beta_{n-2}} Y_{\alpha_n}
\\
&=E\indices{_{\alpha_1}_{\alpha_2}^{\beta_1}} \ldots
E\indices{_{\beta_{n-3}}_{\alpha_{n-1}}_{\alpha_n}}.
\end{eqaligned}
Note that for $n=2$, the expression instead reads $d_{\alpha\beta} = 
\int_S\nu_0\, Y_\alpha Y_\beta = \delta_{\alpha\beta}$.
On the other hand, employing (\ref{TrM}),
we have
\begin{eqaligned}
\wh{d}_{\alpha_1\ldots \alpha_n} &=
    \frac{1}{N}\tr_{\mbf{N}}(\wh{Y}_{\alpha_1}\ldots \wh{Y}_{\alpha_n})
    \\
    &=\wh{M}\indices{_{\alpha_1\alpha_2}^{\beta_1}}
    \wh{M}\indices{_{\beta_1}_{\alpha_3}^{\beta_2}} \ldots \wh{M}\indices{_{\beta_{n-3}\alpha_{n-1}}^{\beta_{n-2}}}\frac1N
    \tr_{\mbf{N}}(\wh{Y}_{\beta_{n-2}} \wh{Y}_{\alpha_{n}})
    \\
    &=
    \wh{M}\indices{_{\alpha_1\alpha_2}^{\beta_1}} \ldots
    \wh{M}\indices{_{\beta_{n-2}}_{\alpha_{n-1}}_{\alpha_n}}.
\end{eqaligned}
Then using that $\wh{M}\indices{_\alpha_\beta^\gamma} = 
\wh{E}\indices{_\alpha_\beta^\gamma}
+\frac{i}{N}\wh{C}\indices{_\alpha_\beta^\gamma}$, we find that 
\beq
\wh{d}_{\alpha_1\ldots \alpha_n} = 
    \wh{E}\indices{_{\alpha_1\alpha_2}^{\beta_1}} \ldots
    \wh{E}\indices{_{\beta_{n-2}}_{\alpha_{n-1}}_{\alpha_n}}
    +\op(N^{-1}).
\eeq
Since $\wh{E}\indices{_\alpha_\beta^\gamma} = E\indices{_\alpha_\beta^\gamma}
+\op(N^{-2})$ (see sections \ref{sec:fuzH} and \ref{sec:mpexpansion}), this shows that 
\beq
\wh{d}_{\alpha_1\ldots\alpha_n} = d_{\alpha_1\ldots \alpha_n}+\op(N^{-1}).
\eeq

\paragraph{Proof of (\ref{eqn:CkA}).} 
The Casimir matching considered in section \ref{sec:casmatch} requires us
to determine the relation between the $J_\alpha$ Hamiltonians defined by
equation (\ref{eqn:Jalpha}), and the gravitational charges 
constructed in \cite{DonnellyFreidelMoosavianSperanza202012}.  The latter
were defined in terms of the vector fields $\xi^A$ on $S$, while 
the former are written in terms of the stream functions $Y_\alpha$.  
According to the conventions of section \ref{sec:PJ}, these are 
related by $\xi_{Y_\alpha}^A = \epsilon^{BA}\partial_B Y_\alpha = \frac{1}{4\pi}
\nu_0^{BA}\partial_B Y_\alpha$.  The Hamiltonians in
\cite{DonnellyFreidelMoosavianSperanza202012} were written in terms of a 
1-form density $\wt P_A$ which has a geometrical interpretation in
spacetime as a component of a connection on the normal bundle of 
$S$.  The relation for the charges in terms of this is then given by
\begin{eqaligned}
J_\alpha &= \frac{1}{16\pi G}\int_S \wt{P}_A\, \xi^A_{Y_\alpha}
\\
&=\frac{A}{16\pi G}\int_S\nu_0 P_A\, \xi^A_{Y_\alpha}
\\
&=\frac{A}{16\pi G}\int_S \nu_0 \frac1{4\pi}\nu_0^{AB}P_A \partial_B Y_\alpha
\\
&=\frac{A}{16\pi G}\int_S \frac{1}{4\pi} dY_\alpha \wedge P 
\\
&=
\frac{-A}{16\pi G} \int_S Y_\alpha \frac{dP}{4\pi}
\\
&=
\frac{A}{16\pi G}\int_S\nu_0Y_\alpha \left(\frac{-AW}{4\pi}\right),
\label{eqn:JaAW}
\end{eqaligned}
where, following the conventions of 
\cite{DonnellyFreidelMoosavianSperanza202012}, we have used that $\widetilde{P}_A$ is related to $P_A$
via the physical volume form $\nu = A\nu_0$, so that 
$\widetilde{P}_A = \nu P_A = A\nu_0 P_A$, and 
the last equality uses that $dP$ is related to the outer curvature 
scalar $W$  according to $dP = W \nu = AW\nu_0$. 
As a curvature scalar, $W$ has dimensions $[\text{length}]^{-2}$,
and $A$ has dimensions $[\text{length}]^2$ in $4$ spacetime 
dimensions, so the function $AW$ is dimensionless. 
The final integral in (\ref{eqn:JaAW}) is therefore dimensionless,
and hence $J_\alpha$ has dimensions of angular momentum, as 
expected since the $\diff$ charges are generalizations 
of angular momentum.
Comparing to equation (\ref{eqn:Jalpha}), we see that the function $J(\sigma)$
is related to the geometrical data according to 
\beq
J(\sigma) = \frac{A}{16\pi G} \left(\frac{-A W(\sigma)}{4\pi}\right).
\eeq
Plugging this relation into the expression (\ref{eqn:Cn}) for the gravitational
Casimirs then immediately reproduces equation (\ref{eqn:CkA}).

\paragraph{Proof of (\ref{eqn:ad*pha}).}
Using the definition of the coadjoint action and the pairing 
(\ref{eqn:crrpairing}) for $\crr$, we compute
\begin{eqaligned}
\langle \ad^*_{(\phi,\alpha)}(f,a), (\psi,\beta)\rangle
&=
-\langle (f,a),(\{\phi,\psi\}, \{\phi,\beta\} - \{\psi,\alpha\})\rangle
\\
&=
-\int_S\nu_0\Big(f\{\phi,\psi\} + a(\{\phi,\beta\}+\{\alpha,\psi\}) \Big)
\\
&=
\int_S \nu_0\Big( \{\phi,f\}\psi +\{\phi,a\}\beta + \{\alpha,a\}\psi\Big)
\\
&=
\langle (\{\phi,f\} + \{\alpha,a\}, \{\phi,a\}), (\psi,\beta)\rangle,
\end{eqaligned}
which determines the coadjoint action to be (\ref{eqn:ad*pha}).

\paragraph{Proof of (\ref{eqn:phialphajfxn}) and (\ref{eqn:phialphanfxn}).}
These relations are derived as follows:
\begin{eqaligned}
\langle (f,a),[(\phi,\alpha),\jfxn]_{\mfk{g}}\rangle
&=
-\langle \ad^*_{(\phi,\alpha)}(f,a),\jfxn\rangle
\\
&=
-\langle(\{\phi,f\}+\{\alpha,a\},\{\phi,a\}),\jfxn\rangle
\\
&=
-\{\phi,f\} - \{\alpha,a\}
\\
&=\langle(f,a),-\{\phi,\jfxn\}-\{\alpha,\nfxn\}\rangle,
\end{eqaligned}
verifying (\ref{eqn:phialphajfxn}).  Similarly,
\begin{eqaligned}
\langle (f,a),[(\phi,\alpha),\nfxn]_{\mfk{g}}\rangle
&=
-\langle \ad^*_{(\phi,\alpha)}(f,a),\nfxn\rangle
\\
&=
-\langle(\{\phi,f\}+\{\alpha,a\},\{\phi,a\}),\nfxn\rangle
\\
&=
-\{\phi,a\}
\\
&=\langle(f,a),-\{\phi,\nfxn\}\rangle,
\end{eqaligned}
verifying (\ref{eqn:phialphanfxn}).

\paragraph{Identities for deriving (\ref{eqn:alphacmn})}
The Lie-algebra-valued functions $\jfxn$ and $\nfxn$ can be shown
to satisfy
\begin{eqaligned}
\jfxn\nfxn &=\jfxn_\alpha\nfxn_\beta Y^\alpha Y^\beta 
\\
&=
\nfxn_\beta \jfxn_\alpha Y^\alpha Y^\beta
+[\jfxn_\alpha,\nfxn_\beta]_{\mfk{g}} E\indices{^\alpha^\beta^\gamma} Y_\gamma
\\
&=
\nfxn \jfxn
+C\indices{_\alpha_\beta^\gamma}\nfxn_\gamma
E\indices{^\alpha^\beta^\gamma}Y_\gamma
\\
&=
\nfxn \jfxn.
\end{eqaligned}

\begin{eqaligned}
\{\nfxn,\nfxn\} 
&=
\nfxn_\alpha \nfxn_\beta\{Y^\alpha,Y^\beta\}
\\
&=
\nfxn_\alpha \nfxn_\beta C\indices{^\alpha^\beta^\gamma} Y_\gamma
\\
&=
\frac12[\nfxn_\alpha,\nfxn_\beta]_{\mfk{g}} C\indices{^\alpha^\beta^\gamma}
Y_\gamma = 0.
\end{eqaligned}

\begin{eqaligned}
\{\jfxn,\nfxn\} &=
\jfxn_\alpha \nfxn_\beta C\indices{^\alpha^\beta^\gamma}Y_\gamma
\\
&=
\frac12[\jfxn_\alpha,\nfxn_\beta]_{\mfk{g}} C\indices{^\alpha^\beta^\gamma}
 Y_\gamma
\\
&=
\frac12 C\indices{_\alpha_\beta^\mu} C\indices{^\alpha^\beta^\gamma}
\nfxn_\mu Y_\gamma.
\end{eqaligned}
Although the sum over $\alpha$ and $\beta$ in this final expression is 
divergent, we can take $C\indices{_\alpha_\beta^\mu} C\indices{^\alpha^\beta^\gamma}$ to be proportional to $\delta^{\mu\gamma}$
times a divergent coefficient.  This then demonstrates that 
$\{\jfxn,\nfxn\}\propto \nfxn$.

\paragraph{Identity satisfied by $\nfxn_a$.}  The quantities 
$\nfxn_a(\sigma)$ and $\nfxn_b(\sigma')$ do not commute due to 
being valued in a Lie algebra, but instead have a $\delta-$function
contribution coming from coincident points.  This can be derived by
\begin{eqaligned}
\nfxn_a(\sigma)\nfxn_b(\sigma')
&= 
\nfxn_{a\alpha} \nfxn_{b \beta} Y^{\alpha}(\sigma) Y^\beta(\sigma')
\\
&=
\nfxn_{b \beta}\nfxn_{a\alpha}Y^\alpha(\sigma)Y^\beta(\sigma')
+\varepsilon\indices{_a_b^c}E\indices{_\alpha_\beta^\gamma}
\nfxn_{c\gamma}Y^\alpha(\sigma) Y^\beta(\sigma')
\\
&=
\nfxn_b(\sigma')\nfxn_a(\sigma) +
\varepsilon\indices{_a_b^c}\nfxn_{c\gamma}\int_Sd\sigma''
Y_\alpha(\sigma'')Y_\beta(\sigma'')Y^\gamma(\sigma'')
Y^\alpha(\sigma)Y^\beta(\sigma')
\\
&=
\nfxn_b(\sigma')\nfxn_a(\sigma) +
\varepsilon\indices{_a_b^c}\nfxn_{c\gamma}
\int d\sigma'' \delta(\sigma'' - \sigma)\delta(\sigma'' - \sigma')
Y^\gamma(\sigma'')
\\
&=
\nfxn_b(\sigma')\nfxn_a(\sigma) +
\varepsilon\indices{_a_b^c}\delta(\sigma-\sigma')\nfxn_c(\sigma).
\end{eqaligned}

\paragraph{Proof of (\ref{eqn:jna}) and (\ref{eqn:nanb}).}
These relations are once again derived using the mode decompositions 
of $\defj$ and $\defn_a$:
\begin{eqaligned}
\defj\defn_a 
&= 
\algX_\alpha \algZ_{a\beta}\wh{Y}^\alpha \wh{Y}^\beta
\\
&=
\algZ_{a\beta}\algX_\alpha \wh{Y}^\beta\wh{Y}^\alpha
+\algZ_{a\beta}\algX_\alpha [\wh{Y}^\alpha,\wh{Y}^\beta]
+[\algX_\alpha,\algZ_{a\beta}]_{\wh{\mfk{g}}} \wh{Y}^\alpha \wh{Y}^\beta
\\
&=
\defn_a \jfxn + 
\algZ_{a\beta}\algX_\alpha [\wh{Y}^\alpha,\wh{Y}^\beta]
+\wh{C}\indices{_\alpha_\beta^\mu}\algZ_{\mu a}\wh{Y}^\alpha\wh{Y}^\beta
\\
&=
\defn_a \jfxn +
\algZ_{a\beta}\algX_\alpha [\wh{Y}^\alpha,\wh{Y}^\beta]
+\frac12[\algX_\alpha,\algZ_{a\beta}]_{\wh{\mfk{g}}}[\wh{Y}^\alpha,\wh{Y}^\beta]
\\
&=
\defn_a \jfxn +
\frac{i}{N}(\algZ_{a\beta}\algX_\alpha + \algX_\alpha \algZ_{a\beta})
\wh{C}\indices{^\alpha ^\beta ^\mu}\wh{Y}_\mu
\\
&=
\defn_a \jfxn +\op(N^{-1}).
\end{eqaligned}

\begin{eqaligned}
\defn_a \defn_b
&=
\algZ_{a\alpha}\algZ_{b \beta}\wh{Y}^\alpha \wh{Y}^\beta
\\
&=
\algZ_{b \beta}\algZ_{a\alpha}\wh{Y}^\beta \wh{Y}^\alpha+
[\algZ_{a\alpha},\algZ_{b \beta}]_{\wh{\mfk{g}}} \wh{Y}^\alpha \wh{Y}^\beta
+\algZ_{a\alpha}\algZ_{b \beta}[\wh{Y}^\alpha,\wh{Y}^\beta]
\\
&=
\defn_b\defn_a + [\defn_a,\defn_b]_{\wh{\mfk{g}}}
+ \frac{2i}{N}\algZ_{a\alpha}\algZ_{b \beta}\wh{\{Y_\alpha,Y_\beta\}}
+\op(N^{-3})
\\
&=
\defn_b\defn_a + [\defn_a,\defn_b]_{\wh{\mfk{g}}}
+ \frac{2i}{N}\wh{\{\nfxn_a,\nfxn_b\}}
+\op(N^{-3}).
\end{eqaligned}

\paragraph{Proof of (\ref{casint}).}
We now put a derivation of \eqref{casint} which gives an alternate derivation that \eqref{eqn:c2n+1} is a Casimir.
\begin{equation}
    \begin{aligned}
        \tenofo{ad}^*_{(\phi, \alpha)}(fa^{2n})&=(\tenofo{ad}^*_{(\phi, \alpha)}f)a^{2n}+f\tenofo{ad}^*_{(\phi, \alpha)}a^{2n}
        \\
        &=\{\phi,f\}_{\nu_0}a^{2n}+\{\alpha_a,a^a\}_{\nu_0}a^{2n}+f\{\phi,a^{2n}\}_{\nu_0}
        \\
        &=\{\phi, fa^{2n}\}_{\nu_0} + \{\alpha^a, a_a\}_{\nu_0} a^{2n},
    \end{aligned}
\end{equation}
and also
\begin{equation} \label{eqn:aaacoadfoint}
    \begin{aligned}
        &\tenofo{ad}^*_{(\phi, \alpha)} (\varepsilon_{abc} \{a^a, a^b\} a^c\, a^{2(n-1)})=\{\phi, \varepsilon_{abc} \{a^a, a^b\} a^c a^{2(n-1)} \}
        \\
        +&
        \varepsilon_{abc} (\{[\alpha,a]^a, a^b\} a^c 
        + \{a^a, [\alpha,a]^b\} a^c + \{a^a, a^b\} [\alpha,a]^c) a^{2(n-1)}
        \\
        =&\{\phi, \varepsilon_{abc} \{a^a, a^b\} a^c \, a^{2(k-1)}\} + 2 \varepsilon_{abc} \varepsilon^{ade}(\{\alpha_d , a^b\} a^c a_e) a^{2(n-1)}
        \\
        =&\{\phi, \varepsilon_{abc} \{a^a, a^b\} a^c\,a^{2(n-1)}\} - 
        2 (\{\alpha_b , a^b\} a^c a_c - \{\alpha_c , a^b\} a^c a_b) a^{2(n-1)}
        \\
        =&\{\phi, \varepsilon_{abc} \{a^a, a^b\} a^c\,a^{2(n-1)}\} -
        2 \{\alpha_b , a^b\} a^{2n} + \{\alpha_c , a^2\} a^c a^{2(n-1)}
        \\
        =&\{\phi, \varepsilon_{abc} \{a^a, a^b\} a^c\, a^{2(n-1)}\} -
        \frac{2n+1}{n} \{\alpha_b , a^b  \}a^{2n}   + \frac{1}{n} \{\alpha_b , a^{2n} a^b\}.
    \end{aligned}
\end{equation}
We can thus define
\begin{equation}
    w_n := \left[ (2n+1) fa^{2} + n\varepsilon_{abc} \{a^a, a^b\} a^c\right] a^{2(n-1)},
\end{equation}
which transforms as 
\begin{equation}\label{eq:coadjoint action on vorticity}
    \tenofo{ad}^*_{(\phi, \alpha)}w_n = \{\phi, w_n\}+ \{\alpha_a,a^{2n} a^a\}.
\end{equation}
The proof that \eqref{eqn:c2n+1} is a Casimir then follows from the fact that the integral of \eqref{eq:coadjoint action on vorticity} on $S$ vanishes, i.e.\ we have
\begin{equation}
    \bigintsss_S\{f,g\}_\nu=0, \qquad \forall f,g\in C(S).
\end{equation}
This can be proven as follows. Let $(M,\omega)$ be an $2n$-dimensional symplectic manifold with symplectic form $\omega$. The volume form on $M$ is given by $\omega^n$. Then, for any two functions $f,g\in C(M)$, we have
\begin{equation}
    \begin{aligned}
    \{f,g\}\omega^n&=\omega(X_f,X_g)\omega^n
    \\
    &=X_g(f)\omega^n=\mcal{L}_{X_g}f\omega^n
    \\
    &=\mcal{L}_{X_g}(f\omega^n)=(\rd\iota_{X_g}+\iota_{X_g}\rd)f\omega^n
    \\
    &=\rd\iota_{X_g}(f\omega^n)
    \\
    &=\rd(f\iota_{X_g}\omega^n),
    \end{aligned}
\end{equation}
where $X_g$ is the Hamiltonian vector field associated with $g$ defined by $\iota_{X_g}\omega:=\rd g$, and in the fourth equality we used the fact that the Lie derivative of the symplectic form along a Hamiltonian vector field vanishes; this can be seen as follows
\begin{equation*}
    \mcal{L}_{X_g}\omega=(\rd\iota_{X_g}+\iota_{X_g}\rd)\omega=\rd(\rd g)+0=0,
\end{equation*}
where we have used the closeness of the symplectic form $\rd\omega=0$. We thus have
\begin{equation}\label{eq:the integral of Poisson bracket on a symplectic manifold}
    \bigintsss_M\omega^n\,\{f,g\}=\bigintsss_M \rd(f\iota_{X_g}\omega^n)=\bigintsss_{\partial M}f\iota_{X_g}\omega^n.
\end{equation}
In the case of sphere $M=S$, $\omega=\sqrt{q}\epsilon_{AB}\rd\sigma^A\wedge \rd\sigma^B$, $\pa M=\emptyset$, and $\omega^n\{f,g\}\to\{f,g\}_\nu$. Therefore, \eqref{eq:the integral of Poisson bracket on a symplectic manifold} vanishes and we thus end up with the desired result.

\bibliography{reps}

\providecommand{\href}[2]{#2}\begingroup\raggedright\begin{thebibliography}{100}

\bibitem{Donnelly:2016auv}
W.~Donnelly and L.~Freidel, \emph{{Local Subsystems in Gauge Theory and
  Gravity}}, \href{https://doi.org/10.1007/JHEP09(2016)102}{\emph{J. High
  Energy Phys.} {\bfseries 09} (2016) 102}
  [\href{https://arxiv.org/abs/1601.04744}{{\ttfamily 1601.04744}}].

\bibitem{Speranza:2017gxd}
A.~J. Speranza, \emph{{Local phase space and edge modes for
  diffeomorphism-invariant theories}},
  \href{https://doi.org/10.1007/JHEP02(2018)021}{\emph{J. High Energy Phys.}
  {\bfseries 02} (2018) 021}
  [\href{https://arxiv.org/abs/1706.05061}{{\ttfamily 1706.05061}}].

\bibitem{CiambelliLeigh202104}
L.~Ciambelli and R.~G. Leigh, \emph{{Isolated Surfaces and Symmetries of
  Gravity}}, \href{https://doi.org/10.1103/PhysRevD.104.046005}{\emph{Phys.
  Rev. D} {\bfseries 104} (2021) 046005}
  [\href{https://arxiv.org/abs/2104.07643}{{\ttfamily 2104.07643}}].

\bibitem{Ciambelli:2021nmv}
L.~Ciambelli, R.~G. Leigh and P.-C. Pai, \emph{{Embeddings and Integrable
  Charges for Extended Corner Symmetry}},
  \href{https://doi.org/10.1103/PhysRevLett.128.171302}{\emph{Phys. Rev. Lett.}
  {\bfseries 128} (2022) } [\href{https://arxiv.org/abs/2111.13181}{{\ttfamily
  2111.13181}}].

\bibitem{Freidel:2021dxw}
L.~Freidel, \emph{{A Canonical Bracket for Open Gravitational System}},
  \href{https://arxiv.org/abs/2111.14747}{{\ttfamily 2111.14747}}.

\bibitem{Speranza:2022lxr}
A.~J. Speranza, \emph{{Ambiguity resolution for integrable gravitational
  charges}}, \href{https://doi.org/10.1007/JHEP07(2022)029}{\emph{J. High
  Energy Phys.} {\bfseries 07} (2022) 029}
  [\href{https://arxiv.org/abs/2202.00133}{{\ttfamily 2202.00133}}].

\bibitem{Freidel:2020xyx}
L.~Freidel, M.~Geiller and D.~Pranzetti, \emph{{Edge modes of gravity - I:
  Corner potentials and charges}},
  \href{https://arxiv.org/abs/2006.12527}{{\ttfamily 2006.12527}}.

\bibitem{Freidel:2020svx}
L.~Freidel, M.~Geiller and D.~Pranzetti, \emph{{Edge modes of gravity - II:
  Corner metric and Lorentz charges}},
  \href{https://arxiv.org/abs/2007.03563}{{\ttfamily 2007.03563}}.

\bibitem{Chandrasekaran2020}
V.~Chandrasekaran and A.~J. Speranza, \emph{{Anomalies in gravitational charge
  algebras of null boundaries and black hole entropy}},
  \href{https://doi.org/10.1007/JHEP01(2021)137}{\emph{J. High Energy Phys.}
  {\bfseries 01} (2021) 137}
  [\href{https://arxiv.org/abs/2009.10739}{{\ttfamily 2009.10739}}].

\bibitem{Freidel:2021cbc}
L.~Freidel, R.~Oliveri, D.~Pranzetti and S.~Speziale, \emph{{Extended corner
  symmetry, charge bracket and Einstein\textquoteright{}s equations}},
  \href{https://doi.org/10.1007/JHEP09(2021)083}{\emph{J. High Energy Phys.}
  {\bfseries 09} (2021) 083}
  [\href{https://arxiv.org/abs/2104.12881}{{\ttfamily 2104.12881}}].

\bibitem{Chandrasekaran:2021vyu}
V.~Chandrasekaran, E.~E. Flanagan, I.~Shehzad and A.~J. Speranza, \emph{{A
  general framework for gravitational charges and holographic
  renormalization}},
  \href{https://doi.org/10.1142/S0217751X22501056}{\emph{Int. J. Mod. Phys. A}
  {\bfseries 37} (2022) 2250105}
  [\href{https://arxiv.org/abs/2111.11974}{{\ttfamily 2111.11974}}].

\bibitem{Ciambelli:2022cfr}
L.~Ciambelli and R.~G. Leigh, \emph{{Universal Corner Symmetry and the Orbit
  Method for Gravity}},  \href{https://arxiv.org/abs/2207.06441}{{\ttfamily
  2207.06441}}.

\bibitem{Wigner1931}
E.~Wigner,
  \emph{\href{https://link.springer.com/book/10.1007/978-3-663-02555-9}{Gruppentheorie
  und ihre Anwendung auf die Quantenmechanik der Atomspektren}}.
  Vieweg$+$Teubner Verlag, 1931.

\bibitem{Freidel:2020ayo}
L.~Freidel, M.~Geiller and D.~Pranzetti, \emph{{Edge modes of gravity - III:
  Corner simplicity constraints}},
  \href{https://arxiv.org/abs/2007.12635}{{\ttfamily 2007.12635}}.

\bibitem{thiemann2008modern}
T.~Thiemann, \emph{Modern canonical quantum general relativity}. Cambridge
  University Press, 2008.

\bibitem{Mackey195201}
G.~W. Mackey, \emph{{Induced Representations of Locally-Compact Groups I}},
  \href{https://doi.org/10.2307/1969423}{\emph{Ann. Math.} {\bfseries 55}
  (1952) 101}.

\bibitem{Mackey195309}
G.~W. Mackey, \emph{{Induced Representations of Locally-Compact Groups II. The
  Frobenius Reciprocity Theorem}},
  \href{https://doi.org/10.2307/1969786}{\emph{Ann. Math.} {\bfseries 58}
  (1953) 193}.

\bibitem{Mackey:1978za}
G.~W. Mackey, \emph{{Unitary Group Representations in Physics, Probability and
  Number Theory}}. Benjamin-Cummings Publ.Comp, 1978.

\bibitem{Kirillov196202}
A.~A. Kirillov, \emph{{Unitary Representations of Nilpotent Lie Groups}},
  \href{https://doi.org/10.1070/rm1962v017n04abeh004118}{\emph{Russ. Math.
  Surv.} {\bfseries 17} (1962) 53}.

\bibitem{Kirillov1976}
A.~A. Kirillov, \emph{{Elements of the Theory of Representations}}. Springer
  Berlin Heidelberg, 1976.

\bibitem{Kirillov199908}
A.~A. Kirillov, \emph{{Merits and Demerits of the Orbit Method}},
  \href{https://doi.org/10.1090/s0273-0979-99-00849-6}{\emph{Bull. Amer. Math.
  Soc.} {\bfseries 36} (1999) 433}.

\bibitem{Rawnsley197501}
J.~H. {Rawnsley}, \emph{{Representations of a Semi-Direct Product by
  Quantization}}, \href{https://doi.org/10.1017/S0305004100051793}{\emph{Math.
  Proc. Camb. Philos. Soc} {\bfseries 78} (1975) 345}.

\bibitem{Baguis199705}
P.~{Baguis}, \emph{{Semidirect Products and the Pukanszky Condition}},
  \href{https://doi.org/10.1016/S0393-0440(97)00028-4}{\emph{J. Geom. Phys.}
  {\bfseries 25} (1998) 245}
  [\href{https://arxiv.org/abs/dg-ga/9705005}{{\ttfamily dg-ga/9705005}}].

\bibitem{BarnichOblak201403}
G.~Barnich and B.~Oblak, \emph{{Notes on the BMS Group in Three Dimensions: I.
  Induced Representations}},
  \href{https://doi.org/10.1007/JHEP06(2014)129}{\emph{J. High Energy Phys}
  {\bfseries 06} (2014) 129} [\href{https://arxiv.org/abs/1403.5803}{{\ttfamily
  1403.5803}}].

\bibitem{BarnichOblak201502}
G.~Barnich and B.~Oblak, \emph{{Notes on the BMS Group in Three Dimensions: II.
  Coadjoint representation}},
  \href{https://doi.org/10.1007/JHEP03(2015)033}{\emph{J. High Energy Phys}
  {\bfseries 03} (2015) 033}
  [\href{https://arxiv.org/abs/1502.00010}{{\ttfamily 1502.00010}}].

\bibitem{DonnellyFreidelMoosavianSperanza202012}
W.~Donnelly, L.~Freidel, S.~F. Moosavian and A.~J. Speranza,
  \emph{{Gravitational Edge Modes, Coadjoint Orbits, and Hydrodynamics}},
  \href{https://doi.org/10.1007/JHEP09(2021)008}{\emph{J. High Energy Phys.}
  {\bfseries 09} (2021) 008}
  [\href{https://arxiv.org/abs/2012.10367}{{\ttfamily 2012.10367}}].

\bibitem{Kostant1965}
B.~Kostant,
  \emph{\href{https://link.springer.com/chapter/10.1007/b94535_20}{Orbits,
  Symplectic Structures and Representation Theory}},  in \emph{{Proceedings of
  the United States-Japan Seminar in Differential Geometry, Kyoto, Japan}},
  p.~71, 1965.

\bibitem{Kostant1970}
B.~Kostant,
  \emph{\href{https://link.springer.com/chapter/10.1007/BFb0079068}{Quantization
  and Unitary Representations}},  in \emph{Lecture Notes in Mathematics},
  pp.~87--208.
\newblock Springer Berlin Heidelberg, 1970.

\bibitem{Souriau1970}
J.-M. Souriau,
  \emph{\href{http://www.jmsouriau.com/structure_des_systemes_dynamiques.htm}{Structure
  des Syst\'emes Dynamiques: Ma{\^i}trises de Math\'ematiques}}, Collection
  Dunod Universit{\'e}. Dunod, 1970.

\bibitem{Souriau1997}
J.-M. Souriau,
  \emph{\href{https://link.springer.com/book/10.1007/978-1-4612-0281-3}{Structure
  of Dynamical Systems: A Symplectic View of Physics}}. Birkh\"auser Boston,
  1997.

\bibitem{GukovWitten200809}
S.~{Gukov} and E.~{Witten}, \emph{{Branes and Quantization}},
  \href{https://doi.org/10.4310/ATMP.2009.v13.n5.a5}{\emph{Adv. Theor. Math.
  Phys.} {\bfseries 13} (2009) 1445}
  [\href{https://arxiv.org/abs/0809.0305}{{\ttfamily 0809.0305}}].

\bibitem{GaiottoWitten202107}
D.~Gaiotto and E.~Witten, \emph{{Probing Quantization via Branes}},
  \href{https://arxiv.org/abs/2107.12251}{{\ttfamily 2107.12251}}.

\bibitem{Buividovich:2008gq}
P.~V. Buividovich and M.~I. Polikarpov, \emph{{Entanglement entropy in gauge
  theories and the holographic principle for electric strings}},
  \href{https://doi.org/10.1016/j.physletb.2008.10.032}{\emph{Phys. Lett. B}
  {\bfseries 670} (2008) 141}
  [\href{https://arxiv.org/abs/0806.3376}{{\ttfamily 0806.3376}}].

\bibitem{Donnelly:2011hn}
W.~Donnelly, \emph{{Decomposition of entanglement entropy in lattice gauge
  theory}}, \href{https://doi.org/10.1103/PhysRevD.85.085004}{\emph{Phys. Rev.
  D} {\bfseries 85} (2012) 085004}
  [\href{https://arxiv.org/abs/1109.0036}{{\ttfamily 1109.0036}}].

\bibitem{Casini:2013rba}
H.~Casini, M.~Huerta and J.~A. Rosabal, \emph{{Remarks on entanglement entropy
  for gauge fields}},
  \href{https://doi.org/10.1103/PhysRevD.89.085012}{\emph{Phys. Rev. D}
  {\bfseries 89} (2014) 085012}
  [\href{https://arxiv.org/abs/1312.1183}{{\ttfamily 1312.1183}}].

\bibitem{Donnelly:2014gva}
W.~Donnelly, \emph{{Entanglement entropy and nonabelian gauge symmetry}},
  \href{https://doi.org/10.1088/0264-9381/31/21/214003}{\emph{Class. Quant.
  Grav.} {\bfseries 31} (2014) 214003}
  [\href{https://arxiv.org/abs/1406.7304}{{\ttfamily 1406.7304}}].

\bibitem{Soni:2015yga}
R.~M. Soni and S.~P. Trivedi, \emph{{Aspects of Entanglement Entropy for Gauge
  Theories}}, \href{https://doi.org/10.1007/JHEP01(2016)136}{\emph{J. High
  Energy Phys.} {\bfseries 01} (2016) 136}
  [\href{https://arxiv.org/abs/1510.07455}{{\ttfamily 1510.07455}}].

\bibitem{Lin:2018bud}
J.~Lin and D.~Radi\v{c}evi\'c, \emph{{Comments on defining entanglement
  entropy}}, \href{https://doi.org/10.1016/j.nuclphysb.2020.115118}{\emph{Nucl.
  Phys. B} {\bfseries 958} (2020) 115118}
  [\href{https://arxiv.org/abs/1808.05939}{{\ttfamily 1808.05939}}].

\bibitem{Benedetti:2019uej}
V.~Benedetti and H.~Casini, \emph{{Entanglement entropy of linearized gravitons
  in a sphere}}, \href{https://doi.org/10.1103/PhysRevD.101.045004}{\emph{Phys.
  Rev. D} {\bfseries 101} (2020) 045004}
  [\href{https://arxiv.org/abs/1908.01800}{{\ttfamily 1908.01800}}].

\bibitem{Anninos:2020hfj}
D.~Anninos, F.~Denef, Y.~T.~A. Law and Z.~Sun, \emph{{Quantum de Sitter horizon
  entropy from quasicanonical bulk, edge, sphere and topological string
  partition functions}},
  \href{https://doi.org/10.1007/JHEP01(2022)088}{\emph{J. High Energy Phys.}
  {\bfseries 01} (2022) 088}
  [\href{https://arxiv.org/abs/2009.12464}{{\ttfamily 2009.12464}}].

\bibitem{David:2022jfd}
J.~R. David and J.~Mukherjee, \emph{{Entanglement entropy of gravitational edge
  modes}},  \href{https://arxiv.org/abs/2201.06043}{{\ttfamily 2201.06043}}.

\bibitem{Jacobson:2003wv}
T.~Jacobson and R.~Parentani, \emph{{Horizon entropy}},
  \href{https://doi.org/10.1023/A:1023785123428}{\emph{Found. Phys.} {\bfseries
  33} (2003) 323} [\href{https://arxiv.org/abs/gr-qc/0302099}{{\ttfamily
  gr-qc/0302099}}].

\bibitem{Bianchi:2012ev}
E.~Bianchi and R.~C. Myers, \emph{{On the Architecture of Spacetime Geometry}},
  \href{https://doi.org/10.1088/0264-9381/31/21/214002}{\emph{Class. Quant.
  Grav.} {\bfseries 31} (2014) 214002}
  [\href{https://arxiv.org/abs/1212.5183}{{\ttfamily 1212.5183}}].

\bibitem{kuhn}
T.~S. Kuhn, \emph{Black-body theory and the quantum discontinuity, 1894-1912}.
  Clarendon Press ; Oxford University Press Oxford : New York, 1978.

\bibitem{Hoppe198201}
J.~R. {Hoppe},
  \emph{\href{https://dspace.mit.edu/bitstream/handle/1721.1/15717/09149258-MIT.pdf?sequence=2&isAllowed=y}{Quantum
  Theory of a Massless Relativistic Surface and a Two-Dimensional Bound State
  Problem.}}, Ph.D. thesis, Massachusetts Institute of Technology., Jan., 1982.

\bibitem{Hoppe198901}
J.~{Hoppe}, \emph{{Diffeomorphism Groups, Quantization, and $\SU({\infty})$}},
  \href{https://doi.org/10.1142/S0217751X89002235}{\emph{Int. J. Mod. Phys. A}
  {\bfseries 4} (1989) 5235}.

\bibitem{PopeStelle198905}
C.~N. Pope and K.~S. Stelle, \emph{{$\SU(\infty)$, $\SU_+(\infty)$ and Area
  Preserving Algebras}},
  \href{https://doi.org/10.1016/0370-2693(89)91191-X}{\emph{Phys. Lett. B}
  {\bfseries 226} (1989) 257}.

\bibitem{Banks:2018ypk}
T.~Banks and W.~Fischler, \emph{{The holographic spacetime model of
  cosmology}}, \href{https://doi.org/10.1142/S0218271818460057}{\emph{Int. J.
  Mod. Phys. D} {\bfseries 27} (2018) 1846005}
  [\href{https://arxiv.org/abs/1806.01749}{{\ttfamily 1806.01749}}].

\bibitem{deWit:1989yb}
B.~de~Wit and H.~Nicolai, \emph{{Supermembranes: a fond farewell?}},  in
  \emph{{Trieste Conference on Supermembranes and Physics in 2+1 Dimensions}},
  pp.~0777--790, 11, 1989.

\bibitem{Bordemann:1990pa}
M.~Bordemann, J.~Hoppe, P.~Schaller and M.~Schlichenmaier, \emph{{Gl (Infinity)
  and Geometric Quantization}},
  \href{https://doi.org/10.1007/BF02099490}{\emph{Commun. Math. Phys.}
  {\bfseries 138} (1991) 209}.

\bibitem{tHooft197404}
G.~'t~Hooft, \emph{{A Planar Diagram Theory for Strong Interactions}},
  \href{https://doi.org/10.1016/0550-3213(74)90154-0}{\emph{Nucl. Phys. B}
  {\bfseries 72} (1974) 461}.

\bibitem{Dowker:1990iy}
J.~S. Dowker, \emph{{Volume Preserving Diffeomorphisms on the Three Sphere}},
  \href{https://doi.org/10.1088/0264-9381/7/7/019}{\emph{Class. Quant. Grav.}
  {\bfseries 7} (1990) 1241}.

\bibitem{Dowker:1990ss}
J.~S. Dowker, \emph{{Central extensions of sphere groups}},
  \href{https://doi.org/10.1063/1.529500}{\emph{J. Math. Phys.} {\bfseries 32}
  (1991) 1433}.

\bibitem{Freidel:2001kb}
L.~Freidel and K.~Krasnov, \emph{{The Fuzzy sphere star product and spin
  networks}}, \href{https://doi.org/10.1063/1.1456255}{\emph{J. Math. Phys.}
  {\bfseries 43} (2002) 1737}
  [\href{https://arxiv.org/abs/hep-th/0103070}{{\ttfamily hep-th/0103070}}].

\bibitem{nomura1989description}
M.~Nomura, \emph{Description of the 6-j and the 9-j symbols in terms of small
  numbers of 3-j symbols}, {\emph{Journal of the Physical Society of Japan}
  {\bfseries 58} (1989) 2677}.

\bibitem{Campiglia:2014yka}
M.~Campiglia and A.~Laddha, \emph{{Asymptotic symmetries and subleading soft
  graviton theorem}},
  \href{https://doi.org/10.1103/PhysRevD.90.124028}{\emph{Phys. Rev. D}
  {\bfseries 90} (2014) 124028}
  [\href{https://arxiv.org/abs/1408.2228}{{\ttfamily 1408.2228}}].

\bibitem{Campiglia:2015yka}
M.~Campiglia and A.~Laddha, \emph{{New symmetries for the Gravitational
  S-matrix}}, \href{https://doi.org/10.1007/JHEP04(2015)076}{\emph{J. High
  Energy Phys.} {\bfseries 04} (2015) 076}
  [\href{https://arxiv.org/abs/1502.02318}{{\ttfamily 1502.02318}}].

\bibitem{Compere:2018ylh}
G.~Comp\`ere, A.~Fiorucci and R.~Ruzziconi, \emph{{Superboost transitions,
  refraction memory and super-Lorentz charge algebra}},
  \href{https://doi.org/10.1007/JHEP11(2018)200}{\emph{J. High Energy Phys.}
  {\bfseries 11} (2018) 200}
  [\href{https://arxiv.org/abs/1810.00377}{{\ttfamily 1810.00377}}].

\bibitem{Barnich:2021dta}
G.~Barnich and R.~Ruzziconi, \emph{{Coadjoint representation of the BMS group
  on celestial Riemann surfaces}},
  \href{https://doi.org/10.1007/JHEP06(2021)079}{\emph{J. High Energy Phys.}
  {\bfseries 06} (2021) 079}
  [\href{https://arxiv.org/abs/2103.11253}{{\ttfamily 2103.11253}}].

\bibitem{Freidel:2021fxf}
L.~Freidel, R.~Oliveri, D.~Pranzetti and S.~Speziale, \emph{{The Weyl BMS group
  and Einstein\textquoteright{}s equations}},
  \href{https://doi.org/10.1007/JHEP07(2021)170}{\emph{J. High Energy Phys.}
  {\bfseries 07} (2021) 170}
  [\href{https://arxiv.org/abs/2104.05793}{{\ttfamily 2104.05793}}].

\bibitem{Freidel:2021qpz}
L.~Freidel and D.~Pranzetti, \emph{{Gravity from symmetry: duality and
  impulsive waves}}, \href{https://doi.org/10.1007/JHEP04(2022)125}{\emph{JHEP}
  {\bfseries 04} (2022) 125}
  [\href{https://arxiv.org/abs/2109.06342}{{\ttfamily 2109.06342}}].

\bibitem{Prema:2021sjp}
A.~B. Prema, G.~Comp\`ere, L.~Pipolo~de Gioia, I.~Mol and B.~Swidler,
  \emph{{Celestial holography: Lectures on asymptotic symmetries}},
  \href{https://doi.org/10.21468/SciPostPhysLectNotes.47}{\emph{SciPost Phys.
  Lect. Notes} {\bfseries 47} (2022) 1}
  [\href{https://arxiv.org/abs/2109.00997}{{\ttfamily 2109.00997}}].

\bibitem{Raclariu:2021zjz}
A.-M. Raclariu, \emph{{Lectures on Celestial Holography}},
  \href{https://arxiv.org/abs/2107.02075}{{\ttfamily 2107.02075}}.

\bibitem{Pasterski:2021rjz}
S.~Pasterski, \emph{{Lectures on celestial amplitudes}},
  \href{https://doi.org/10.1140/epjc/s10052-021-09846-7}{\emph{Eur. Phys. J. C}
  {\bfseries 81} (2021) 1062}
  [\href{https://arxiv.org/abs/2108.04801}{{\ttfamily 2108.04801}}].

\bibitem{Moser1965}
J.~Moser, \emph{On the volume elements on a manifold},
  \href{https://doi.org/10.2307/1994022}{\emph{Transactions of the American
  Mathematical Society} {\bfseries 120} (1965) 286}.

\bibitem{MarsdenRatiuWeinstein1984a}
J.~E. Marsden, T.~Ra{\c{t}}iu and A.~Weinstein, \emph{{Semidirect Products and
  Reduction in Mechanics}},
  \href{https://doi.org/10.1090/s0002-9947-1984-0719663-1}{\emph{Trans. Amer.
  Math. Soc} {\bfseries 281} (1984) 147}.

\bibitem{AndradeeSilva:2020ofl}
R.~Andrade~e Silva and T.~Jacobson, \emph{{Particle on the sphere:
  group-theoretic quantization in the presence of a magnetic monopole}},
  \href{https://doi.org/10.1088/1751-8121/abf961}{\emph{J. Phys. A} {\bfseries
  54} (2021) 235303} [\href{https://arxiv.org/abs/2011.04888}{{\ttfamily
  2011.04888}}].

\bibitem{kirillov2004lectures}
A.~A. Kirillov, \emph{\href{https://bookstore.ams.org/gsm-64/}{Lectures on the
  Orbit Method}}. American Mathematical Society, Providence, R.I, 2004.

\bibitem{Flato:1995vm}
M.~Flato, M.~Gerstenhaber and A.~A. Voronov, \emph{{Cohomology and Deformation
  of Leibniz Pairs}}, \href{https://doi.org/10.1007/bf00739377}{\emph{Lett.
  Math. Phys.} {\bfseries 34} (1995) 77}.

\bibitem{Fuks}
D.~Fuks, \emph{Cohomology of Infinite-Dimensional Lie Algebras}. Monographs in
  Contemporary Mathematics, 1986.

\bibitem{Gerstenhaber}
M.~Gerstenhaber and S.~D. Schack,
  \emph{\href{https://link.springer.com/chapter/10.1007/978-94-009-3057-5_2}{Algebraic
  Cohomology and Deformation Theory}},  in \emph{Deformation Theory of Algebras
  and Structures and Applications}, pp.~11--264.
\newblock Springer Netherlands, 1988.

\bibitem{Madore199109}
J.~Madore, \emph{{The Fuzzy Sphere}},
  \href{https://doi.org/10.1088/0264-9381/9/1/008}{\emph{Class. Quant. Grav.}
  {\bfseries 9} (1992) 69}.

\bibitem{Nair:2020xzn}
V.~P. Nair, \emph{{Landau-Hall States and Berezin-Toeplitz Quantization of
  Matrix Algebras}},
  \href{https://doi.org/10.1103/PhysRevD.102.025015}{\emph{Phys. Rev. D}
  {\bfseries 102} (2020) 025015}
  [\href{https://arxiv.org/abs/2001.05040}{{\ttfamily 2001.05040}}].

\bibitem{InonuWigner195306}
E.~{Inonu} and E.~P. {Wigner}, \emph{{On the Contraction of Groups and their
  Representations}}, \href{https://doi.org/10.1073/pnas.39.6.510}{\emph{Proc.
  Natl. Acad. Sci. U.S.A} {\bfseries 39} (1953) 510}.

\bibitem{arnold1999topological}
V.~I. Arnold and B.~A. Khesin, \emph{Topological methods in hydrodynamics},
  vol.~125. Springer Science \& Business Media, 1999.

\bibitem{izosimov2016coadjoint}
A.~Izosimov, B.~Khesin and M.~Mousavi, \emph{{Coadjoint Orbits of Symplectic
  Diffeomorphisms of Surfaces and Ideal Hydrodynamics}},
  \href{https://doi.org/10.5802/aif.3066}{\emph{Ann. Inst. Fourier} {\bfseries
  66} (2016) 2385}.

\bibitem{Cordes:1994fc}
S.~Cordes, G.~W. Moore and S.~Ramgoolam, \emph{{Lectures on 2-d Yang-Mills
  theory, equivariant cohomology and topological field theories}},
  \href{https://doi.org/10.1016/0920-5632(95)00434-B}{\emph{Nucl. Phys. B Proc.
  Suppl.} {\bfseries 41} (1995) 184}
  [\href{https://arxiv.org/abs/hep-th/9411210}{{\ttfamily hep-th/9411210}}].

\bibitem{Carter1992}
B.~Carter, \emph{{Outer curvature and conformal geometry of an imbedding}},
  \href{https://doi.org/10.1016/0393-0440(92)90043-Z}{\emph{J. Geom. Phys.}
  {\bfseries 8} (1992) 53}.

\bibitem{Aharony:2008ug}
O.~Aharony, O.~Bergman, D.~L. Jafferis and J.~Maldacena, \emph{{N=6
  superconformal Chern-Simons-matter theories, M2-branes and their gravity
  duals}}, \href{https://doi.org/10.1088/1126-6708/2008/10/091}{\emph{J. High
  Energy Phys.} {\bfseries 10} (2008) 091}
  [\href{https://arxiv.org/abs/0806.1218}{{\ttfamily 0806.1218}}].

\bibitem{Berenstein:1998rr}
D.~Berenstein and R.~G. Leigh, \emph{{String junctions and bound states of
  intersecting branes}},
  \href{https://doi.org/10.1103/PhysRevD.60.026005}{\emph{Phys. Rev. D}
  {\bfseries 60} (1999) 026005}
  [\href{https://arxiv.org/abs/hep-th/9812142}{{\ttfamily hep-th/9812142}}].

\bibitem{Vogan1986}
D.~A. Vogan, \emph{The unitary dual of gl(n) over an archimedean field},
  \href{https://doi.org/10.1007/BF01394418}{\emph{Inventiones mathematicae}
  {\bfseries 83} (1986) 449}.

\bibitem{Leutheusser:2021qhd}
S.~Leutheusser and H.~Liu, \emph{{Causal connectability between quantum systems
  and the black hole interior in holographic duality}},
  \href{https://arxiv.org/abs/2110.05497}{{\ttfamily 2110.05497}}.

\bibitem{Leutheusser:2021frk}
S.~Leutheusser and H.~Liu, \emph{{Emergent times in holographic duality}},
  \href{https://arxiv.org/abs/2112.12156}{{\ttfamily 2112.12156}}.

\bibitem{Witten:2021unn}
E.~Witten, \emph{{Gravity and the crossed product}},
  \href{https://doi.org/10.1007/JHEP10(2022)008}{\emph{J. High Energy Phys.}
  {\bfseries 10} (2022) 008}
  [\href{https://arxiv.org/abs/2112.12828}{{\ttfamily 2112.12828}}].

\bibitem{Chandrasekaran:2022eqq}
V.~Chandrasekaran, G.~Penington and E.~Witten, \emph{{Large N algebras and
  generalized entropy}},  \href{https://arxiv.org/abs/2209.10454}{{\ttfamily
  2209.10454}}.

\bibitem{Molchanov1998}
V.~F. Molchanov,
  \emph{\href{https://link.springer.com/chapter/10.1007/978-94-011-5258-7_20}{Maximal
  Degenerate Series Representations of the Universal Covering of the Group
  {${\rm SU}(n,n)$}}},  in \emph{Lie Groups and {L}ie Algebras}, vol.~433 of
  \emph{Math. Appl.}, pp.~313--336.
\newblock Kluwer Acad. Publ., Dordrecht, 1998.

\bibitem{Knapp2001}
A.~W. Knapp, \emph{Representation theory of semisimple groups}, Princeton
  Landmarks in Mathematics. Princeton University Press, Princeton, NJ, 2001.

\bibitem{Aharony:1999ti}
O.~Aharony, S.~S. Gubser, J.~M. Maldacena, H.~Ooguri and Y.~Oz, \emph{{Large N
  field theories, string theory and gravity}},
  \href{https://doi.org/10.1016/S0370-1573(99)00083-6}{\emph{Phys. Rept.}
  {\bfseries 323} (2000) 183}
  [\href{https://arxiv.org/abs/hep-th/9905111}{{\ttfamily hep-th/9905111}}].

\bibitem{Maldacena:1997re}
J.~M. Maldacena, \emph{{The Large N limit of superconformal field theories and
  supergravity}}, \href{https://doi.org/10.1023/A:1026654312961}{\emph{Adv.
  Theor. Math. Phys.} {\bfseries 2} (1998) 231}
  [\href{https://arxiv.org/abs/hep-th/9711200}{{\ttfamily hep-th/9711200}}].

\bibitem{ItzyksonZuber1980}
C.~Itzykson and J.~B. Zuber, \emph{{The Planar Approximation. II}},
  \href{https://doi.org/10.1063/1.524438}{\emph{J. Math. Phys.} {\bfseries 21}
  (1980) 411}.

\bibitem{Matytsin199306}
A.~Matytsin, \emph{{On the Large-$N$ Limit of the Itzykson-Zuber Integral}},
  \href{https://doi.org/10.1016/0550-3213(94)90471-5}{\emph{Nucl. Phys. B}
  {\bfseries 411} (1994) 805}
  [\href{https://arxiv.org/abs/hep-th/9306077}{{\ttfamily hep-th/9306077}}].

\bibitem{Bateman191501}
H.~{Bateman}, \emph{{Some Recent Researches on the Motion of Fluids}},
  \href{https://doi.org/10.1175/1520-0493(1915)43<163:SRROTM>2.0.CO;2}{\emph{Mon.
  Weather Rev.} {\bfseries 43} (1915) 163}.

\bibitem{Burgers1948}
J.~Burgers,
  \emph{\href{https://www.sciencedirect.com/science/article/abs/pii/S0065215608701005}{A
  Mathematical Model Illustrating the Theory of Turbulence}},  in
  \emph{\href{https://www.sciencedirect.com/bookseries/advances-in-applied-mechanics/vol/1/suppl/C}{Advances
  in Applied Mechanics}}, pp.~171--199.
\newblock Elsevier, 1948.
\newblock \href{https://doi.org/10.1016/s0065-2156(08)70100-5}{DOI}.

\bibitem{KenyonOkounkov200507}
R.~Kenyon and A.~Okounkov, \emph{{Limit Shapes and the Complex Burgers
  Equation}}, \href{https://doi.org/10.1007/s11511-007-0021-0}{\emph{Acta
  Math.} {\bfseries 199} (2007) 263}
  [\href{https://arxiv.org/abs/arXiv:math-ph/0507007}{{\ttfamily
  arXiv:math-ph/0507007}}].

\bibitem{Penna201806}
R.~F. Penna, \emph{{Diff$(S^2)$ and the Orbit Method}},
  \href{https://doi.org/10.1063/1.5140475}{\emph{J. Math. Phys.} {\bfseries 61}
  (2020) 012301} [\href{https://arxiv.org/abs/1806.05235}{{\ttfamily
  1806.05235}}].

\bibitem{Rossman1978}
W.~Rossmann, \emph{{Kirillov's Character Formula for Reductive Lie Groups}},
  \href{https://doi.org/10.1007/BF01390244}{\emph{Inventiones Mathematicae}
  {\bfseries 48} (1978) 207}.

\bibitem{Bars199706}
I.~Bars, \emph{{Strings and Matrix Models on Genus-$g$ Riemann Surfaces}},
  \href{https://arxiv.org/abs/hep-th/9706177}{{\ttfamily hep-th/9706177}}.

\bibitem{Floratos1988}
E.~G. Floratos and J.~Iliopoulos, \emph{A note on the classical symmetries of
  the closed bosonic membranes},
  \href{https://doi.org/10.1016/0370-2693(88)90220-1}{\emph{Phys. Lett. B}
  {\bfseries 201} (1988) 237}.

\bibitem{Fairlie1989}
D.~B. Fairlie, P.~Fletcher and C.~K. Zachos, \emph{Trigonometric structure
  constants for new infinite-dimensional algebras},
  \href{https://doi.org/10.1016/0370-2693(89)91418-4}{\emph{Phys. Lett. B}
  {\bfseries 218} (1989) 203}.

\bibitem{Barrett2019}
J.~W. Barrett and J.~Gaunt, \emph{{Finite spectral triples for the fuzzy
  torus}},  \href{https://arxiv.org/abs/1908.06796}{{\ttfamily 1908.06796}}.

\bibitem{EnriquezRojoProchazkaSachs202105}
M.~Enriquez-Rojo, T.~Proch\'azka and I.~Sachs, \emph{{On Deformations and
  Extensions of Diff(S$^{2}$)}},
  \href{https://doi.org/10.1007/JHEP10(2021)133}{\emph{J. High Energy Phys.}
  {\bfseries 10} (2021) 133}
  [\href{https://arxiv.org/abs/2105.13375}{{\ttfamily 2105.13375}}].

\bibitem{loop}
A.~Pressley and G.~Segal, \emph{Loop groups}. Oxford University Press, 1987.

\bibitem{Frappat:1989gn}
L.~Frappat, E.~Ragoucy, P.~Sorba, F.~Thuillier and H.~Hogaasen,
  \emph{{Generalized Kac-Moody Algebras and the Diffeomorphism Group of a
  Closed Surface}},
  \href{https://doi.org/10.1016/0550-3213(90)90663-X}{\emph{Nucl. Phys. B}
  {\bfseries 334} (1990) 250}.

\bibitem{Banks:1996vh}
T.~Banks, W.~Fischler, S.~H. Shenker and L.~Susskind, \emph{{M theory as a
  matrix model: A Conjecture}},
  \href{https://doi.org/10.1103/PhysRevD.55.5112}{\emph{Phys. Rev. D}
  {\bfseries 55} (1997) 5112}
  [\href{https://arxiv.org/abs/hep-th/9610043}{{\ttfamily hep-th/9610043}}].

\bibitem{Kapustin:2009kz}
A.~Kapustin, B.~Willett and I.~Yaakov, \emph{{Exact Results for Wilson Loops in
  Superconformal Chern-Simons Theories with Matter}},
  \href{https://doi.org/10.1007/JHEP03(2010)089}{\emph{J. High Energy Phys.}
  {\bfseries 03} (2010) 089} [\href{https://arxiv.org/abs/0909.4559}{{\ttfamily
  0909.4559}}].

\bibitem{Saad:2019lba}
P.~Saad, S.~H. Shenker and D.~Stanford, \emph{{JT gravity as a matrix
  integral}},  \href{https://arxiv.org/abs/1903.11115}{{\ttfamily 1903.11115}}.

\bibitem{Balasubramanian:2018axm}
V.~Balasubramanian and O.~Parrikar, \emph{{Remarks on entanglement entropy in
  string theory}},
  \href{https://doi.org/10.1103/PhysRevD.97.066025}{\emph{Phys. Rev. D}
  {\bfseries 97} (2018) 066025}
  [\href{https://arxiv.org/abs/1801.03517}{{\ttfamily 1801.03517}}].

\bibitem{Donnelly:2016jet}
W.~Donnelly and G.~Wong, \emph{{Entanglement branes in a two-dimensional string
  theory}}, \href{https://doi.org/10.1007/JHEP09(2017)097}{\emph{JHEP}
  {\bfseries 09} (2017) 097}
  [\href{https://arxiv.org/abs/1610.01719}{{\ttfamily 1610.01719}}].

\bibitem{Donnelly:2018ppr}
W.~Donnelly and G.~Wong, \emph{{Entanglement branes, modular flow, and extended
  topological quantum field theory}},
  \href{https://doi.org/10.1007/JHEP10(2019)016}{\emph{JHEP} {\bfseries 10}
  (2019) 016} [\href{https://arxiv.org/abs/1811.10785}{{\ttfamily
  1811.10785}}].

\bibitem{Hubeny:2019bje}
V.~E. Hubeny, R.~Pius and M.~Rangamani, \emph{{Topological string
  entanglement}}, \href{https://doi.org/10.1007/JHEP10(2019)239}{\emph{JHEP}
  {\bfseries 10} (2019) 239}
  [\href{https://arxiv.org/abs/1905.09890}{{\ttfamily 1905.09890}}].

\bibitem{Donnelly:2020teo}
W.~Donnelly, Y.~Jiang, M.~Kim and G.~Wong, \emph{{Entanglement entropy and edge
  modes in topological string theory. Part I. Generalized entropy for closed
  strings}}, \href{https://doi.org/10.1007/JHEP10(2021)201}{\emph{JHEP}
  {\bfseries 10} (2021) 201}
  [\href{https://arxiv.org/abs/2010.15737}{{\ttfamily 2010.15737}}].

\bibitem{Jiang:2020cqo}
Y.~Jiang, M.~Kim and G.~Wong, \emph{{Entanglement entropy and edge modes in
  topological string theory. Part II. The dual gauge theory story}},
  \href{https://doi.org/10.1007/JHEP10(2021)202}{\emph{JHEP} {\bfseries 10}
  (2021) 202} [\href{https://arxiv.org/abs/2012.13397}{{\ttfamily
  2012.13397}}].

\bibitem{Chandrasekaran:2022cip}
V.~Chandrasekaran, R.~Longo, G.~Penington and E.~Witten, \emph{{An Algebra of
  Observables for de Sitter Space}},
  \href{https://arxiv.org/abs/2206.10780}{{\ttfamily 2206.10780}}.

\bibitem{NIST3j}
L.~C. Maximon, ``Chapter 34: 3j, 6j, 9j symbols.'' {NIST Digital Library of
  Mathematical Functions, \url{https://dlmf.nist.gov/34}}.

\bibitem{Goldberg1966}
J.~N. Goldberg, A.~J. MacFarlane, E.~T. Newman, F.~Rohrlich and E.~C.~G.
  Sudarshan, \emph{{Spin s spherical harmonics and edth}},
  \href{https://doi.org/10.1063/1.1705135}{\emph{J. Math. Phys.} {\bfseries 8}
  (1967) 2155}.

\bibitem{Thorne1980}
K.~S. Thorne, \emph{Multipole expansions of gravitational radiation},
  \href{https://doi.org/10.1103/RevModPhys.52.299}{\emph{Rev. Mod. Phys.}
  {\bfseries 52} (1980) 299}.

\bibitem{Alekseev:1999bs}
A.~Y. Alekseev, A.~Recknagel and V.~Schomerus, \emph{{Noncommutative world
  volume geometries: Branes on SU(2) and fuzzy spheres}},
  \href{https://doi.org/10.1088/1126-6708/1999/09/023}{\emph{J. High Energy
  Phys.} {\bfseries 09} (1999) 023}
  [\href{https://arxiv.org/abs/hep-th/9908040}{{\ttfamily hep-th/9908040}}].

\bibitem{Gutt2011}
S.~Gutt,
  \emph{\href{https://msp.org/gtm/2011/17/gtm-2011-17-003p.pdf}{Deformation
  Quantisation of Poisson Manifolds}},  in \emph{Lectures on {P}oisson
  geometry}, vol.~17 of \emph{Geom. Topol. Monogr.}, pp.~171--220.
\newblock Geom. Topol. Publ., Coventry, 2011.

\bibitem{Bayen1975}
F.~Bayen, M.~Flato, C.~Fronsdal, A.~Lichnerowicz and D.~Sternheimer,
  \emph{Deformation theory and quantization. {I}. {D}eformations of symplectic
  structures}, \href{https://doi.org/10.1016/0003-4916(78)90224-5}{\emph{Ann.
  Physics} {\bfseries 111} (1978) 61}.

\bibitem{Sternheimer1998}
D.~Sternheimer, \emph{{Deformation Quantization: Twenty Years After}},  in
  \emph{Particles, Fields, and Gravitation (\L \'{o}d\'{z}, 1998)}, vol.~453 of
  \emph{AIP Conf. Proc.}, pp.~107--145.
\newblock Amer. Inst. Phys., Woodbury, NY, 1998.
\newblock \href{https://arxiv.org/abs/arXiv:math/9809056}{{\ttfamily
  arXiv:math/9809056}}.

\bibitem{Ponzano1968}
G.~Ponzano and T.~Regge, \emph{{Semiclassical limit of Racah coefficients}},
  in \emph{{Spectroscopic and Group Theoretical Methods in Physics: Racah
  memorial volume}} (F.~Bloch, S.~G. Cohen, A.~de~Shalit, S.~Sambursky and
  I.~Talmi, eds.), pp.~1--58.
\newblock North Holland Publishing Co., 1968.

\bibitem{Raynal1979}
J.~Raynal, \emph{On the definition and properties of generalized 6‐j
  symbols}, \href{https://doi.org/10.1063/1.524047}{\emph{Journal of
  Mathematical Physics} {\bfseries 20} (1979) 2398}.

\bibitem{Raynal1993}
J.~Raynal, J.~V. {Der Jeugt}, K.~{Srinivasa Rao} and V.~Rajeswari, \emph{{On
  the zeros of 3j coefficients: Polynomial degree versus recurrence order}},
  \href{https://doi.org/10.1088/0305-4470/26/11/011}{\emph{Journal of Physics
  A: Mathematical and General} {\bfseries 26} (1993) 2607}.

\bibitem{Fedosov1994}
B.~V. Fedosov, \emph{{A Simple Geometrical Construction of Deformation
  Quantization}}, \href{https://doi.org/10.4310/jdg/1214455536}{\emph{J.
  Differ. Geom.} {\bfseries 40} (1994) 213}.

\bibitem{Fedosov1996}
B.~Fedosov, \emph{Deformation quantization and index theory}, vol.~9 of
  \emph{Mathematical Topics}. Akademie Verlag, Berlin, 1996.

\bibitem{Schulten1975}
K.~Schulten and R.~G. Gordon, \emph{Exact recursive evaluation of {$3j$}- and
  {$6j$}-coefficients for quantum-mechanical coupling of angular momenta},
  \href{https://doi.org/10.1063/1.522426}{\emph{J. Mathematical Phys.}
  {\bfseries 16} (1975) 1961}.

\bibitem{Bonzom:2011hm}
V.~Bonzom and L.~Freidel, \emph{{The Hamiltonian constraint in 3d Riemannian
  loop quantum gravity}},
  \href{https://doi.org/10.1088/0264-9381/28/19/195006}{\emph{Class. Quant.
  Grav.} {\bfseries 28} (2011) 195006}
  [\href{https://arxiv.org/abs/1101.3524}{{\ttfamily 1101.3524}}].

\bibitem{Presnajder199912}
P.~Presnajder, \emph{{The Origin of Chiral Anomaly and the Noncommutative
  Geometry}}, \href{https://doi.org/10.1063/1.533271}{\emph{J. Math. Phys.}
  {\bfseries 41} (2000) 2789}
  [\href{https://arxiv.org/abs/hep-th/9912050}{{\ttfamily hep-th/9912050}}].

\bibitem{HayasakaNakayamaTakaya200209}
K.~Hayasaka, R.~Nakayama and Y.~Takaya, \emph{{A New Noncommutative Product on
  the Fuzzy Two-Sphere Corresponding to the Unitary Representation of $SU(2)$
  and the Seiberg-Witten Map}},
  \href{https://doi.org/10.1016/S0370-2693(02)03193-3}{\emph{Phys. Lett. B}
  {\bfseries 553} (2003) 109}
  [\href{https://arxiv.org/abs/hep-th/0209240}{{\ttfamily hep-th/0209240}}].

\bibitem{AlekseevLachowska}
A.~Alekseev and A.~Lachowska, \emph{{Invariant $\star$-Products on Coadjoint
  Orbits and the Shapovalov Pairing}},
  \href{https://doi.org/10.4171/cmh/35}{\emph{Comment. Math. Helv.} (2005) 795}
  [\href{https://arxiv.org/abs/arXiv:math/0308100}{{\ttfamily
  arXiv:math/0308100}}].

\bibitem{MatsubaraStenmark200402}
K.~Matsubara and M.~Stenmark, \emph{{Invariant Star Products on $S^2$ and the
  Canonical Trace}},
  \href{https://doi.org/10.1007/s11005-004-4290-7}{\emph{Lett. Math. Phys.}
  {\bfseries 70} (2004) 109}
  [\href{https://arxiv.org/abs/hep-th/0402031}{{\ttfamily hep-th/0402031}}].

\bibitem{Vasiliev:1989re}
M.~A. Vasiliev, \emph{{Higher Spin Algebras and Quantization on the Sphere and
  Hyperboloid}}, \href{https://doi.org/10.1142/S0217751X91000605}{\emph{Int. J.
  Mod. Phys. A} {\bfseries 6} (1991) 1115}.

\bibitem{Bordemann:1989zi}
M.~Bordemann, J.~Hoppe and P.~Schaller, \emph{{Infinite-Dimensional Matrix
  Algebras}}, \href{https://doi.org/10.1016/0370-2693(89)91687-0}{\emph{Phys.
  Lett. B} {\bfseries 232} (1989) 199}.

\bibitem{Bergshoeff:1989ns}
E.~Bergshoeff, M.~P. Blencowe and K.~S. Stelle, \emph{{Area Preserving
  Diffeomorphisms and Higher Spin Algebra}},
  \href{https://doi.org/10.1007/BF02108779}{\emph{Commun. Math. Phys.}
  {\bfseries 128} (1990) 213}.

\bibitem{Schwarz:2022dqf}
J.~H. Schwarz, \emph{{Diffeomorphism Symmetry in Two Dimensions and Celestial
  Holography}},  \href{https://arxiv.org/abs/2208.13304}{{\ttfamily
  2208.13304}}.

\bibitem{Gutt}
S.~Gutt, \emph{Second et troisi\`eme espaces de cohomologie diff\'erentiable de
  l'alg\`ebre de {Lie} de {Poisson} d'une vari\'et\'e symplectique},
  {\emph{Annales de l'institut Henri Poincar\'e. Section A, Physique
  Th\'eorique} {\bfseries 33} (1980) 1}.

\end{thebibliography}\endgroup
\bibliographystyle{JHEP}

\end{document}